\newcommand{\ba}{\begin{eqnarray}}
\newcommand{\ea}{\end{eqnarray}}
\tikzset{
  on each segment/.style={
    decorate,
    decoration={
      show path construction,
      moveto code={},
      lineto code={
        \path [#1]
        (\tikzinputsegmentfirst) -- (\tikzinputsegmentlast);
      },
      curveto code={
        \path [#1] (\tikzinputsegmentfirst)
        .. controls
        (\tikzinputsegmentsupporta) and (\tikzinputsegmentsupportb)
        ..
        (\tikzinputsegmentlast);
      },
      closepath code={
        \path [#1]
        (\tikzinputsegmentfirst) -- (\tikzinputsegmentlast);
      },
    },
  },
  mid arrow/.style={postaction={decorate,decoration={
        markings,
        mark=at position .5 with {\arrow[#1]{stealth}}
      }}},
}
\DeclareMathOperator*{\Span}{\text{Span }}
\newcommand{\dket}[1]{\ket{#1}\!\rangle}
\newcommand{\dbra}[1]{\langle\!\bra{#1}}
\newcommand{\superp}[2]{\genfrac{}{}{0pt}{}{#1}{#2}}
 \def\d{\delta}
 \def\p{\partial}
 \def\a{\alpha}
 \def\b{\beta}
 \def\g{\gamma}
 \def\d{\delta}
 \def\e{\varepsilon}
 \def\k{\kappa}
 \def\l{\lambda}
 \def\r{\rho}
 \def\s{\sigma}
 \def\G{\Gamma}
 \def\D{\Delta}
 \def\L{\Lambda}
\def\CB{{\mathcal{B}}}
\def\CC{{\mathcal{C}}}
\def\CE{{\mathcal{E}}}
\def\CF{{\mathcal{F}}}
\def\CG{{\mathcal{G}}}
\def\CK{{\mathcal{K}}}
\def\CM{{\mathcal{M}}}
\def\CN{{\mathcal{N}}}
\def\CO{{\mathcal{O}}}
\def\CP{{\mathcal{P}}}
\def\CS{{\mathcal{S}}}
\def\CT{{\mathcal{T}}}
\def\CW{{\mathcal{W}}}
\def\CY{{\mathcal{Y}}}
\def\equskip{\!\!\!\!\!\!\!\!} 
\def\la{\left\langle}
\def\ra{\right\rangle}
\def\hf{\dfrac{1}{2}}
\def\bc{{\bar{c}}}
\def\implies{\quad\Rightarrow\quad}
\def\CS{\mathcal{S}}
\def\hg{{\hat\gamma}}
\def\bd{{\bar d}}
\def\vac{\emptyset}
\def\res{\mathop{\text{Res}}}
\def\bl{{\boldsymbol{\lambda}}}
\def\bmu{{\boldsymbol{\mu}}}
\def\bnu{{\boldsymbol{\nu}}}
\def\mZ{\mathbb{Z}}
\def\mC{\mathbb{C}}
\def\bsn{\boldsymbol{n}}
\def\gl{\mathfrak{gl}}
\def\sl{\mathfrak{sl}}
\def\Uqsl{U_q(\widehat{\mathfrak{sl}(2)})}
\def\Abox{{\tikz[scale=0.007cm] \draw (0,0) rectangle (1,1);}}
\def\sAbox{{\tikz[scale=0.005cm] \draw (0,0) rectangle (1,1);}}
\def\AboxB{{\tikz[scale=0.007cm] \draw[fill=black] (0,0) rectangle (1,1);}}
\def\sAboxB{{\tikz[scale=0.005cm] \draw[fill=black] (0,0) rectangle (1,1);}}
\def\bell{\bar\ell}
\def\ad{\text{ad}}
\def\tN{\tilde{N}}
\def\bsx{{\boldsymbol{x}}}
\def\bsy{{\boldsymbol{y}}}
\def\bsw{{\boldsymbol{w}}}
\def\bsu{{\boldsymbol{u}}}
\def\bsv{{\boldsymbol{v}}}
\def\bsa{{\boldsymbol{a}}}
\def\bsb{{\boldsymbol{b}}}
\def\tP{\tilde{P}}
\def\mukT{\mathsf{T}}
\def\bCK{\bar{\mathcal{K}}}
\def\FTW{\mathsf{W}}
\newcommand{\mathd}{\mathrm{d}}
\newcommand{\mathe}{\mathrm{e}}
\newcommand{\mathi}{\mathrm{i}}
\newcommand{\BC}{\mathbb{C}}
\newcommand{\BQ}{\mathbb{Q}}
\newcommand{\BZ}{\mathbb{Z}}
\newcommand{\lam}{\lambda}
\newcommand{\GHT}{\mathsf{V}}
\newcommand{\me}{\xi}
\renewcommand{\sp}{\varepsilon}
\newcommand{\blam}{{\boldsymbol{\lam}}}
\newcommand{\Id}{\mathrm{Id}}
\newcommand{\Whit}{\mathring{W}}
\newcommand{\be}{\begin{equation}}
\newcommand{\ee}{\end{equation}}
\let\emptyset\varnothing
\newtheorem{proposition}{Proposition}[section]
\newtheorem{conjecture}{Conjecture}[section]
\newtheorem{theorem}{Theorem}[section]
\newtheorem{corollary}{Corollary}[proposition]
\newtheorem{lemma}[theorem]{Lemma}
\theoremstyle{definition}
\newtheorem{definition}{Definition}[section]
\theoremstyle{remark}
\newtheorem*{remark}{Remark}
\begin{document}
\begin{titlepage}

\begin{center}
{\Huge Generalized Macdonald functions\\\vspace{10pt} and quantum toroidal $\gl(1)$ algebra}%

\vskip 1cm
{\Large Jean-Emile Bourgine\footnote{bourgine@simis.cn}$^\dagger$, Luca Cassia\footnote{luca.cassia@unimelb.edu.au}$^\circ$, Artem Stoyan\footnote{astoian@student.unimelb.edu.au}$^\circ$}\\

\vskip 1cm
{\it $\dagger$ Shanghai Institute for Mathematics and Interdisciplinary Sciences (SIMIS)}\\
{\it Fudan University}
{\it Block A, International Innovation Plaza}\\
{\it No. 657 Songhu Road, Yangpu District Shanghai, China}

\vskip 1cm
{\it $^\circ$ School of Mathematics and Statistics}\\
{\it University of Melbourne}\\
{\it Parkville, Victoria 3010, Australia}\\

\end{center}
\vfill
\begin{abstract}
The Macdonald operator is known to coincide with a certain element of the quantum toroidal $\mathfrak{gl}(1)$ algebra in the Fock representation of levels $(1,0)$. A generalization of this operator to higher levels $(r,0)$ can be built using the coproduct structure, it is diagonalized by the generalized Macdonald symmetric functions, indexed by $r$-tuple partitions and depending on $r$ alphabets. In this paper, we extend to the generalized case some of the known formulas obeyed by ordinary Macdonald symmetric functions, such as the $e_1$-Pieri rule or the identity relating them to Whittaker vectors obtained by Garsia, Haiman, and Tesler. We also propose a generalization of the five-term relation, and the Fourier/Hopf pairing. In addition, we prove the factorized expression of the generalized Macdonald kernel conjectured previously by Zenkevich.
\end{abstract}
\vfill
\end{titlepage}

\setcounter{footnote}{0}

\newpage

\tableofcontents

\newpage

\section{Introduction}
Generalized Macdonald functions have been introduced in \cite{Awata2011} in the context of the 5d AGT correspondence relating $q$-deformed $W$-algebras and $\CN=1$ supersymmetric gauge theories in the five-dimensional $\Omega$-deformed background $\mC_{q_1}\times\mC_{q_2}\times S^1$. In this context, they play the role of the AFLT basis constructed in \cite{Alba:2010qc}. Their definition follows from the identification of the Macdonald operator in the large $N$ limit (where $N$ is the number of variables) with the generator $x_0^+$ of the quantum toroidal $\gl(1)$ algebra evaluated in the Fock representation of levels $(1,0)$. This identification is a consequence of the well-known ring isomorphism between the bosonic Fock space and the ring of symmetric functions $\L[\bsx]$, which sends the PBW basis of the Heisenberg algebra to the power sum basis. To define generalized Macdonald functions at level $r$, this isomorphism is extended to tensor product of Fock spaces. It leads to interpreting the generator $x_0^+$ evaluated in the Fock representation of levels $(r,0)$ (defined in terms of $r$ bosons) as a generalized Macdonald operator acting on symmetric functions of $r$ alphabets $\L[\bsx^{(1)},\dots,\bsx^{(r)}]$. Generalized Macdonald symmetric functions $P_\bl(\bsx^{(1)},\dots,\bsx^{(r)}|u_1,\dots,u_r)$ are then defined as the eigenfunctions of this operator. They are indexed by $r$-tuple partitions $\bl=(\l^{(1)},\dots,\l^{(r)})$ and depend on $r-1$ extra complex parameters corresponding to ratios of the weights of the representation. For short, we will refer to these generalized Macdonald symmetric functions by the acronym GMP.

Beyond the 5d AGT correspondence, there are several reasons for studying this generalization of Macdonald symmetric functions. One of our main motivation comes from the $(q,t)$-deformed integrable hierarchies recently introduced in \cite{Bourgine2023}. The main idea behind this deformation of standard KP/Toda hierarchies is to promote the underlying $\gl(\infty)$/$q$-$W_{1+\infty}$ symmetry algebra to the quantum toroidal $\gl(1)$ algebra.\footnote{This is motivated by correspondences between integrable hierarchies and topological string theory on Calabi--Yau threefolds \cite{Aganagic2003,Nakatsu2007,Bourgine2021b}. A similar deformation of the symmetry algebra underpins the definition of the refined topological vertex \cite{AFS}.} In doing so, the algebra becomes non-linear and acquires a non-trivial coproduct structure. As a result, techniques based on determinants, free fermions and Schur functions no longer apply. In this context, generalized Macdonald symmetric functions of level $r=2$ are expected to provide solutions of the $(q,t)$-deformed hierarchies that deform the well-known Schur polynomial solutions. Besides, the generalized Macdonald functions encode the information of the co-product structure, and offer the perspective of deforming the notions of Plucker coordinates on Sato's Grassmaniann. However, it should be mentioned, that the use of GMP in this context requires us to consider special values of their parameter $Q=u_1/u_2$, which corresponds to the ratio of weights in the horizontal representation $(2,0)$. In this paper, we restrict ourselves to $Q$ generic, we will examine the case the degenerate case $Q=q_1^a q_2^b$ in a future publication.

Another motivation comes from the connection between the 5d/K-theoretic extension of the AFLT approach to the AGT correspondence and the geometry of moduli spaces of $U(r)$ instantons on $\BC^2$. These moduli spaces can be explicitly described as certain Nakajima quiver varieties \cite{Nakajima1994} (a generalization of the ADHM construction) and it is a well-known fact that there exists an action of the quantum toroidal $\gl(1)$ algebra (also known as Elliptic Hall algebra) on the localized equivariant K-theory of these spaces \cite{FT-shuffle,SV-Hall,Schiffmann2012}.
More precisely, let $\mathcal{M}(r,n)$ be the moduli space of rank $r$ instantons of topological charge $n$ on $\BC^2$. There exists an action of $T=(\BC^\times)^{r+2}$ corresponding to the maximal torus of $GL(r)\times GL(2)$, such that the $T$-fixed points are isolated and labeled by $r$-tuples of partitions with a total number of $n$ boxes. After localization in $T$-equivariant K-theory, one obtains an isomorphism
\be
 \Lambda[\bsx^{(1)},\dots,\bsx^{(r)}] \xrightarrow{\cong}
 \bigoplus_{n\geq0} K_T(\mathcal{M}(r,n))_{\rm loc} 
\ee
which identifies generalized Macdonald functions with the equivariant classes of fixed points in K-theory. Moreover, the quantum parameters $(q_1,q_2)$ and the weights $(u_1,\dots,u_r)$ are identified with equivariant parameters for the maximal tori of $GL(2)$ and $GL(r)$, respectively.
This isomorphism is the stable map of \cite{MO2012} (see \cite{Smirnov:2014npa} for a detailed discussion of the analogous statement in cohomology).

The case $r=1$ has been extensively studied by Haiman and collaborators \cite{Haiman1999,zbMATH02068005} to give an explicit construction of (modified) Macdonald functions as the Frobenius series of the fibers of the Garsia--Procesi bundle over fixed points in the Hilbert scheme of $n$ points in $\BC^2$, i.e.\ $\mathrm{Hilb}^n(\BC^2)\cong\mathcal{M}(1,n)$.
As a result of this intensive study of the theory of symmetric functions and their connection to geometry, Garsia, Haiman and Tesler (GHT) \cite[Theorem I.2]{Garsia2001} derived a powerful new identity from which many known properties of Macdonald polynomials follow as simple corollaries (such as bispectral duality or Macdonald–-Koornwinder reciprocity). This identity was later rederived by Carlsson, Nekrasov and Okounkov \cite{Carlsson:2013jka} using the the Macdonald--Mehta--Cherednik identity \cite{Cherednik:1997,Etingof:1997}.
Recent work on the wreath Macdonald case, corresponding to $r=1$ instanton moduli spaces on resolutions of $A_n$ singularities, has led to an extension of the result of GHT to higher rank analogues of the quantum toroidal algebra \cite{Romero:2025fiv,Romero:2025tes}.
However, no generalizations to higher levels $r>1$ are currently known.

In this paper, we make use of the representation theory of the quantum toroidal $\gl(1)$ algebra to extend several known results in the theory of Macdonald functions to the case of generalized Macdonald functions, including the GHT identity.

\paragraph{Main results.} Several studies on generalized Macdonald functions can be found in the literature, from both physics and mathematics perspectives \cite{Awata2011,Ohkubo:2014nqa,Zenkevich:2014lca,Awata:2015hax,Kononov:2016ycq,Zenkevich:2016xqu,Ohkubo:2017tvz,Zenkevich:2017tnb,Koroteev:2018isw,Fukuda:2019ywe,Mironov:2019uoy,Ohkubo:2019mtr}. The aim of this paper is to review some of the main results, and present several new identities.

Our first main result is a derivation of $e_1$-Pieri rules for GMP. Explicitly, we find at level two
\begin{equation}
(e_1(\bsx)+e_1(\bsy))P_{\l,\mu}(\bsx,\bsy|Q)=\sum_{\sAbox\in A(\l)}\Psi_\mu(Q\chi_\sAbox)\psi_\l(\Abox) P_{\l+\sAbox,\mu}(\bsx,\bsy|Q)+\sum_{\sAbox\in A(\mu)}\psi_\mu(\Abox) P_{\l,\mu+\sAbox}(\bsx,\bsy|Q)
\end{equation}
where $\psi_\l(\Abox)$ are the usual Pieri coefficients (at level one), and the function $\Psi_\mu(z)$ is defined in \eqref{expr_CYY} below. This formula is obtained from the observation  that, under specialization, the expansion of the Mukad\'e operator introduced in \cite{Fukuda:2019ywe} gives at first order the algebra generator $a_{-1}$. As a result, the matrix elements of this operator, which encode the Pieri rule, can be obtained from the expansion of the matrix elements of the Mukad\'e operator. The details of the derivation are provided in Appendix~\ref{app:Mukade}.

Our second main result is the proof of a factorization formula for the reproducing kernel of GMP, originally conjectured by Zenkevich \cite{Zenkevich:2014lca}. This kernel is a generalization of the usual Macdonald kernel. For instance, at level two, we prove that
\begin{equation}
\sum_{\l,\mu}b_{\l}b_{\mu}P_{\l,\mu}(\bsx,\bsy|Q)P_{\mu,\l}(\bsa,\bsb|Q^{-1}) = \mathe^{\sum_{k>0}\frac1k\frac{1-t^k}{1-q^k}
 \left(p_k(\bsx)p_k(\bsb)+p_k(\bsy)p_k(\bsa)+(1-t^kq^{-k})p_k(\bsx)p_k(\bsa)\right)}\,,
\end{equation}
where $b_\l$ and $b_\mu$ are standard combinatorial factors given in \eqref{eq:def-Macdonald-inner}.

The third main result is the extension of an identity obtained by Garsia, Haiman and Tesler \cite[Theorem~I.2]{Garsia2001} to GMP. For this purpose, we reinterpret this identity as a relation between Macdonald functions and Whittaker vectors,
\begin{equation}
 W_\l(\bsx) = \GHT\tP_\l(\bsx),
\end{equation}
where the Whittaker vector $W_\l(\bsx)$ coincides with the specialization of the Macdonald kernel $\Pi(\sp_\l|\bsx)$. In this identity, $\GHT$ is an operator constructed out of the framing operator (coinciding with the nabla operator in this representation) and the vacuum component of a vertex operator. With this interpretation, we extend the identity to higher levels. In general, Whittaker vectors no longer coincide with a specialization of the kernel, and do not have a factorized expression. However, they can still be defined algebraically, and written explicitly using higher vertex operators of the quantum toroidal $\gl(1)$ algebra.  

Unfortunately, we have only been able to provide a partial proof of the GHT identity at higher level, which relies on the Conjecture~\ref{conj:BH}. This conjecture is equivalent to the identity specialized to an empty $r$-tuple partition, and it has been extensively checked by computer. Upon assuming this conjecture, we show using algebraic methods that the identity holds for arbitrary $r$-tuple of partitions $\bl$.

Our paper also introduces several new objects, such as the extended quantum toroidal algebra by an outer automorphism (Section~\ref{sec_framing}), or the construction of higher vertex operator using $T_N$ diagrams (Appendix~\ref{sec_higher_VO}). We also propose an extension of the five-term relation obtained in \cite{Garsia:2018fiv}, and of the Fourier/Hopf pairing.

\paragraph{Organization.} The paper is organized as follows. The second section contains a description of the algebraic tools needed in the rest of the paper. It includes the definition of the algebra, the presentation of representations used here, the definition of higher vertex operators, and the extension of the algebra by \emph{framing operators} realizing certain outer automorphism. The third section introduces generalized Macdonald polynomials, and present their main properties, namely Pieri rule, inner products and the associated kernels. The fourth section is devoted to the GHT identity, the five-term relation and the definition of the Fourier/Hopf pairing. We first revisit the identity for ordinary Macdonald polynomials, reinterpreting it as a map between the polynomials and Whittaker vectors, and then extend it to GMP using the higher vertex operators defined previously.

The appendices present examples, technical results, and proofs. Specifically, Appendix~\ref{app_A1} recalls technical results on the Nekrasov factor, \ref{AppA2} and \ref{app:pieri} present the action of quantum toroidal $\gl(1)$ generators on Macdonald functions, \ref{app_D_H} discusses a twist of the coproduct and \ref{app:framing-op} the framing operator. Appendix~\ref{sec_higher_VO} contains the proof of the intertwining relation for higher vertex operators. Appendix~\ref{app_finite} presents an example of generalized Macdonald polynomials with finite (two) variables. Finally, Appendix~\ref{app:Mukade} gives the derivation of Pieri rules for GMP from matrix elements of the Mukad\'e operator.

\paragraph{Partitions.} The results presented in this paper rely heavily on combinatorial properties of partitions, and we need to prepare some notations. We denote partitions with Greek letters $\l,\mu,\nu,\dots$. A partition $\l=[\l_1,\l_2,\dots]$ is identified with the set of boxes $\Abox$ of coordinates $(i,j)$ with $i\in[\![1,\ell(\l)]\!]$ and $j\in[\![1,\l_i]\!]$ where $\ell(\l)$ is the length of the partitions. We denote $|\l|$ the total number of boxes, and $\l^T$ the transposed partition. We also define the standard combinatorial factors $n(\l)=\sum_{i=1}^{\ell(\l)}(i-1)\l_i$ and $z_\l=\prod_{k\geq1} k^{m_k} m_k!$, where $m_k$ is the multiplicity of the part $k$ in $\l$.
We say that two partitions satisfy $\mu\subset_k\l$ if $\mu$ is contained in $\l$ and $|\l|=|\mu|+k$. To any partition $\l$ we associate the set of addable boxes $A(\l)$ (resp. removable boxes $R(\l)$), i.e. the set of boxes $\Abox$ such that $\l+\Abox$ (resp. $\l-\Abox$) is also a partition.

To each box $\Abox\in\l$ we associate the \emph{box content} $\chi_\sAbox=q_1^{i-1}q_2^{j-1}$ expressed in terms of the quantum group parameters $q_1,q_2\in\mC^\times$. For convergence issues, we assume $|q_1|,|q_2|<1$. For any partition $\l$, we define the quantities $\chi_\l$, $g_\l$, $\sp_\l$ and $\me_\l$ as follows,
\be\label{expr_gl}
 \chi_\l := \sum_{\sAbox\in\l} \chi_\sAbox,
 \hspace{30pt}
 g_\l := \prod_{\sAbox\in\l} \chi_\sAbox
 = q_1^{n(\l)} q_2^{n(\l^T)},
\ee
\be\label{expr_sp}
 \sp_\l := \sum_{i=1}^\infty q_1^i q_2^{\l_i},
\ee
and
\be\label{expr_el}
 \me_\l := 1-(1-q_1)(1-q_2)\chi_\l
 = \sum_{\sAbox\in A(\l)} \chi_\sAbox - q_3^{-1}\sum_{\sAbox\in R(\l)}\chi_\sAbox
 = \frac{\sp_\l}{\sp_\emptyset},
\ee
where we assume $\l_i=0$ for $i>\ell(\l)$.

\paragraph{Plethystic notation.} We use the ``\emph{plethystic}'' notation for symmetric functions.
Consider the algebra $\Lambda$ of symmetric functions in a formal
infinite alphabet $(x_1,x_2,x_3,\dots)$, with coefficients in the field of rational
functions $\BQ(q_1,q_2)$.
Plethystic notation is a notational device which simplifies manipulation of symmetric function identities, by regarding the elements of $\Lambda$ as formal power series in the powersum symmetric functions $p_k$.
To begin with, if $E = E(t_1,t_2,t_3,\dots)$ is a formal Laurent series in the variables $t_1,t_2,t_3,\dots$ (which may include the parameters $q_1,q_2$) with coefficients in $\BZ$, we set
\be
 p_k(E) = E(t_1^k,t^k_2,t^k_3\dots)\,.
\ee
In the language of $\l$-rings, this corresponds to the $k$-th Adams operator on $\Lambda$.
More generally, if a certain symmetric function $f\in\Lambda$ is expressed as the formal
power series
\be
 f = f(p_1,p_2,p_3,\dots)
\ee
then we let
\be
 f(E) = f(p_1,p_2,p_3,\dots)\Big|_{p_k\mapsto E(t_1^k,t^k_2,t^k_3\dots)}
\ee
and refer to this operation as ``\emph{plethystic substitution}'' of $E$ into the symmetric function $f$.
Set $\bsx:=x_1+x_2+x_3+\dots$, then we will write
\be
 f(\bsx) = f(p_1,p_2,p_3,\dots)\Big|_{p_k\mapsto x_1^k+x_2^k+x_3^k+\dots}
\ee
for any $f\in\Lambda$, and regard $f(\bsx)$ as a symmetric function in the alphabet $\bsx$.
As an example, we will often consider Laurent series $E=E(q_1,q_2)$ such as $\sp_\lam$ and $\me_\lam$ defined in \eqref{expr_sp} and \eqref{expr_el}, and use them as arguments for symmetric functions, for instance
\be
 p_k(\sp_\lam) = \sum_{i=1}^\infty p_k(q_1^iq_2^{\lam_i})
 = \sum_{i=1}^\infty q_1^{ki}q_2^{k\lam_i}\ ,
 \hspace{30pt}
 p_k(\me_\lam) = 1-(1-q_1^k)(1-q_2^k)\sum_{\sAbox\in\lam} \chi_\sAbox^k\ .
\ee
We denote by $E^\vee(t_1,t_2,t_3,\dots)$ the series $E(t_1^{-1},t_2^{-1},t_3^{-1},\dots)$ and similarly $\bsx^\vee = \sum_{i\geq1} x_i^{-1}$.

We introduce the ``\emph{plethystic exponential}'' function as
\be
\label{eq:PE}
 \exp\left(\sum_{k>0}\frac{p_k(\bsx)p_k(\bsy)}{k}\right)
 = \sum_\l s_\l(\bsx)\,s_\l(\bsy)
 = \prod_{i,j}\frac1{1-x_iy_j}
\ee
where $\bsy=y_1+y_2+y_3+\dots$, is a second formal infinite alphabet and $s_\l$ are Schur symmetric functions.
At this point it is also convenient to introduce the ``\emph{translation}'' operator which acts on a symmetric function $f(\bsx)$ according to the plethystic formula
\be
 \exp\left(\sum_{k>0}p_k(\bsy)\frac{\p}{\p p_k(\bsx)}\right) f(\bsx)
 = f(\bsx+\bsy)
\ee
Formally, we may regard the symbols $k\frac{\p}{\p p_k(\bsx)}$ as the adjoint operators of $p_k(\bsx)$ w.r.t.\ the Hall inner product $\langle s_\l,s_\mu\rangle = z_\l\delta_{\l,\mu}$. We then write
\be
\label{eq:adjoint-p_k}
 \left\langle p_k(\bsx)f(\bsx),g(\bsx)\right\rangle
 = \left\langle f(\bsx),k\frac{\p}{\p p_k(\bsx)} g(\bsx)\right\rangle
\ee
for arbitrary $f,g\in\Lambda$.
A very important remark is that the adjoint operator of $p_k(\bsx)$ can only be defined when the symbol $\bsx$ is truly a formal infinite alphabet, i.e.\ when the formal variables $x_i$ are all independent of each other so that it makes sense to take derivatives w.r.t.\ the powersums $p_k(\bsx)$ in the r.h.s.\ of \eqref{eq:adjoint-p_k}.

\paragraph{Useful functions.} We define the following functions,
\begin{equation}\label{def_CYY}
 \CY_\l(z)=\exp\left(-\sum_{k>0}\frac{z^{-k}p_k(\me_\l)}{k}\right),
 \hspace{30pt}
 \Psi_\l(z)=\frac{\CY_\lam(q_3^{-1}z)}{\CY_\lam(z)}
 = \exp\left(\sum_{k>0}(1-q_3^k)\frac{z^{-k}p_k(\me_\l)}{k}\right),
\end{equation} 
and
\begin{equation}\label{def_S}
 S(z)
 = \dfrac{(1-q_1z)(1-q_2z)}{(1-z)(1-q_1q_2z)},
 \hspace{30pt}
 g(z) = \prod_{\a=1,2,3}\dfrac{1-q_\a z}{1-q_\a^{-1}z}
\end{equation} 
where $q_3=(q_1q_2)^{-1}$. The various expressions \eqref{expr_el} for $\me_\l$ give rise to different expressions for these functions,
\be
\begin{aligned}\label{expr_CYY}
 \CY_\l(z)&=\left(1-z^{-1}\right)\prod_{\sAbox\in\l}S(\chi_\sAbox/z)=\frac{\prod_{
 \sAbox\in A(\l)}1-\chi_\sAbox/z}{\prod_{\sAbox\in R(\l)}1-\chi_\sAbox/(q_3z)},\\
 \Psi_\l(z)&=\dfrac{1-q_3/z}{1-1/z}\prod_{\sAbox\in\l}g(z/\chi_\sAbox)=\prod_{
 \sAbox\in A(\l)}\frac{1-q_3\chi_\sAbox/z}{1-\chi_\sAbox/z}\times\prod_{\sAbox\in R(\l)}
 \frac{1-\chi_\sAbox/(q_3z)}{1-\chi_\sAbox/z},
\end{aligned}
\ee
We also introduce the (K-theoretic) Nekrasov factor $N_{\l,\mu}(z)$,
\begin{equation}
 N_{\l,\mu}(z)=\prod_{(i,j)\in\l}(1-z q_1^{\l^T_j-i+1}q_2^{j-\mu_i})
 \prod_{(i,j)\in\mu}(1-z q_1^{i-\mu^T_j}q_2^{\l_i-j+1})
\end{equation}
and the double $q$-Pochhammer
\begin{equation}
\label{eq:double-qPoc}
 \CG(z)=(z;q_1,q_2)_\infty = \prod_{i,j=1}^{\infty}(1-zq_1^{i-1}q_2^{j-1})
 =\exp\left(-\sum_{k>0}\frac{z^k}{k(1-q_1^k)(1-q_2^k)}\right).
\end{equation}

\paragraph{Multi-partitions.} We denote the $r$-tuple partitions with bold symbols, e.g.\ $\bl=(\l^{(1)},\l^{(2)},\dots,\l^{(r)})$. Such partitions are usually associated with a $r$-dimensional vector of weights $\bsv=(v_1,\dots,v_r)$. To a box $\Abox\in\bl$ corresponding to a box of coordinates $(i,j)$ in some $\l^{(\a)}$ we associate the box content $\chi_\sAbox=q_1^{i-1}q_2^{j-1}$ as before, and the weight $v_\sAbox=v_\a$. We extend the definitions \eqref{def_CYY} as follows,
\be
\label{eq:coloredCY-Psi}
 \CY_\bl(z|\bsv) = \prod_{\a=1}^r \CY_{\l^{(\a)}}(z/v_\a),
 \hspace{30pt}
 \Psi_\bl(z|\bsv) = \prod_{\a=1}^r \Psi_{\l^{(\a)}}(z/v_\a),
\ee
and we further introduce the functions
\be
\label{def_rl}
 r_\bl(\Abox|\bsv)=\res_{w=v_\sAbox\chi_\sAbox}\dfrac1{w\CY_{\bl}(w|\bsv)},
 \hspace{30pt}
 r_\bl^\ast(\Abox|\bsv)=\res_{w=v_\sAbox\chi_\sAbox}w^{-1}\CY_{\bl}(q_3^{-1}w|\bsv).
\ee

Multi-partitions are the natural labels for the basis of the space of multi-symmetric functions, i.e.\ functions separately symmetric in many alphabets. This space is isomorphic to the $r$-fold tensor product
\be
 \Lambda[\bsx^{(1)},\dots,\bsx^{(r)}]
 := \Lambda[\bsx^{(1)}]\otimes\dots\otimes\Lambda[\bsx^{(r)}]
 \equiv\Lambda^{\otimes r}
\ee
whose elements will be denoted as
\be
 f(\bsx^\bullet) \equiv f(\bsx^{(1)},\dots,\bsx^{(r)})\,.
\ee
One natural linear basis of such functions is that of the tensor product of power-sums in each alphabet. In the rest of the paper, we will employ the equivalent notations
\be
 p_k^{(\a)} := p_k(\bsx^{(\a)}) \equiv \underbrace{1\otimes\dots\otimes1}_{\a-1}
 \otimes\,p_k\otimes
 \underbrace{1\otimes\dots\otimes1}_{r-\a}
\ee
for $\a=1,\dots,r$. We observe also that there exists a natural ``diagonal'' embedding of the ring of symmetric functions in one alphabet into the ring of multi-symmetric functions in $r$ alphabets, namely $\Lambda\xhookrightarrow{}\Lambda^{\otimes r}$ such that the image of $f\in\Lambda$ is the multi-symmetric function $f(\sum_{\a=1}^r\bsx^{(\a)})\in\Lambda^{\otimes r}$.

\section{Quantum toroidal \texorpdfstring{$\gl(1)$}{gl(1)} algebra}

\subsection{Definition}\label{sec_Miki}
The quantum toroidal $\gl(1)$ algebra, here denoted $\CE$, is defined by the relations obeyed by the modes of the Drinfeld currents
\begin{equation}
x^\pm(z)=\sum_{k\in\mathbb{Z}}z^{-k}x^\pm_k,\quad \psi^\pm(z)=\sum_{k\geq0}z^{\mp k}\psi_{\pm k}^\pm.
\end{equation} 
These relations can be written in terms of exchange relations for the currents,
\be
\label{def_DIM}
\begin{aligned}
&[\psi^\pm(z),\psi^\pm(w)]=0,\quad \psi^+(z)\psi^-(w)=\dfrac{g(\hg w/z)}{g(\hg^{-1}w/z)}\psi^-(w)\psi^+(z),\\
&\psi^+(z)x^\pm(w)=g(\hg^{\pm1/2}z/w)^{\pm1}x^\pm(w)\psi^+(z),\quad \psi^-(z)x^\pm(w)=g(\hg^{\mp1/2}z/w)^{\pm1}x^\pm(w)\psi^-(z)\\
&\prod_{i=1,2,3}(z-q_i^{\pm1}w)\ x^\pm(z)x^\pm(w)=\prod_{i=1,2,3}(z-q_i^{\mp1}w)\ x^\pm(w)x^\pm(z),\\
&[x^+(z),x^-(w)]=\k\left(\d(\hg^{-1}z/w)\psi^+(\hg^{1/2}w)-\d(\hg z/w)\psi^-(\hg^{-1/2}w)\right),
\end{aligned}
\ee
where $\hg$ is a central element. In this formulation, $g(z)$ is the structure function defined in \eqref{def_S} and $\k$ is a $\mathbb{C}$-number that can be expressed in terms of the quantum group parameters $q_1,q_2$,
\begin{equation}
\k=\dfrac{(1-q_1)(1-q_2)}{(1-q_1q_2)}.
\end{equation}
Introducing the notation $\g=q_3^{1/2}$, the central element $\hg$ is sometimes denoted $\g^c$ with $c$ central. It is noted that the zero modes $\psi_0^+$ and $\psi_0^-$ are also central, and we introduce the central element $\bar c$ in order to denote $\psi_0^\pm=\g^{\mp\bar c}$.

The subalgebra generated by the elements $\psi_{\pm k}^\pm$ and the central element $\hg$ can be seen as a deformation of the Cartan subalgebra of simple Lie algebras but it is commutative only when $c=0$. It is sometimes useful to reformulate the Cartan generators $\psi_{\pm k}^\pm$ in terms of the modes $a_k$ of a $q$-deformed Heisenberg algebra,
\begin{equation}\label{def_ak}
\psi^\pm(z)=\psi_0^\pm\exp\left(\pm\sum_{k>0}z^{\mp k}a_{\pm k}\right).
\end{equation} 
These modes obey the following commutation relations, that can be derived from the original definition \eqref{def_DIM} of the algebra:
\be\label{com_ak}
\begin{aligned}
 & [a_k,a_l] = (\g^{ck}-\g^{-ck})c_k\d_{k+l}\,,\hspace{30pt}
 [a_k,x_l^\pm] = \pm\g^{\mp c|k|/2}c_k x_{l+k}^\pm\,,\\
 & [x_k^+,x_l^-] =
 \begin{cases}
 \k\g^{c(k-l)/2}\psi_{k+l}^+, & k+l>0,\\
 \k(\g^{c(k-l)/2-\bc}-\g^{-c(k-l)/2+\bc}), & k+l=0,\\
 -\k\g^{-c(k-l)/2}\psi_{k+l}^-, & k+l<0,\\
 \end{cases}
\end{aligned}
\ee
where we have introduced the coefficients $c_k$ of the expansion of the function $g(z)$:
\begin{equation}\label{exp_g}
[g(z)]_{\pm}=\exp\left(\pm\sum_{k>0}z^{\mp k}c_k\right),\quad c_k=-\dfrac1{k}\prod_{\a=1,2,3}(1-q_\a^k),
\end{equation} 
Here $[f(z)]_+$ refers to an expansion around $z=\infty$ of the function $f(z)$, while $[f(z)]_-$ denotes the expansion around $z=0$. The algebra $\CE$ is generated multiplicatively by the two central elements $(c,\bc)$ and the four generators $x_0^\pm$ and $a_{\pm1}$.

\paragraph{Gradings.} It is possible to supplement the algebra $\CE$ with two commuting grading operator $d$ and $\bd$ acting on Drinfeld currents as follows,
\begin{align}
\begin{split}
&[d,x_k^\pm]=-kx_k^\pm,\quad [d,\psi_{\pm k}^\pm]=\mp k\psi_{\pm k}^\pm,\\
&[\bd,x_k^\pm]=\pm x_k^\pm,\quad [\bd,\psi_{\pm k}^\pm]=0,\quad [\bd,d]=0.
\end{split}
\end{align}
We denote $\CE'$ the algebra extended by these two grading operators.

\paragraph{Hopf algebra structure.} The algebra $\CE$ has the structure of a Hopf algebra with the Drinfeld coproduct defined as
\begin{align}
\begin{split}\label{Drinfeld_coproduct}
&\D(x^+(z))=x^+(z)\otimes 1+\psi^-(\g^{c_{(1)}/2}z)\otimes x^+(\g^{c_{(1)}}z),\\
&\D(x^-(z))=x^-(\g^{c_{(2)}} z)\otimes \psi^+(\g^{c_{(2)}/2}z)+1\otimes x^-(z),\\
&\D(\psi^\pm(z))=\psi^\pm(\g^{\pm c_{(2)}/2}z)\otimes\psi^\pm(\g^{\mp c_{(1)}/2}z),\quad \D(a_k)=a_k\otimes \g^{-c|k|/2}+\g^{c|k|/2}\otimes a_k\\
\end{split}
\end{align}
with $c_{(1)}=c\otimes 1$, $c_{(2)}=1\otimes c$ and $\D(c)=c_{(1)}+c_{(2)}$, $\D(\bc)=\bc_{(1)}+\bc_{(2)}$. The expressions of antipode and counit will not be needed in this paper, we omit them here. This Hopf algebra stucture extends to $\CE'$ by imposing that grading elements are cocommutative.


\paragraph{Automorphisms.} The quantum toroidal $\gl(1)$ algebra possess the group $\text{SL}(2,\mZ)$ of automorphisms, generated by the elements $\CS$ and $\CT$.\footnote{It should be noted that the transformation $\CT$ defined here, following an earlier convention, corresponds in fact to the element $T^{-1}$ of $\text{SL}(2,\mZ)$. Note also that the group of automorphisms is actually $\text{SL}(2,\mZ)$ only up to inner automorphisms. Indeed, with our convention we have $\CS^4=1$ but $(\CS\CT^{-1})^3=\iota^{-1}\CS^2$ with $\iota$ defined in \eqref{def_iota}.} The automorphism $\CS$ is known as Miki's automorphism \cite{Miki2007}, it is defined by its action on the generators $a_{\pm1}$, $x_0^\pm$ and $(c,\bc)$ of the algebra,
\begin{equation}\label{Miki_init}
a_1\to(\g-\g^{-1})x_0^+\to -a_{-1}\to -(\g-\g^{-1})x_0^-\to a_1,\quad (c,\bc)\to (-\bc,c)
\end{equation}
It is readily seen that this automorphism is of degree four. The transformations of other generators involve nested commutation relations, with the exception of the four modes $x_1^\pm$ and $x_{-1}^\pm$ which are transformed into each other,
\begin{equation}
x_1^+\to\g^{-(c+\bc)/2}x_{-1}^+\to -x_{-1}^-\to-\g^{(c+\bc)/2}x_1^-\to x_1^+.
\end{equation}
We will denote $b_k=\CS(a_k)$ and $y_k^\pm = \CS(x_k^\pm)$, and refer to \cite[Appendix~A]{Bourgine2018a} for an inductive construction of these generators. Miki's automorphism also acts on the grading operators, sending $(d,\bd)$ to $(-\bd,d)$. As a consequence, an element $e\in\CE$ of degrees $(d_e,\bd_e)$ is mapped to another element with degrees $(\bd_e,-d_e)$.

The action of the automorphism $\CT$ is relatively straightforward,
\begin{equation}
\CT(x_k^\pm)=x_{k\mp1}^\pm,\quad \CT(a_k)=a_k,\quad \CT(\psi_{\pm k}^\pm)=\g^{\mp c}\psi_{\pm k}^\pm,\quad \CT(c,\bc)=(c,\bc+c).
\end{equation} 
When discussing framing operators, it is also useful to introduce the automorphism $\CT^\perp=\CS\CT\CS^{-1}$. With our definitions, we have the properties $\CS\CT^\perp\CS^{-1}=\CS^2\CT\CS^{-2}=\CT$. We note that $\CT^\perp$ leaves $x_0^\pm$ invariant, $\CT^\perp(c,\bc)=(c-\bc,\bc)$ and 
\begin{equation}\label{CTCS}
\CT^\perp(a_{\pm1})=(\g-\g^{-1})\g^{\mp(c-\bc)/2}x_{\pm1}^\pm,\quad(\CT^\perp)^{-1}(a_{\pm1})=-(\g-\g^{-1})\g^{\pm(c+\bc)/2}x_{\pm1}^\mp,
\end{equation}

Furthermore, $\CT$ is compatible with the co-algebraic structure defined in \eqref{Drinfeld_coproduct}, i.e. $\D\circ\CT=(\CT\otimes\CT)\circ\D$, but neither $\CS$ nor $\CT^\perp$ are. In fact, $\CT^\perp$ is compatible with another co-algebraic structure based on the coproduct $\D^\perp=(\CS\otimes\CS)\circ\D\circ\CS^{-1}$. The automorphisms $\CT$ and $\CT^\perp$ can be extended to $\CE'$ by defining $\CT(d,\bd)=(d-\bd,d)$ and $\CT^\perp(d,\bd)=(d,\bd+d)$.

In this paper, we will use the following identity.
\begin{lemma}
\begin{equation}\label{id_TpTTp}
\CT^\perp\CT\CT^\perp=\iota\CS,
\end{equation}
where $\iota$ is the inner automorphism
\begin{equation}\label{def_iota}
\iota:e\in\CE\to \g^{(dc+\bd\bc)/2}e\g^{-(dc+\bd\bc)/2},
\end{equation}
i.e. $\iota(x_k^\pm)=\g^{(-kc\pm\bc)/2}x_k^\pm$ and $\iota(a_k)=\g^{-kc/2}a_k$.
\end{lemma}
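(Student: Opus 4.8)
The plan is to use that both $\CT^\perp\CT\CT^\perp$ and $\iota\CS$ are algebra automorphisms of $\CE$, so they coincide as soon as they agree on a set of multiplicative generators. The excerpt records that $\CE$ is generated by the two central elements $c,\bc$ together with the four generators $x_0^\pm$ and $a_{\pm1}$, so it suffices to evaluate both sides of \eqref{id_TpTTp} on these six elements. Before computing, it is worth recording a structural sanity check: identifying $\CS$ and $\CT$ with the matrices $S=\left(\begin{smallmatrix}0&-1\\1&0\end{smallmatrix}\right)$ and $T^{-1}=\left(\begin{smallmatrix}1&-1\\0&1\end{smallmatrix}\right)$ generating $\text{SL}(2,\mZ)$, one finds $\CT^\perp=\CS\CT\CS^{-1}\leftrightarrow\left(\begin{smallmatrix}1&0\\1&1\end{smallmatrix}\right)$, and a two-line matrix product gives $\CT^\perp\CT\CT^\perp\leftrightarrow S$. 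Thus the two automorphisms already agree in $\text{SL}(2,\mZ)=\text{Aut}(\CE)/\text{Inn}(\CE)$, and the real content of the lemma is to identify the inner discrepancy precisely as $\iota$.

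I would first dispose of the central elements. Using $\CT^\perp(c,\bc)=(c-\bc,\bc)$ and $\CT(c,\bc)=(c,\bc+c)$ one computes $\CT^\perp\CT\CT^\perp(c)=\CT^\perp\CT(c-\bc)=\CT^\perp(-\bc)=-\bc$ and $\CT^\perp\CT\CT^\perp(\bc)=\CT^\perp\CT(\bc)=\CT^\perp(\bc+c)=c$. Since $\iota$ fixes $c$ and $\bc$, this matches $\iota\CS(c)=\CS(c)=-\bc$ and $\iota\CS(\bc)=\CS(\bc)=c$ read off from \eqref{Miki_init}.

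The four remaining generators are the substance. For $a_{\pm1}$ I would apply the three maps in the order $\CT^\perp$, then $\CT$, then $\CT^\perp$: the first line of \eqref{CTCS} sends $a_{\pm1}\mapsto(\g-\g^{-1})\g^{\mp(c-\bc)/2}x_{\pm1}^\pm$; then $\CT$ shifts $x_{\pm1}^\pm\mapsto x_0^\pm$ while $\CT(c-\bc)=-\bc$ turns the prefactor into $\g^{\pm\bc/2}$; finally $\CT^\perp$ fixes both $x_0^\pm$ and $\bc$, yielding $(\g-\g^{-1})\g^{\pm\bc/2}x_0^\pm$, which is exactly $\iota\CS(a_{\pm1})$. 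For $x_0^\pm$ the difficulty is reversed: $\CT^\perp$ fixes $x_0^\pm$, then $\CT$ produces $x_{-1}^+$ (resp.\ $x_1^-$), and the final step needs $\CT^\perp$ on these non-generators. I would obtain it by inverting the second line of \eqref{CTCS}: applying $\CT^\perp$ to $(\CT^\perp)^{-1}(a_{\pm1})=-(\g-\g^{-1})\g^{\pm(c+\bc)/2}x_{\pm1}^\mp$ and using $\CT^\perp(c+\bc)=c$ gives $\CT^\perp(x_{-1}^+)=-\g^{c/2}a_{-1}/(\g-\g^{-1})$ and $\CT^\perp(x_1^-)=-\g^{-c/2}a_1/(\g-\g^{-1})$, which are precisely $\iota\CS(x_0^+)$ and $\iota\CS(x_0^-)$ obtained from $\CS(x_0^\pm)=-a_{\mp1}/(\g-\g^{-1})$ and $\iota(a_k)=\g^{-kc/2}a_k$.

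I expect the only genuine obstacle to be careful bookkeeping of the central exponentials $\g^{(\cdots)/2}$ as they are transported through the three maps — in particular getting the signs and the combinations $c\pm\bc$ right, and correctly inverting \eqref{CTCS} rather than reading it off directly. Once the prefactors are tracked, the mode shifts and the comparison with the explicit formula $\iota(x_k^\pm)=\g^{(-kc\pm\bc)/2}x_k^\pm$ are immediate. Note finally that every intermediate expression stays inside $\CE$ (scalars in $c,\bc$ times $x_{\pm1}^\pm$, $x_0^\pm$ or $a_{\pm1}$), so no use of the grading operators $d,\bd$ is required — consistent with \eqref{id_TpTTp} being an identity in $\text{Aut}(\CE)$ and not in $\text{Aut}(\CE')$.
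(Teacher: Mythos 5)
Your proposal is correct and follows essentially the same route as the paper: both check the identity on the multiplicative generators $c,\bc,x_0^\pm,a_{\pm1}$, transport the central prefactors through $\CT^\perp$, $\CT$, $\CT^\perp$ using \eqref{CTCS} and the action on $(c,\bc)$, and invert \eqref{CTCS} to get $\CT^\perp(x_{\mp1}^\pm)$; your intermediate expressions and signs all match the paper's. The $\text{SL}(2,\mZ)$ sanity check is a nice addition not present in the paper, but otherwise the argument is the same.
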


\begin{proof}
Recall the action of $\CS^2$ as given in \cite{Bourgine2018a} (we use the same conventions here),
\begin{equation}
\CS^2(x_k^\pm)=-x_{-k}^\mp,\quad \CS^2(a_k)=-a_{-k},\quad \CS^2(c,\bc)=(-c,-\bc).
\end{equation} 
Obviously, we have $\CS^4=1$. Then, a direct computation gives
\begin{equation}
\CS^2\CT\CS^{-2}(x_k^\pm)=x_{k\mp1}^\pm,\quad \CS^2\CT\CS^{-2}(a_k)=a_{k},\quad \CS^2\CT\CS^{-2}(c,\bc)=(c,c+\bc),
\end{equation} 
and thus $[\CT,\CS^2]=[\CT^\perp,\CS^2]=0$. The action of $\CT^\perp$ on the generators is the result of direct calculation. The compatibility of $\CT$ with the co-algebraic structure follows from a direct comparison,
\begin{align}
\begin{split}
&\D(\CT(x^\pm(z))=z^{\mp1}\D(x^\pm(z)),\quad \D(\CT(\psi^\pm(z))=\g^{\mp c_{(1)}\mp c_{(2)}}\D(\psi^\pm(z)),\\
&(\CT\otimes\CT)\circ\D(x^+(z))=z^{-1}x^+(z)\otimes 1+ (\g^{c_{(1)}}z)^{-1}\g^{c_{(1)}}\psi^-(\g^{c_{(1)}/2}z)\otimes x^+(\g^{c_{(1)}}z),\\
&(\CT\otimes\CT)\circ\D(x^-(z))=(\g^{c_{(2)}} z)\g^{-c_{(2)}}x^-(\g^{c_{(2)}} z)\otimes \psi^+(\g^{c_{(2)}/2}z)+z1\otimes x^-(z),\\
&(\CT\otimes\CT)\circ\D(\psi^\pm(z))=\g^{\mp c_{(1)}\mp c_{(2)}}\psi^\pm(\g^{\pm c_{(2)}/2}z)\otimes\psi^\pm(\g^{\mp c_{(1)}/2}z)
\end{split}
\end{align}
The compatibility of $\CT^\perp$ with $\D^\perp$ is obtained by inserting the relation $\CT=\CS^{-1}\CT^\perp \CS$ in the relation $\D(\CT(e))=(\CT\otimes\CT)\circ\D(e)$.

To prove the identity \eqref{id_TpTTp}, it is sufficient to examine the transformation of the generators of $\CE$. For the central elements, we have
\begin{equation}
\CT^\perp\CT\CT^\perp(c,\bc)=\CT^\perp\CT(c-\bc,\bc)=\CT^\perp(c-\bc,c)=(-\bc,c)=\iota\CS(c,\bc).
\end{equation}
In a similar way,
\begin{align}
\begin{split}
\CT^\perp\CT\CT^\perp(a_{\pm1})&=(\g-\g^{-1})\CT^\perp\CT(\g^{\mp(c-\bc)/2}x^\pm_{\pm1})=(\g-\g^{-1})\CT^\perp(\g^{\pm\bc/2}x^\pm_{0})\\
&=(\g-\g^{-1})\g^{\pm\bc/2}x^\pm_{0}=\iota\CS(a_{\pm1}),\\
\CT^\perp\CT\CT^\perp(x_{0}^\pm)&=\CT^\perp\CT(x_{0}^\pm)=\CT^\perp(x_{\mp1}^\pm)=-\dfrac{\g^{\pm c/2}}{\g-\g^{-1}}a_{\mp1}=\iota\CS(x_0^\pm).
\end{split}
\end{align}
Finally, we note that
\begin{equation}
(\CS\CT^{-1})^3=\CS\CT^{-1}\CS\CT^{-1}\CS\CT^{-1}=\CT^{\perp-1}\CS^2\CT^{-1}\CT^{\perp-1}\CS=\CT^{\perp-1}\CT^{-1}\CT^{\perp-1}\CS^3=\CS^{-1}\iota^{-1}\CS^3=\iota^{-1}\CS^2.
\end{equation}
\end{proof}

\paragraph{Anti-automorphisms.} The algebra $\CE$ possess two involutive anti-automorphisms $\s_H$ and $\s_V$ defined as follows,
\begin{align}\label{def_sV_sH}
\begin{split}
&\s_H(x_k^\pm)=x_{-k}^\pm,\quad \s_H(\psi_{\pm k}^\pm)=\psi_{\mp k}^\mp,\quad \s_H(a_k)=-a_{-k},\quad \s_H(c,\bc)=(c,-\bc),\\
&\s_V(x_k^\pm)=-x_k^\mp,\quad \s_V(\psi_{\pm k}^\pm)=\psi_{\pm k}^\pm,\quad \s_V(a_k)=a_k,\quad \s_V(c,\bc)=(-c,\bc).
\end{split}
\end{align}
The anti-automorphisms extend to $\CE'$ defining $\s_H(d,\bd)=(d,-\bd)$ and $\s_V(d,\bd)=(-d,\bd)$.

\subsection{Representations}
In this paper, we only consider representations for which the central elements $(c,\bc)$ are integer. We denote these representations $\rho_\bsv^{(\ell_1,\ell_2)}$ where $\bsv$ is a complex vector of weights, and $\rho_\bsv^{(\ell_1,\ell_2)}(c)=\ell_1$, $\rho^{(\ell_1,\ell_2)}_\bsv(\bar c)=\ell_2$. The corresponding modules will be denoted $\CF_\bsv^{(\ell_1,\ell_2)}$. Strictly speaking, these modules do not depend on the weights $\bsv$, but it is convenient to keep track of them with this notation.

Outer automorphisms can be used to define new representations by composition. This property is particularly useful in the case of Miki's automorphism for which it can be interpreted as bispectral duality. The new representation $\rho_\bsv^{(\ell,\bell)}\circ\CS$ has levels $(-\bell,\ell)$, which suggests that it is isomorphic to the representation $\rho_{\tilde{\bsv}}^{(-\bell,\ell)}(e)$, i.e.
\begin{equation}\label{prop_S}
\rho^{(\ell,\bell)}_\bsv\circ\CS(e)=\CM_{\CS}^{(\ell,\bell)-1}\rho_{\tilde{\bsv}}^{(-\bell,\ell)}(e) \CM_{\CS}^{(\ell,\bell)},
\end{equation}
with the transformation matrix $\CM_\CS^{(\ell,\bell)}:\CF_{\bsv}^{(\ell,\bell)}\to \CF_{\tilde{\bsv}}^{(-\bell,\ell)}$. The precise relation between the weights $\bsv$ and $\tilde{\bsv}$ depends on the representation. To show that such an isomorphism exists, we only need to study the four generators $a_{\pm1}$ and $x_0^\pm$. This isomorphism is known for the representations of levels $(1,0)$ and $(0,1)$ \cite{Feigin2009a,Bourgine2018a}, and we will use the $e_1$-Pieri rule in the next section to show that it extends to representations $(m,0)$ and $(0,m)$. In this subsection, we recall the definition of these representations.

\subsubsection{Vertical representations}
The vertical representations are highest $\ell$-weight representations of levels $(0,m)$ introduced in \cite{Feigin:2011qua}. Their module $\CF_\bsv^{(0,m)}$ is spanned by a basis of states $\dket{\bl}$ labelled by $m$-tuple Young diagram $\bl=(\l^{(1)},\dots,\l^{(m)}$), with the higest weight $\dket{\vac}$ corresponding to the $m$-tuple of empty partition. The four Drinfeld currents act on these states as follows:
\begin{align}\label{def_vert_rep}
\begin{split}
&\rho_{\bsv}^{(0,m)}(x^+(z))\dket{\bl}
 = \sum_{\sAbox\in A(\bl)}\delta(v_\sAbox\chi_\sAbox/z)\,
 r_\bl(\Abox|\bsv)\dket{\bl+\Abox},\\
&\rho_{\bsv}^{(0,m)}(x^-(z))\dket{\bl}
 = \g^{-m}\sum_{\sAbox\in R(\bl)}\delta(v_\sAbox\chi_\sAbox/z)\,
 r_\bl^\ast(\Abox|\bsv)\dket{\bl-\Abox},\\
&\rho_{\bsv}^{(0,m)}(\psi^\pm(z))\dket{\bl}
 = \g^{-m}\left[\Psi_{\bl}(z|\bsv)\right]_\pm\dket{\bl}.
\end{split}
\end{align}
where the matrix elements $r_\bl$, $r^\ast_\bl$ and $\Psi_{\bl}$ are defined as in \eqref{def_rl} and \eqref{eq:coloredCY-Psi}, respectively.
In this representation, the Cartan current $\psi^\pm(z)$ are diagonal, and their eigenvalues are expressed in terms of the functions $\Psi_\bl(z|\bsv)$ defined in \eqref{eq:coloredCY-Psi} and depending on the weights $\bsv=(v_1,\dots,v_m)$. The current $x^+(z)$ (resp.\ $x^-(z)$) acts as creation (resp.\ annihilation) operators, and the summation is over the set of addable (resp.\ removable) boxes defined earlier. Their matrix elements involve residues of the rational functions $\CY_{\bl}(z|\bsv)$. In a generic situation, i.e. $v_\a/v_\b \neq q_1^{n_1}q_2^{n_2}$ for all $\a,\b$ and $n_1,n_2\in\BZ$, these functions have only single poles and single zeros. We note that when $m=1$, the matrix elements $r_\l(\Abox)$ and $r_\l^\ast(\Abox)$ are, in fact, independent of the weight $v$ of the representation. 

By expanding the functions $\Psi_\bl(z)$ both around $z=\infty$ and $z=0$, it is possible to deduce the representation of the modes $a_k$ describing the Cartan sector:
\begin{equation}
\begin{aligned}
 \rho^{(0,m)}_{\bsv}(a_k)\dket{\bl}
 &= \dfrac1k(1-q_3^k)\sum_{\a=1}^m v_\a^k \, p_k(\me_{\l^{(\a)}})\dket{\bl} \\
 &= c_k\left(\sum_{\sAbox\in\bl}v_\sAbox^k\chi_\sAbox^k-\dfrac1{(1-q_1^k)(1-q_2^k)}
 \sum_{\a=1}^mv_\a^k\right)\dket{\bl},\quad k\in\mathbb{Z}\setminus\{0\},
\end{aligned}
\end{equation}
where $c_k$ is the coefficient defined in \eqref{exp_g} and for $k<0$ we used $p_k(\me_\l)=p_{-k}(\me_\l^\vee)$.

\paragraph{Contragredient representation.} The contragredient representation follows from the scalar product
\begin{equation}\label{scalar_vert}
 (\dket{\bl},\dket{\bmu})_V = \dbra{\bl}\!\!\dket{\bmu} = \d_{\bl,\bmu}\ a_{\bl}(\bsv)^{-1},
 \quad a_{\bl}(\bsv) = \g^{m|\bl|}\prod_{\a,\b=1}^{m}
 \tilde{N}_{\l^{(\a)},\l^{(\b)}}(v_\a/v_\b)^{-1}.
\end{equation}
We observe that the basis vectors are orthogonal, but not orthonormal. Note that in the case of $m=1$, the (inverse) norms $a_\l$ are in fact independent of the weight $v$ of the representation. The norm has been chosen such that the representation satisfies $\rho_{\bsv}^{(0,m)\dagger}=\rho_{\bsv}^{(0,m)}\circ\s_V$ where $\s_V$ is the anti-automorphism defined in \eqref{def_sV_sH} and ``${}^\dagger$'' is the adjoint w.r.t.\ the inner product defined in \eqref{scalar_vert}.

\paragraph{Coproduct construction.} Vertical representations of higher level $(0,m)$ can be constructed as a tensor product of representations $(0,1)$ by repeated application of the coproduct, namely $\D^{(m-1)}=(\D(\otimes\Id)^{m-2})\circ(\D(\otimes\Id)^{m-3})\circ\cdots \circ\D$, followed by the tensor product $\rho_{v_1}^{(0,1)}\otimes\cdots\otimes\rho_{v_m}^{(0,1)}$. The natural basis for such construction is $\dket{\bl}^\otimes=\dket{\l^{(1)}}\otimes\cdots\otimes\dket{\l^{(m)}}$. Note however that this basis does not coincide with the basis $\dket{\bl}$, as can be easily realized by observing the action of $x^\pm(z)$:
\begin{align}
\begin{split}
&\left(\rho_{v_1}^{(0,1)}\otimes\cdots\otimes\rho_{v_m}^{(0,m)}\right)\D^{(m-1)}(x^+(z))\dket{\bl}^\otimes\\
&\hspace{30pt}=\sum_{\a=1}^m\sum_{\sAbox\in A(\l^{(\a)})}\delta(v_\sAbox\chi_\sAbox/z)\prod_{\b<\a}(\g^{-1}\Psi_{\l^{(\b)}}(z/v_\b))\ r_{\l^{(\a)}}(\Abox)\dket{\bl+\Abox}^\otimes,\\
\end{split}
\end{align}
and
\begin{align}
\begin{split}
&\left(\rho_{v_1}^{(0,1)}\otimes\cdots\otimes\rho_{v_m}^{(0,m)}\right)\D^{(m-1)}(x^-(z))\dket{\bl}^\otimes\\
&\hspace{30pt}=\gamma^{-1}\sum_{\a=1}^m\sum_{\sAbox\in R(\lambda^{(\a)})}\delta(v_\sAbox\chi_\sAbox/z)\prod_{\b>\a}(\g^{-1}\Psi_{\l^{(\b)}}(z/v_\b))\ r^\ast_{\l^{(\a)}}(\Abox)\dket{\bl-\Abox}^\otimes.
\end{split}
\end{align}
States of the two basis are simply related by a normalization factor, i.e. $\dket{\bl}^\otimes=G_{\bl}(\bsv)\dket{\bl}$. The expresssion of the normalization factor $G_{\bl}(\bsv)$ is given in Appendix~\ref{app_A1}, formula \eqref{def_gl}. This factor is such that all Young diagrams that make up $\bl$ in $\dket{\bl}$ play a symmetric role in the representation $(0,m)$. This choice is very natural from the point of view of the ADHM construction of the instanton moduli space, where the $r$-tuples of Young diagrams label the fixed points under a torus action. Note also that the action of the Cartan currents $\psi^\pm(z)$ takes the same form in both basis since it is diagonal, and so does not depend on the normalization of the states.

From the decomposition
\begin{equation}
a_{\bl}(\bsv)=\left(\prod_{\a=1}^m a_{\l^{(\a)}}\right) G_\bl(\bsv)G_\bl^\ast(\bsv),
\end{equation}
we deduce that the dual basis are related by the normalization factor $^\otimes\dbra{\bl}=G^\ast_{\bl}(\bsv)\dbra{\bl}$. The following relations will be used to express the vertical components of the generalized vertex operators defined in the next subsection,
\begin{equation}\label{eq:rule_G}
\prod_{\a=1}^m a_{\l^{(\a)}}\ \dket{\bl}^\otimes=G_\bl^\ast(\bsv)^{-1}a_\bl(\bsv)\dket{\bl},\quad \prod_{\a=1}^m a_{\l^{(\a)}}\ ^\otimes\dbra{\bl}=G_\bl(\bsv)^{-1}a_\bl(\bsv)\dbra{\bl}.
\end{equation}

\subsubsection{Horizontal representations}
The horizontal Fock representation $\rho_u^{(1,n)}$ has levels $(1,n)$ and depends on a weight $u\in\mC^\times$. In this representation, the generators act on a Fock space built upon the modes $J_k$ of a Heisenberg algebra, here normalised such that $[J_k,J_l]=k\d_{k+l}$,\footnote{Since there are no zero modes in this representation, we can identify $\ket{n}=\ket{\vac}$ and $\CF^{(1,n)}\simeq\CF^{(1,m)}$.}
\begin{equation}\label{def_CF_1n}
\CF_u^{(1,n)}=\mC[J_{-1},J_{-2},\dots]\ket{n},\quad\text{with}\quad J_{k>0}\ket{n}=0.
\end{equation}
The normal-ordering of vertex operators $:AB:$ is defined by placing the positive modes $J_{k>0}$ to the right. The Cartan modes $a_k$ are represented as the modes $J_k$ up to a factor depending on the quantum group parameters $q_1,q_2$,
\begin{equation}\label{q-osc-j}
\rho_u^{(1,n)}(a_k)=-\dfrac{\g^{-k/2}}{k}(1-q_2^k)(1-q_3^k)J_k,\quad \rho_u^{(1,n)}(a_{-k})=-\dfrac{\g^{-k/2}}{k}(1-q_1^k)(1-q_3^k)J_{-k},\quad (k>0).
\end{equation}
On the other hand, the currents $x^\pm(z)$ are expressed as vertex operators $\rho_u^{(1,n)}(x^\pm(z))=u^{\pm1}z^{\mp n}\eta^\pm(z)$ with
\begin{align}
\begin{split}\label{def_eta}
&\eta^+(z)=\exp\left(\sum_{k>0}\dfrac{z^{k}}{k}(1-q_1^k)J_{-k}\right)\exp\left(-\sum_{k>0}\dfrac{z^{-k}}{k}(1-q_2^k)J_k\right),\\
&\eta^-(z)=\exp\left(-\sum_{k>0}\dfrac{z^{k}}{k}\g^{k}(1-q_1^k)J_{-k}\right)\exp\left(\sum_{k>0}\dfrac{z^{-k}}{k}\g^{k}(1-q_2^k)J_k\right).
\end{split}
\end{align}
We note that $\rho^{(1,n)}_u=\rho_u^{(1,0)}\circ \CT^n$, and will focus primarily on $\rho_u^{(1,0)}$ in what follows. When $n=0$ and $u=1$, we will find it convenient to omit the corresponding representation $\rho_1^{(1,0)}$.

\paragraph{Symmetric functions.} The Fock modules $\CF^{(1,n)}$ are isomorphic as a ring to the space of symmetric functions $\L\equiv\L[\bsx]$ in one alphabet $\bsx$. This isomorphism is realized by sending the vacuum $\ket{n}$ to the constant function $1$, the creation operators $J_{-k}$ for $k>0$ to the power sums $p_k=p_k(\bsx)$ and the annihilation operators $J_k$ to the formal derivatives $k\frac{\p}{\p p_k(\bsx)}$.
As a result, the Cartan generators take the form
\begin{equation}\label{q-osc-p}
 \rho_u^{(1,0)}(a_k) = -\g^{-k/2}(1-q_2^k)(1-q_3^k)\dfrac{\p}{\p p_k(\bsx)},\quad
 \rho_u^{(1,0)}(a_{-k}) = -\dfrac{\g^{-k/2}}{k}(1-q_1^k)(1-q_3^k)p_k(\bsx),\quad (k>0).
\end{equation}
When no confusion ensues, we will omit to indicate the variables and simply denote $p_k=p_k(\bsx)$. On the other hand, the vertex operators $\eta^\pm(z)$ are expressed as follows,
\begin{align}
\begin{split}\label{eq:rho_x}
 & \eta^+(z) = \mathe^{\sum_{k>0}\frac{z^k}{k}(1-q_1^k)p_k}\,
 \mathe^{-\sum_{k>0}z^{-k}(1-q_2^k)\frac{\p}{\p p_k}}\,,\\
 & \eta^-(z) = \mathe^{-\sum_{k>0}\frac{z^{k}}{k}\g^{k}(1-q_1^k)p_k}\,
 \mathe^{\sum_{k>0}z^{-k}\g^{k}(1-q_2^k)\frac{\p}{\p p_k}}\,.
\end{split}
\end{align}
We also observe that in this representation $\psi^\pm(\g^{\mp\frac12}z)=\,:x^+(z)x^-(\g^{\mp1}z):$. Finally, we note that the grading operator $d$ is associated to the degree of homogeneity of symmetric functions, and so it can be represented as
\begin{equation}\label{def_L0}
\rho_u^{(1,0)}(d)=L_0=\sum_{k>0}kp_k\dfrac{\p}{\p p_k}.
\end{equation}

\paragraph{Macdonald functions and Miki's automorphism.} In the horizontal representation $\rho_u^{(1,0)}$, the elements $b_k$ form a commutative subalgebra. This algebra includes the zero modes $x_0^\pm\propto b_{\pm1}$ of the Drinfeld current. Under the identification of the Fock space with the ring of symmetric functions, the element $x_0^+$ coincides with the (bosonized) Macdonald operator with parameters $(q_1,q_2)=(t^{-1},q)$. This operator is known to be diagonalized by the Macdonald symmetric functions $P_\l(\bsx)$, with eigenvalues $u\,\me_\l$, where $\me_\l$ is given in \eqref{expr_el}. On this basis, the operator $x_0^-$ is also diagonal, with eigenvalues $u^{-1}\me_\l^\vee$,
\be
 \rho^{(1,0)}_u(x^+_0)\,P_\l(\bsx) = u\,\me_\l\,P_\l(\bsx)\,,
 \hspace{30pt}
 \rho^{(1,0)}_u(x^-_0)\,P_\l(\bsx) = u^{-1}\me^\vee_\l\,P_\l(\bsx)\,.
\ee

The action of the algebra $\CE$ on Macdonald functions can be reconstructed from the action of the generators $x_0^\pm$ and $a_{\pm1}$. In particular, the action of the operator $a_{-1}\propto e_1(\bsx)$ follows from the Pieri rules. With Macdonald functions normalized as $P_\l(\bsx) = m_\l(\bsx)+\dots$, where $m_\l(\bsx)$ are monomial symmetric functions, the Pieri rule for $e_1(\bsx)\equiv p_1(\bsx)$ takes the form (see \cite[Ch.VI, \textsection6]{Macdonald})
\be\label{PieriPsi}
 p_1\, P_\l = \sum_{\sAbox\in A(\l)} \psi_\l(\Abox)P_{\l+\sAbox}\,,
 \hspace{30pt}
 \psi_\l(i,j) = \prod_{i'=1}^{i-1}\dfrac{(1-q^{\l_{i'}-j+1}t^{i-i'-1})(1-q^{\l_{i'}-j}t^{i-i'+1})}{(1-q^{\l_{i'}-j+1}t^{i-i'})(1-q^{\l_{i'}-j}t^{i-i'})}\,.
\ee 
In addition, we have (see Appendix~\ref{app_A})
\be\label{dual_Pieri}
 \dfrac{\p}{\p p_1}P_\l = \sum_{\sAbox\in R(\l)}\psi_\l^\ast(\Abox)P_{\l-\sAbox}\,,
 \hspace{30pt}
 \psi_\l^\ast(i,j) = \prod_{j'=1}^{j-1}\dfrac{(1-q^{j-j'-1}t^{\l^T_{j'}-i+1})(1-q^{j-j'+1}t^{\l^T_{j'}-i})}{(1-q^{j-j'}t^{\l^T_{j'}-i+1})(1-q^{j-j'}t^{\l^T_{j'}-i})}\,.
\ee

This formula determines the action of the quantum toroidal $\gl(1)$ algebra on the Macdonald basis,
\be\label{act_apm1}
\begin{aligned}
 \rho_u^{(1,0)}(a_{-1})P_\l &= \g^{1/2}(\g-\g^{-1})(1-q_1)\sum_{\sAbox\in A(\l)}\psi_\l(\Abox)\ P_{\l+\sAbox},\\
 \rho_u^{(1,0)}(a_1)P_\l &= \g^{1/2}(\g-\g^{-1})(1-q_2)\sum_{\sAbox\in R(\l)}\psi_\l^\ast(\Abox)\ P_{\l-\sAbox}.
\end{aligned}
\ee
We also recall the definition of the Macdonald inner product
\be
\label{eq:def-Macdonald-inner}
 \langle P_\mu,P_\l\rangle_{q,t} = \delta_{\mu,\l} b_\lam^{-1}\,,
 \hspace{30pt}
 b_\l = \prod_{(i,j)\in\l}\dfrac{1-q^{\l_i-j}t^{\l_j^T-i+1}}{1-q^{\l_i-j+1}t^{\l_j^T-i}},
\ee
and the associated reproducing kernel,
\be
\label{Mac_kernel}
 \Pi(\bsx|\bsy)
 = \sum_\l b_\l P_\l(\bsx)P_\l(\bsy)
 = \exp\left(\sum_{k>0}\frac1k\frac{1-t^k}{1-q^k}p_k(\bsx)p_k(\bsy)\right)\,.
\ee 
The former can be defined in the power sum basis as
\begin{equation}
 \la p_\l,p_\mu\ra_{q,t}
 = \d_{\l,\mu} z_\l \prod_{k\in\lam} \dfrac{1-q^k}{1-t^k},
\end{equation}
where the combinatorial coefficient $z_\l$ is given in the paragraph \textbf{Partitions} of the Introduction.

\paragraph{Normalization.} To study the representations of the quantum toroidal $\gl(1)$ algebra, it is convenient to introduce a different normalization for the Macdonald symmetric functions, and introduce the \emph{spherical Macdonald functions} $\tP_\mu$ as the ordinary Macdonald functions normalized in the following way
\be
\label{eq:spherical-macdonalds}
 \tP_\l(\bsx) := \frac{P_\l(\bsx)}{P_\l(\sp_\emptyset)},
 \hspace{30pt}
 P_\l(\sp_\emptyset) = (-1)^{|\l|}\prod_{(i,j)\in\l}\left[\dfrac{q_1^{i-1}q_2^{j-1}}
 {1-q_1^{-1}q_2^{j-1}} \prod_{i'=1}^{i-1} \dfrac{1-q_1^{1+i-i'}q_2^{j-\l_{i'}}}
 {1-q_1^{i-i'}q_2^{j-\l_{i'}}}\right].
\ee
With this normalization, the $e_1$-Pieri rule and its dual take the form
\be
\begin{aligned}
 p_1\,\tP_\l &= \sum_{\sAbox\in A(\l)} \tilde{\psi}_\l(\Abox)\ \tP_{\l+\sAbox},
 \hspace{30pt} \tilde{\psi}_\l(\Abox) = \frac{P_{\l+\sAbox}(\sp_\vac)}{P_{\l}(\sp_\vac)}
 \psi_\l(\Abox) = -\dfrac1{1-t}r_\l(\Abox),\\
 \dfrac{\p}{\p p_1}\tP_\l &= \sum_{\sAbox\in R(\l)} \tilde{\psi}^\ast_\l(\Abox)\
 \tP_{\l-\sAbox}, \hspace{30pt}
 \tilde{\psi}^\ast_\l(\Abox) = \frac{P_{\l-\sAbox}(\sp_\vac)}{P_{\l}(\sp_\vac)}
 \psi^\ast_\l(\Abox) = \dfrac{q}{1-q}r^\ast_\l(\Abox),
\end{aligned}
\ee
with the coefficient $r_\l(\Abox)$ defined in \eqref{def_rl} (See Appendix~\ref{app:pieri} for the proof). In fact, with this choice of normalization, the action of the generators $a_{\pm1}$ takes the form
\be
\begin{aligned}
 \rho^{(1,0)}_u(a_{-1})\tP_\l &= q_1\g^{1/2}(\g-\g^{-1})\sum_{\sAbox\in A(\l)}r_\l(\Abox)\, \tP_{\l+\sAbox},\\
 \rho^{(1,0)}_u(a_1)\tP_\l &= q_2\g^{1/2}(\g-\g^{-1})\sum_{\sAbox\in R(\l)}r^\ast_\l(\Abox)\, \tP_{\l-\sAbox}.
\end{aligned}
\ee
This shows that the representations $\rho_v^{(1,0)}\circ\CS$ and $\rho_u^{(0,1)}$ are indeed isomorphic with the relation $u=-\g v$ between the weights, and the transformation $\CM_\CS$ sending the vertical basis vectors $\dket{\l}$ to the Macdonald basis elements $\g^{|\l|/2}(-q_1)^{|\l|}\tP_\l(\bsx)$. With this normalization, the Macdonald kernel takes the form
\be
 \Pi(\bsx|\bsy) = \sum_\l \tilde{b}_\l\tP_\l(\bsx)\tP_\l(\bsy)\,,
 \hspace{30pt}
 \text{with}
 \hspace{30pt}
 \tilde{b}_\l = b_\l P_\l(\sp_\emptyset)^2 = \dfrac{q^{-|\l|}}{\tilde{N}_{\l,\l}(1)},
\ee 
where $\tilde{N}_{\l,\mu}(Q)$ is the modified Nekrasov factor defined in Appendix~\ref{app_A1}. From this isomorphism, we deduce the eigenvalue of the higher Macdonald operators,
\be\label{eigen_bk}
 \rho_u^{(1,0)}(b_k) \tP_\l
 = \dfrac{(-1)^{k-1}}{k}(\g^k-\g^{-k}) u^k p_k(\me_\lam)\, \tP_\l\,.
\ee

\paragraph{Higher horizontal representations.} Representations of levels $(r,\bsn)$ can be introduced by evaluating the $(r-1)$-th coproduct $\D^{(r-1)}=(\D(\otimes\Id)^{r-2})\circ\cdots\circ(\D\otimes\Id)\circ\D$ of the algebra elements in the tensor product $\rho_{u_1}^{(1,n_1)}\otimes\cdots\otimes\rho_{u_{r}}^{(1,n_{r})}$ of horizontal representations. We denote these representations $\rho^{(r,\bsn)}_\bsu$ with $\bsn=(n_1,\dots,n_{r})$ and $\bsu=(u_1,\dots,u_{r})$, they act on the Fock space $\CF_{\bsu}^{(r,\bsn)}=\CF_{u_1}^{(1,n_1)}\otimes\cdots\otimes\CF_{u_{r}}^{(1,n_{r})}$ and we will use the notation $J_k^{(\a)}=(1\otimes)^{\a-1}J_k(\otimes1)^{r-\a}$ for the Heisenberg modes acting on the $\a$-th Fock space.
In particular, we have
\be
\begin{aligned}
 \rho_\bsu^{(r,\bsn)}(a_k) &= -\dfrac{\g^{-rk/2}}{k}(1-q_2^k)(1-q_3^k)
 \sum_{\a=1}^r \g^{(\a-1)k}J_k^{(\a)},\\
 \rho_\bsu^{(r,\bsn)}(a_{-k}) &= -\dfrac{\g^{-rk/2}}{k}(1-q_1^k)(1-q_3^k)
 \sum_{\a=1}^r \g^{(\a-1)k}J_{-k}^{(\a)}.
\end{aligned}
\ee

By construction, the Fock space $\CF_{\bsu}^{(r,\bsn)}$ is isomorphic to the ring of symmetric functions in $r$ alphabets $\L^{\otimes r}=\L[\bsx^{(1)}]\otimes\cdots\otimes \L[\bsx^{(r)}]=\L[\bsx^{(1)},\dots,\bsx^{(r)}]$. For convenience, we will introduce factors of $\g$ in this isomorphism and define
\be
\label{eq:iso-J-to-p}
 J_{-k}^{(\a)} = \g^{-(\a-1)k} p_k^{(\a)},
 \hspace{30pt}
 J_k^{(\a)} = k\g^{(\a-1)k}\dfrac{\p}{\p p_k^{(\a)}}
\ee
so that
\be
\begin{aligned}
 \rho_\bsu^{(r,\bsn)}(a_k) &= -\g^{-rk/2}(1-q_2^k)(1-q_3^k)
 \sum_{\a=1}^r q_3^{(\a-1)k}\dfrac{\p}{\p p_k^{(\a)}}, \\
 \rho_\bsu^{(r,\bsn)}(a_{-k}) &= -\dfrac{\g^{-rk/2}}{k}(1-q_1^k)(1-q_3^k)
 \sum_{\a=1}^r p_k^{(\a)}.    
\end{aligned}
\ee
When $|\bsn|=\sum_\a n_\a=0$, the operators $\rho_\bsu^{(r,\bsn)}(b_k)$ form again a commutative subalgebra containing the elements $\rho_\bsu^{(r,\bsn)}(x_0^\pm)$. We can write down explicitly the corresponding currents as a sum of vertex operators,
\begin{multline}
\label{eq:x(z)+}
 \rho^{(r,\bsn)}_\bsu(x^+(z)) = \sum_{\a=1}^r u_\a\g^{\sum_{\b=1}^{\a-1}n_\b}z^{-n_\a}
 \exp\left(\sum_{k>0}\frac{z^k}{k}(1-q_1^k)\Big((1-q_3^k)\sum_{\b=1}^{\a-1}p_k^{(\b)}+p_k^{(\a)}\Big)\right)\\
 \times \exp\left(-\sum_{k>0}z^{-k}(1-q_2^k)\frac{\p}{\p p_k^{(\a)}}\right),
\end{multline}
and
\begin{multline}
\label{eq:x(z)-}
 \rho^{(r,\bsn)}_\bsu(x^-(z)) = \sum_{\a=1}^r u_\a^{-1} \g^{-\sum_{\b=\a+1}^{r}n_\b}z^{n_\a}
 \exp\left(-\sum_{k>0}\frac{z^k}{k}\g^{rk}(1-q_1^k)q_3^{-(\a-1)k}p_k^{(\a)}\right)\\
 \times\exp\left(\sum_{k>0}z^{-k}q_3^{k}\g^{-rk}(1-q_2^k)\Big(q_3^{(\a-1)k}\frac{\p}{\p p_k^{(\a)}}+(1-q_3^{-k})\sum_{\b=\a+1}^{r}q_3^{(\b-1)k}\frac{\p}{\p p_k^{(\b)}}\Big)\right).
\end{multline}
Generalized Macdonald symmetric functions will be defined in Section~\ref{sec:gmp} as the common eigenfunctions of these operators inside of $\L^{\otimes r}$.

\subsection{Vertex operators}
Our construction is based on the reinterpretation of the formulas involving Macdonald symmetric functions using the vertex operators of the quantum toroidal $\gl(1)$ algebra. These vertex operators can be defined as intertwiners between the representations $(\rho_\bsv^{(0,m)}\otimes\rho_\bsu^{(r,\bsn)})\circ\D$ and $\rho_{\bsu'}^{(r,\bsn')}$. Here, we focus on the case $m=r$, and consider the vertex operators $\Phi^{(r,\bsn)}$ and $\Phi^{(r,\bsn)\ast}$ satisfying the following intertwining equations for any $e\in\CE$
\begin{align}
\begin{split}\label{prop_intw_r}
&\Phi^{(r,\bsn)}\left(\rho_{\bsv}^{(0,r)}\otimes\rho_\bsu^{(r,\bsn)}\right)\circ \D(e)=\rho_{\bsu'}^{(r,\bsn')}(e)\Phi^{(r,\bsn)},\\
&\Phi^{(r,\bsn)\ast}\rho_{\bsu'}^{(r,\bsn')}(e)=\left(\rho_{\bsv}^{(0,r)}\otimes\rho_\bsu^{(r,\bsn)}\right)\circ \D'(e)\Phi^{(r,\bsn)\ast},
\end{split}
\end{align}
where $\D'$ is the opposite coproduct. When $r=1$, this relation defines the vertex operators uniquely up to an overall normalization factor, they were obtained by Awata, Feigin and Shiraishi in \cite{AFS} and will be called here AFS vertex operators. For $r>1$, we will construct a set of solutions depending not only on the external weights $\bsu,\bsv$, but also on a set of internal parameters $\bsw$, by a gluing procedure on the AFS vertex operators. Vertex operators will be normalized such that
\begin{equation}\label{norm_VO}
\bra{\vac}\Phi^{(r,\bsn)}\left(\dket{\vac}\otimes\ket{\vac}\right)=\left(\dbra{\vac}\otimes\bra{\vac}\right)\Phi^{(r,\bsn)\ast}\ket{\vac}=1.
\end{equation}

\subsubsection{AFS vertex operators}
\begin{figure}
\begin{center}
\raisebox{-0.5\height}{\begin{tikzpicture}[scale=1.2]
\draw[postaction={on each segment={mid arrow=black}}] (0,0) -- (1,0) -- (1.7,0.7);
\draw[postaction={on each segment={mid arrow=black}}] (1,-1) -- (1,0);
\node[above,scale=0.7] at (1,0) {$\Phi$};
\node[above,scale=0.7] at (0,0) {$(1,n)_{u}$};
\node[right,scale=0.7] at (1,-0.5) {$(0,1)_{v}$};
\node[above,scale=0.7] at (1.7,0.7) {$(1,n+1)_{u'}$};
\end{tikzpicture}}
\hspace{10mm}
\raisebox{-0.5\height}{\begin{tikzpicture}[scale=1.2]
\draw[postaction={on each segment={mid arrow=black}}] (-0.7,-0.7) -- (0,0) -- (0,1);
\draw[postaction={on each segment={mid arrow=black}}] (0,0) -- (1,0);
\node[left,scale=0.7] at (0,0) {$\Phi^{\ast}$};
\node[below,scale=0.7] at (1,0) {$(1,n)_{u}$};
\node[right,scale=0.7] at (0,0.5) {$(0,1)_{v}$};
\node[below,scale=0.7] at (-0.7,-0.7) {$(1,n+1)_{u'}$};
\end{tikzpicture}}
\end{center}
\caption{AFS vertex operators}
\label{fig_AFS}
\end{figure}

When $r=1$, the intertwining relations \eqref{prop_intw_r} reduce to 
\be
\label{prop_intw}
\begin{aligned}
 &\Phi^{(1,n)}[u,v]\left(\rho_{v}^{(0,1)}\otimes\rho_u^{(1,n)}\ \D(e)\right)
 =\rho_{u'}^{(1,n+1)}(e)\,\Phi^{(1,n)}[u,v],\\
 &\Phi^{(1,n)\ast}[u,v]\,\rho_{u'}^{(1,n+1)}(e)
 =\left(\rho_{v}^{(0,1)}\otimes\rho_u^{(1,n)}\ \D'(e)\right)\Phi^{(1,n)\ast}[u,v].
\end{aligned}
\ee
These relations can be solved under the constraint $u'=-\g uv$. The solutions can be expanded in the vertical basis as follows
\be
\label{def_Phi}
 \Phi^{(1,n)}[u,v]=\sum_{\l} a_\l \dbra{\l}\otimes \Phi_{\l}^{(1,n)}[u,v],
 \hspace{30pt}
 \Phi^{(1,n)\ast}[u,v]=\sum_{\l}a_\l\dket{\l}\otimes\Phi_{\l}^{(1,n)\ast}[u,v],
\ee
where the coefficients $a_\l$, defined in \eqref{scalar_vert}, take care of the fact that the vertical basis is not orthonormal. The vertical components are vertex operators built as normal-ordered products of the vertex operators $\eta^\pm(z)$ defined in \eqref{def_eta},
\be
\label{def_AFS}
\begin{aligned}
 \Phi_\l^{(1,n)}[u,v] &= t_\l^{(1,n)}[u,v]\,
 :\Phi_\vac(v)\prod_{\sAbox\in\l}\eta^+(v\chi_\sAbox ):\,
 =t_\l^{(1,n)}[u,v]\,\mathe^{\sum_{k>0}\frac{v^k}{k}\frac{1-t^k}{1-q^k}p_k(\sp_\l)J_{-k}}
 \mathe^{-\sum_{k>0}\frac{v^{-k}q_3^{-k}}{k} p_k(\sp_\l^\vee)J_k},\\
 \Phi_{\l}^{(1,n)\ast}[u,v] &= t_{\l}^{(1,n)\ast}[u,v]\,
 :\Phi_\vac^\ast(v)\prod_{\sAbox\in\l}\eta^-(v\chi_\sAbox ):\,
 =t_\l^{(1,n)\ast}[u,v]\,\mathe^{-\sum_{k>0}\frac{v^k\g^k}{k}\frac{1-t^k}{1-q^k}p_k(\sp_\l)J_{-k}}
 \mathe^{\sum_{k>0}\frac{v^{-k}\g^{-k}}{k} p_k(\sp_\l^\vee)J_k},
\end{aligned}
\ee
with
\be
 \Phi_\vac(v) = \mathe^{-\sum_{k>0}\frac{v^k}{k(1-q_2^k)}J_{-k}}
 \mathe^{\sum_{k>0}\frac{v^{-k}q_3^{-k}}{k(1-q_1^{k})}J_k},\quad
 \Phi_\vac^\ast(v) = \mathe^{\sum_{k>0}\frac{\g^k v^k}{k(1-q_2^k)}J_{-k}}
 \mathe^{-\sum_{k>0}\frac{v^{-k}\g^{-k}}{k(1-q_1^{k})}J_k},
\ee
and
\be
 t_\l^{(1,n)}[u,v] = (-\g u v)^{|\l|} \prod_{\sAbox\in\l}(v\chi_\sAbox)^{-n-1},
 \hspace{30pt}
 t_{\l}^{(1,n)\ast}[u,v] = (\g u)^{-|\l|} \prod_{\sAbox\in\l}(v\chi_\sAbox)^n.
\ee
These operators have been represented in Figure~\ref{fig_AFS}. They can be glued together in two different ways. If the common representation is a vertical one, the gluing corresponds to the scalar product \eqref{scalar_vert}, and leads to a summation over partitions $\l$. On the other hand, if the common representation is a horizontal one, the gluing is realized by a simple product of vertex operators in the common Fock space.

\paragraph{Matrix elements.} The matrix elements of the AFS vertex operators in the Macdonald basis have been computed in \cite{AFS}. To state their result, we recall the definition of matrix elements of an operator $\CO\in\mathrm{End}(\L)$ in the Macdonald basis,
\be
 \bra{P_\l}\CO\ket{P_\mu} = \la P_\l,\CO\cdot P_\mu\ra_{q,t},
\ee
where $\la\cdot,\cdot\ra_{q,t}$ is the Macdonald scalar product defined in \eqref{eq:def-Macdonald-inner}. With this definition, the matrix elements read
\be
\label{AFS_matrix}
\begin{aligned}
 & \bra{P_\nu}\Phi_\l^{(1,n)}[u,v]\ket{P_\mu}
 = t_\l^{(1,n)}[u,v] v^{|\nu|-|\mu|}q_3^{-|\mu|}
 \sum_{\s} b_\s q_3^{|\s|}P_{\nu/\s}(\sp_\lam) P_{\mu/\s}(-\sp_\lam^\vee),\\
 & \bra{P_\nu}\Phi_\l^{(1,n)\ast}[u,v]\ket{P_\mu}
 = t_\l^{(1,n)\ast}[u,v] (\g v)^{|\nu|-|\mu|}
 \sum_\s b_\s P_{\nu/\s}(-\sp_\lam) P_{\mu/\s}(\sp_\lam^\vee),
\end{aligned}
\ee
where $\sp_\lam$ is defined in \eqref{expr_sp}.

\paragraph{Normal-ordering.} We denote $A(z)B(w)::f(z,w)$ the factor resulting from the normal-ordering of two vertex operators, i.e. $A(z)B(w)=f(z,w):A(z)B(w):$. Then, we have
\begin{align}\label{NO_Phi}
\begin{split}
&\Phi_\mu^{(1,n+1)}[u',v']\Phi_\l^{(1,n)}[u,v]::\dfrac{\CG(v/(q_3v'))}{N_{\l,\mu}(v/v')},\\
&\Phi_\mu^{(1,n)}[u',v']\Phi_\l^{(1,n)\ast}[u,v]::\dfrac{N_{\l,\mu}(\g v/v')}{\CG(v/(\g v'))},\\
&\Phi_\mu^{(1,n)\ast}[u',v']\Phi_\l^{(1,n)}[u,v]::\dfrac{N_{\l,\mu}(\g v/v')}{\CG(v/(\g v'))},\\
&\Phi_\mu^{(1,n-1)\ast}[u',v']\Phi_\l^{(1,n)\ast}[u,v]::\dfrac{\CG(v/v')}{N_{\l,\mu}(q_3v/v')}.
\end{split}
\end{align}

\subsubsection{Higher vertex operators}
\begin{figure}
\begin{center}
\begin{tikzpicture}[scale=0.7]
\draw[postaction={on each segment={mid arrow=black}}] (-1,0) -- (0,0) -- (0.7,0.7) -- (1.7,0.7) -- (2.4,1.4);
\draw[postaction={on each segment={mid arrow=black}}] (0,-1) -- (0,0) -- (0.7,0.7) -- (0.7,1.7) -- (2.1,3.1);
\draw[postaction={on each segment={mid arrow=black}}] (-1,1.7) -- (0.7,1.7);
\draw[postaction={on each segment={mid arrow=black}}] (1.7,-1) -- (1.7,0.7);
\node[below,scale=0.7] at (0,-1) {$(0,1)_{v_2}$};
\node[below,scale=0.7] at (1.7,-1) {$(0,1)_{v_1}$};
\node[left,scale=0.7] at (-1,0) {$(1,n_1)_{u_1}$};
\node[left,scale=0.7] at (-1,1.7) {$(1,n_2)_{u_2}$};
\node[right,scale=0.7] at (2.4,1.4) {$(1,n_1+1)_{u_1'}$};
\node[right,scale=0.7] at (2.1,3.1) {$(1,n_2+1)_{u_2'}$};
\end{tikzpicture}
\hspace{10mm}
\begin{tikzpicture}[scale=0.7]
\draw[postaction={on each segment={mid arrow=black}}] (-1.7,-.7) -- (-1,0) -- (0,0) -- (0.7,0.7) -- (1.7,.7);
\draw[postaction={on each segment={mid arrow=black}}] (-1.4,-2.4) -- (0,-1) -- (1.7,-1);
\draw[postaction={on each segment={mid arrow=black}}] (-1,0) -- (-1,1.7);
\draw[postaction={on each segment={mid arrow=black}}] (.7,.7) -- (.7,1.7);
\draw[postaction={on each segment={mid arrow=black}}] (0,-1) -- (0,0);
\node[above,scale=0.7] at (-1,1.7) {$(0,1)_{v_2}$};
\node[above,scale=0.7] at (.7,1.7) {$(0,1)_{v_1}$};
\node[right,scale=0.7] at (1.7,-1) {$(1,n_1)_{u_1}$};
\node[right,scale=0.7] at (1.7,.7) {$(1,n_2)_{u_2}$};
\node[below left,scale=0.7] at (-1.4,-2.4) {$(1,n_1+1)_{u_1'}$};
\node[below left,scale=0.7] at (-1.7,-.7) {$(1,n_2+1)_{u_2'}$};
\end{tikzpicture}
\end{center}
\caption{Diagram defining the vertex operators $\Phi^{(2,\bsn)}$ and $\Phi^{(2,\bsn)\ast}$.}
\label{fig_intw_m2}
\end{figure}

\begin{figure}
\begin{center}
\begin{tikzpicture}[scale=0.7]
\draw[postaction={on each segment={mid arrow=black}}] (-1,0) -- (0,0) -- (0.7,0.7) -- (1.7,0.7) -- (2.4,1.4) -- (3.4,1.4) -- (4.1,2.1);
\draw[postaction={on each segment={mid arrow=black}}] (-1,1.7) -- (0.7,1.7) -- (1.4,2.4) -- (2.4,2.4) -- (3.8,3.8);
\draw[postaction={on each segment={mid arrow=black}}] (-1,3.4) -- (1.4,3.4) -- (3.5,5.5);
\draw[postaction={on each segment={mid arrow=black}}] (0,-1) -- (0,0) -- (0.7,0.7) -- (0.7,1.7) -- (1.4,2.4) -- (1.4,3.4);
\draw[postaction={on each segment={mid arrow=black}}] (1.7,-1) -- (1.7,0.7) -- (2.4,1.4) -- (2.4,2.4);
\draw[postaction={on each segment={mid arrow=black}}] (3.4,-1) -- (3.4,1.4);
\node[below,scale=0.7] at (0,-1) {$(0,1)_{v_3}$};
\node[below,scale=0.7] at (1.7,-1) {$(0,1)_{v_2}$};
\node[below,scale=0.7] at (3.4,-1) {$(0,1)_{v_1}$};
\node[left,scale=0.7] at (-1,0) {$(1,n_1)_{u_1}$};
\node[left,scale=0.7] at (-1,1.7) {$(1,n_2)_{u_2}$};
\node[left,scale=0.7] at (-1,3.4) {$(1,n_3)_{u_3}$};
\node[right,scale=0.7] at (4.1,2.1) {$(1,n_1+1)_{u_1'}$};
\node[right,scale=0.7] at (3.8,3.8) {$(1,n_2+1)_{u_2'}$};
\node[right,scale=0.7] at (3.5,5.5) {$(1,n_3+1)_{u_3'}$};
\end{tikzpicture}
\hspace{10mm}
\begin{tikzpicture}[scale=0.7]
\draw[postaction={on each segment={mid arrow=black}}] (-2.7,.3) -- (-2,1) -- (-1,1) -- (-.3,1.7) -- (.7,1.7) -- (1.4,2.4) -- (2.4,2.4);
\draw[postaction={on each segment={mid arrow=black}}] (-2.05,-1.05) -- (-1,0) -- (0,0) -- (0.7,.7) -- (2.4,.7);
\draw[postaction={on each segment={mid arrow=black}}] (-1.4,-2.4) -- (0,-1) -- (2.4,-1);
\draw[postaction={on each segment={mid arrow=black}}] (-1,0) -- (-1,1);
\draw[postaction={on each segment={mid arrow=black}}] (.7,.7) -- (.7,1.7);
\draw[postaction={on each segment={mid arrow=black}}] (0,-1) -- (0,0);
\draw[postaction={on each segment={mid arrow=black}}] (-2,1) -- (-2,3.4);
\draw[postaction={on each segment={mid arrow=black}}] (-.3,1.7) -- (-.3,3.4);
\draw[postaction={on each segment={mid arrow=black}}] (1.4,2.4) -- (1.4,3.4);
\node[above,scale=0.7] at (-2,3.4) {$(0,1)_{v_3}$};
\node[above,scale=0.7] at (-.3,3.4) {$(0,1)_{v_2}$};
\node[above,scale=0.7] at (1.4,3.4) {$(0,1)_{v_1}$};
\node[right,scale=0.7] at (2.4,-1) {$(1,n_1)_{u_1}$};
\node[right,scale=0.7] at (2.4,.7) {$(1,n_2)_{u_2}$};
\node[right,scale=0.7] at (2.4,2.4) {$(1,n_3)_{u_3}$};
\node[below left,scale=0.7] at (-1.4,-2.4) {$(1,n_1+1)_{u_1'}$};
\node[below left,scale=0.7] at (-2.05,-1.05) {$(1,n_2+1)_{u_2'}$};
\node[below left,scale=0.7] at (-2.7,.3) {$(1,n_3+1)_{u_3'}$};
\end{tikzpicture}
\end{center}
\caption{Diagram defining the vertex operators $\Phi^{(3,\bsn)}$ and $\Phi^{(3,\bsn)\ast}$.}
\label{fig_intw_m3}
\end{figure}

Solutions of the intertwining equations \eqref{prop_intw} generalizing the AFS vertex operators to higher levels $r$ can be introduced by considering Figures~\ref{fig_intw_m2} and \ref{fig_intw_m3} and their obvious generalization. These figures represent the construction of the operators
\begin{equation}\label{def_Phi_rn}
 \Phi^{(r,\bsn)}:\CF_\bsv^{(0,r)}\otimes \CF_\bsu^{(r,\bsn)}\to \CF_{\bsu'}^{(r,\bsn')},
 \hspace{30pt}
 \Phi^{(r,\bsn)\ast}:\CF_{\bsu'}^{(r,\bsn')}\to \CF_\bsv^{(0,r)}\otimes \CF_\bsu^{(\r,\bsn)},
\end{equation}
from the gluing of elementary AFS vertex operators. This gluing is realized as a product of operators in the horizontal Fock spaces, and as a scalar product in the vertical modules. This construction provides solutions of the intertwining equation for $m=r$, and under the constraints $n'_\a=n_\a+1$ between levels of representations associated to external legs. These vertex operators depend on $2r$ external weights denoted collectively $\bsu=(u_1,\dots,u_r)$ and $\bsv=(v_1,\dots,v_r)$. In addition, they also depend on $r(r-1)/2$ internal vertical weights, denoted $\bsw=(w_{\a,\b})$ with $1\leq \b\leq \a\leq r-1$ which corresponds to an upper triangular matrix of size $(r-1)\times(r-1)$ (we also denote $w_{r,\b}=v_\b$ for $\b=1,\dots,r$). External weights are required to satisfy the relations
\begin{equation}\label{rel_weights}
\prod_{\a=1}^r u_\a'=\prod_{\a=1}^r (-\g u_\a v_\a),\quad u'_\a=
\begin{cases}
-\g u_\a w_{r-\a+1,1}\prod_{\b=1}^{r-\a}\dfrac{w_{r-\a+1,\b+1}}{w_{r-\a,\b}}, & \text{for }\Phi^{(r,\bsn)}\\
-\g u_\a w_{\a,\a}\prod_{\b=1}^{\a-1}\dfrac{w_{\a,\b}}{w_{\a-1,\b}}, & \text{for }\Phi^{(r,\bsn)\ast},
\end{cases}
\end{equation}
It is interesting to note that diagrams of the form \ref{fig_intw_m2}, \ref{fig_intw_m3} can be interpreted as $(p,q)$-brane webs defining 5d $\CN=1$ supersymmetric gauge theories \cite{Aharony1997,Aharony1997a}, or toric diagram defining topological strings amplitudes on toric Calabi--Yau threefolds \cite{Leung1998}. More specifically, these particular diagrams are associated to the trinion gauge theories, they have been studied extensively in this context (see e.g. \cite{Coman2019}).\footnote{J.E.B.\ would like to thank Elli Pomoni for drawing his attention to this type of brane webs many years ago.}

The operator $\Phi^{(r,\bsn)}$ is built by gluing $r(r+1)/2$ AFS vertex operators of type $\Phi$ and $r(r-1)/2$ of type $\Phi^\ast$, using $r(r-1)$ horizontal couplings and $r(r-1)/2$ vertical couplings, and the construction of $\Phi^{(r,\bsn)\ast}$ is parallel. The definition is independent of the choice of the vertical basis. It is convenient to use the symmetric basis to define the vertical components,
\begin{align}
\begin{split}
&\Phi^{(r,\bsn)}[\bsu,\bsv,\bsw]=\sum_{\bl} a_{\bl}(\bsv)\ \Phi_{\bl}^{(r,\bsn)}[\bsu,\bsv,\bsw]\  \dbra{\bl}\\
&\Phi^{(r,\bsn)\ast}[\bsu,\bsv,\bsw]=\sum_{\bl} a_\bl(\bsv)\ \Phi_{\bl}^{(r,\bsn)\ast}[\bsu,\bsv,\bsw]\  \dket{\bl}
\end{split}
\end{align}
These operators can be constructed recursively in three different ways, and this is used in Appendix~\ref{sec_higher_VO} to show that they indeed solve the intertwining relation \eqref{prop_intw_r}.

\paragraph{Critical value.} For certain critical values of the internal weights $\bsw$, the expression of the vacuum components of the vertex operators greatly simplifies. The reason for this simplification is the presence of zeros coming from Nekrasov factors (and more specifically the relations \eqref{N_vac}) that forces the summations over internal Young diagrams to reduce to a single term corresponding to the empty partition. For each vertex operators, there are two critical values that correspond to restrict internal partitions summations either from the left or the right.

In the case of the vertex operator $\Phi^{(r,\bsn)}[\bsu,\bsv,\bsw]$, we find the critical values $w_{\a,\b}^{(I)}=\g^{r-\a}v_\b$ and $w_{\a,\b}^{(II)}=\g^{r-\a}v_{r-\a+\b}$, corresponding respectively to the formulas
\be
\begin{aligned}
\Phi_{\vac}^{(r,\bsn)}[\bsu,\bsv,\bsw^{(I)}]&=\mathe^{-\sum_{k>0}\frac1{k(1-q_2^k)}\sum_{\a=1}^r\left(v_{r-\a+1}^k+(1-q_3^k)\sum_{\b=\a+1}^{r}v_{r-\b+1}^k\right)p_k^{(\a)}}\mathe^{\sum_{k>0}\frac{q_3^{-k}}{1-q_1^k}\sum_{\a=1}^r v_{r-\a+1}^{-k}\frac{\p}{\p p_k^{(\a)}}},\\
\Phi_{\vac}^{(r,\bsn)}[\bsu,\bsv,\bsw^{(II)}]&=\mathe^{-\sum_{k>0}\frac1{k(1-q_2^k)}\sum_{\a=1}^r\left(v_\a^k+(1-q_3^k)\sum_{\b=\a+1}^r v_\b^k\right)p_k^{(\a)}}\mathe^{\sum_{k>0}\frac{q_3^{-k}}{1-q_1^k}\sum_{\a=1}^r v_\a^{-k}\frac{\p}{\p p_k^{(\a)}}}.
\end{aligned}
\ee
We note that the two expressions only differ by the substitution $v_\a\to v_{r-\a+1}$. With these specializations, we have $u'_\a=-\g^{2\a-r}u_\a v_{r-\a+1}$ (I) and $u'_\a=-\g^{2\a-r}u_\a v_\a$ (II), respectively.

Similarly, in the case of the vertex operator $\Phi^{(r,\bsn)\ast}[\bsu,\bsv,\bsw]$, we find the two critical values $w^{(I)}_{\a,\b}=\g^{r-\a}v_\b$ and $w^{(II)}_{\a,\b}=\g^{r-\a}v_{r+\b-\a}$, corresponding to the expressions
\be
\begin{aligned}\label{Phi_vac_st}
\Phi_{\vac}^{(r,\bsn)\ast}[\bsv,\bsw^{(I)}]&=\mathe^{\sum_{k>0}\frac{\g^{rk}}{k(1-q_2^k)}\sum_{\a=1}^rq_3^{-(\a-1)k}v_\a^k p_k^{(\a)}}\mathe^{-\sum_{k>0}\frac{\g^{-rk}}{1-q_1^k}\sum_{\a=1}^rq_3^{(\a-1)k}\left(v_\a^{-k}+(1-q_3^{-k})\sum_{\b=1}^{\a-1}v_\b^{-k}\right)\frac{\p}{\p p_k^{(\a)}}},\\
\Phi_{\vac}^{(r,\bsn)\ast}[\bsv,\bsw^{(II)}]&=\mathe^{\sum_{k>0}\frac{\g^{rk}}{k(1-q_2^k)}\sum_{\a=1}^rq_3^{-(\a-1)k}v_{r-\a+1}^k p_k^{(\a)}}\mathe^{-\sum_{k>0}\frac{\g^{-rk}}{1-q_1^k}\sum_{\a=1}^rq_3^{(\a-1)k}\left(v_{r-\a+1}^{-k}+(1-q_3^{-k})\sum_{\b=1}^{\a-1}v_{r-\b+1}^{-k}\right)\frac{\p}{\p p_k^{(\a)}}}.
\end{aligned}
\ee
With these specializations, we have $u'_\a=-\g^{r+2-2\a}u_\a v_\a$ (I) and $u'_\a=-\g^{r+2-2\a}u_\a v_{r-\a+1}$ (II), respectively.

\subsubsection{Vacuum components}
It has been observed \cite{Bourgine2021b} that the vacuum component of the AFS vertex operators obey certain algebraic relations that can be obtained as a projection of the intertwining relations \eqref{prop_intw} on the vacuum state $\dket{\vac}$. These relations follow from the highest weight property of the vertical vacuum state, and so they naturally extend to vertex operators with higher levels. Let $\CN^\pm\subset\CE$ denote the nilpotent subalgebras generated by products of operators $x_k^\pm$ for $k\in\mZ$ and $\g^{a c}$ for $a\in\mC^\times$. We do not include powers of $\g^{a\bc}$ here since $\rho_{\bsv}^{(0,r)}(\g^{a\bc})=\g^{a r}$, and so it must be treated separately.  We have the following relations for the vacuum components of the vertex operators,
\begin{align}
\begin{split}\label{rel_intw_vac}
&\Phi_\vac^{(r,\bsn)}[\bsu,\bsv,\bsw]\rho^{(r,\bsn)}_{\bsu}(\g^{a\bc}e_-)=\g^{-a r}\rho^{(r,\bsn')}_{\bsu'}(\g^{a\bc}e_-)\Phi_\vac^{(r,\bsn)}[\bsu,\bsv,\bsw],\quad e_-\in\CN^-,\\
&\Phi_\vac^{(r,\bsn)\ast}[\bsu,\bsv,\bsw]\rho^{(r,\bsn')}_{\bsu'}(\g^{a\bc}e_+)=\g^{a r}\rho^{(r,\bsn)}_{\bsu}(\g^{a\bc}e_+)\Phi_\vac^{(r,\bsn)\ast}[\bsu,\bsv,\bsw],\quad e_+\in\CN^+.
\end{split}
\end{align}
In addition, we also have the following relations for the action of the modes $a_k$,
\begin{align}
\begin{split}\label{ak_intw_vac}
&\Phi_\vac^{(r,\bsn)}[\bsu,\bsv,\bsw]\rho^{(r,\bsn)}_{\bsu}(a_k)-\rho^{(r,\bsn')}_{\bsu'}(a_k)\Phi_\vac^{(r,\bsn)}[\bsu,\bsv,\bsw]=-\dfrac{\g^{-r|k|/2}}{k}(1-q_3^k)\left(\sum_{\a=1}^r v_\a^k\right) \Phi_\vac^{(r,\bsn)}[\bsu,\bsv,\bsw],\\
&\Phi_\vac^{(r,\bsn)\ast}[\bsu,\bsv,\bsw]\rho^{(r,\bsn')}_{\bsu'}(a_k)-\rho^{(r,\bsn)}_{\bsu}(a_k)\Phi_\vac^{(r,\bsn)\ast}[\bsu,\bsv,\bsw]=\dfrac{\g^{r|k|/2}}{k}(1-q_3^k)\left(\sum_{\a=1}^r v_\a^k\right) \Phi_\vac^{(r,\bsn)\ast}[\bsu,\bsv,\bsw].
\end{split}
\end{align}
In this paper, we will need the following properties of $\CN^\pm$.
\begin{lemma}
We have:
\begin{enumerate}
    \item[(i)] $\CT(\CN^\pm)\subset\CN^\pm$,
    \item[(ii)] $[a_k,\CN^\pm]\subset\CN^\pm$,
    \item[(iii)] Let $b_k=\CS(a_k)$ and $y_k^\pm=\CS(x_k^\pm)$, then we have for $k>0$, 
    \begin{equation}
    b_{\pm k}\in\CN^\pm,\quad \g^{\pm k\bc/2}y_k^\pm\in\CN^+,\quad \g^{\pm k\bc/2}y_{-k}^\pm\in\CN^-,
    \end{equation}
    \item[(iv)] $\g^{\mp k\bc/2}\CT^\perp(a_{\pm k})\in\CN^\pm$ for $k>0$.
\end{enumerate}
\end{lemma}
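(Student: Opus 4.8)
The plan is to treat the four statements in increasing order of difficulty, exploiting throughout that each $\CN^\pm$ is a \emph{subalgebra} (hence closed under commutators) and that $\g^{a\bc}$ is central, so that powers of $\g^{\bc}$ commute freely through every relation while crucially \emph{not} themselves belonging to $\CN^\pm$. For (i) I would simply note that $\CT$ is an algebra automorphism with $\CT(x_k^\pm)=x_{k\mp1}^\pm\in\CN^\pm$ and $\CT(\g^{ac})=\g^{ac}\in\CN^\pm$ (since $\CT(c)=c$); as these generate $\CN^\pm$ multiplicatively, $\CT(\CN^\pm)\subset\CN^\pm$. For (ii) I would use that $\mathrm{ad}_{a_k}=[a_k,\cdot]$ is a derivation: on generators $[a_k,x_l^\pm]=\pm\g^{\mp c|k|/2}c_k x_{l+k}^\pm\in\CN^\pm$ by \eqref{com_ak}, while $[a_k,\g^{ac}]=0$ by centrality of $c$, so the Leibniz rule and induction on word length give $[a_k,\CN^\pm]\subset\CN^\pm$.

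The heart is (iii), which I would prove by a simultaneous induction on $|k|$ for the $y$-modes and the $b$-modes, obtained by pushing Miki's automorphism through \eqref{com_ak}. The base cases $|k|=1$ come from the explicit action of $\CS$ on $a_{\pm1},x_0^\pm,x_{\pm1}^\pm$: for instance $b_{\pm1}=(\g-\g^{-1})x_0^\pm\in\CN^\pm$, and short computations give $\g^{\bc/2}y_1^+=\g^{-c/2}x_{-1}^+\in\CN^+$, $\g^{\bc/2}y_{-1}^+=-\g^{c/2}x_{-1}^-\in\CN^-$, with analogous identities for the $y^-$ family. Since $\CS(c)=-\bc$, applying $\CS$ to the ladder relations $[a_{\pm1},x_l^\pm]\propto\g^{\mp c/2}x_{l\pm1}^\pm$ turns $\mathrm{ad}_{a_{\pm1}}$ into $\mathrm{ad}_{b_{\pm1}}$ with the structure constant now carrying a power of $\g^{\bc}$; because $b_{\pm1}\in\CN^\pm$ and $\g^{\bc}$ is central, this raises and lowers the $y$-modes while keeping the correctly dressed $\g^{\pm k\bc/2}y_{\pm k}^\pm$ inside the appropriate $\CN^\pm$ (here one uses closure of $\CN^\pm$ under commutators). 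To reach the Heisenberg modes $b_m$ with $|m|\geq2$ I would extract them from the cross relation: applying $\CS$ to $[x_k^+,x_l^-]=\k\g^{c(k-l)/2}\psi_{k+l}^+$ gives $[y_k^+,y_l^-]=\k\g^{-\bc(k-l)/2}\CS(\psi_{k+l}^+)$, and since $\CS(\psi^+(z))=\g^{-c}\exp(\sum_{j>0}z^{-j}b_j)$ the mode $\CS(\psi_m^+)$ is $\g^{-c}$ times a polynomial in $b_1,\dots,b_m$ with leading term $b_m$. Choosing $k=m-1,\,l=1$ and dressing both factors, the powers of $\g^{\bc}$ cancel and one obtains $\k\g^{-c}(b_m+\text{lower})\in\CN^+$; invertibility of $\g^{-c}$ inside $\CN^+$ together with the inductive hypothesis on the lower products then yields $b_m\in\CN^+$, and symmetrically $b_{-m}\in\CN^-$ from the $k+l<0$ case.

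For (iv) I would run the same induction with $\CS$ replaced by $\CT^\perp$, using $\CT^\perp(c)=c-\bc$, $\CT^\perp(\bc)=\bc$ and $\CT^\perp(x_0^\pm)=x_0^\pm$; the base case is \eqref{CTCS}, which gives $\g^{\mp\bc/2}\CT^\perp(a_{\pm1})=(\g-\g^{-1})\g^{\mp c/2}x_{\pm1}^\pm\in\CN^\pm$. Writing $\tilde a_k=\CT^\perp(a_k)$ and $\tilde x_k^\pm=\CT^\perp(x_k^\pm)$, these obey the $\CT^\perp$-image of \eqref{com_ak} with $c$ replaced by $c-\bc$ in the central exponents; the modes $\tilde x_l^\pm$ are generated from $\tilde x_0^\pm=x_0^\pm$ by $\mathrm{ad}_{\tilde a_{\pm1}}$, and $\tilde a_m$ is extracted from $[\tilde x_k^+,\tilde x_l^-]$ via $\CT^\perp(\psi^+(z))=\g^{-\bc}\exp(\sum_{j>0}z^{-j}\tilde a_j)$ exactly as above. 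I expect the main obstacle to be precisely this extraction step in both (iii) and (iv): the higher Heisenberg modes are \emph{not} produced by commutators among the $a_k$ themselves (the Cartan is commutative at $c=0$), so one is forced through the nonlinear $\psi$-current relations, and the genuinely delicate point is to track the central $\g^{\bc}$ dressing factors — which commute through everything but lie outside $\CN^\pm$ — so that they cancel exactly against the structure constants $\g^{\mp\bc(k-l)/2}$ and leave a clean element of $\CN^\pm$. I would also flag that, although the identity \eqref{id_TpTTp} relates $\CS$ to $\CT^\perp\CT\CT^\perp$, it cannot be used to deduce (iv) from (iii) directly, since neither $\CT^\perp$ nor its inverse preserves $\CN^\pm$ (already $\CT^\perp(\g^c)=\g^{c-\bc}\notin\CN^+$), which is why the two inductions must be carried out independently.
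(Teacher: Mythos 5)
Your parts (i) and (ii) coincide with the paper's one-line argument. For (iii) you reach the same conclusion by a different organization: the paper simply quotes Miki's closed-form expressions $y^\pm_{\pm k}\propto\g^{\cdots}\s_1^{-(k-1)}(\text{ad}_{x_0^\pm})^{k-1}x_{\mp1}^\pm$ and $\xi^\pm_{\pm k}\propto\g^{\mp c}\s_1^{-(k-1)}\text{ad}_{x_{\mp1}^\pm}(\text{ad}_{x_0^\pm})^{k-2}x_{\pm1}^\pm$ from \cite{Bourgine2021b}, from which membership in $\CN^\pm$ is read off immediately, whereas you re-derive the same information by an $\text{ad}_{b_{\pm1}}$-ladder together with the extraction of $\xi^+_m=\g^{-c}(b_m+\cdots)$ from $[y^+_{m-1},y^-_1]$. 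The cancellation you track between the $\g^{\bc}$ dressings and the structure constants $\g^{\pm\bc|k|/2}$, $\g^{-\bc(k-l)/2}$ is exactly what makes the dressed statements close under the induction, so this is sound. For (iv) the paper takes yet another route, writing $\CT^\perp(a_{\pm k})=-\CS\CT(b_{\mp k})$ and pushing $\CS\CT$ through the explicit $\xi$-formula to land on $\text{ad}_{y^\pm_{\mp2}}(\text{ad}_{y^\pm_{\mp1}})^{k-2}a_{\mp1}\in[\CN^\mp,a_{\mp1}]\subset\CN^\mp$ by (ii) and (iii); your direct $\CT^\perp$-induction is a legitimate alternative, and your closing remark correctly explains why one cannot obtain (iv) by naively applying $\CT^\perp$ to the statement of (iii) (though note the paper does, in effect, deduce (iv) from (ii) and (iii) via the explicit commutator formula).

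One step of your (iv) is hidden behind ``exactly as above'' and is not parallel to (iii): the seed of the $\tilde x^-$-ladder. A short computation gives $\tilde x_1^-=\CT^\perp(x_1^-)\propto\g^{-c/2}a_1$, a Cartan mode lying in neither $\CN^+$ nor $\CN^-$ (unlike $\g^{-\bc/2}y_1^-\propto\g^{-c/2}x_1^+\in\CN^+$ in the $\CS$-case). Consequently, in the extraction $[\tilde x^+_{m-1},\tilde x^-_1]\propto\g^{-c/2}[\tilde x^+_{m-1},a_1]$ the commutator lands in $\CN^+$ only by virtue of property (ii), not because both factors are dressed elements of $\CN^+$. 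This is precisely the ingredient the paper invokes explicitly at the corresponding point ($[\CN^\mp,a_{\mp1}]\subset\CN^\mp$), so the gap closes in one line, but it must be stated: without it the claim that the $\CT^\perp$-induction ``runs the same'' as the $\CS$-induction is not literally true.
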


\begin{proof}
The first two properties follow, respectively, from the action of the automorphism $\CT$ on $x_k^\pm$, and the algebraic relation \eqref{com_ak}, i.e. $[a_k,x_l^\pm]\propto x^\pm_{k+l}$. The third property can be obtained from the explicit expression of the modes $b_k$ and $y_k^\pm$ found by Miki \cite{Miki2007}. To encode the modes $b_k$, it is convenient to introduce 
\be
 \xi^\pm(z) = \CS(\psi^\pm(z))=\xi_0^\pm \mathe^{\pm\sum_{k>0}z^{\mp k}b_{\pm k}}.
\ee
We have \cite[Appendix A]{Bourgine2021b},
\begin{align}
\begin{split}
&y^\pm_k=(\pm)^k\g^{-(c\pm k\bc)/2}\s_1^{-(k-1)}\left(\text{ad}_{x_0^+}\right)^{k-1} x_{\mp1}^+,\quad y^\pm_{-k}=-(\pm)^k \g^{(c\mp k\bc)/2} \s_1^{-(k-1)}\left(\text{ad}_{x_0^-}\right)^{k-1} x^-_{\mp1},\\
&\xi_{\pm k}^\pm=-(\mp)^k(\g-\g^{-1})\s_1^{-(k-1)}\g^{\mp c}\text{ad}_{x_{\mp1}^\pm}\left(\text{ad}_{x_0^\pm}\right)^{k-2}x_{\pm1}^{\pm},\quad \xi_{\pm1}^\pm=\pm\g^{\mp c}(\g-\g^{-1})x_0^\pm,\quad \xi_0^\pm=\g^{\mp c},
\end{split}
\end{align}
with $\s_1=(q_1^{1/2}-q_1^{-1/2})(q_2^{1/2}-q_2^{-1/2})$, which clearly shows that positive modes belong to $\CN^+$ and negative ones to $\CN^-$.

For $k=1$, the last property follows from the explicit expression \eqref{CTCS}. Let us consider $k>2$. By definition, $\CT^\perp(a_k)=-\CS\CT(b_{-k})$, so we need to show that $\g^{\mp k\bc/2}\CS\CT(b_{\mp k})\in\CN^\pm$, or, equivalently, $\g^{\mp\bc\pm k\bc/2}\CS\CT(\xi^\pm_{\pm k})\in \CN^\mp$ since $\g^{\mp\bc}\CS\CT(\xi_0^\pm)=1$. For $k>2$,
\begin{align}
\begin{split}
&\CT(\xi^\pm_{\pm k})=-(\mp)^k(\g-\g^{-1})\s_1^{-(k-1)}\g^{\mp c}\text{ad}_{x_{\mp2}^\pm}\left(\text{ad}_{x_{\mp1}^\pm}\right)^{k-2}x_{0}^{\pm},\\
\implies &\CS\CT(\xi^\pm_{\pm k})=(\mp)^k\s_1^{-(k-1)}\g^{\pm\bc}\text{ad}_{y_{\mp2}^\pm}\left(\text{ad}_{y_{\mp1}^\pm}\right)^{k-2}a_{\mp1}.
\end{split}
\end{align}
According to the second and third properties, $\g^{\mp\bc\pm k\bc/2}\CS\CT(\xi^\pm_{\pm k})\in[\CN^\mp,a_{\mp1}]\subset\CN^\mp$.
\end{proof}

\subsection{Framing operator}\label{sec_framing}
The GHT identity involves the operator $\nabla$ acting on symmetric functions and defined as the operator diagonal on the Macdonald basis with eigenvalues 
\be\label{eigen_nabla}
 \nabla\,P_\l(\bsx) = g_\l \,P_\l(\bsx)
\ee
with $g_\l$ defined as in \eqref{expr_gl}.
In order to derive an algebraic interpretation of the GHT identity, we need to introduce a version of this operator at the universal level. To do so, we use the automorphisms $\CT$ and $\CT^\perp$ of the quantum toroidal $\gl(1)$ algebra $\CE$ introduced previously to define extensions of this algebra, in a similar way as e.g. extended Weyl groups are defined. Hence, we define the extended algebra $\CE_\text{ext.}$ (resp. $\CE_\text{ext.}^\perp$) as the algebra generated by elements of $\CE$ and the invertible element $F$ (resp. $F^\perp$) satisfying the exchange relation
\begin{equation}\label{def_F}
eF=F\CT(e),\quad (\text{resp.}\quad e F^\perp=F^\perp \CT^\perp(e)),\quad \forall e\in\CE.
\end{equation} 
We note that the distributivity and associativity of the product is guaranteed by the fact that $\CT$ and $\CT^\perp$ are automorphisms. Invertibility follows from the fact that both $\CT$ and $\CT^\perp$ are invertible.

From the explicit action of the automorphisms on generators, we deduce that
\begin{equation}
[F,c]=[F,a_{\pm k}]=0,\quad [F,\bc]=-c F,\quad x^\pm_k F=F x^\pm_{k\mp1},
\end{equation}
and similarly
\begin{align}
\begin{split}
&[F^\perp,\bar c]=[F^\perp,x_0^\pm]=[F^\perp,b_k]=0,\quad [F^\perp,c]=\bc F^\perp,\\
&a_{\pm1}F^\perp=(\g-\g^{-1})\g^{\mp c/2}F^\perp x_{\pm1}^\pm,\quad x_{\mp1}^\pm F^\perp=-(\g-\g^{-1})^{-1}\g^{\pm c/2}F^\perp a_{\mp1}.
\end{split}
\end{align}
We note that the element $c$ (resp. $\bc$) is no longer central in the extended algebra. In particular, it implies that $F^\perp$ cannot be represented on an irreducible $\CE$-module, unless it has level $\bc=0$, which is indeed the case of horizontal modules $(r,0)$ considered in this paper.

The automorphisms of $\CE$ can be extended as follows. The automorphism $\CT$ (resp. $\CT^\perp$) is extended as an automorphism of $\CE_\text{ext.}$ (resp. $\CE_\text{ext.}^\perp$) by defining $\CT(F)=F$ (resp. $\CT^\perp(F^\perp)=F^\perp$).\footnote{The extension of the automorphism $\CT$ follows from $\CT(e)\CT(F)=\CT(F)\CT^2(e)$ which becomes $e\CT(F)=\CT(F)\CT(e)$ after substitution $e\to\CT^{-1}(e)$. A similar argument holds for $F^\perp$.} In addition, Miki's automorphism $\CS$ can be extended as a homomorphisms $\CS:\CE_\text{ext.}\to\CE_\text{ext.}^\perp$ and $\CS:\CE_\text{ext.}^\perp\to\CE_\text{ext.}$ by defining $F=\CS(F^\perp)$ and $\CS(F)=F^\perp$.\footnote{To define the extension of Miki's automorphism, we look for the image $\CS(F)$ such that we have $\CS(e F)=\CS(e)\CS(F)=\CS(F)\CS\CT(e)$. The relation $eF^\perp=F^\perp \CS\CT\CS^{-1}(e)$ becomes $\CS(e)F^\perp=F^\perp\CS\CT(e)$ if we substitute $e\to \CS(e)$, and so we can take $\CS(F)=F^\perp$. In the same way, we have $\CS(eF^\perp)=\CS(e)\CS(F^\perp)=\CS(F^\perp) \CS^2\CT\CS^{-1}(e)$ and so substituting $e\to \CS^{-1}(e)$, we have $e\CS(F^\perp)=\CS(F^\perp) \CS^2\CT\CS^{-2}(e)=\CS(F^\perp)\CT(e)$ and we can identify $\CS(F^\perp)=F$.} Note that, to have a proper automorphism instead of homomorphisms between different algebras, we would have to introduce an extension of $\CE$ by both $F$ and $F^\perp$. To define the product of these elements, we need to define $\CT(F^\perp)$ and $\CT^\perp(F)$, which then requires us to introduce other elements of the subgroup generated by $\CT$ and $\CT^\perp$. Although this seems possible, it is not required here and we will only consider the minimal extension.\footnote{We can say a little more about this. For example, we should have $FF^\perp=F^\perp\CT^\perp(F)$. Looking for $\CT(F^\perp)$, we apply $\CT$ to the relation $eF^\perp=F^\perp\CT^\perp(e)$ which gives $\CT(e)\CT(F^\perp)=\CT(F^\perp)\CT\CT^\perp(e)$ and so $e\CT(F^\perp)=\CT(F^\perp)\CT\CT^\perp\CT^{-1}(e)$ after the substitution $e\to \CT^{-1}(e)$. So we need to introduce the automorphism $\CT\CT^\perp\CT^{-1}$, and it is not hard to check that
\begin{equation}
\CT\CT^\perp\CT^{-1}(a_{\pm1})=(\g-\g^{-1})\g^{\pm\bc/2}x_0^\pm,\quad \CT\CT^\perp\CT^{-1}(x_0^\pm)=\pm\dfrac{\g-\g^{-1}}{c_1}[x_0^\pm,x_{\mp1}^\pm]=\g^{\pm(c+2\bc)/2}y^\pm_{\pm2},\quad \CT\CT^\perp\CT^{-1}(x_{\mp1}^\pm)=x_{\mp1}^\pm.
\end{equation} 
So this is a new automorphism, and we need to introduce the extension by another new element. We should then repeat the process with this new element, and so on until all elements of the automorphism group generated by $\CT$ and $\CT^\perp$ are considered.} The extended algebra $\CE_\text{ext.}$ (resp. $\CE_\text{ext.}^\perp$) equipped with the coproduct $\D$ (resp. $\D^\perp$) such that $\D(F)=F\otimes F$ (resp. $\D^\perp(F^\perp)=F^\perp\otimes F^\perp$) is a Hopf algebra.\footnote{The relation $\D\circ\CT=(\CT\otimes\CT)\circ\D$ with the Drinfeld coproduct implies that $\D$ is an homomorphism of the extended algebra if $\D(e)\D(F)=\D(F)(\CT\otimes \CT)\D(e)$. Denoting
\begin{equation}
\D(e)=\sum_i e_i^{(1)}\otimes e_i^{(2)}\implies \left(\sum_i e_i^{(1)}\otimes e_i^{(2)}\right)\D(F)=\D(F)\left(\sum_i \CT(e_i^{(1)})\otimes \CT(e_i^{(2)})\right)
\end{equation} 
which is indeed solved by $\D(F)=F\otimes F$.} Finally, it is useful to mention the possibility to extend the anti-automorphism $\s_H$ to $\CE_\text{ext.}^\perp$ by $\s_H(F^\perp)=F^\perp$, which is a consequence of the relation $\s_H\circ(\CT^\perp)^{-1}\circ\s_H=\CT^\perp$.\footnote{Since $\s_H$ is an anti-automorphism, the image of the defining relation for $F^\perp$ is $\s_H(F^\perp)\s_H(e)=\s_H\circ\CT^\perp(e)\s_H(F^\perp)$. Replacing $e\to (\CT^\perp)^{-1}\circ\s_H$ and using the fact that $\s_H$ is an involution, we deduce that $e\s_H(F^\perp)=\s_H(F^\perp)\s_H\circ(\CT^\perp)^{-1}\circ\s_H(e)$.}

\paragraph{Representations.} We now examine different representations for the extended algebra, starting from the simpler case of $\CE_\text{ext.}$. The vertical Fock modules have level $c=0$, and so they can sustain a representation of $F$. Since $F$ commutes with the Cartan algebra generated by $\psi^\pm(z)$, it can be represented as a diagonal operator on the $\ell$-weight basis $\dket{\bl}$. It it easy to check that the expression
\begin{equation}
\rho^{(0,m)}_{\bsv}(F)\dket{\bl}=\prod_{\sAbox\in\bl} v_\sAbox \chi_\sAbox\dket{\bl}.
\end{equation} 
satisfies all the required algebraic properties.

It is also possible to define a horizontal Fock representation for the extended algebra $\CE_\text{ext.}$. Although we will not be using this construction in this paper, we mention it here for the sake of completeness. Since the representation $\rho^{(1,0)}$ has the level $c=1$, it is not possible to represent $F$ on the module $\CF^{(1,0)}$ defined in \eqref{def_CF_1n}. Instead, it is necessary to consider the direct sum
\begin{equation}
\CF_\text{ext.}=\oplus_{n\in\mZ}\CF^{(1,n)},
\end{equation} 
and introduce the zero modes $[J_0,\hat{\mathsf{Q}}]=1$, such that the vacuum states of the Fock space $\CF^{(1,n)}$ is obtained as $\ket{n}=\mathe^{n\hat{\mathsf{Q}}}\ket{\vac}$ with $J_0\ket{\vac}=0$. On this module, we define the representation $\rho^{(H)}_u$ of $\CE_\text{ext.}$ as
\begin{align}
\begin{split}
&\rho_u^{(H)}(a_k)=-\dfrac{\g^{-k/2}}{k}(1-q_2^k)(1-q_3^k)J_k,\quad \rho^{(H)}_u(a_{-k})=-\dfrac{\g^{-k/2}}{k}(1-q_1^k)(1-q_3^k)J_{-k},\quad (k>0),\\
&\rho_u^{(H)}(x^+(z))=uz^{-J_0}\eta^+(z),\quad \rho_u^{(H)}(x^-(z))=u^{-1}z^{J_0}\eta^-(z),\\
&\rho_u^{(H)}(\bc)=J_0,\quad \rho_u^{(H)}(F)=\mathe^{\hat{\mathsf{Q}}}.
\end{split}
\end{align}
The restriction of $\rho_u^{(H)}$ to $\CE$ acts on $\CF^{(1,n)}$ as $\rho_u^{(1,n)}$.

We now turn to the algebra $\CE_\text{ext.}^\perp$ and consider horizontal Fock representations of levels $(r,0)$. In these representations, the operator $F^\perp$ commutes with the generators $x_0^\pm$, and so they can be diagonalized simultaneously. Let us denote as $\ket{\bl}$ the eigenbasis\footnote{In Section~\ref{sec:gmp} we will explicitly identify this basis with the generalized Macdonald basis.} of the level $r$ horizontal module. Then one can check that the expression
\begin{equation}
\label{eq:framing-horizontal}
 \rho_{\bsu}^{(r,0)}(F^\perp)\ket{\bl}
 = \prod_{\sAbox\in\bl}(-\g^{-1}u_\sAbox\chi_\sAbox)\, \ket{\bl},
\end{equation}
satisfies all the required algebraic properties (see Appendix~\ref{app:framing-op} for the proof). Moreover, this expression is compatible with the isomorphism between the representations $\rho^{(0,m)}$ and $\rho^{(m,0)}\circ\CS$, and so it allows us to extend this isomorphism to the representations of the extended algebra.

In the case $r=1$, this representation reduces to $\rho_u^{(1,0)}(F^\perp)=(-\g^{-1}u)^{L_0}\nabla$, and we recover the usual nabla operator, and $L_0$ is the grading operator defined in \eqref{def_L0}. For this representation, the operator $F^\perp$ was introduced in \cite{Bourgine2021b} in the context of topological strings and integrable hierarchies. It was called \emph{framing operator} as it results in a modification of the framing factors of topological strings amplitudes.

\begin{remark}
It is also possible to consider the vector representation with levels $(0,0)$ of the quantum toroidal $\gl(1)$ algebra. It is possible to represent both $F$ and $F^\perp$ on this representation, and $\rho^{(0,0)}(F^\perp)$ coincides with the $q$-Borel transform. Taking the $m$-th coproduct evaluated in a tensor product of vector representations defines a representation acting on symmetric Laurent series of $m$ variables. For this representation, the operator $F^\perp$ coincides with the operator $\CT_m$ defined in \cite{Langmann:2020utd}, up to a minor modification.
\end{remark}

\section{Generalized Macdonald functions}
\label{sec:gmp}
In this section, we consider the representations $\rho_\bsu^{(r,\bsn)}$ introduced in the previous section with the specialization $\bsn=\bsn_0=(0,\dots,0)$. These representations act on the Fock module $\CF_\bsu^{(r,\bsn_0)}$ isomorphic to the ring of symmetric functions in $r$ alphabets $\L[\bsx^{(1)},\dots,\bsx^{(r)}]$. By construction, the tensor products of Macdonald symmetric functions $P_{\l^{(1)}}(\bsx^{(1)})\cdots P_{\l^{(r)}}(\bsx^{(r)})$ form a basis of this module indexed by $r$-tuple partitions $\bl=(\l^{(1)},\dots,\l^{(r)})$. As we will see shortly, the operator $\rho_\bsu^{(r,\bsn_0)}(x_0^+)$ has a triangular action on this basis with respect to a certain grading. Generalized Macdonald symmetric functions of level $r$ are defined as the eigenbasis of this operator.

In addition to the parameters $(q,t)$, they depend on the weights $\bsu=(u_1,\dots,u_r)$.
Because the eigenvalue equation is homogeneous w.r.t.\ rescalings of all the weights, it follows that the eigenfunctions depend on the weights only up to an overall rescaling. Alternatively, one could express the eigenfunctions as functions of the ratios $u_\a/u_{\a+1}$ only. In the case of $r=1$, this homogeneity allows to get rid of the one weight altogether, and for $r=2$, one can reduce the dependence to one parameter only, which we will indicate as $Q=u_1/u_2$ for convenience. 

Since eigenvalues are non-degenerate, the eigenfunctions are uniquely determined upon fixing their normalization to
\begin{equation}
 P_{\bl}(\bsx^{(1)},\dots,\bsx^{(r)}|u_1,\dots,u_r)
 = P_{\l^{(1)}}(\bsx^{(1)})\cdots P_{\l^{(r)}}(\bsx^{(r)})+\text{lower degrees}.
\end{equation}
With this normalization, Generalized Macdonald symmetric functions are uniquely defined as the solutions to the eigenvalue equations
\be\label{eq:gmp-eigenvectors-eq}
\begin{aligned}
 \rho_\bsu^{(r,\bsn_0)}(x^+_0) \cdot P_{\bl}(\bsx^\bullet|\bsu)
 &= \Big(\sum_{\a=1}^r u_\a\,\me_{\lam^{(\a)}}\Big) P_{\bl}(\bsx^\bullet|\bsu),
\end{aligned}
\ee
with $\rho_\bsu^{(r,\bsn_0)}(x^+_0)$ the zero-mode of the current in \eqref{eq:x(z)+}.
Since the generators $x_0^+$ and $x_0^-$ commute in this representation, $x^-_0$ is also diagonal in this basis, and we have
\begin{equation}
 \rho_\bsu^{(r,\bsn_0)}(x^-_0) \cdot P_{\bl}(\bsx^\bullet|\bsu)
 = \Big(\sum_{\a=1}^r u_\a^{-1} \me_{\lam^{(\a)}}^\vee\Big) P_{\bl}(\bsx^\bullet|\bsu).
\end{equation}

The change of basis from tensor products of level-1 Macdonald functions to GMP is encoded by a triangular matrix $A_{\bl,\bmu}(\bsu)$, i.e.
\begin{equation}\label{def_P_g}
P_{\bl}(\bsx^\bullet|\bsu)=\sum_{\bmu\preceq\bl} A_{\bl,\bmu}(\bsu) P_{\mu^{(1)}}(\bsx^{(1)})\cdots P_{\mu^{(r)}}(\bsx^{(r)}),
\end{equation}
where the ordering will be specified later. Thus, the study of generalized Macdonald symmetric functions is in essence a study of this matrix. It should be emphasized that the formulas derived in this section do not depend on the choice of normalization of the ordinary Macdonald functions $P_\l(\bsx)$. For instance, it is possible to work with the alternative normalization
\be
\label{def_modif_GMP}
 \tP_{\bl}(\bsx^\bullet|\bsu)
 := \dfrac{P_{\bl}(\bsx^\bullet|\bsu)}{\prod_{\a=1}^r P_{\l^{(\a)}}(\sp_\vac)}
 = \tP_{\l^{(1)}}(\bsx^{(1)})\cdots \tP_{\l^{(r)}}(\bsx^{(r)})+\text{lower degrees}.
\ee
The formulas for $\tP_\bl$ are the same as those for $P_\bl$ when the coefficients $b_\l,\psi_\l(\Abox),\psi_\l^\ast(\Abox)$ are replaced by their ``tilde'' counterpart. In particular, the matrix $A_{\bl,\bmu}(\bsu)$ is replaced with $\tilde{A}_{\bl,\bmu}(\bsu)=A_{\bl,\bmu}(\bsu)n_{\bmu}/n_{\bl}$, with $n_\bl=\prod_\a P_{\l^{(\a)}}(\sp_\vac)$.

This section is organized as follows. We first analyze the case of level $r=2$, namely we discuss the triangular nature of the operators, present Pieri-like formulas and introduce the corresponding kernels. Then, these results are generalized to an arbitrary level $r$. 

\subsection{Generalized Macdonald functions at level \texorpdfstring{$r=2$}{r=2}}
We start by analyzing the triangular structure of the operator $\rho_{u_1,u_2}^{(2,\bsn_0)}(x_0^+)$. Using the expression of the coproduct, this operator can be written in the form $u_2^{-1}\rho^{(2,\bsn_0)}_{u_1,u_2}(x_0^+)=D+X$ with
\begin{equation}\label{expr_x0p}
 D = Q x_0^+\otimes 1+1\otimes x_0^+,
 \hspace{30pt}
 X=\sum_{k>0}\g^{-\frac{k}{2}}\psi_{-k}^-\otimes x_k^+,
 \hspace{30pt}
 Q=\frac{u_1}{u_2},
\end{equation} 
where we omitted in the r.h.s. the representations $\rho_1^{(1,0)}$, and denoted $Q=u_1/u_2$ the ratio of weights. As the notation suggests, the operator $D$ is diagonal on the basis $P_\l(\bsx)P_\mu(\bsy)$, it has eigenvalues $Q\me_\l+\me_\mu$. To analyze the action of the operator $X$, we use the algebraic relations to rewrite
\begin{equation}
\g^{-kc/2}\psi_{-k}^-\otimes x_k^+=-\k^{-1}[x_0^+\otimes1,x_{-k}^-\otimes x_k^+].
\end{equation}
The action of $x_k^+$ (resp. $x_{-k}^-$) with $k>0$ on Macdonald functions $P_\l(\bsx)$ is derived in Appendix~\ref{AppA2}, it produces a sum of Macdonald functions $P_\mu(\bsx)$ where $\mu$ is obtained from $\l$ by adding (resp. removing) $k$ boxes. Thus, the action of $x_{-k}^-\otimes x_k^+$ can be described by introducing the partial ordering $\bmu\preceq_k\bl$ on couple of partitions, which indicates that $\mu^{(1)}$ is obtained from $\l^{(1)}$ by removing $k$ boxes and $\mu^{(2)}$ from $\l^{(2)}$ by adding the same number $k$ of boxes. We further denote $\bmu\preceq\bl$ (resp. $\bmu\prec\bl$) if there is a $k\geq0$ (resp. $k>0$) such that $\bmu\preceq_k\bl$. With this definition, we have
\begin{equation}
XP_{\l^{(1)}}(\bsx^{(1)})P_{\l^{(2)}}(\bsx^{(2)})=\sum_{\bmu\prec\bl}X_{\bl,\bmu} P_{\mu^{(1)}}(\bsx^{(1)})P_{\mu^{(2)}}(\bsx^{(2)}).
\end{equation}
Note that the summation is actually finite since the process of removing boxes to $\l^{(1)}$ terminates after $|\l^{(1)}|$ steps. This ordering defines a triangular structure for the operator $\rho_{u_1,u_2}^{(2,\bsn_0)}(x_0^+)$. Thus, this operator can be diagonalized as 
\begin{equation}
u_2^{-1}\rho_{u_1,u_2}^{(2,\bsn_0)}(x_0^+)=A(Q)D(Q)A(Q)^{-1},    
\end{equation}
and the change of basis \eqref{def_P_g} with the triangular matrix $A(Q)$ defines the generalized Macdonald functions.

The matrix $A(Q)$ is chosen with one on the diagonal, i.e. $A_{\bl,\bl}(Q)=1$, which fixes the normalization of the generalized Macdonald. Decomposing $A(Q)=1+U(Q)$ with $U(Q)$ strictly triangular, we deduce the relation
\begin{equation}
[D(Q),U(Q)]=-X(1+U(Q)).
\end{equation}
Projecting this relation on the basis $P_{\l^{(1)}}(\bsx^{(1)})P_{\l^{(2)}}(\bsx^{(2)})$, we can determine inductively the matrix elements of $U(Q)$ in terms of the matrix elements of $X$,\footnote{The matrix elements of the operator $X$ are determined explicitly in Appendix~\ref{AppA2}, they are independent of the variable $Q$.}
\begin{equation}
U_{\bl,\bmu}(Q)=-\dfrac{X_{\bl,\bmu}+\sum_{\bmu\prec\bnu\prec\bl}X_{\bl,\bnu}U_{\bnu,\bmu}(Q)}{Q(\me_{\l^{(1)}}-\me_{\mu^{(1)}})+\me_{\l^{(2)}}-\me_{\mu^{(2)}}},
\end{equation}
since the r.h.s. involves matrix elements with strictly smaller indices. This is an induction on $k=|\l^{(1)}|-|\mu^{(1)}|=|\mu^{(2)}|-|\l^{(2)}|$ starting from the order $k=1$ for which $\mu^{(1)}=\l^{(1)}-\Abox$ and $\mu^{(2)}=\l^{(2)}+\AboxB$,
\begin{equation}
U_{\bl,\bmu}(Q)=\dfrac1{(1-q_1)(1-q_2)}\dfrac{X_{\bl,\bmu}}{Q\chi_\sAbox-\chi_\sAboxB}.
\end{equation}

The algebraic relations imply that the generator $x_0^-$ commutes with $x_0^+$ in this representation. Hence, $\rho_{u_1,u_2}^{(2,\bsn_0)}(x_0^-)$ is also diagonal on the basis of generalized Macdonald functions. It has a similar triangular structure, namely 
\begin{equation}
u_2\rho^{(2,\bsn_0)}_{u_1,u_2}(x_0^-)=Q^{-1}x_0^-\otimes 1+1\otimes x_0^-+Q^{-1}\sum_{k>0}\g^{\frac{k}{2}}x_{-k}^-\otimes\psi_k^+
\end{equation} 
with the operator $x_{-k}^-\otimes\psi_k^+$ also acting on the basis $P_\l(\bsx)P_\mu(\bsy)$ by removing $k$ boxes to $P_\mu(\bsy)$ and simultaneously adding $k$ boxes to $P_\l(\bsx)$. From the same argument as in the case of $\rho^{(2,\bsn_0)}_{u_1,u_2}(x_0^+)$, it follows that its eigenvalues are given by the linear combination $Q^{-1}\me_\l^\vee+\me_\mu^\vee$.

In this construction of generalized Macdonald symmetric function, the coefficients $X_{\bl,\bmu}$ and $U_{\bl,\bmu}$ only depend on the partitions, and not explicitly on the number of variables. It is thus possible to define generalized Macdonald functions with a finite number of variables. GMP indexed by the $r$-tuple partition $\bl$ with $\ell(\l^{(\a)})\leq N_\a$ form a basis of the space of polynomials in $\sum_\a N_\a$ variables $x_i^{(\a)}$, $i=1,\dots,N_\a$, invariant under $S_{N_1}\times\cdots\times S_{N_r}$. In Appendix~\ref{app_finite}, we explore the case of level $r=2$, with $N_1=N_2=1$.

\paragraph{Example.} Using the triangular structure, it is possible to compute inductively the generalized Macdonald functions. Every operator is graded by the total number of boxes $|\bl|=|\l^{(1)}|+|\l^{(2)}|$, and so the diagonalization can be done independently in each subspace. At degree zero, the subspace is of dimension one, and we have trivially $P_{\vac,\vac}(\bsx,\bsy|Q)=1$. At degree one, the subspace is two-dimensional. Here, it is easier to work in normalization $\tP_{\bl}$, and compute
\begin{equation}
r_{[1]}^\ast(1,1)=q_3(1-q_1)(1-q_2),\quad r_\vac(1,1)=1.
\end{equation}
Due to the triangular structure, $X_{\bl,\bmu}=0$ if $\l^{(1)}=\vac$. The only non-zero coefficient is\footnote{Note that the change of basis is trivial here since $n_\bl=n_\bmu$ and so $\tilde{U}_{(\vac,[1])([1],\vac)}=U_{(\vac,[1])([1],\vac)}$.}
\be
 \tilde{X}_{(\vac,[1])([1],\vac)}
 = c_1\implies U_{(\vac,[1])([1],\vac)}
 = \dfrac{1-q_3}{1-Q}
\ee
We deduce that
\be
 P_{[1],\vac}(\bsx,\bsy|Q) = P_{[1]}(\bsx),\quad P_{\vac,[1]}(\bsx,\bsy|Q)
 = P_{[1]}(\bsy)+\dfrac{1-t/q}{1-Q}P_{[1]}(\bsx)\,.
\ee

\subsubsection{Pieri rules}
In Appendix~\ref{app:Mukade}, we derive the action of the generators $a_{\pm1}$ on generalized Macdonald functions for arbitrary level $r$. Specializing to $r=2$, we find
\be
\label{Pieri_g}
\begin{aligned}
(\g-\g^{-1})^{-1}(1-q_1)^{-1}&\rho^{(2,\bsn_0)}(a_{-1})P_{\l,\mu}(\bsx,\bsy|Q)\\
=&\sum_{\sAbox\in A(\l)}\Psi_\mu(Q\chi_\sAbox)\psi_\l(\Abox) P_{\l+\sAbox,\mu}(\bsx,\bsy|Q)+\sum_{\sAbox\in A(\mu)}\psi_\mu(\Abox) P_{\l,\mu+\sAbox}(\bsx,\bsy|Q),\\
(\g-\g^{-1})^{-1}(1-q_2)^{-1}&\rho^{(2,\bsn_0)}(a_1)P_{\l,\mu}(\bsx,\bsy|Q)\\
=&\sum_{\sAbox\in R(\l)}\psi_\l^\ast(\Abox)\ P_{\l-\sAbox,\mu}(\bsx,\bsy|Q)+\sum_{\sAbox\in R(\mu)}\Psi_\l(\chi_\sAbox/Q)\psi_\mu^\ast(\Abox)\ P_{\l,\mu-\sAbox}(\bsx,\bsy|Q),
\end{aligned}
\ee
with the matrix elements $\psi_\l(\Abox)$ and $\psi_\l^\ast(\Abox)$ defined in \eqref{PieriPsi} and \ref{dual_Pieri} respectively. These relations are equivalent to a generalized $e_1$-Pieri rule and its dual,
\be
\begin{aligned}
&(p_1(\bsx)+p_1(\bsy))P_{\l,\mu}(\bsx,\bsy|Q)=\sum_{\sAbox\in A(\l)}\Psi_\mu(Q\chi_\sAbox)\psi_\l(\Abox) P_{\l+\sAbox,\mu}(\bsx,\bsy|Q)+\sum_{\sAbox\in A(\mu)}\psi_\mu(\Abox) P_{\l,\mu+\sAbox}(\bsx,\bsy|Q),\\
&\left(\dfrac{\p}{\p p_1(\bsx)}+q_3\dfrac{\p}{\p p_1(\bsy)}\right)P_{\l,\mu}(\bsx,\bsy|Q)=\sum_{\sAbox\in R(\l)}\psi_\l^\ast(\Abox)\ P_{\l-\sAbox,\mu}(\bsx,\bsy|Q)+\sum_{\sAbox\in R(\mu)}\Psi_\l(\chi_\sAbox/Q)\psi_\mu^\ast(\Abox)\ P_{\l,\mu-\sAbox}(\bsx,\bsy|Q).
\end{aligned}
\ee
A similar relation holds for $\tilde{P}_{\l,\mu}(\bsx,\bsy|Q)$ with $\psi_\l(\Abox)$, $\psi_\l^\ast(\Abox)$ replaced by their tilde counterpart.

\subsubsection{Reproducing kernels}
The Macdonald reproducing kernel $\Pi(\bsx|\bsa)$ defined in \eqref{Mac_kernel} is characterized by the following properties,
\be
\begin{aligned}
 \left.M_k\right|_\bsx \Pi(\bsx|\bsa) &= \left.M_k\right|_\bsa \Pi(\bsx|\bsa),\\
 p_k(\bsx)\Pi(\bsx|\bsa) &= k\dfrac{1-q^k}{1-t^k}\dfrac{\p}{\p p_k(\bsa)}\Pi(\bsx|\bsa),\\
 k\dfrac{\p}{\p p_k(\bsx)}\Pi(\bsx|\bsa) &= \dfrac{1-t^k}{1-q^k}p_k(\bsa)\Pi(\bsx|\bsa),
\end{aligned}
\ee
where $M_k$ are the Macdonald operators \cite{Shiraishi:2006afa,Awata:2009sz}, i.e.
\be
 M_k =
 \prod_{i=1}^k \oint\frac{\mathd z_i}{2\pi\mathi z_i} \prod_{1\leq i<j\leq k}\frac{1-z_i z_j^{-1}}
 {1-t^{-1}z_i z_j^{-1}} :\eta^+(z_1)\cdots\eta^+(z_k):\,,
 \hspace{30pt}
 M_k P_\l = \frac{e_k(\sp_\l)}{e_k(\sp_\vac)}\, P_\l\,.
\ee
Notice that the first property is a corollary of the identity $\eta^+(z)|_\bsx\Pi(\bsx|\bsa)=\eta^+(tz^{-1})|_\bsa\Pi(\bsx|\bsa)$.
Using the isomorphism with the horizontal Fock module of the quantum toroidal algebra $\CE$, these relations become equivalent to the following algebraic characterization,
\be
\label{eq:adj-r=1}
 \left.\rho^{(1,0)}_u(e)\right|_\bsx \Pi(\bsx|\bsa)
 = \left.\rho^{(1,0)}_u(\tilde{\s}_H(e))\right|_\bsa \Pi(\bsx|\bsa),\quad\forall e\in\CE,
\ee
where $\tilde{\s}_H$ is the anti-homomorphism defined in \eqref{def_sV_sH} twisted by an inner automorphism such that $\tilde{\s}_H(e)=q_1^{-d}\s_H(e)q_1^{d}$. Indeed, this is an involutive anti-automorphism such that $\tilde{\s}_H(x_0^\pm)=x_0^\pm$ and $\tilde{\s}_H(a_1)=-q_1^{-1}a_{-1}$.\footnote{Involution is easy to show, since $\tilde{\s}_H\circ\tilde{\s}_H(e)=q_1^{-d}\s_H(q_1^{-d}\s_H(e)q_1^d)q_1^d= q_1^{-d}\s_H(q_1^{d})\s_H(\s_H(e))\s_H(q_1^{-d})q_1^d=e$.}

There are two natural ways to extend the property \eqref{eq:adj-r=1} to representations of level two, namely
\be
\label{prop_kernels}
\begin{aligned}
 \left.\rho^{(2,\bsn_0)}_{u_1,u_2}(e)\right|_{\bsx,\bsy}\Pi(\bsx,\bsy|\bsa,\bsb|Q)
 &= \left.\rho^{(2,\bsn_0)}_{u_1,u_2}(\tilde{\s}_H(e))\right|_{\bsa,\bsb}
 \Pi(\bsx,\bsy|\bsa,\bsb|Q),\\
 \left.\rho^{(2,\bsn_0)}_{u_1,u_2}(e)\right|_{\bsx,\bsy}\Pi_{\mathrm{Z}}(\bsx,\bsy|\bsa,\bsb|Q)
 &= \left.\rho^{(2,\bsn_0)}_{u_2,u_1}(\tilde{\s}_H(e))\right|_{\bsa,\bsb}
 \Pi_{\mathrm{Z}}(\bsx,\bsy|\bsa,\bsb|Q),
\end{aligned}
\ee
together with the normalization
\be
\label{prop_kernels_norm}
 \Pi(0,0|\bsa,\bsb|Q) = \Pi(\bsx,\bsy|0,0|Q) = 1 = \Pi_{\mathrm{Z}}(\bsx,\bsy|0,0|Q) = \Pi_{\mathrm{Z}}(0,0|\bsa,\bsb|Q)\,.
\ee
Each of these equations will provide the definition of a reproducing kernel and an associated inner product, as we will show below.

\begin{proposition}
\label{prop:kernel-r=2}
The equations in \eqref{prop_kernels} together with the normalization condition in \eqref{prop_kernels_norm} define uniquely the kernels $\Pi$, $\Pi_{\mathrm{Z}}$, and the solutions can be expanded on the basis of generalized Macdonald symmetric functions as
\begin{align}\label{def_kernels}
\begin{split}
&\Pi(\bsx,\bsy|\bsa,\bsb|Q)=\sum_{\l,\mu}b_{\l,\mu}(Q)P_{\l,\mu}(\bsx,\bsy|Q)P_{\l,\mu}(\bsa,\bsb|Q),\\
&\Pi_{\mathrm{Z}}(\bsx,\bsy|\bsa,\bsb|Q)=\sum_{\l,\mu}b_{\l}b_{\mu}P_{\l,\mu}(\bsx,\bsy|Q)P_{\mu,\l}(\bsa,\bsb|Q^{-1}),
\end{split}
\end{align}
with the coefficient\footnote{It is important to emphasize that the coefficients of the expansion of the kernel $\Pi_{\mathrm{Z}}$ are independent of $Q$.} $b_\l$ given in \eqref{Mac_kernel} and
\begin{equation}\label{expr_blm}
b_{\l,\mu}(Q):=b_\l b_\mu\dfrac{\tilde{N}_{\l,\mu}(Q)}{\tilde{N}_{\mu,\l}(Q^{-1})}.
\end{equation}
\end{proposition}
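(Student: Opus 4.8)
The plan is to exploit the algebraic characterization \eqref{prop_kernels} together with the fact that $\tilde{\s}_H$ is an involutive \emph{anti}-automorphism. The key structural remark is that the operators $\rho^{(2,\bsn_0)}(e)|_{\bsx,\bsy}$ and $\rho^{(2,\bsn_0)}(e')|_{\bsa,\bsb}$ act on disjoint alphabets, hence commute. Consequently, if a kernel satisfies \eqref{prop_kernels} for $e_1$ and $e_2$, it does so for the product $e_1e_2$: imposing the condition for $e_2$ first and then for $e_1$,
\be
\rho(e_1e_2)|_{\bsx,\bsy}\Pi = \rho(e_1)|_{\bsx,\bsy}\,\rho(\tilde{\s}_H(e_2))|_{\bsa,\bsb}\Pi = \rho(\tilde{\s}_H(e_2)\tilde{\s}_H(e_1))|_{\bsa,\bsb}\Pi = \rho(\tilde{\s}_H(e_1e_2))|_{\bsa,\bsb}\Pi,
\ee
the last equality using that $\tilde{\s}_H$ reverses products. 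Since $\CE$ is generated by $x_0^\pm$ and $a_{\pm1}$, it therefore suffices to impose \eqref{prop_kernels} on these four generators; and since the claimed kernels are manifestly symmetric under exchanging the two variable sets (with $Q\to Q^{-1}$ and the relabelling $\l\leftrightarrow\mu$ for $\Pi_{\mathrm Z}$), the conditions for $x_0^-$ and $a_1$ will reduce, via the involutivity of $\tilde{\s}_H$, to those for $x_0^+$ and $a_{-1}$.

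First I would fix the support. Expanding any solution in the generalized Macdonald bases and imposing \eqref{prop_kernels} for $e=x_0^+$ (for which $\tilde{\s}_H(x_0^+)=x_0^+$), I use that $\rho_\bsu^{(2,\bsn_0)}(x_0^+)$ is diagonal with eigenvalue $u_1\me_\l+u_2\me_\mu$ on $P_{\l,\mu}$, by \eqref{eq:gmp-eigenvectors-eq}. For the kernel $\Pi$ the two sides carry the weights $(u_1,u_2)$, so matching forces the eigenvalues on the two alphabets to coincide; for $\Pi_{\mathrm Z}$ the $(\bsa,\bsb)$ side carries the swapped weights $(u_2,u_1)$ as in the second line of \eqref{prop_kernels}. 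In both cases, generic weights together with the injectivity of $\l\mapsto\me_\l$ pin down exactly the diagonal supports displayed in \eqref{def_kernels}, namely $P_{\l,\mu}(\bsx,\bsy|Q)P_{\l,\mu}(\bsa,\bsb|Q)$ for $\Pi$ and $P_{\l,\mu}(\bsx,\bsy|Q)P_{\mu,\l}(\bsa,\bsb|Q^{-1})$ for $\Pi_{\mathrm Z}$.

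Next I would determine the diagonal coefficients by imposing \eqref{prop_kernels} for $e=a_{-1}$, whose dual is $\tilde{\s}_H(a_{-1})=-q_1a_1$. Inserting the $e_1$-Pieri rule \eqref{Pieri_g} on the $(\bsx,\bsy)$ side and its box-removing dual on the $(\bsa,\bsb)$ side, and matching the coefficients of $P_{\l+\sAbox,\mu}(\bsx,\bsy)$ and of $P_{\l,\mu+\sAbox}(\bsx,\bsy)$ against the surviving basis vector in the $(\bsa,\bsb)$ alphabets, yields, for $\Pi$, the two recursions
\be
\frac{b_{\l+\sAbox,\mu}(Q)}{b_{\l,\mu}(Q)} = \frac{b_{\l+\sAbox}}{b_\l}\,\Psi_\mu(Q\chi_\sAbox),
\qquad
\frac{b_{\l,\mu+\sAbox}(Q)}{b_{\l,\mu}(Q)} = \frac{b_{\mu+\sAbox}}{b_\mu}\,\Psi_\l(\chi_\sAbox/Q)^{-1},
\ee
where the level-one ratio $b_{\l+\sAbox}/b_\l$ is the one already encoded in \eqref{Mac_kernel}. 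For $\Pi_{\mathrm Z}$ the $(\bsa,\bsb)$ parameter is $Q^{-1}$, so the two $\Psi$ factors cancel between the sides and only the level-one recursion $C_{\l+\sAbox,\mu}/C_{\l,\mu}=b_{\l+\sAbox}/b_\l$ (and its $\mu$-analogue) remains; this explains why the coefficients of $\Pi_{\mathrm Z}$ are $Q$-independent and equal to $b_\l b_\mu$. Combined with the normalization \eqref{prop_kernels_norm}, which forces the empty-partition coefficient to $1$, these recursions determine all coefficients uniquely, giving at once uniqueness and the explicit expansions.

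The main obstacle is the verification that the proposed $b_{\l,\mu}(Q)=b_\l b_\mu\,\tilde N_{\l,\mu}(Q)/\tilde N_{\mu,\l}(Q^{-1})$ really solves the above recursions, i.e.\ the two Nekrasov-factor identities
\be
\frac{\tilde N_{\l+\sAbox,\mu}(Q)}{\tilde N_{\l,\mu}(Q)}\,\frac{\tilde N_{\mu,\l}(Q^{-1})}{\tilde N_{\mu,\l+\sAbox}(Q^{-1})}=\Psi_\mu(Q\chi_\sAbox),
\qquad
\frac{\tilde N_{\l,\mu+\sAbox}(Q)}{\tilde N_{\l,\mu}(Q)}\,\frac{\tilde N_{\mu,\l}(Q^{-1})}{\tilde N_{\mu+\sAbox,\l}(Q^{-1})}=\Psi_\l(\chi_\sAbox/Q)^{-1}.
\ee
I expect these to follow from the behaviour of the modified Nekrasov factor under the addition of a single box, expressed through the functions $\CY_\mu$ and $\Psi_\mu$ of \eqref{def_CYY}--\eqref{expr_CYY}, using the identities collected in Appendix~\ref{app_A1}. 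To close the argument I would finally confirm that the diagonal ansatz automatically verifies the remaining generator conditions: $x_0^-$ is treated exactly as $x_0^+$ since it is diagonal and commutes with $x_0^+$, while the $a_1$ condition follows from the $a_{-1}$ one by the symmetry of the kernels and the involutivity of $\tilde{\s}_H$.
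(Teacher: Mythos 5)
Your proposal is correct and follows essentially the same route as the paper's proof: reduce to the four generators, use the eigenvalue property of $x_0^\pm$ to fix the (diagonal) support, and use the generalized Pieri rule for $a_{\pm1}$ to derive the recursions on the coefficients, which for $\Pi_{\mathrm Z}$ collapse to the level-one recursion \eqref{var_b_l} and for $\Pi$ are solved by \eqref{expr_blm}. The two Nekrasov identities you flag as the remaining obstacle are immediate from the single-box variation formulas \eqref{prop_tN} together with the definition $\Psi_\mu(z)=\CY_\mu(q_3^{-1}z)/\CY_\mu(z)$, so there is no gap.
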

\begin{proof}
Due to the structure of the algebra $\CE$, it is enough to prove that the kernels defined in \eqref{def_kernels} satisfy the algebraic relations \eqref{prop_kernels} for the four generators $x_0^\pm$ and $a_{\pm1}$. The case $e=x_0^\pm$ follows immediately from the fact that generalized Macdonald functions are eigenvectors of $\rho_\bsu^{(r,\bsn_0)}(x_0^\pm)$ as in  \eqref{eq:gmp-eigenvectors-eq}. For the second kernel, we need to use the fact that the eigenvalues $u_1\me_\l+u_2\me_\mu$ are invariant under the simultaneous exchange of the weights $u_1,u_2$ and the partitions $\l,\mu$.

The non-trivial part of the proof lies in the action of the generators $a_{\pm1}$, for which we use our previous result on the Pieri rule \eqref{Pieri_g}. Starting from the kernel $\Pi_{\mathrm{Z}}(\bsx,\bsy|\bsa,\bsb|Q)$, we find
\be
\begin{aligned}
&(\g-\g^{-1})^{-1}(1-q_2)^{-1}\left.\rho^{(2,\bsn_0)}_{u_1,u_2}(a_1)\right|_{\bsx,\bsy}\Pi_{\mathrm{Z}}(\bsx,\bsy|\bsa,\bsb|Q)\\
&=\sum_{\l,\mu}b_\l b_\mu\left[\sum_{\sAbox\in R(\l)}\psi_\l^\ast(\Abox)\ P_{\l-\sAbox,\mu}(\bsx,\bsy|Q)+\sum_{\sAbox\in R(\mu)}\Psi_\l(\chi_\sAbox/Q)\psi_\mu^\ast(\Abox)\ P_{\l,\mu-\sAbox}(\bsx,\bsy|Q)\right]P_{\mu,\l}(\bsa,\bsb|Q^{-1}),\\
&=\sum_{\l,\mu}\sum_{\sAbox\in A(\l)}b_{\l+\sAbox}b_\mu \psi_{\l+\sAbox}^\ast(\Abox)\ P_{\l,\mu}(\bsx,\bsy|Q)P_{\mu,\l+\sAbox}(\bsa,\bsb|Q^{-1})\\
&+\sum_{\l,\mu}\sum_{\sAbox\in A(\mu)}b_\l b_{\mu+\sAbox}\Psi_\l(\chi_\sAbox/Q)\psi_{\mu+\sAbox}^\ast(\Abox)\ P_{\l,\mu}(\bsx,\bsy|Q)P_{\mu+\sAbox,\l}(\bsa,\bsb|Q^{-1})\,,
\end{aligned}
\ee
after a change of summation labels $\l-\Abox\mapsto \l$ in the first term and $\mu-\Abox\mapsto\mu$ in the second one. This is to be compared with
\be
\begin{aligned}
(\g-\g^{-1})^{-1}&(1-q_2)^{-1}\left.\rho^{(2,\bsn_0)}_{u_2,u_1}(-t a_{-1})\right|_{\bsa,\bsb}\Pi_{\mathrm{Z}}(\bsx,\bsy|\bsa,\bsb|Q)\\
&=\frac{1-t}{1-q}\sum_{\l,\mu}\sum_{\sAbox\in A(\mu)}b_\l b_\mu \Psi_\l(\chi_\sAbox/Q)\psi_{\mu}(\Abox)\ P_{\l,\mu}(\bsx,\bsy|Q)P_{\mu+\sAbox,\l}(\bsa,\bsb|Q^{-1})\\
&+\frac{1-t}{1-q}\sum_{\l,\mu}\sum_{\sAbox\in A(\l)}b_\l b_\mu \psi_{\l}(\Abox)\ P_{\l,\mu}(\bsx,\bsy|Q)P_{\mu,\l+\sAbox}(\bsa,\bsb|Q^{-1})\,.
\end{aligned}
\ee
The two expressions are indeed identical as a consequence of the variation formulas \eqref{var_b_l}.

Now, turning to the other kernel $\Pi(\bsx,\bsy|\bsa,\bsb|Q)$, we find
\be
\begin{aligned}
&(\g-\g^{-1})^{-1}(1-q_2)^{-1}\left.\rho^{(2,\bsn_0)}_{u_1,u_2}(a_1)\right|_{\bsx,\bsy}\Pi(\bsx,\bsy|\bsa,\bsb|Q)\\
=&\sum_{\l,\mu}b_{\l,\mu}(Q)\left[\sum_{\sAbox\in R(\l)}\psi_\l^\ast(\Abox)\ P_{\l-\sAbox,\mu}(\bsx,\bsy|Q)+\sum_{\sAbox\in R(\mu)}\Psi_\l(\chi_\sAbox/Q)\psi_\mu^\ast(\Abox)\ P_{\l,\mu-\sAbox}(\bsx,\bsy|Q)\right]P_{\l,\mu}(\bsa,\bsb|Q),\\
=&\sum_{\l,\mu}\sum_{\sAbox\in A(\l)}b_{\l+\sAbox,\mu}(Q)\psi_{\l+\sAbox}^\ast(\Abox)\ P_{\l,\mu}(\bsx,\bsy|Q)P_{\l+\sAbox,\mu}(\bsa,\bsb|Q)\\
+&\sum_{\l,\mu}\sum_{\sAbox\in A(\mu)}b_{\l,\mu+\sAbox}(Q)\Psi_\l(\chi_\sAbox/Q)\psi_{\mu+\sAbox}^\ast(\Abox)\ P_{\l,\mu}(\bsx,\bsy|Q)P_{\l,\mu+\sAbox}(\bsa,\bsb|Q)\,,
\end{aligned}
\ee
where we use the same change of variables. Comparing with 
\be
\begin{aligned}
(\g-\g^{-1})^{-1}&(1-q_2)^{-1}\left.\rho^{(2,\bsn_0)}_{u_1,u_2}(-t a_{-1})\right|_{\bsa,\bsb}\Pi(\bsx,\bsy|\bsa,\bsb|Q)\\
&=\frac{1-t}{1-q}\sum_{\l,\mu}\sum_{\sAbox\in A(\l)}b_{\l,\mu}(Q)\Psi_\mu(Q\chi_\sAbox)\psi_{\l}(\Abox)\ P_{\l,\mu}(\bsx,\bsy|Q)P_{\l+\sAbox,\mu}(\bsa,\bsb|Q)\\
&+\frac{1-t}{1-q}\sum_{\l,\mu}\sum_{\sAbox\in A(\mu)}b_{\l,\mu}(Q)\psi_{\mu}(\Abox)\ P_{\l,\mu}(\bsx,\bsy|Q)P_{\l,\mu+\sAbox}(\bsa,\bsb|Q)\,,
\end{aligned}
\ee
we observe that the two expressions match provided that the coefficients $b_{\l,\mu}(Q)$ obey the recursion relations
\be
\label{cond_blm}
 \dfrac{b_{\l+\sAbox,\mu}(Q)}{b_{\l,\mu}(Q)}
 = \frac{1-t}{1-q}\dfrac{\psi_\l(\Abox)}{\psi_{\l+\sAbox}^\ast(\Abox)}
 \Psi_\mu(Q\chi_\sAbox),
 \hspace{30pt}
 \dfrac{b_{\l,\mu+\sAbox}(Q)}{b_{\l,\mu}(Q)}
 =\frac{1-t}{1-q}\dfrac{\psi_\mu(\Abox)}{\psi_{\mu+\sAbox}^\ast(\Abox)}
 \Psi_\l(\chi_\sAbox/Q)^{-1}
\ee 
This is solved by the expression \eqref{expr_blm}. A similar argument can be developed for the action of $a_{-1}$, it leads to conditions on the coefficients $b_{\l,\mu}(Q)$ that are equivalent to \eqref{cond_blm}.
\end{proof}

It is important to mention that the kernel $\Pi_{\mathrm{Z}}$ was previously considered by Zenkevich in \cite{Zenkevich:2014lca} where a factorized expression for it has been conjectured. Here we give a proof of the factorization formula.
\begin{proposition}
The reproducing kernel $\Pi_{\mathrm{Z}}$ can be explicitly resummed in the form of a plethystic exponential as
\be\label{Zenkevich}
 \Pi_{\mathrm{Z}}(\bsx,\bsy|\bsa,\bsb|Q) = \exp\left(\sum_{k>0}\frac1k\frac{1-t^k}{1-q^k}
 \left(p_k(\bsx)p_k(\bsb)+p_k(\bsy)p_k(\bsa)+(1-t^kq^{-k})p_k(\bsx)p_k(\bsa)\right)
 \right)\,,
\ee
which is independent of the weights.
\end{proposition}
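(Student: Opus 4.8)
The plan is to combine the explicit closed form with the uniqueness already established in Proposition~\ref{prop:kernel-r=2}. Write $K$ for the plethystic exponential on the right-hand side of \eqref{Zenkevich} and let $\Xi$ denote its exponent, so that $K=\mathe^{\Xi}$. Because \eqref{prop_kernels} together with \eqref{prop_kernels_norm} determine $\Pi_{\mathrm{Z}}$ uniquely, and because the intertwining relation for a product $e_1e_2$ follows from those for $e_1$ and $e_2$ (the two representations act on the disjoint alphabets $(\bsx,\bsy)$ and $(\bsa,\bsb)$, hence commute, and $\tilde{\s}_H$ is an anti-homomorphism), it is enough to verify that $K$ meets the normalization and satisfies $\rho^{(2,\bsn_0)}_{u_1,u_2}(e)|_{\bsx,\bsy}\,K=\rho^{(2,\bsn_0)}_{u_2,u_1}(\tilde{\s}_H(e))|_{\bsa,\bsb}\,K$ for the four generators $e\in\{x_0^+,x_0^-,a_1,a_{-1}\}$. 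The normalization is immediate: setting $\bsa=\bsb=0$ (or $\bsx=\bsy=0$) makes $\Xi$ vanish, so $K=1$.

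The structural fact driving everything is that $\Xi$ is linear in each power sum. For the Cartan generators, $\rho^{(2,\bsn_0)}(a_{-1})$ acts by multiplication and $\rho^{(2,\bsn_0)}(a_1)$ by a first-order derivation, so both sides of the relation become $K$ times a linear form in the power sums. Using $\tilde{\s}_H(a_1)=-q_1^{-1}a_{-1}$, $\tilde{\s}_H(a_{-1})=-q_1a_1$, the rewriting $\tfrac{1-t^k}{1-q^k}=\tfrac{1-q_1^{-k}}{1-q_2^k}$, and the elementary identities $(1-q_3)+q_3=1$ and $q_1^{-1}(1-q_1)=-(1-q_1^{-1})$, I would check that the two sides of the $a_{\pm1}$ relations coincide after a short computation.

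The heart of the proof is the pair $x_0^\pm$. Since $\tilde{\s}_H(x^\pm_k)=q_1^{-k}x^\pm_{-k}$, one has $\tilde{\s}_H(x^\pm(z))=x^\pm(q_1^{-1}z^{-1})$, so extracting the $z^0$-coefficient of the current identity $\rho^{(2,\bsn_0)}_{u_1,u_2}(x^\pm(z))|_{\bsx,\bsy}\,K=\rho^{(2,\bsn_0)}_{u_2,u_1}(x^\pm(q_1^{-1}z^{-1}))|_{\bsa,\bsb}\,K$ gives precisely the desired relation for $x_0^\pm$. I would verify this current identity directly. Writing $\rho^{(2,\bsn_0)}_{u_1,u_2}(x^+(z))|_{\bsx,\bsy}=u_1V_1(z)+u_2V_2(z)$ with $V_1,V_2$ the two vertex operators read off from \eqref{eq:x(z)+}, the linearity of $\Xi$ makes the annihilation (derivative) half of each $V_\a$ act on $K$ as a pure shift; thus $V_\a(z)K$ becomes $K$ times a product of purely multiplicative exponentials of power sums with explicit $z$-dependent coefficients, and likewise on the right for $\tilde V_\a(w)$ with $w=q_1^{-1}z^{-1}$. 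Using $w^k=q_1^{-k}z^{-k}$ together with $(1-q_1^k)q_1^{-k}=-(1-q_1^{-k})$, I expect the two building blocks $V_1(z)K=\tilde V_2(w)K$ and $V_2(z)K=\tilde V_1(w)K$ to hold, and their $u_1,u_2$-weighted sum is then exactly the current identity. The case $x^-(z)$ should be entirely analogous, the extra factors of $\g$ and $q_3$ in \eqref{eq:x(z)-} organizing themselves in the same manner.

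The main obstacle is precisely this $x_0^\pm$ step, and it is of a bookkeeping rather than conceptual nature: one must check that, after the annihilation halves have acted, the creation half of $V_1$ in the alphabet $\bsx$ matches the shifted exponential produced by $\tilde V_2$ (and symmetrically $V_2\leftrightarrow\tilde V_1$) under $z\mapsto q_1^{-1}z^{-1}$. This cross-matching is where the asymmetric coupling of $\Xi$ is essential: the absence of any $p_k(\bsy)p_k(\bsb)$ term and the specific coefficient $1-t^kq^{-k}=1-q_3^k$ on the $p_k(\bsx)p_k(\bsa)$ coupling are exactly the data that make the contractions balance. In this sense the form \eqref{Zenkevich} is forced by requiring the vertex-operator contributions to cancel, and verifying this balance is the one place where care is required.
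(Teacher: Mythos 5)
Your proposal is correct, and I have checked that the key cross-matching you identify does go through: with $w=q_1^{-1}z^{-1}$, the contraction of the annihilation half of $V_1(z)$ against the $p_k(\bsx)p_k(\bsb)+(1-q_3^k)p_k(\bsx)p_k(\bsa)$ couplings reproduces exactly the creation half of $\tilde V_2(w)$ after using $q_1^{\mp k}(1-q_1^{\pm k})=-(1-q_1^{\mp k})$, and symmetrically for $V_2\leftrightarrow\tilde V_1$; the swap of weights $u_1\leftrightarrow u_2$ on the $(\bsa,\bsb)$ side is precisely what makes the $u_\a$-coefficients pair up. However, your route is genuinely different from the paper's. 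The paper does not verify the intertwining relations generator by generator; instead it proves the single structural identity $\D_H=\CK^{-1}\D'\CK$ relating the $\s_H$-twisted coproduct to the opposite coproduct via the Cartan part $\CK$ of the universal $R$-matrix (Appendix~\ref{app_D_H}), represents $\tilde\s_H^{\otimes2}(\CK)$ as the explicit operator $K=\exp\big(\sum_{k>0}(1-q_3^k)p_k\otimes\frac{\p}{\p p_k}\big)$, and thereby writes $\rho^{(2,\bsn_0)}_{u_2,u_1}(\tilde\s_H(e))$ as a $K$-conjugate of the tensor product of level-one adjoints; the level-two claim then reduces for \emph{all} $e\in\CE$ simultaneously to the known level-one identity \eqref{eq:adj-r=1}, using $K|_{\bsa,\bsb}\,\Pi(\bsx|\bsb)\Pi(\bsy|\bsa)=E(\bsx,\bsy|\bsa,\bsb)$. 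What the paper's argument buys is uniformity in $e$ and an immediate lift to arbitrary level $r$ via the iterated tensor $\CK^{(r-1)}$ (this is exactly how the Cauchy identity \eqref{eq:Cauchy-r} is obtained in the next subsection); what your argument buys is self-containedness — you need neither the coproduct-twist identity nor the $R$-matrix, only the uniqueness from Proposition~\ref{prop:kernel-r=2} and the explicit currents \eqref{eq:x(z)+}--\eqref{eq:x(z)-} — and it makes visible that the absence of a $p_k(\bsy)p_k(\bsb)$ coupling and the coefficient $1-q_3^k$ are forced by the contraction balance. If you intend to push your method to level $r$, be aware that the number of cross-matchings to check grows, whereas the paper's reduction stays a one-line conjugation.
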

\begin{proof}
For convenience, we denote as $E(\bsx,\bsy|\bsa,\bsb)$ the plethystic
exponential
\be
 E(\bsx,\bsy|\bsa,\bsb) := \mathe^{\sum_{k>0}\frac1k\frac{1-t^k}{1-q^k}
 \left(p_k(\bsx)p_k(\bsb)+p_k(\bsy)p_k(\bsa)+(1-t^kq^{-k})p_k(\bsx)p_k(\bsa)\right)}\,.
\ee
The normalization condition \eqref{prop_kernels_norm} is then automatically satisfied by $E(\bsx,\bsy|\bsa,\bsb)$. We need to show that the function $E(\bsx,\bsy|\bsa,\bsb)$ does indeed satisfy the characterizing equations for the kernel $\Pi_{\mathrm{Z}}$ in \eqref{prop_kernels}.

For an element $e\in\CE$, consider
\begin{align}
\begin{split}
\rho_{u_2,u_1}^{(2,\bsn_0)}(\tilde{\s}_H(e))&=q_1^{-L_0\otimes1-1\otimes L_0}\rho_{u_2,u_1}^{(2,\bsn_0)}(\s_H(e))q_1^{L_0\otimes1+1\otimes L_0}\\
&=q_1^{-L_0\otimes1-1\otimes L_0}\left(\left(\rho_{u_2}^{(1,0)}\circ\s_H\otimes \rho_{u_1}^{(1,0)}\circ\s_H\right)\circ \D_H(e)\right)q_1^{L_0\otimes1+1\otimes L_0}\\
&=\left(\rho_{u_2}^{(1,0)}\circ\tilde{\s}_H\otimes \rho_{u_1}^{(1,0)}\circ\tilde{\s}_H\right)\circ \D_H(e),
\end{split}
\end{align}
where we introduced the coproduct twisted by $\s_H$, $\D_H=(\s_H\otimes\s_H)\circ\D\circ\s_H$. The coproduct $\D_H$ can equivalently be written as a twist of the opposite coproduct $\Delta'$ by the 2-tensor corresponding to the Cartan factor of the universal $R$-matrix of $\CE$, i.e.
\be
\label{eq:DHKD'K}
 \D_H = \CK^{-1}\D'\CK\,,
 \quad\text{with}\quad
 \CK = \g^{\d+\bar\d} \mathe^{-\sum_{k>0}\frac1{c_k}(a_k\otimes a_{-k})} \g^{\d+\bar\d},
\ee
and $\d=(c\otimes d+d\otimes c)/2$, $\bar\d=(\bc\otimes\bd+\bd\otimes\bc)/2$ (see Appendix~\ref{app_D_H} for a proof). As a result, we can write
\be
\begin{aligned}
 \rho_{u_2,u_1}^{(2,\bsn_0)}(\tilde{\s}_H(e))
 &= \left(\rho_{u_2}^{(1,0)}\otimes\rho_{u_1}^{(1,0)}\right)\circ
 \left(\tilde{\s}_H\otimes\tilde{\s}_H\right)
 \left(\CK^{-1}\D'(e)\CK\right) \\
 &= \left(\rho_{u_2}^{(1,0)}\otimes\rho_{u_1}^{(1,0)}\right)
 \left(\tilde{\s}_H\otimes\tilde{\s}_H(\CK)\cdot
 \tilde{\s}_H\otimes\tilde{\s}_H(\D'(e))\cdot
 \tilde{\s}_H\otimes\tilde{\s}_H(\CK^{-1})\right) \\
 &= K\cdot\left(\left(\rho_{u_2}^{(1,0)}\circ\tilde{\s}_H\otimes \rho_{u_1}^{(1,0)}\circ\tilde{\s}_H\right)\circ \D'(e)\right)\cdot K^{-1} \\
 &= K\cdot\left(\rho_{u_2}^{(1,0)}(\tilde{\s}_H(e^{(2)}))\otimes\rho_{u_1}^{(1,0)}(\tilde{\s}_H(e^{(1)}))\right)\cdot K^{-1}
\end{aligned}
\ee
where we used that $\tilde{\s}_H$ is an anti-involution, together with Sweedler's notation $\D(e)=e^{(1)}\otimes e^{(2)}$ for the coproduct and
\be
\begin{aligned}
 K :=& \left(\rho_{u_2}^{(1,0)}\otimes\rho_{u_1}^{(1,0)}\right)
 \left(\tilde{\s}_H\otimes\tilde{\s}_H(\CK)\right)\\
 =& \,\g^{\frac12(L_0\otimes1+1\otimes L_0)}
 \left(\rho_{u_2}^{(1,0)}\otimes \rho_{u_1}^{(1,0)}\right)
 \left(\mathe^{-\sum_{k>0}\frac1{c_k}a_{-k}\otimes a_{k}}\right)
 \g^{\frac12(L_0\otimes1+1\otimes L_0)} \\
 =& \,\exp\left(\sum_{k>0}(1-q_3^k)p_k\otimes\frac{\p}{\p p_k}\right)\,.
\end{aligned}
\ee
Using the previous result, we can write
\be
\begin{aligned}
 \left.\rho_{u_2,u_1}^{(2,\bsn_0)}(\tilde{\s}_H(e))\right|_{\bsa,\bsb}
 E(\bsx,\bsy|\bsa,\bsb)
 &= \left.K\right|_{\bsa,\bsb} \cdot
 \left.\rho_{u_2}^{(1,0)}(\tilde{\s}_H(e^{(2)}))\right|_{\bsa} \cdot
 \left.\rho_{u_1}^{(1,0)}(\tilde{\s}_H(e^{(1)}))\right|_{\bsb}
 \cdot \left.K^{-1}\right|_{\bsa,\bsb}
 E(\bsx,\bsy|\bsa,\bsb) \\
 &= \left.K\right|_{\bsa,\bsb} \cdot
 \left.\rho_{u_2}^{(1,0)}(\tilde{\s}_H(e^{(2)}))\right|_{\bsa} \cdot
 \left.\rho_{u_1}^{(1,0)}(\tilde{\s}_H(e^{(1)}))\right|_{\bsb}
 \Pi(\bsx|\bsb)\Pi(\bsy|\bsa) \\
 &= \left.K\right|_{\bsa,\bsb} \cdot
 \left.\rho_{u_2}^{(1,0)}(e^{(2)})\right|_{\bsy} \cdot
 \left.\rho_{u_1}^{(1,0)}(e^{(1)})\right|_{\bsx}
 \Pi(\bsx|\bsb)\Pi(\bsy|\bsa) \\
 &= \left.\rho_{u_1}^{(1,0)}(e^{(1)})\right|_{\bsx} \cdot
 \left.\rho_{u_2}^{(1,0)}(e^{(2)})\right|_{\bsy} \cdot
 \left.K\right|_{\bsa,\bsb}
 \Pi(\bsx|\bsb)\Pi(\bsy|\bsa) \\
 &= \left.\rho_{u_1,u_2}^{(2,\bsn_0)}(e)\right|_{\bsx,\bsy}
 E(\bsx,\bsy|\bsa,\bsb)
\end{aligned}
\ee
where we used that
\be
 \left.K\right|_{\bsa,\bsb} \Pi(\bsx|\bsb)\Pi(\bsy|\bsa)
 = \Pi(\bsx|\bsb+(1-t/q)\bsa)\Pi(\bsy|\bsa)
 = E(\bsx,\bsy|\bsa,\bsb)
\ee
as well as the identity \eqref{eq:adj-r=1}. We therefore have shown that $E(\bsx,\bsy|\bsa,\bsb)$ satisfies the equations \eqref{prop_kernels} and we can conclude that $\Pi_{\mathrm{Z}}(\bsx,\bsy|\bsa,\bsb|Q)=E(\bsx,\bsy|\bsa,\bsb)$.
\end{proof}

Despite the fact that the original definition of $\Pi_{\mathrm{Z}}$ depends explicitly on the ratio of weights $Q$ through the generalized Macdonald functions in the sum over pairs of partition, this proposition shows that $\Pi_{\mathrm{Z}}$ is actually independent of the weights. In the following, we will often make use of this fact and write $\Pi_{\mathrm{Z}}(\bsx,\bsy|\bsa,\bsb)$ for the reproducing kernel.

\subsubsection{Inner products}
We can define the inner products associated to the two reproducing kernels by their matrix elements in the GMP basis,
\begin{align}
\begin{split}
 & \la P_{\l,\mu}(\bsx,\bsy|Q),P_{\rho,\s}(\bsx,\bsy|Q)\ra
 = \d_{\l,\rho}\d_{\mu,\s}b_{\l,\mu}(Q)^{-1},\\
 & \la P_{\mu,\l}(\bsx,\bsy|Q^{-1}),P_{\rho,\s}(\bsx,\bsy|Q)\ra_{\mathrm{Z}}
 = \d_{\l,\rho}\d_{\mu,\s}b_\l^{-1} b_\mu^{-1}.
\end{split}
\end{align}
From this definition and the previous expansion formulas, we have 
\begin{align}
\begin{split}
&\la P_{\l,\mu}(\bsx,\bsy|Q),\Pi(\bsx,\bsy|\bsa,\bsb|Q)\ra= P_{\l,\mu}(\bsa,\bsb|Q),\\
&\la P_{\l,\mu}(\bsx,\bsy|Q),\Pi_{\mathrm{Z}}(\bsx,\bsy|\bsa,\bsb|Q)\ra_{\mathrm{Z}}= P_{\l,\mu}(\bsa,\bsb|Q).
\end{split}
\end{align}

Now, suppose we denote as $\langle-,-\rangle^\otimes_{q,t}$ the inner product associated to the product of kernels $\Pi(\bsx|\bsb)\Pi(\bsy|\bsa)$, so that
\be
 \langle P_\mu(\bsx)P_\lam(\bsy),P_\rho(\bsx)P_\sigma(\bsy)\rangle^\otimes_{q,t}
 = \langle P_\lam,P_\rho\rangle_{q,t}\,\langle P_\mu,P_\sigma\rangle_{q,t}
 = \d_{\l,\rho}\d_{\mu,\s}b_\l^{-1} b_\mu^{-1}.
\ee
Then, for any pair of functions $f,g\in\Lambda[\bsx,\bsy]$, we can write the identity
\be
 \langle f(\bsx,\bsy),g(\bsx,\bsy)\rangle_{\mathrm{Z}}
 = \langle f(\bsx,\bsy),\left.K^{-1}\right|_{\bsx,\bsy}g(\bsx,\bsy)\rangle^\otimes_{q,t}
 = \langle f(\bsx,\bsy),g(\bsx,\bsy-(1-t/q)\bsx)\rangle^\otimes_{q,t}
\ee
Since the operator $K$ is self-adjoint w.r.t.\ the product $\langle-,-\rangle^\otimes_{q,t}$, it follows that the new product $\langle-,-\rangle_{\mathrm{Z}}$ is symmetric in its arguments.
It also follows that
\be
\label{eq:adjoint-r=2}
 \left\langle\rho^{(2,\bsn_0)}_{u_2,u_1}(\tilde{\s}_H(e))\,f,g\right\rangle_{\mathrm{Z}} = 
 \left\langle f,\rho^{(2,\bsn_0)}_{u_1,u_2}(e)\,g\right\rangle_{\mathrm{Z}}\,,
 \hspace{30pt}
 \forall\,e\in\CE,
\ee
which is equivalent to \eqref{prop_kernels}. Using \eqref{eq:adjoint-r=2} we are now able to define adjunction w.r.t.\ the inner product $\langle-,-\rangle_{\mathrm{Z}}$ through the use of the involution $\tilde{\s}_H$ in the algebra and, for example, we can write
\be
 \left\langle(p_k(\bsx)+p_k(\bsy))\,f,g\right\rangle_{\mathrm{Z}} = 
 k\frac{1-q^k}{1-t^k} \left\langle f,
 \left(\frac{\p}{\p p_k(\bsx)}+t^kq^{-k}\frac{\p}{\p p_k(\bsy)}\right)g\right\rangle_{\mathrm{Z}}\,.
\ee

\subsection{Generalized Macdonald functions at higher levels}
\subsubsection{Pieri rules}
In Appendix~\ref{app:Mukade}, Proposition~\ref{prop:pieriApp}, we derive the following expression for the action of the generators $a_{\pm1}$ on generalized Macdonald functions,
\be
\begin{aligned}
\label{apm_GMP}
 & \rho_{\bsu}^{(r,\bsn_0)}(a_{-1})P_{\bl}(\bsx^\bullet|\bsu)
 = \g^{1-\frac{r}{2}}(\g-\g^{-1})(1-q_1)\sum_{\a=1}^r\sum_{\sAbox\in A(\l^{(\a)})}
 \psi_{\l^{(\a)}}(\Abox) \prod_{\b=\a+1}^{r}\Psi_{\l^{(\b)}}(u_\a\chi_\sAbox/u_\b)\
 P_{\bl+\sAbox}(\bsx^\bullet|\bsu),\\
 & \rho_{\bsu}^{(r,\bsn_0)}(a_1)P_{\bl}(\bsx^\bullet|\bsu)
 = \g^{1-\frac{r}{2}}(\g-\g^{-1})(1-q_2)\sum_{\a=1}^r\sum_{\sAbox\in R(\l^{(\a)})}
 \psi_{\l^{(\a)}}^\ast(\Abox) \prod_{\b=1}^{\a-1}\Psi_{\l^{(\b)}}(u_\a\chi_\sAbox/u_\b)\
 P_{\bl-\sAbox}(\bsx^\bullet|\bsu).
\end{aligned}
\ee
Our derivation takes advantage of the results obtained in \cite{Fukuda:2019ywe} on the Mukad\'e operator, which reduces to a generating function of Cartan generators after specialization of the weights as in Lemma~\ref{lem:mukade}. In terms of the action of power sums, this is as follows,
\begin{align}
\begin{split}
&\left(\sum_{\a=1}^rq_3^{\a-1}\dfrac{\p}{\p p_1(\bsx^{(\a)})}\right)P_{\bl}(\bsx^\bullet|\bsu)=\sum_{\a=1}^r\sum_{\sAbox\in R(\l^{(\a)})}\psi^\ast_{\l^{(\a)}}(\Abox)\prod_{\b=1}^{\a-1}\Psi_{\l^{(\b)}}(u_\a\chi_\sAbox/u_\b)\ P_{\bl-\sAbox}(\bsx^\bullet|\bsu),\\
&\left(\sum_{\a=1}^r p_1(\bsx^{(\a)})\right)P_{\bl}(\bsx^\bullet|\bsu)=\sum_{\a=1}^r\sum_{\sAbox\in A(\l^{(\a)})}\psi_{\l^{(\a)}}(\Abox)\prod_{\b=\a+1}^{r}\Psi_{\l^{(\b)}}(u_\a\chi_\sAbox/u_\b)\ P_{\bl+\sAbox}(\bsx^\bullet|\bsu).
\end{split}
\end{align}

\subsubsection{Reproducing kernel}
The construction of the reproducing kernel follows very closely that of the case $r=2$. Indeed, we define the level $r$ reproducing kernel as the unique solution to the equations
\be
\label{eq:reproducing-kernel-r-def}
 \left.\rho_{u_r,\dots,u_1}^{(r,\bsn_0)}(\tilde{\s}_H(e))\right|_{\bsa^\bullet}
 \Pi_{\mathrm{Z}}(\bsx^\bullet|\bsa^\bullet)
 = \left.\rho_{u_1,\dots,u_r}^{(r,\bsn_0)}(e)\right|_{\bsx^\bullet}
 \Pi_{\mathrm{Z}}(\bsx^\bullet|\bsa^\bullet)
\ee
such that $\Pi_{\mathrm{Z}}(0,\dots,0|\bsa^\bullet)=1=\Pi_{\mathrm{Z}}(\bsx^\bullet|0,\dots,0)$.
The solution can be expanded over the level $r$ GMP basis, leading to the identity
\be
\label{eq:repr-kernel-rank-r}
 \Pi_{\mathrm{Z}}(\bsx^\bullet|\bsa^\bullet)
 = \sum_\blam \prod_{\a=1}^r b_{\lam^{(\a)}}
 P_{\lam^{(1)},\dots,\lam^{(r)}}(\bsx^{(1)},\dots,\bsx^{(r)}|u_1,\dots,u_r)
 P_{\lam^{(r)},\dots,\lam^{(1)}}(\bsa^{(1)},\dots,\bsa^{(r)}|u_r,\dots,u_1)
\ee
which can be proven in a similar way as in Proposition~\ref{prop:kernel-r=2}.

In order to show that the reproducing kernel factorizes as in the case of level $r=2$, \eqref{Zenkevich}, we follow the same logic as in the previous section and we first define a $r$-tensor $\CK^{(r)}$ which connects the iterated coproducts $\D_H$ and $\D'$.

We define the iterated twisted coproduct inductively as
\be
 \Delta_H^{(r-1)} := (\Delta_H(\otimes\Id)^{r-2})\circ\Delta^{(r-2)}_H
 = (\underbrace{\s_H\otimes\dots\otimes\s_H}_r)\circ\Delta^{(r-1)}\circ\s_H \,,
\ee
and similarly,
\be
 \Delta^{'(r-1)} := (\Delta'(\otimes\Id)^{r-2})\circ\Delta^{'(r-2)} \,,
\ee
so that
\be
 \Delta_H^{(r-1)} = (\CK^{(r-1)})^{-1}\Delta^{'(r-1)}\CK^{(r-1)} \,,
\ee
for some $r$-tensor $\CK^{(r-1)}$, where, by induction, we have
\be
 \CK^{(r-1)} = ((\D'(\otimes\Id)^{r-3})(\CK^{(r-2)}))(\CK(\otimes\Id)^{r-3}),
\ee
with $\CK^{(1)}$ equal to the 2-tensor $\CK$ in \eqref{eq:DHKD'K}.
Using recursively $(\D'\otimes\Id)(\CK)=\CK_{1,2}\CK_{1,3}\CK_{2,3}(\CK_{1,2})^{-1}$,\footnote{This follows from $(\D_H\otimes\Id)(\CK)=\CK_{1,3}\CK_{2,3}$ together with \eqref{eq:DHKD'K}.} we obtain
\be
\label{eq:recursionCK}
 \CK^{(r-1)} = (\CK^{(r-2)}\otimes\Id) \prod_{\a=1}^{r-1}\CK_{\a,r}
\ee
where $\CK_{\a,\b}$ is the $r$-tensor whose only non-trivial legs are in the $\a$-th and $\b$-th tensor factors. Observe here, that the $\CK_{\a,\b}$ do not commute in general, however we have $\CK_{\a,\b}\CK_{i,j}=\CK_{i,j}\CK_{\a,\b}$ if $\a\neq j$ and $\b\neq i$. This implies that the last product in the r.h.s.\ of \eqref{eq:recursionCK} does not depend on the ordering.

We want to show that the tensor $\CK^{(r-1)}$ can be used to define the notion of adjoint of an operator in the level $r$ representation in the sense of \eqref{eq:reproducing-kernel-r-def}.
Let us first consider the operator acting in the l.h.s.\ of \eqref{eq:reproducing-kernel-r-def},
\be
\label{eq:rankrKtensorId}
\begin{aligned}
 \rho_{u_r,\dots,u_1}^{(r,\bsn_0)}(\tilde{\s}_H(e))
 &= \left(\rho_{u_r}^{(1,0)}\otimes\dots\otimes
 \rho_{u_1}^{(1,0)}\right) \left(\tilde{\s}_H^{\otimes r}
 \left(\D^{(r-1)}_H(e)\right)\right) \\
 &= \left(\rho_{u_r}^{(1,0)}\otimes\dots\otimes
 \rho_{u_1}^{(1,0)}\right) \left(\tilde{\s}_H^{\otimes r}
 \left((\CK^{(r-1)})^{-1}\D^{'(r-1)}(e)\CK^{(r-1)}\right)\right) \\
 &= \left(\rho_{u_r}^{(1,0)}\otimes\dots\otimes
 \rho_{u_1}^{(1,0)}\right)
 \left(\tilde{\s}_H^{\otimes r}(\CK^{(r-1)})\,
 \tilde{\s}_H^{\otimes r}(\D^{'(r-1)}(e))\,
 \tilde{\s}_H^{\otimes r}(\CK^{(r-1)})^{-1}\right) \\
 &= K^{(r-1)}\left(\rho_{u_r}^{(1,0)}(\tilde{\s}_H(e^{(r)}))
 \otimes\dots\otimes
 \rho_{u_1}^{(1,0)}(\tilde{\s}_H(e^{(1)}))\right)(K^{(r-1)})^{-1},
\end{aligned}
\ee
where we have defined
\be
 K^{(r-1)} := \left(\rho^{(1,0)}_{u_r}\otimes\dots\otimes\rho^{(1,0)}_{u_1}\right)
 \left(\tilde{\s}_H^{\otimes r}(\CK^{(r-1)})\right).
\ee
From \eqref{eq:recursionCK}, we have the recursion
\be
 K^{(r-1)} = \Big(\prod_{\a=1}^{r-1} K_{\a,r} \Big) (K^{(r-2)}\otimes\Id),
 \hspace{30pt}
 \left.K_{\a,\b}\right|_{\bsx^\bullet}
 = \exp\left(\sum_{k>0}(1-q_3^k)p_k(\bsx^{(\a)})\frac{\p}{\p p_k(\bsx^{(\b)})}\right)
\ee
whose solution can be written as
\be
 K^{(r-1)} = \prod_{\a_1=1}^{r-1} K_{\a_1,r} \prod_{\a_2=1}^{r-2} K_{\a_2,r-1}
 \cdots \prod_{\a_{r-2}=1}^{2} K_{\a_{r-2},3} \prod_{\a_{r-1}=1}^{1} K_{\a_{r-1},2} \,.
\ee
If we define the function
\be
\begin{aligned}
 E(\bsx^\bullet|\bsa^\bullet)
 :=& \left.K^{(r-1)}\right|_{\bsa^\bullet}
 \prod_{\a=1}^r \Pi(\bsx^{(r-\a+1)}|\bsa^{(\a)}) \\
 =&\, \prod_{\a=1}^r \Pi(\bsx^{(r-\a+1)}|(1-t/q)
 \sum_{\b<\a}\bsa^{(\b)}+\bsa^{(\a)}) \\
 =&\, \exp\left(\sum_{k>0}\frac1{k}\frac{1-t^k}{1-q^k}
 \sum_{\a=1}^r p_k(\bsx^{(r-\a+1)})\left((1-t^kq^{-k})\sum_{\b<\a}p_k(\bsa^{(\b)})+p_k(\bsa^{(\a)})\right)\right)
\end{aligned}
\ee
then, using \eqref{eq:rankrKtensorId}, we can show that
\be
\label{eq:adj-rankr-kernel}
\begin{aligned}
 \left.\rho_{u_r,\dots,u_1}^{(r,\bsn_0)}(\tilde{\s}_H(e))\right|_{\bsa^\bullet}
 E(\bsx^\bullet|\bsa^\bullet)
 &= \left.K^{(r-1)}\right|_{\bsa^\bullet}
 \prod_{\a=1}^r\left.\rho_{u_{r-\a+1}}^{(1,0)}(\tilde{\s}_H(e^{(r-\a+1)}))\right|_{\bsa^{(\a)}}
 \left.(K^{(r-1)})^{-1}\right|_{\bsa^\bullet}
 E(\bsx^\bullet|\bsa^\bullet) \\
 &= \left.K^{(r-1)}\right|_{\bsa^\bullet}
 \prod_{\a=1}^r\left.\rho_{u_{r-\a+1}}^{(1,0)}(\tilde{\s}_H(e^{(r-\a+1)}))\right|_{\bsa^{(\a)}}
 \prod_{\a=1}^r\Pi(\bsx^{(r-\a+1)}|\bsa^{(\a)}) \\
 &= \left.K^{(r-1)}\right|_{\bsa^\bullet}
 \prod_{\a=1}^r\left.\rho_{u_{r-\a+1}}^{(1,0)}(e^{(r-\a+1)})\right|_{\bsx^{(r-\a+1)}}
 \prod_{\a=1}^r\Pi(\bsx^{(r-\a+1)}|\bsa^{(\a)}) \\
 &= \left.\rho_{u_1,\dots,u_r}^{(r,\bsn_0)}(e)\right|_{\bsx^\bullet}
 E(\bsx^\bullet|\bsa^\bullet)
\end{aligned}
\ee
which is the defining equation \eqref{eq:reproducing-kernel-r-def} of the reproducing kernel.
Together with the normalization property, $E(0,\dots,0|\bsa^\bullet)=1=E(\bsx^\bullet|0,\dots,0)$, we conclude that $E(\bsx^\bullet|\bsa^\bullet)$ coincides with the reproducing kernel $\Pi_{\mathrm{Z}}(\bsx^\bullet|\bsa^\bullet)$, and thus we obtain the Cauchy identity
\begin{multline}
\label{eq:Cauchy-r}
 \exp\left(\sum_{k>0}\frac1{k}\frac{1-t^k}{1-q^k}
 \sum_{\a=1}^r p_k(\bsx^{(r-\a+1)})\left((1-t^kq^{-k})\sum_{\b<\a}p_k(\bsa^{(\b)})+p_k(\bsa^{(\a)})\right)\right)\\
 = \sum_\blam \prod_{\a=1}^r b_{\lam^{(\a)}}
 P_{\lam^{(1)},\dots,\lam^{(r)}}(\bsx^{(1)},\dots,\bsx^{(r)}|u_1,\dots,u_r)
 P_{\lam^{(r)},\dots,\lam^{(1)}}(\bsa^{(1)},\dots,\bsa^{(r)}|u_r,\dots,u_1).
\end{multline}

\subsubsection{Inner product}
The associated inner product is then defined accordingly as
\be
\label{eq:twisted-scalar-prod}
 \langle f(\bsx^\bullet),g(\bsx^\bullet)\rangle_{\mathrm{Z}}
 := \langle f(\bsx^\bullet),\left.(K^{(r-1)})^{-1}\right|_{\bsx^\bullet}
 g(\bsx^\bullet)\rangle^\otimes_{q,t}
\ee
for any $f,g\in\Lambda^{\otimes r}$, where the inner product in the right is the one associated to the reproducing kernel $\prod_{\a=1}^r\Pi(\bsx^{(r-\a+1)}|\bsa^{(\a)})$.
We observe that the operator $K^{(r-1)}$ is self-adjoint w.r.t.\ the inner product $\langle\,,\,\rangle^\otimes_{q,t}$, which implies that the twisted inner product $\langle\,,\,\rangle_{\mathrm{Z}}$ is symmetric in its arguments, which is also equivalent to the statement that the kernel $\Pi_{\mathrm{Z}}(\bsx^\bullet|\bsa^\bullet)$ is symmetric under the simultaneous exchange of $\bsx^{(\a)}$ and $\bsa^{(\a)}$, for all $\a=1,\dots,r$.
In terms of the product $\langle\,,\,\rangle_{\mathrm{Z}}$, the property \eqref{eq:reproducing-kernel-r-def} can be rewritten as
\be
\label{eq:adjoint-r}
 \left\langle\rho^{(r,\bsn_0)}_{u_r,\dots,u_1}(\tilde{\s}_H(e))\,f,g\right\rangle_{\mathrm{Z}} = 
 \left\langle f,\rho^{(r,\bsn_0)}_{u_1,\dots,u_r}(e)\,g\right\rangle_{\mathrm{Z}}\,,
 \hspace{30pt}
 \forall\,e\in\CE,\quad
 \forall\,f,g\in\Lambda^{\otimes r}.
\ee
In conclusion, we find that the generalized Macdonald functions in level $r$ satisfy the orthogonality relation
\be
 \langle P_{\l^{(r)},\dots,\l^{(1)}}(\bsx^{(1)},\dots,\bsx^{(r)}|u_r,\dots,u_1),
 P_{\mu^{(1)},\dots,\mu^{(r)}}(\bsx^{(1)},\dots,\bsx^{(r)}|u_1,\dots,u_r) \rangle_{\mathrm{Z}}
 = \prod_{\a=1}^r b^{-1}_{\l^{(\a)}} \delta_{\l^{(\a)},\mu^{(\a)}},
\ee
which follows from \eqref{eq:repr-kernel-rank-r}, and that the reproducing kernel behaves as a delta-function for the inner product,
\be
 \langle \Pi_{\mathrm{Z}}(\bsx^\bullet|\bsa^\bullet),
 f(\bsx^{(1)},\dots,\bsx^{(r)}) \rangle_{\mathrm{Z}}
 = f(\bsa^{(1)},\dots,\bsa^{(r)}),
 \hspace{30pt}
 \forall f\in\Lambda^{\otimes r},
\ee
as expected. It is worth noting that neither the kernel nor the inner product in this equation depend on the weights $\bsu$.
\begin{remark}
Besides \eqref{eq:adjoint-r}, we can define the adjoint of any multi-symmetric function in $\Lambda^{\otimes r}$ w.r.t.\ the inner product $\langle\,,\,\rangle_{\mathrm{Z}}$ by first decomposing onto the power-sum basis $p_k^{(\a)}$ and then using
\be
\label{adj_Z}
\begin{aligned}
&(p_k^{(\a)})^\dagger = k\frac{1-q^k}{1-t^k}\left(\frac{\p}{\p p_k^{(r-\a+1)}}
 -(1-t^kq^{-k})\sum_{\b=1}^{\a-1} (tq^{-1})^{(\a-\b-1)k}
 \frac{\p}{\p p_k^{(r-\b+1)}}\right),\\
&\left(\frac{\p}{\p p_k^{(\a)}}\right)^\dagger = \dfrac1k\dfrac{1-t^k}{1-q^k}\left(p_k^{(r-\a+1)}+(1-t^kq^{-k})\sum_{\b=\a+1}^r p_k^{(r-\b+1)}\right)
\end{aligned}
\ee
where ``${}^\dagger$'' denotes the adjoint operator. This follows from the definition in \eqref{eq:twisted-scalar-prod}, so that we have
\begin{equation}
p_k(\bsx^{(\a)})^\dagger\Pi_{\mathrm{Z}}(\bsx^\bullet|\bsa^\bullet)=p_k(\bsa^{(\a)})\Pi_{\mathrm{Z}}(\bsx^\bullet|\bsa^\bullet),\quad \left(\dfrac{\p}{\p p_k(\bsx^{(\a)})}\right)^\dagger\Pi_{\mathrm{Z}}(\bsx^\bullet|\bsa^\bullet)=\left(\dfrac{\p}{\p p_k(\bsa^{(\a)})}\right)\Pi_{\mathrm{Z}}(\bsx^\bullet|\bsa^\bullet).
\end{equation}
This operation is involutive.
\end{remark}

\subsection{Isomorphism between horizontal and vertical representations}
\label{sec:iso-h-v}
The $e_1$-Pieri rule can also be used to prove the isomorphism \eqref{prop_S} between the representations $\rho_\bsv^{(0,r)}$ and $\rho_\bsu^{(r,\bsn_0)}\circ\CS$, and find the explicit relation between the different basis. Since the algebra $\CE$ is generated multiplicatively by $a_{\pm1}$, $x_0^\pm$ (and the central elements), it is sufficient to study the action of these elements. The diagonal action of $\rho_\bsv^{(0,r)}(a_{\pm1})$ on the basis $\dket{\bl}$ matches the action of $\rho_\bsu^{(r,\bsn_0)}(\CS(a_{\pm1}))\propto\rho_\bsu^{(r,\bsn_0)}(x_0^\pm)$, provided that we identify the weights as $u_\a=-\g v_\a$. Since the spectrum of these operators is non-degenerate, the vectors $\dket{\bl}$ are mapped to the GMPs, up to a normalization factor. This factor will be determined by comparing the action of $\rho_\bsv^{(0,r)}(x_0^+)$ with $\rho_\bsu^{(r,\bsn_0)}(\CS(x_0^+))$. 

From the expression \eqref{apm_GMP} of the action of $\rho^{(0,r)}_\bsu(a_{-1})$ on the GMPs, and taking into account the extra factor in the definition of Miki's automorphism \eqref{Miki_init}, we deduce
\begin{equation}
\rho_\bsu^{(r,\bsn_0)}(\CS(x_0^+))\tP_{\bl}(\bsx^\bullet|\bsu)=-q_1\g^{1-r/2}\sum_{\a=1}^r\sum_{\sAbox\in A(\l^{(\a)})}r_{\l^{(\a)}}(\Abox)\prod_{\b=\a+1}^{r}\Psi_{\l^{(\b)}}(u_\a\chi_\sAbox/u_\b)\ \tP_{\bl+\sAbox}(\bsx^\bullet|\bsu)
\end{equation}
where we used the normalized GMPs defined in \eqref{def_modif_GMP} to facilitate the comparison. Inserting the coefficient $G_\bl^\ast(\bsu)$ defined in Appendix~\ref{app_A1} (formula \eqref{def_gl}), and using the property \eqref{prop_G_r}, the formula becomes
\begin{equation}
\rho_\bsu^{(r,\bsn_0)}(\CS(x_0^+))G_\bl^\ast(\bsu)^{-1}\tP_{\bl}(\bsx^\bullet|\bsu)=-q_1\g^{r/2+1-\a}\sum_{\sAbox\in A(\bl)}r_{\bl}(\Abox|\bsu)\ G_{\bl+\sAbox}^\ast(\bsu)^{-1}\tP_{\bl+\sAbox}(\bsx^\bullet|\bsu).
\end{equation}
Comparison with the formula \eqref{def_vert_rep} for the vertical action of $\rho^{(0,r)}_\bsv(x_0^+)$ leads to the identification of the image of the vector $\dket{\bl}$ with $(-q_1\g^{1/2})^{|\bl|}\g^{-(r-1)|\bl|/2+\sum_\a (r-\a)|\l^{(\a)}|}G_\bl^\ast(\bsu)^{-1}\tP_\bl(\bsx^\bullet|\bsu)$. A similar comparison can be made for the actions of $\rho_\bsv^{(0,r)}(x_0^-)$ and $\rho_\bsu^{(r,\bsn_0)}(\CS(x_0^-))$, which leads to the same relation between the basis elements.

\subsection{Framing operator on generalized Macdonald functions}
Let us define the framing operator $\nabla(\bsu)$ to be diagonal on the basis of generalized Macdonald functions with weights $\bsu$.
\begin{definition}
Let $\nabla(\bsu):\CF_\bsu^{(r,0)}\to\CF_\bsu^{(r,0)}$ be the operator defined by
\be
\label{eq:def-nabla}
 \nabla(\bsu) P_\blam(\bsx^\bullet|\bsu)
 := \Big(\prod_{\sAbox\in\blam} u_\sAbox\chi_\sAbox\Big)
 P_\blam(\bsx^\bullet|\bsu)\,.
\ee
\end{definition}
We then have the identification
\be\label{rep_framing}
 \rho_\bsu^{(r,\bsn_0)}(F^\perp) = (-\g^{-1})^{L_0} \nabla(\bsu).
\ee
Note that the operator representing $F^\perp$ must be self-adjoint since $\tilde{\s}_H(F^\perp)=F^\perp$,\footnote{It was shown previously that $\s_H(F^\perp)=F^\perp$, and $[F^\perp,d]=0$.} which is in fact true of both operators $L_0$ and $\nabla(\bsu)$, separately.
It therefore follows that the properties in \eqref{prop_kernels} can be extended to all elements $e\in\CE_\text{ext.}^\perp$.

In what follows we will need the following identity for the action of $F^\perp$ on certain plethystic exponentials.
\begin{conjecture}
\label{conj:BH}
\begin{multline}
\label{nabla_r}
 \rho_{\bsu}^{(r,\bsn_0)}(F^\perp)
 \exp\left(\sum_{k>0}\frac{(-1)^k}{k(1-q_2^k)}\sum_{\a=1}^r v_\a^k p_k(\bsx^{(\a)})\right)\\
 = \exp\left(-\sum_{k>0}\frac{1}{k(1-q_2^k)}\sum_{\a=1}^r p_k(\bsx^{(\a)})
 \Big((u'_\a)^k+(1-q_3^k)\sum_{\b=\a+1}^r(u'_\b)^k\Big)\right)
\end{multline}
with $u'_\a=-\g^{-1} u_\a v_\a$.
\end{conjecture}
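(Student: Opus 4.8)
\emph{Reformulation.} The plan is to first remove the trivial part of the framing operator. By \eqref{rep_framing} we have $\rho_\bsu^{(r,\bsn_0)}(F^\perp)=(-\g^{-1})^{L_0}\nabla(\bsu)$, and since $L_0$ is the grading derivation it rescales each $p_k^{(\a)}\mapsto(-\g^{-1})^kp_k^{(\a)}$. Applied to the left-hand side this turns the prefactor $(-v_\a)^k$ into $(\g^{-1}v_\a)^k$, so the claim becomes $\nabla(\bsu)\,G'=H$, with $G'=\exp\big(\sum_{k>0}\frac1{k(1-q_2^k)}\sum_\a(\g^{-1}v_\a)^kp_k(\bsx^{(\a)})\big)$. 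Comparing with the critical-value formulas for the vacuum operators, including \eqref{Phi_vac_st}, the target $H$ coincides with $\Phi_\vac^{(r,\bsn_0)}\cdot 1$ for the type-II critical vacuum operator whose vertical weights are set to $\bsu'$, while $G'$ coincides with $\Phi_\vac^{(r,\bsn_0)\ast}\cdot 1$ for the type-I critical star operator with suitably rescaled weights. The conjecture thus asserts that $\nabla(\bsu)$ converts a critical star vacuum operator into an ordinary one, with the shift $u'_\a=-\g^{-1}u_\a v_\a$ being precisely the framing shift dictated by the exchange relation \eqref{def_F} and the $\CT^\perp$-action \eqref{CTCS}.

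\emph{Main strategy.} I would prove $\nabla(\bsu)G'=H$ by induction on $r$. The base case $r=1$ is the empty-partition specialisation of the Garsia--Haiman--Tesler identity: there $G'$ is a specialisation of the Macdonald kernel \eqref{Mac_kernel}, and $\nabla$ acting on it reproduces $H$ by the classical result of \cite{Garsia2001}, rederived in \cite{Carlsson:2013jka}. For the inductive step I would peel off the last alphabet $\bsx^{(r)}$ using the coproduct $\D$ that defines the level-$r$ representation: the currents \eqref{eq:x(z)+}--\eqref{eq:x(z)-} and the vacuum intertwining relations \eqref{rel_intw_vac}--\eqref{ak_intw_vac} express the level-$r$ data as level-$(r-1)$ data tensored with a single horizontal factor, and the critical-value structure should guarantee that the off-diagonal mixing terms $(1-q_3^k)\sum_{\b>\a}(u'_\b)^k$ in $H$ are produced exactly by the $\psi^-\otimes x^+$ pieces of $\D$. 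The inductive hypothesis would handle the first $r-1$ alphabets, reducing the step to a single level-one computation governed by the $e_1$-Pieri rule \eqref{apm_GMP}.

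\emph{Alternative routes.} In parallel I would try to transport the statement to the vertical side, where the framing factorises: since $\D(F)=F\otimes F$ one has $\rho_\bsv^{(0,r)}(F)\dket{\bl}=\prod_\sAbox v_\sAbox\chi_\sAbox\,\dket{\bl}$, a product over boxes that respects the $\D$-tensor decomposition, and under Miki's automorphism $\CS(F^\perp)=F$, compatibly with the horizontal--vertical isomorphism \eqref{prop_S} of Section~\ref{sec:iso-h-v}. Pushing $G'$ and $H$ through the intertwiner $\CM_\CS$ would turn $\nabla(\bsu)G'=H$ into a manifestly factorised identity on $\bigotimes_\a\CF_{v_\a}^{(0,1)}$, at the price of computing the images of the two exponentials. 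A third possibility is to establish a genuine $r$-fold analogue of the Macdonald--Mehta--Cherednik identity, mirroring \cite{Carlsson:2013jka}, from which the conjecture would follow directly.

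\emph{Main obstacle.} In every route the difficulty is the non-factorisation of $\rho_\bsu^{(r,\bsn_0)}(F^\perp)$ relative to the tensor structure in which the generalized Macdonald functions are defined: the level-$r$ representation is built with $\D$, whereas $F^\perp$ is compatible only with $\D^\perp$, so $\rho_\bsu^{(r,\bsn_0)}(F^\perp)\neq\rho^{(1,0)}(F^\perp)^{\otimes r}$. Equivalently, the eigenvalue $\prod_\sAbox(-\g^{-1}u_\sAbox\chi_\sAbox)$ of $\nabla(\bsu)$ is multiplicative over boxes, while the change-of-basis matrix $A_{\bl,\bmu}(\bsu)$ from $\bigotimes_\a P_{\l^{(\a)}}$ to $P_\bl$ has no known closed form for $r>1$ and genuinely entangles the alphabets through the $\Psi$-factors of \eqref{apm_GMP}. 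Reconciling this multiplicative eigenvalue with the coproduct peeling---that is, controlling how $\nabla(\bsu)$ commutes past the $\psi^-\otimes x^+$ mixing---is exactly where a closed argument is missing, which is why the statement is left conjectural and has so far only been confirmed by direct computer algebra.
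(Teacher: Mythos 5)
Your proposal does not prove the statement, and you say as much yourself; but it is worth being precise about how this compares with the paper, because the paper does not prove it in full generality either. The statement is genuinely left as a conjecture there: it is verified by computer at low degree for $r\leq 6$, and proved only for $r=1$, in Appendix~\ref{App:conjecture}. That proof is an induction on the \emph{degree} of a test function $f$ paired against both sides with the Macdonald inner product: one uses that $F^\perp$ is self-adjoint, that every $f\in\Lambda$ is a non-commutative polynomial in $a_{-1}$ and $x^-_{-1}$ applied to the vacuum, the exchange relations $a_{\pm1}F^\perp\propto F^\perp x^\pm_{\pm1}$ from \eqref{def_F} and \eqref{CTCS}, and two explicit contour-integral computations showing that the two exponentials are eigenvectors of $\rho^{(1,0)}_u(x^+_1)$ and $\rho^{(1,0)}_u(x^-_1)$ respectively. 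Your reformulation $\nabla(\bsu)G'=H$ and the identification of both sides with vacuum components of critical vertex operators are correct and consistent with how the conjecture is actually deployed in the paper (e.g.\ in the evaluation of $\Whit_{\vac,\dots,\vac}$), and your base case matches \eqref{rel_nabla} and its classical proof in \cite{Garsia2001}.

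The genuine gap is the inductive step on $r$, and it is exactly the gap the paper also cannot close. The Remark at the end of Appendix~\ref{App:conjecture} states the obstruction in a sharper form than yours: the eigenvector identities for $x^+_1$ and $x^-_1$ \emph{do} generalize verbatim to level $r$ (they come from universal relations of $\CE$ evaluated in the representation), but the degree induction breaks because $\Lambda^{\otimes r}$ is no longer generated by $a_{-1}$ and $x^-_{-1}$ alone; one needs the whole family $(\CT^\perp)^k(a_{-1})$, and the commutation of $F^\perp$ with these beyond $k=0,1$ is not under control. Your ``non-factorization of $F^\perp$ with respect to $\D$'' is the same phenomenon viewed from the representation side, and it is also precisely why your proposed peeling of the last alphabet fails: the level-$r$ module is built with the Drinfeld coproduct $\D$, whereas $F^\perp$ is group-like only for $\D^\perp$, so $\rho^{(r,\bsn_0)}_\bsu(F^\perp)$ does not decompose across the tensor factors and the inductive hypothesis cannot be invoked on the first $r-1$ alphabets. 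So the step you label ``the critical-value structure should guarantee\dots'' is not a proof but a restatement of the conjecture. If you want a workable route, the paper's Discussion suggests controlling the action of $y^-_{-k}=\CS(x^-_{-k})$ for $k>1$ on both sides, which together with the known action of the $a_k$ would characterize each side uniquely; that is closer to the spirit of the $r=1$ argument than induction on $r$.
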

We have checked by computer that the identity holds up to degree 7 in $\bsx^{(\a)}$ for $r=2$, degree 5 for $r=3$, degree 3 for $r=4$ and degree 2 for $r=5,6$. Based on this experimental evidence, we expect the conjecture to hold at any degree and any level. For $r=1$, the conjecture reduces to a known identity of the $\nabla$ operator, which can be proven using the techniques developed in \cite{Garsia2001}. For the sake of completeness, we recall the proof of this formula, using our notations, in Appendix \ref{App:conjecture}.

\begin{remark}
Using the Cauchy identity \eqref{eq:Cauchy-r} to expand both sides of \eqref{nabla_r}, we obtain the evaluation identity
\be
 \frac{P_\bl(\bsx^{(\a)}=u_\a v_\a \sp_\vac|\bsu)}
 {P_\bl(\bsx^{(\a)}=-(v_\a-(1-q_3)\sum_{\b<\a}q_3^{\a-\b-1}v_\b)\sp_\vac|\bsu)}
 = \prod_{\sAbox\in\blam} (-u_\sAbox\chi_\sAbox),
\ee
which, for $v_\a=1$, simplifies to
\be
\label{eq:spec-id2}
 \frac{P_\bl(u_1\sp_\vac,u_2\sp_\vac,\dots,u_r\sp_\vac|\bsu)}
 {P_\bl(-\sp_\vac,-q_3\sp_\vac,\dots,-q_3^{r-1}\sp_\vac|\bsu)}
 = \prod_{\sAbox\in\blam} (-u_\sAbox\chi_\sAbox).
\ee
\end{remark}

\section{GHT identity}
In this section, we revisit an identity for Macdonald functions obtained by Garsia, Haiman and Tesler in \cite{Garsia2001}, in the language of the quantum toroidal $\gl(1)$ algebra. It leads us to propose a generalization to Macdonald functions of higher levels. Then, using this results, we also propose a generalization of the five-term relation obtained in \cite{Garsia:2018fiv}, and of the Fourier/Hopf pairing defined in \cite{Cherednik:1995mac,Okounkov:2001are,Beliakova:2021cyc}.

\subsection{Revisiting the level 1 GHT identity}
In \cite{Garsia2001}, GHT derived an identity that relates the Macdonald functions $P_\l(\bsx)$ to the Macdonald kernel with one of the alphabets specialized to $\sp_\l=(t^{-1}q^{\l_1},t^{-2}q^{\l_2},\dots)$. More specifically, we define the operator
\be
\label{eq:def-GHT-oper}
 \GHT := \nabla \mathe^{\sum_{k>0}\frac{(-1)^{k}}{k(1-q^k)}p_k(\bsx)}t^{-L_0}
 \mathe^{\sum_{k>0}\frac{(-1)^{k}}{1-t^k}\frac{\p}{\p p_k(\bsx)}}\nabla\,,
\ee
where $\nabla$ is the operator introduced in Section~\ref{sec_framing} that is diagonal on the Macdonald basis with the eigenvalues given by \eqref{eigen_nabla}. The operator $L_0$ defined in \eqref{def_L0}, is the representation of the element $d\in\CE$.
Then \cite[Theorem I.2]{Garsia2001} can be stated as
\be
\label{eq:GHT-thm}
 \GHT\tP_\l(\bsx) = \Pi(\sp_\l|\bsx)
 \,,
\ee
where $\tP_\l$ are the spherical Macdonald functions \eqref{eq:spherical-macdonalds}.
The r.h.s.\ of this equation is an infinite series of symmetric functions in $\bsx$, that we will interpret shortly as a type of Whittaker vector in the Fock module. Moreover, we will show that all such vectors can be obtained from the action of the AFS vertex operators on the vacuum. This gives an algebraic interpretation of the GHT identity which can be used to generalize \eqref{eq:GHT-thm} to Fock modules of higher levels, i.e.\ to the case of generalized Macdonald functions.

\subsubsection{Whittaker vector}
\begin{definition}
An element $W_\lam(\bsx|u)$ in the Fock module $\CF^{(1,0)}_u\cong\L[\bsx]$ is called a (level-$1$) \emph{Whittaker vector} if it transforms under the action of the toroidal algebra $\CE$ as
\be
\label{prop_Whit}
\begin{aligned}
 \rho_u^{(1,0)}(a_k) W_\l(\bsx|u)
 &= \dfrac{\g^{-k/2}}{k}(1-q_3^k) u^k p_k(\me_\l)W_\l(\bsx|u),\quad (k>0),\\
 \rho_u^{(1,0)}(x_k^+)W_{\l}(\bsx|u)
 &= \sum_{\sAbox\in A(\l)}(u\chi_{\sAbox})^k\,r_\l(\Abox)W_{\l+\sAbox}(\bsx|u),\quad (k\geq 0).
\end{aligned}
\ee
and it is normalized such that $W_\l(0|u) = u^{|\l|}$.
\end{definition}
We observe that the first property implies that $W_\l(\bsx|u)$ is an eigenvector of the annihilation operators, namely
\be
 k\dfrac{\p}{\p p_k(\bsx)} W_\l(\bsx|u)
 = \frac{1-t^k}{1-q^k} u^k p_k(\sp_\l) W_\l(\bsx|u)
 \,.
\ee
Thus, to be specific, $W_\l(\bsx|u)$ is a Whittaker vector for the $\widehat{\gl(1)}$ subalgebra of the quantum toroidal $\gl(1)$ algebra in the horizontal Fock module.
This property, together with the normalization condition, defines the Whittaker vector uniquely at level $r=1$. The definition of the Whittaker vector therefore, assumes a specific form of the eigenvalue. Then, the second equation in \eqref{prop_Whit} follows from the first.
On the other hand, at level $r>1$, the first relation is no longer sufficient to define the Whittaker vector uniquely and the second relation will be used to pinpoint this notion.

\begin{proposition}
The vectors $W_\l(\bsx|u)$ can be obtained from the action of the AFS vertex operator $\Phi_\l$ on the vacuum of the Fock space, as
\be
\label{eq:Whittaker-from-intertwiner}
 W_\l(\bsx|u) = \Phi_\l^{(1,-1)}[-\g^{-1},u]\ket{\vac}\,,
\ee
where the vacuum state $\ket{\vac}$ is identified with the constant symmetric function $1\in\Lambda[\bsx]$.
\end{proposition}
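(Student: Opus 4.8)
The plan is to verify that $\Phi_\l^{(1,-1)}[-\g^{-1},u]\ket{\vac}$ satisfies the two defining relations \eqref{prop_Whit} together with the normalization $W_\l(0|u)=u^{|\l|}$, after which the statement follows because these data determine the level-$1$ Whittaker vector uniquely. First I specialize the AFS intertwining relation \eqref{prop_intw}: with horizontal weight $-\g^{-1}$, vertical weight $u$ and $n=-1$, the constraint $u'=-\g uv$ gives $u'=-\g(-\g^{-1})u=u$, so that $\Phi:=\Phi^{(1,-1)}[-\g^{-1},u]$ maps $\CF_u^{(0,1)}\otimes\CF_{-\g^{-1}}^{(1,-1)}$ into the correct module $\CF_u^{(1,0)}\cong\L[\bsx]$ where the Whittaker vector lives, and obeys
\be
\Phi\,\bigl(\rho_u^{(0,1)}\otimes\rho_{-\g^{-1}}^{(1,-1)}\bigr)\circ\D(e)=\rho_u^{(1,0)}(e)\,\Phi,\qquad\forall e\in\CE.
\ee
Using the expansion \eqref{def_Phi} and the vertical pairing \eqref{scalar_vert}, one has $\Phi(\dket{\mu}\otimes\ket{w})=\Phi_\mu\ket{w}$ for any horizontal state $\ket{w}$; in particular $\Phi(\dket{\l}\otimes\ket{\vac})=\Phi_\l\ket{\vac}$, so I can read off the algebra action on $\Phi_\l\ket{\vac}$ by evaluating the intertwining relation on $\dket{\l}\otimes\ket{\vac}$.

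For the first property I take $e=a_k$ with $k>0$. Using $\D(a_k)=a_k\otimes\g^{-c|k|/2}+\g^{c|k|/2}\otimes a_k$ together with $\rho_u^{(0,1)}(c)=0$ and $\rho_{-\g^{-1}}^{(1,-1)}(c)=1$, the coproduct acts on $\dket{\l}\otimes\ket{\vac}$ as $\rho_u^{(0,1)}(a_k)\otimes\g^{-k/2}+1\otimes\rho_{-\g^{-1}}^{(1,-1)}(a_k)$. The second summand annihilates $\ket{\vac}$ because $\rho^{(1,-1)}(a_k)\propto J_k$ kills the vacuum for $k>0$ by \eqref{q-osc-j}, while the first multiplies $\dket{\l}\otimes\ket{\vac}$ by the scalar $\g^{-k/2}\frac1k(1-q_3^k)u^k p_k(\me_\l)$. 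Pushing this through $\Phi$ and comparing with $\rho_u^{(1,0)}(a_k)\Phi_\l\ket{\vac}$ reproduces exactly the first line of \eqref{prop_Whit}. Since $\rho_u^{(1,0)}(a_k)\propto k\,\p/\p p_k$ in this representation, a simultaneous eigenvector of all $a_k$, $k>0$, with these prescribed scalar eigenvalues is unique up to overall scale.

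The scale is fixed by the normalization. In the normal-ordered form \eqref{def_AFS} the annihilation block acts trivially on $\ket{\vac}$, so $\Phi_\l\ket{\vac}=t_\l^{(1,-1)}[-\g^{-1},u]\,\exp(\cdots p_k)$; setting $\bsx=0$ (all $p_k\to0$) leaves only the prefactor $t_\l^{(1,-1)}[-\g^{-1},u]=(-\g(-\g^{-1})u)^{|\l|}\prod_{\sAbox\in\l}(u\chi_\sAbox)^{0}=u^{|\l|}$, which is exactly $W_\l(0|u)$. Together with the eigenvalue property this identifies $\Phi_\l\ket{\vac}$ with $W_\l(\bsx|u)$ by uniqueness, proving the proposition.

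The genuinely delicate point, which also prepares the higher-level generalization where the second relation of \eqref{prop_Whit} becomes essential, is the verification of that second relation. Repeating the projection with $e=x^+(z)$ and $\D(x^+(z))=x^+(z)\otimes1+\psi^-(z)\otimes x^+(z)$ (the $\g^{c_{(1)}}$-shifts trivialize because $\rho^{(0,1)}(c)=0$), the first term yields $\sum_{\sAbox\in A(\l)}\d(u\chi_\sAbox/z)\,r_\l(\Abox)\,W_{\l+\sAbox}$, whose coefficient of $z^{-k}$ is the desired $\sum_{\sAbox}(u\chi_\sAbox)^k r_\l(\Abox)W_{\l+\sAbox}$. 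The main obstacle is the residual term $\g^{-1}[\Psi_\l(z|u)]_-\,\Phi_\l\,\rho_{-\g^{-1}}^{(1,-1)}(x^+(z))\ket{\vac}$; here the key observation is that $\rho_{-\g^{-1}}^{(1,-1)}(x^+(z))\ket{\vac}=(-\g^{-1})z\,\eta^+(z)\ket{\vac}$ is a power series in $z$ beginning at $z^1$, while $[\Psi_\l(z|u)]_-$ contributes only non-negative powers, so the residual term is supported on strictly positive powers of $z$. It therefore affects only the negative modes $x^+_{-m}$ and leaves the relation for the modes $k\ge0$ exactly as claimed.
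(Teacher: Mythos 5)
Your proof is correct and follows essentially the same route as the paper's: both verify the defining relations \eqref{prop_Whit} by pushing the Drinfeld coproduct of $a_k$ and $x^+$ through the intertwining relation \eqref{prop_intw}, using that the positive modes annihilate the horizontal vacuum in the $(1,-1)$ representation, and fix the normalization from the prefactor $t_\l^{(1,-1)}[-\g^{-1},u]=u^{|\l|}$. The only cosmetic difference is that you organize the $x^+$ computation at the level of the full current $x^+(z)$ with a power-counting argument in $z$, whereas the paper works mode by mode; the content is identical.
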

\begin{proof}
First, we observe that $\bra{\vac}\Phi_\l^{(1,-1)}[-\g^{-1},u]\ket{\vac}=t^{(1,-1)}_\l[-\g^{-1},u]=u^{|\l|}$, which is the normalization property.
Next, we can check that the properties \eqref{prop_Whit} of the Whittaker vector simply follow from the intertwining properties \eqref{prop_intw} of the vertex operator $\Phi_\l^{(1,-1)}$.
Recall that $\rho_u^{(1,-1)}(x_k^+)=\rho_u^{(1,0)}(\CT^{-1}(x_k^+))=\rho_u^{(1,0)}(x_{k+1}^+)$ and $\rho_u^{(1,0)}(x_k^+)\ket{\vac}=0$ for $k>0$.
It then follows that, for $k>0$,
\be
\begin{aligned}
 \rho_{u}^{(1,0)}(x_k^+)\,&\Phi_\l^{(1,-1)}[-\g^{-1},u]\ket{\vac}
 = \Phi_\l^{(1,-1)}[-\g^{-1},u] \left(\rho^{(0,1)}_u\otimes
 \rho^{(1,-1)}_{-\g^{-1}}\circ\Delta(x_k^+)\right) \ket{\vac} \\
 &= \Phi_\l^{(1,-1)}[-\g^{-1},u] \left(\rho^{(0,1)}_u\otimes
 \rho^{(1,-1)}_{-\g^{-1}}\left(x_k^+\otimes 1
 +\sum_{l\geq0}\g^{-(k+l/2)c}\psi_{-l}^-\otimes x_{k+l}^+\right)\right) \ket{\vac} \\
 &= \Phi_\l^{(1,-1)}[-\g^{-1},u] \left(\rho^{(0,1)}_u(x_k^+)\otimes1\right) \ket{\vac} \\
 &= \sum_{\sAbox\in A(\l)}(u\chi_{\sAbox})^k\, r_\l(\Abox)\,
 \Phi_{\l+\sAbox}^{(1,-1)}[-\g^{-1},u]\ket{\vac}\,,
\end{aligned}
\ee
where the first factor in the tensor product acts on the vertical leg while the second on the horizontal one, i.e.\ on the vacuum vector.

The other defining property can be shown by exploiting the fact that $\rho^{(1,-1)}_u(a_{k+1})\ket{\vac}=\rho_u^{(1,-1)}(x_{k}^+)\ket{\vac}=0$ for $k\geq0$. We deduce that
\be
\begin{aligned}
 \rho_{u}^{(1,0)}(a_k)\,&\Phi_\l^{(1,-1)}[-\g^{-1},u]\ket{\vac}
 = \Phi_\l^{(1,-1)}[-\g^{-1},u] \left(\rho^{(0,1)}_u\otimes
 \rho^{(1,-1)}_{-\g^{-1}}\circ\Delta(a_k)\right) \ket{\vac} \\
 &= \Phi_\l^{(1,-1)}[-\g^{-1},u] \left(\rho^{(0,1)}_u\otimes
 \rho^{(1,-1)}_{-\g^{-1}}\left(a_k\otimes\g^{-ck/2}
 +\g^{ck/2}\otimes a_k\right)\right) \ket{\vac} \\
 &= \Phi_\l^{(1,-1)}[-\g^{-1},u] \left(\rho^{(0,1)}_u(a_k)\otimes\g^{-k/2}\right) \ket{\vac} \\
 &= \dfrac{\g^{-k/2}}{k}(1-q_3^k) u^k p_k(\me_\l) \Phi_\l^{(1,-1)}[-\g^{-1},u]\ket{\vac}\,,\\
\end{aligned}
\ee
for $k>0$, which concludes the proof.
\end{proof}

\begin{proposition}
The vectors $W_\l(\bsx|u)$ can be explicitly described as specializations of Macdonald kernels as
\be
 W_\l(\bsx|u) = u^{|\l|} \Pi(u\,\sp_\l|\bsx)\,,
\ee
with $\sp_\l$ as in \eqref{expr_sp}.
\end{proposition}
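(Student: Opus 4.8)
The plan is to exploit the uniqueness of the level-one Whittaker vector. As noted immediately after the definition, $W_\l(\bsx|u)$ is the \emph{unique} element of $\CF_u^{(1,0)}\cong\L[\bsx]$ that is a simultaneous eigenvector of all annihilation operators $k\frac{\p}{\p p_k(\bsx)}$ with eigenvalues $\frac{1-t^k}{1-q^k}u^k p_k(\sp_\l)$, subject to the normalization $W_\l(0|u)=u^{|\l|}$. It therefore suffices to verify that the proposed closed form $u^{|\l|}\Pi(u\,\sp_\l|\bsx)$ satisfies these two conditions, and then invoke uniqueness.

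First I would write the candidate explicitly using the plethystic-exponential form \eqref{Mac_kernel} of the Macdonald kernel. Since $p_k(u\,\sp_\l)=u^k p_k(\sp_\l)$, we have
\be
 u^{|\l|}\Pi(u\,\sp_\l|\bsx)
 = u^{|\l|}\exp\left(\sum_{k>0}\frac1k\frac{1-t^k}{1-q^k}u^k p_k(\sp_\l)\,p_k(\bsx)\right).
\ee
The exponent is linear in the $p_k(\bsx)$, so applying $k\frac{\p}{\p p_k(\bsx)}$ simply brings down the corresponding coefficient,
\be
 k\frac{\p}{\p p_k(\bsx)}\,u^{|\l|}\Pi(u\,\sp_\l|\bsx)
 = \frac{1-t^k}{1-q^k}u^k p_k(\sp_\l)\,u^{|\l|}\Pi(u\,\sp_\l|\bsx),
\ee
which is exactly the required eigenvalue. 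The normalization is immediate: setting $\bsx=0$, i.e.\ all $p_k(\bsx)=0$, collapses the exponential to $1$ and leaves $u^{|\l|}$. By uniqueness the candidate coincides with $W_\l(\bsx|u)$.

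As a consistency check (not logically needed, but worth recording), I would confirm that the eigenvalue of $\rho_u^{(1,0)}(a_k)$ in the first line of \eqref{prop_Whit} is compatible with this via \eqref{q-osc-p}; this reduces to the identity $p_k(\me_\l)=-(1-t^k)p_k(\sp_\l)$, which follows from $\me_\l=\sp_\l/\sp_\vac$ together with $p_k(\sp_\vac)=q_1^k/(1-q_1^k)$ and the dictionary $(q_1,q_2)=(t^{-1},q)$. Equivalently, one could start from the intertwiner expression $W_\l=\Phi_\l^{(1,-1)}[-\g^{-1},u]\ket{\vac}$ of the preceding proposition and read off the vacuum matrix element directly, but the uniqueness route is the most economical.

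The computation is entirely routine and I do not anticipate a genuine obstacle. The only point demanding care is the bookkeeping of the plethystic substitution $\sp_\l\mapsto u\,\sp_\l$ and of the $(q_1,q_2)=(t^{-1},q)$ conventions, so that the eigenvalue produced by differentiation agrees precisely with the normalization of $a_k$ fixed in \eqref{q-osc-p} and with the defining relations \eqref{prop_Whit}.
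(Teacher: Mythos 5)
Your proof is correct, but it takes a different route from the paper's. The paper proves this proposition by a direct computation: starting from the intertwiner expression $W_\l(\bsx|u)=\Phi_\l^{(1,-1)}[-\g^{-1},u]\ket{\vac}$ of the preceding proposition, it inserts the explicit bosonized form \eqref{def_AFS}, notes that the annihilation half of the vertex operator acts trivially on the vacuum, and recognizes the surviving creation-half exponential $t_\l^{(1,-1)}[-\g^{-1},u]\,\mathe^{\sum_{k>0}\frac{u^k}{k}\frac{1-t^k}{1-q^k}p_k(\sp_\l)p_k(\bsx)}$ as $u^{|\l|}\Pi(u\,\sp_\l|\bsx)$ — essentially the "equivalently" option you mention at the end. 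You instead verify that the candidate satisfies the annihilation-operator eigenvalue equations and the normalization, and invoke the uniqueness asserted in the paragraph following the definition (which is indeed elementary: a simultaneous eigenvector of all $k\frac{\p}{\p p_k(\bsx)}$ is fixed up to a scalar, and the scalar is fixed by the value at $\bsx=0$). Both arguments are sound and about equally short; the paper's is constructive and makes no appeal to uniqueness, while yours decouples the closed-form identification from the intertwiner realization and would survive even if one defined $W_\l$ abstractly by \eqref{prop_Whit} alone. Your consistency check relating $p_k(\me_\l)=-(1-t^k)p_k(\sp_\l)$ to the normalization of $a_k$ in \eqref{q-osc-p} is also correct.
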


\begin{proof}
We simply compute
\be
\begin{aligned}
 \Phi_\l^{(1,n)}[-\g^{-1},u]\ket{\vac}
 &= t_\l^{(1,n)}[-\g^{-1},u]\, \mathe^{\sum_{k>0}\frac{u^k}{k}\frac{1-t^k}{1-q^k}p_k(\sp_\l)J_{-k}} \ket{\vac}\\
 &= u^{|\l|} \,\mathe^{\sum_{k>0}\frac{u^k}{k}\frac{1-t^k}{1-q^k}p_k(\sp_\l)p_k(\bsx)}\\
 &= u^{|\l|} \Pi(u\,\sp_\l|\bsx),
\end{aligned}
\ee
which completes the proof.
\end{proof}

\subsubsection{The operator \texorpdfstring{$\mathring{\GHT}$}{V}}

It is useful at this point to rewrite the operator $\GHT$ in \eqref{eq:def-GHT-oper} in terms of intertwiners as we did for the Whittaker vectors $W_\l$. In order to do so, we first note that
\be
 \Phi_\vac^\ast(-\g^{-1})
 = \mathe^{\sum_{k>0}\frac{(-1)^k}{k(1-q^k)}p_k(\bsx)}
 \mathe^{-\sum_{k>0}\frac{(-1)^{-k}}{1-t^{-k}}\frac{\p}{\p p_k(\bsx)}}
 = \mathe^{\sum_{k>0}\frac{(-1)^k}{k(1-q^k)}p_k(\bsx)}t^{-L_0}
 \mathe^{\sum_{k>0}\frac{(-1)^{-k}}{1-t^{k}}\frac{\p}{\p p_k(\bsx)}}t^{L_0},
\ee
and, since $L_0$ commutes with $\nabla$ (they are both diagonal in the same basis), we obtain
\be\label{def_Xi}
 \GHT = \nabla \Phi_\vac^{\ast}(-\g^{-1})\nabla t^{-L_0}
 = t^{-L_0}\nabla \Phi_\vac^{\ast}(-t\g^{-1})\nabla\,.
\ee
In fact, it is instructive to consider a more general operator, namely
\be\label{def_tXi}
 \mathring{\GHT}[u,v] := \rho_{u}^{(1,0)}(F^\perp)\Phi_\vac^{\ast}(v)
 \rho_{u'}^{(1,0)}(F^\perp):\CF_{u'}^{(1,0)}\to\CF_u^{(1,0)}
\ee
with $u'=-\g uv$.

\begin{proposition}
The operator $\mathring{\GHT}[u,v]$ obeys the exchange relations
\be\label{rel_Xi}
\begin{aligned}
 \rho_u^{(1,0)}(a_k)\mathring{\GHT}[u,v]
 &= \g^{-k/2}\mathring{\GHT}[u,v]\rho_{u'}^{(1,0)}(b_k),\quad (k>0) \\
 \rho_u^{(1,0)}(x_0^+)\mathring{\GHT}[u,v]
 &= -\g^{1/2}(\g-\g^{-1})^{-1}\mathring{\GHT}[u,v]\rho_{u'}^{(1,0)}(a_{-1})\,,
\end{aligned}
\ee
in particular, the first relation with $k=1$ implies
\be
 \rho_u^{(1,0)}(a_1)\mathring{\GHT}[u,v]
 = \g^{-1/2}(\g-\g^{-1})\mathring{\GHT}[u,v]\rho_{u'}^{(1,0)}(x_0^+)\,,
\ee
with $u'=-\g u v$.
\end{proposition}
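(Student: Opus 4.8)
The plan is to read $\mathring{\GHT}[u,v]=\rho_u^{(1,0)}(F^\perp)\,\Phi_\vac^\ast(v)\,\rho_{u'}^{(1,0)}(F^\perp)$ as a product of three intertwining factors, each conjugating the algebra action by an automorphism: the two framing operators realize $\CT^\perp$ through the defining relation \eqref{def_F}, while the vacuum vertex operator realizes $\CT$ on the nilpotent subalgebra $\CN^+$. Composing the three, the conjugation effected by $\mathring{\GHT}$ reproduces $\CT^\perp\CT\CT^\perp=\iota\CS$ of the identity \eqref{id_TpTTp}, and since $\CS(a_k)=b_k$ this is exactly what the first relation in \eqref{rel_Xi} asserts, up to the scalar coming from $\iota$. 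Accordingly I would first record the three elementary exchange rules to be iterated: (a) $\rho(e)\rho(F^\perp)=\rho(F^\perp)\rho(\CT^\perp(e))$ in any representation; (b) the $\CN^+$-relation for $\Phi_\vac^\ast$, which upon using $\rho^{(1,n)}_u=\rho^{(1,0)}_u\circ\CT^n$ to rewrite \eqref{rel_intw_vac} in the level-$(1,0)$ representations becomes the clean mode-shift $\rho_u^{(1,0)}(e_+)\,\Phi_\vac^\ast(v)=\Phi_\vac^\ast(v)\,\rho_{u'}^{(1,0)}(\CT(e_+))$ for all $e_+\in\CN^+$; and (c) the explicit consequences of \eqref{def_F} derived in Section~\ref{sec_framing}, namely $[F^\perp,x_0^\pm]=[F^\perp,b_k]=0$ and $x_{\mp1}^\pm F^\perp=-(\g-\g^{-1})^{-1}\g^{\pm c/2}F^\perp a_{\mp1}$.

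For the first relation I would push $\rho_u^{(1,0)}(a_k)$ (with $k>0$) rightward through the three factors. Step one uses (a) to replace $a_k$ by $\CT^\perp(a_k)$; by property (iv) of the Lemma on $\CN^\pm$ one writes $\CT^\perp(a_k)=\g^{k\bc/2}E_k$ with $E_k\in\CN^+$, and in the level-$(1,0)$ module $\bc$ acts as $0$. Step two moves $E_k$ across $\Phi_\vac^\ast(v)$ by (b), producing $\rho_{u'}^{(1,0)}(\CT(E_k))$; here the exact identity $\CT(E_k)=\g^{-k(\bc+c)/2}\CT\CT^\perp(a_k)$, which uses $\CT(\bc)=\bc+c$, evaluates to $\g^{-k/2}\rho_{u'}^{(1,0)}(\CT\CT^\perp(a_k))$ because $c=1$, $\bc=0$ in the representation—this is precisely the origin of the prefactor $\g^{-k/2}$. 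Step three pushes $\CT\CT^\perp(a_k)$ through the right framing operator by (a), giving $\CT^\perp\CT\CT^\perp(a_k)=\iota\CS(a_k)$. Finally one computes $\iota\CS(a_k)=\g^{k\bc/2}b_k$ (using that $dc+\bd\bc$ is $\CS$-invariant, so $\iota$ and $\CS$ commute), which reduces to $b_k$ in the module. Collecting the scalars yields $\rho_u^{(1,0)}(a_k)\mathring{\GHT}[u,v]=\g^{-k/2}\mathring{\GHT}[u,v]\rho_{u'}^{(1,0)}(b_k)$.

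The second relation is more direct and does not require the full $\iota\CS$ computation. Pushing $\rho_u^{(1,0)}(x_0^+)$ through the left framing operator is trivial since $[F^\perp,x_0^\pm]=0$ (equivalently $\CT^\perp(x_0^+)=x_0^+$); moving it across $\Phi_\vac^\ast(v)$ via (b) with $e_+=x_0^+\in\CN^+$ turns it into $\rho_{u'}^{(1,0)}(\CT(x_0^+))=\rho_{u'}^{(1,0)}(x_{-1}^+)$; and the explicit rule (c), $x_{-1}^+F^\perp=-(\g-\g^{-1})^{-1}\g^{c/2}F^\perp a_{-1}$ with $c=1$, converts this into $-\g^{1/2}(\g-\g^{-1})^{-1}\rho_{u'}^{(1,0)}(a_{-1})$ standing to the right of $\mathring{\GHT}[u,v]$, which is the claim. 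The stated corollary then follows at once by specialising the first relation to $k=1$ and inserting Miki's value $b_1=\CS(a_1)=(\g-\g^{-1})x_0^+$ from \eqref{Miki_init}. I expect the only delicate point to be the bookkeeping of the central $\g$-powers: because $c$ and $\bc$ are not preserved by $\CT$ and $\CT^\perp$ (one has $\CT(\bc)=\bc+c$ and $\CT^\perp(c)=c-\bc$), the scalar factors must be carried as exact algebra identities and only evaluated at $c=1,\bc=0$ at the very end; one must also verify once and for all that rewriting \eqref{rel_intw_vac} through $\rho^{(1,n)}=\rho^{(1,0)}\circ\CT^n$ genuinely yields the unshifted relation (b), since it is this step that supplies the middle automorphism $\CT$.
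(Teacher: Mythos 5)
Your proposal is correct and follows essentially the same route as the paper: the same three-factor conjugation through $F^\perp$, $\Phi_\vac^\ast(v)$, $F^\perp$, using property (iv) of the Lemma on $\CN^\pm$, the vacuum intertwining relation \eqref{rel_intw_vac} together with $\rho^{(1,1)}_{u'}=\rho^{(1,0)}_{u'}\circ\CT$, and the identity $\CT^\perp\CT\CT^\perp=\iota\CS$. The only cosmetic difference is that you carry the $\g^{\bc}$-prefactors as exact algebra identities and evaluate them at $c=1$, $\bc=0$ at the end (and treat $x_0^+$ by explicit pushing rather than by quoting $\iota\CS(x_0^+)$), which reproduces the paper's scalars exactly.
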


\begin{proof}
To derive the algebraic relations \eqref{rel_Xi}, we use the intertwining properties \eqref{rel_intw_vac} and \eqref{ak_intw_vac} of $\Phi_\vac^\ast(v)$, and \eqref{def_F} for $F^\perp$. We find, for $k>0$,
\be
\begin{aligned}
 \rho_u^{(1,0)}(a_k)\mathring{\GHT}[u,v]
 &= \rho_u^{(1,0)}(a_k)\rho_{u}^{(1,0)}(F^\perp)\Phi_\vac^{\ast}(v)\rho_{u'}^{(1,0)}(F^\perp) \\
 &= \rho_{u}^{(1,0)}(F^\perp)\rho_u^{(1,0)}(\g^{k\bc/2}\g^{-k\bc/2}\CT^\perp(a_k))\Phi_\vac^{\ast}(v)\rho_{u'}^{(1,0)}(F^\perp) \\
 &= \g^{-k/2}\rho_{u}^{(1,0)}(F^\perp)\Phi_\vac^{\ast}(v)\rho_{u'}^{(1,1)}(\CT^\perp(a_k))\rho_{u'}^{(1,0)}(F^\perp) \\
 &= \g^{-k/2}\rho_{u}^{(1,0)}(F^\perp)\Phi_\vac^{\ast}(v)\rho_{u'}^{(1,0)}(\CT\CT^\perp(a_k))\rho_{u'}^{(1,0)}(F^\perp) \\
 &= \g^{-k/2}\rho_{u}^{(1,0)}(F^\perp)\Phi_\vac^{\ast}(v)\rho_{u'}^{(1,0)}(F^\perp)\rho_{u'}^{(1,0)}(\CT^\perp\CT\CT^\perp(a_k)) \\
 &= \g^{-k/2}\mathring{\GHT}[u,v]\rho_{u'}^{(1,0)}(b_k)\,,
\end{aligned}
\ee
where we used the fact that $\g^{-k\bc/2}\CT^\perp(a_k)\in\CN^+$ for $k>0$ to obtain the second line, $\rho_u^{(1,1)}=\rho_u^{(1,0)}\circ\CT$ for the third line, and finally the identity \eqref{id_TpTTp} in the last line. The proof for the second relation follows the same steps. Indeed, since $\CT^\perp(x_0^+)=x_0^+\in\CN^+$, we have
\begin{equation}
\rho_u^{(1,0)}(x_0^+)\mathring{\GHT}[u,v]=\mathring{\GHT}[u,v]\rho_{u'}^{(1,0)}(\iota\CS(x_0^+))=-\g^{1/2}(\g-\g^{-1})^{-1}\mathring{\GHT}[u,v]\rho_{u'}^{(1,0)}(a_{-1}).
\end{equation}
\end{proof}

Observe that, by its definition, the morphism $\mathring{\GHT}[u,v]$ is not self-adjoint as it does not map a module to itself. We remark however that we can get a self-adjoint operator by specializing the weights appropriately. In fact, we define
\be
 \GHT(v):=\mathring{\GHT}[-\g v,-\g^{-1}]t^{-L_0}
 = \nabla(v) \,\mathe^{\sum_{k>0}\frac{(-1)^{k}}{k(1-q^k)}p_k}t^{-L_0}
 \mathe^{\sum_{k>0}\frac{(-1)^{k}}{1-t^k}\frac{\p}{\p p_k}}\nabla(v),
\ee
with $\nabla(v)$ defined as in \eqref{eq:def-nabla}. For $v=1$, this clearly reduces to the operator in \eqref{eq:def-GHT-oper}.

\subsubsection{Algebraic proof of the GHT identity}
Combining the exchange relations satisfied by the operator $\GHT(u)$, and the algebraic characterization of the Macdonald functions, we show that $\GHT(u)\tP_\l(\bsx)$ satisfies the same algebraic relations \eqref{prop_Whit} as the Whittaker vector $W_\l(\bsx|u)$.
\begin{proposition}
We have the algebraic relations
\be
\begin{aligned}
 \rho_u^{(1,0)}(a_k)\GHT(u)\tP_\l(\bsx) &= \dfrac{\g^{-k/2}}{k}(1-q_3^k)u^k p_k(\me_\l)\GHT(u)\tP_\l(\bsx),\quad (k>0),\\
 \rho_u^{(1,0)}(x_k^+)\GHT(u)\tP_\l(\bsx) &= \sum_{\sAbox\in A(\l)}
 (u\chi_\sAbox)^k\,r_\l(\Abox)\GHT(u)\tilde{P}_{\l+\sAbox}(\bsx),\quad (k\geq0).
\end{aligned}
\ee
\end{proposition}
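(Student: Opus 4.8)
The plan is to transfer the exchange relations \eqref{rel_Xi} of $\mathring{\GHT}[u,v]$ onto the Macdonald basis and then bootstrap from the modes $a_k$ and $x_0^+$ to all $x_k^+$. First I would upgrade \eqref{rel_Xi} to relations for the operator $\GHT(u)=\mathring{\GHT}[-\g u,-\g^{-1}]t^{-L_0}$ by commuting each generator through the factor $t^{-L_0}$. The relevant bookkeeping is a grading count: Miki's automorphism sends $(d,\bd)\mapsto(-\bd,d)$ and $a_k$ carries $d$-degree $-k$, so $b_k=\CS(a_k)$ has $d$-degree $0$ and $\rho^{(1,0)}(b_k)$ commutes with $t^{-L_0}$, whereas $a_{-1}$ has $d$-degree $+1$ so that $\rho^{(1,0)}(a_{-1})\,t^{-L_0}=t\,t^{-L_0}\rho^{(1,0)}(a_{-1})$. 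Combining this with the weight-independence of $\rho_u^{(1,0)}(a_{\pm k})$ in \eqref{q-osc-p} and the weight-scaling $\rho_u^{(1,0)}(x_0^+)=-\g^{-1}\rho_{-\g u}^{(1,0)}(x_0^+)$ (from $\rho_u^{(1,0)}(x^+(z))=u\,\eta^+(z)$), I obtain
\be
\rho_u^{(1,0)}(a_k)\GHT(u)=\g^{-k/2}\GHT(u)\rho_{-\g u}^{(1,0)}(b_k),
\qquad
\rho_u^{(1,0)}(x_0^+)\GHT(u)=t\,\g^{-1/2}(\g-\g^{-1})^{-1}\GHT(u)\rho_{-\g u}^{(1,0)}(a_{-1}).
\ee

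Next I would apply these two identities to $\tP_\l(\bsx)$. For the first relation, the eigenvalue \eqref{eigen_bk} at weight $-\g u$ contributes $(-\g u)^k=(-1)^k\g^k u^k$, and the sign $(-1)^{k-1}$ in \eqref{eigen_bk} together with $\g^k(\g^k-\g^{-k})=\g^{2k}-1=q_3^k-1$ produces the overall factor $1-q_3^k$; the right-hand side then collapses to the scalar $\tfrac{\g^{-k/2}}{k}(1-q_3^k)u^k p_k(\me_\l)$, which is exactly the first claimed relation. For the second, the action \eqref{act_apm1} of $a_{-1}$ on $\tP_\l$ gives $q_1\g^{1/2}(\g-\g^{-1})\sum_{\sAbox\in A(\l)}r_\l(\Abox)\,\tP_{\l+\sAbox}$, and all $\g$- and $(\g-\g^{-1})$-prefactors cancel, leaving the factor $t q_1=1$ (recall $q_1=t^{-1}$). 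This establishes the $k=0$ case $\rho_u^{(1,0)}(x_0^+)\GHT(u)\tP_\l=\sum_{\sAbox\in A(\l)}r_\l(\Abox)\,\GHT(u)\tP_{\l+\sAbox}$.

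Finally, to reach $x_k^+$ for $k>0$ I would use the commutation relation \eqref{com_ak}, which in the level-$(1,0)$ representation ($c=1$) reads $\rho_u^{(1,0)}(x_k^+)=\tfrac{\g^{k/2}}{c_k}[\rho_u^{(1,0)}(a_k),\rho_u^{(1,0)}(x_0^+)]$. Expanding the commutator and inserting the two relations just proved, the $a_k$-eigenvalues at $\l$ and at $\l+\sAbox$ enter with opposite signs, so their difference telescopes via the plethystic expression $p_k(\me_{\l+\sAbox})-p_k(\me_\l)=-(1-q_1^k)(1-q_2^k)\chi_\sAbox^k$. Using $c_k=-\tfrac1k(1-q_1^k)(1-q_2^k)(1-q_3^k)$ from \eqref{exp_g}, the prefactor of each summand reduces to $\chi_\sAbox^k$, so that the summand becomes $(u\chi_\sAbox)^k r_\l(\Abox)\GHT(u)\tP_{\l+\sAbox}$, which is the second Whittaker relation.

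The scalar simplifications above are routine and I have verified they reduce correctly. The genuinely delicate point is the first paragraph: one must carefully distinguish the weight-independent generators $a_{\pm k}$ from the weight-scaling $x_0^\pm$, and track the degree shift incurred by each generator as it passes $t^{-L_0}$. A misplaced sign or power of $\g$ or $t$ there would spoil precisely the cancellations $\g^k(\g^k-\g^{-k})=q_3^k-1$ and $tq_1=1$ that align the eigenvalues with the Whittaker normalization.
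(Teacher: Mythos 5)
Your proposal is correct and follows essentially the same route as the paper's proof: transfer the exchange relations \eqref{rel_Xi} to $\GHT(u)$ (the paper does the $t^{-L_0}$ bookkeeping implicitly, which you spell out), apply them to $\tP_\l$ using \eqref{eigen_bk} and the normalized Pieri action of $a_{-1}$, and then reach all $x_k^+$ via $x_k^+=c_k^{-1}\g^{kc/2}[a_k,x_0^+]$ with the telescoping $p_k(\me_{\l+\sAbox})-p_k(\me_\l)=-(1-q_1^k)(1-q_2^k)\chi_\sAbox^k$. All the scalar cancellations you flag ($\g^k(\g^k-\g^{-k})=q_3^k-1$, $tq_1=1$, $\frac{1-q_3^k}{kc_k}\cdot(-(1-q_1^k)(1-q_2^k))=1$) check out.
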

\begin{proof}
Using the first relation in \eqref{rel_Xi}, together with \eqref{eigen_bk} for the eigenvalue of the operators $b_k$ in the horizontal representation, we obtain
\be
 \rho_u^{(1,0)}(a_k)\GHT(u)\tP_\l(\bsx)
 = \g^{-k/2}\GHT(u)\rho_{-\g u}^{(1,0)}(b_k)\tP_\l(\bsx)
 = \dfrac{\g^{-k/2}}{k}(1-q_3^k)u^k p_k(\me_\l)\GHT(u)\tP_\l(\bsx).
\ee
Similarly, from the second relation in \eqref{rel_Xi}, we have
\be
\begin{aligned}
 \rho_u^{(1,0)}(x_0^+)\GHT(u)\tP_\l(\bsx)
 &= (-\g^{-1})\rho_{-\g u}^{(1,0)}(x_0^+)\GHT(u)\tP_\l(\bsx)
 = t\g^{-1/2}(\g-\g^{-1})^{-1}\GHT(u)\rho_{-\g u}^{(1,0)}(a_{-1})\tP_\l(\bsx),\\
 &= \sum_{\sAbox\in A(\l)}r_\l(\Abox)\GHT(u)\tP_{\l+\sAbox}(\bsx).
\end{aligned}
\ee
Combining the two, using $x_k^+=c_k^{-1}\g^{kc/2}[a_k,x_0^+]$ and $c_k$ as in \eqref{exp_g}, we deduce
\be
\begin{aligned}
 \rho_u^{(1,0)}(x_k^+)\GHT(u)\tP_\l(\bsx)
 &= \sum_{\sAbox\in A(\l)}r_\l(\Abox)\dfrac{1-q_3^k}{kc_k}
 u^kp_k(\me_{\l+\sAbox}-\me_\l)\, \GHT(u)\tP_{\l+\sAbox}(\bsx),\\
 &= \sum_{\sAbox\in A(\l)}(u\chi_\sAbox)^k\,r_\l(\Abox)\GHT(u)\tP_{\l+\sAbox}(\bsx).
\end{aligned}
\ee
\end{proof}
This implies that $\GHT(u)\tP_\l(\bsx)$ and $W_\l(\bsx|u)$ transform in the same way under the action of the algebra.
In order to be able to identify them as the same vector in the module, we observe that the relations in \eqref{prop_Whit} imply that any Whittaker vector $W_\l(\bsx|u)$ can be obtained as a linear combination of repeated actions of the modes $x_k^+$ acting as creation operators on the vacuum $1$. This is the analogous statement as saying that $\tP_\l$ satisfy a recursion on the number of boxes in $\l$ induced by the Pieri rules. Since we have proven that the two recursions are the same, we just need to show that the two vectors are the same for the base case of the empty partition, $\l=\vac$.
In this particular case, the identity $\GHT(u)\tP_\vac(\bsx)=W_\vac(\bsx|u)$ becomes equivalent to the following property of the operator $\nabla$,
\be\label{rel_nabla}
 \nabla\, \mathe^{\sum_{k>0}\frac{z^k}{k(1-q^k)}p_k(\bsx)}
 = \mathe^{-\sum_{k>0}\frac{(-z)^k}{k(1-q^k)}p_k(\bsx)},
\ee
which is in fact a special case of the Conjecture~\ref{conj:BH}, and it is proven for $r=1$.

Equivalently, one could show that $\GHT(u)\tP_\l(\bsx)$ and $W_\l(\bsx|u)$ satisfy the same normalization condition, then by uniqueness of the definition of Whittaker vector, equality would follow. To show that the normalizations are the same, we would have to prove
\be
 (-u)^{|\l|} g_\l \tP_\l(-\sp_\vac) = u^{|\l|},
\ee
which is in fact a special case of the identity \eqref{eq:spec-id2}.

We finally obtain the following.
\begin{proposition}
We have the identity
\be
 \GHT(u)\tP_\l(\bsx)=W_\l(\bsx|u).
\ee
\end{proposition}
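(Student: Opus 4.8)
The plan is to deduce the identity from the \emph{uniqueness} of the level-$1$ Whittaker vector, using the algebraic relations just established. By the preceding proposition, the vector $\GHT(u)\tP_\l(\bsx)$ satisfies precisely the two families of relations \eqref{prop_Whit} that define $W_\l(\bsx|u)$; in particular it is an eigenvector of the annihilation modes $\rho_u^{(1,0)}(a_k)$, $k>0$, with exactly the eigenvalues appearing in the first line of \eqref{prop_Whit}. So both $\GHT(u)\tP_\l$ and $W_\l$ are Whittaker vectors in the sense of the Whittaker-vector definition, and it only remains to argue that these defining data pin down a unique vector and that the two share the same normalization.

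First I would establish uniqueness. Under the ring isomorphism $\CF_u^{(1,0)}\cong\L[\bsx]$, the generators $\rho_u^{(1,0)}(a_k)$ with $k>0$ act, up to nonzero scalars, as the derivations $\p/\p p_k$ (see \eqref{q-osc-p}). The first relation in \eqref{prop_Whit} therefore says that a Whittaker vector is a simultaneous eigenvector of all the $\p/\p p_k$ with prescribed eigenvalues proportional to $p_k(\sp_\l)$. Since $\p_{p_k}\log V$ is then a prescribed constant for every $k$, the space of such $V$ is one-dimensional, spanned by the plethystic exponential $\Pi(u\,\sp_\l|\bsx)$; this is exactly why, at level $r=1$, the first relation of \eqref{prop_Whit} already fixes the vector up to scale and the second relation is automatic. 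Consequently $\GHT(u)\tP_\l=c_\l\,W_\l$ for some scalar $c_\l$, and I would fix $c_\l=1$ by comparing values at the empty alphabet: $W_\l(0|u)=u^{|\l|}$ holds by definition, while evaluating $\GHT(u)\tP_\l$ at $\bsx=0$ produces $(-u)^{|\l|}g_\l\,\tP_\l(-\sp_\vac)$, so that $c_\l=1$ is equivalent to
\be
 (-u)^{|\l|}g_\l\,\tP_\l(-\sp_\vac)=u^{|\l|},
\ee
which is the $r=1$, $v_\a=1$ specialization of the evaluation identity \eqref{eq:spec-id2}. Equivalently, one may run the recursion route indicated in the text: the creation-operator relation for $x_k^+$ in \eqref{prop_Whit} is the same Pieri-type recursion in $|\l|$ obeyed by $\tP_\l$, so matching the two families reduces to the base case $\l=\vac$, i.e.\ to $\GHT(u)\tP_\vac=W_\vac$, which is the $\nabla$ identity \eqref{rel_nabla}.

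The hard part is this base case / normalization, since everything else is formal: transporting the algebra action is done by the preceding proposition, and the coherent-state uniqueness is elementary. The one genuinely analytic input is the equality of constant terms, which reduces to the relation \eqref{rel_nabla} for the operator $\nabla$ — equivalently to the $r=1$ instance of Conjecture~\ref{conj:BH}. This instance is \emph{proven} (it is the classical GHT computation, recalled in Appendix~\ref{App:conjecture}), so no conjectural input is needed at level $1$, and I would simply invoke \eqref{rel_nabla}/\eqref{eq:spec-id2} to conclude. I expect the only place demanding care is unwinding the definition \eqref{def_Xi} of $\GHT(u)$ at $\bsx=0$ so that the constant term is correctly identified with $(-u)^{|\l|}g_\l\,\tP_\l(-\sp_\vac)$; once that is done, the identification $c_\l=1$ is immediate.
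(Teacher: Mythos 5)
Your proposal is correct and matches the paper's argument: the paper also reduces the identity to the fact that $\GHT(u)\tP_\l$ and $W_\l$ satisfy the same algebraic relations \eqref{prop_Whit} (established in the preceding proposition), and then closes the gap either by the $x_k^+$-recursion down to the base case $\l=\vac$ (i.e.\ \eqref{rel_nabla}) or, equivalently, by uniqueness of the level-$1$ Whittaker vector plus the normalization check $(-u)^{|\l|}g_\l\tP_\l(-\sp_\vac)=u^{|\l|}$ from \eqref{eq:spec-id2} — both being proven $r=1$ instances of Conjecture~\ref{conj:BH}. You simply lead with the uniqueness/normalization route where the paper leads with the recursion, but both routes appear in both texts, so there is nothing to add.
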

For $u=1$, this is equivalent to the GHT identity in \eqref{eq:GHT-thm}, originally proven in \cite{Garsia2001}.

\subsection{Extension of the GHT identity to higher levels}
In order to generalize the GHT identity \eqref{eq:GHT-thm} to higher levels, we need to introduce the corresponding definitions of the operator $\GHT$ and the Whittaker vectors $W_\l(\bsx)$. Then, we show that the latter are related to the generalized Macdonald symmetric functions studied in the previous section. In this subsection, we need to introduce the notation $\bsn_k:=(k,\dots,k)$ for the $r$-dimensional vectors with components $k\in\mZ$, and notice that $\rho_\bsu^{(r,\bsn_{k+1})}=\rho_\bsu^{(r,\bsn_k)}\circ\CT$.\footnote{This property follows from the compatibility relation between the automorphism $\CT$ and the Drinfeld coproduct. E.g. for $r=2$, $\rho^{(2,\bsn_k)}\circ\CT=(\rho^{(1,k)}\otimes\rho^{(1,k)})\circ\D\circ\CT=(\rho^{(1,k)}\circ\CT\otimes\rho^{(1,k)}\circ\CT)\circ\D=(\rho^{(1,k+1)}\otimes\rho^{(1,k+1)})\circ\D=\rho^{(2,\bsn_{k+1})}$.}

We note that a similar construction of Whittaker vectors for the quantum toroidal $\gl(1)$ algebra using vertex operators was used in \cite{BershteinGonin2020} (see also \cite{Negut2014,Tsymbaliuk2014} for a more geometric approach). However, the vertex operators and the corresponding vectors constructed here are different from those considered in \cite{BershteinGonin2020}.

\subsubsection{Generalized Whittaker vectors}
\begin{definition}\label{def:GeneralizedWhittaker}
We define the generalized Whittaker vector at level $r$ as
\be
 W_\bl(\bsx^\bullet|\bsu,\bsv,\bsw):=C_\bl(\bsu,\bsv,\bsw)\Phi_{\bl}^{(r,\bsn_{-1})}[\bsu,\bsv,\bsw]\ket{\vac},
\ee
with the normalization coefficients\footnote{This is precisely the coefficient introduced in Section~\ref{sec:iso-h-v} to relate the vertical basis vectors $\dket{\bl}$ with the $P_\bl(\bsx^\bullet|\bsu)$.}
\begin{equation}
C_\bl(\bsu,\bsv,\bsw)=\g^{-\frac12|\bl|-\sum_{\a=1}^r (r-\a)|\l^{(\a)}|}G_\bl^\ast(\bsv)(-q_1\g^{1/2})^{-|\bl|}\prod_{\a=1}^rP_{\l^{(\a)}}(\sp_\vac),
\end{equation}
and $G_\bl^\ast(\bsv)$ are defined in \eqref{def_gl}.
\end{definition}

\begin{proposition}
\label{prop:id_Whitt_r}
The generalized Whittaker vector transforms as:
\begin{align}
\begin{split}\label{id_Whitt_r}
(i)\quad &\rho_{\bsu'}^{(r,\bsn_0)}(a_k)W_\bl(\bsx^\bullet|\bsu,\bsv,\bsw)=-\dfrac{\g^{-rk/2}}{k}(\g^k-\g^{-k})\left(\sum_{\a=1}^r (\g v_\a)^kp_k(\me_{\l^{(\a)}}) \right)W_{\bl}(\bsx^\bullet|\bsu,\bsv,\bsw), \quad (k>0),\\
(ii)\quad &\rho_{\bsu'}^{(r,\bsn_0)}(x_k^+)W_\bl(\bsx^\bullet|\bsu,\bsv,\bsw)\\
&=-\g(1-q_1)\sum_{\a=1}^r\sum_{\sAbox\in A(\l^{(\a)})}(v_\a\chi_\sAbox)^k \psi_{\l^{(\a)}}(\Abox)\prod_{\b=\a+1}^r\Psi_{\l^{(\b)}}\left(\frac{u'_\a}{u'_\b}\chi_\sAbox\right) W_{\bl+\sAbox}(\bsx^\bullet|\bsu,\bsv,\bsw),\quad (k\geq0)\\
(iii)\quad &\rho_{\bsu'}^{(r,\bsn_0)}(x_k^-)W_\bl(\bsx^\bullet|\bsu,\bsv,\bsw)= -\g(1-q_2)\sum_{\sAbox\in R(\bl)}(\g^{-r}v_{\sAbox}\chi_{\sAbox})^k \psi_\l^\ast(\Abox) \prod_{\b=1}^{\a-1}\Psi_{\l^{(\b)}}\left(\frac{u'_\a}{u'_\b}\chi_\sAbox\right) W_{\bl-\sAbox}(\bsx^\bullet|\bsu,\bsv,\bsw)\,\\
 &+\d_{k,1}\left(\sum_{\a=1}^r u_\a^{-1}\right) W_{\bl}(\bsx^\bullet|\bsu,\bsv,\bsw)\,,\quad (k> 0)\,.
 \end{split}
\end{align}
\end{proposition}

The first relation expresses the fact that the formal series of vectors $W_\bl(\bsx^\bullet|\bsu,\bsv,\bsw)$ are eigenvectors of the annihilation operators $\rho_u^{(r,\bsn_0)}(a_k)$. As such, they, can be interpreted as a generalization of the Whittaker vectors $W_\l(\bsx)$ introduced in the previous section, i.e. as Whittaker vectors of the $\widehat{\gl(1)}$ subalgebra of $\CE$ generated by $a_k$ for $k\in\mZ$. However, this Whittaker condition no longer uniquely defines the vectors for modules $(r,\bsn_0)$ with $r>1$. Indeed, this notion is defined for Verma modules, and the Fock module $\CF^{(1,0)}$ is a Verma module for the algebra $\widehat{\gl(1)}$, but it is no longer the case at higher level. This translates into the fact that $\rho_\bsu^{(r,\bsn_0)}(a_k)$ is proportional to a specific linear combination of $\p/\p p_k(\bsx^{(\a)})$, and leaves the freedom of multiplying by any orthogonal linear combination of the power sums.

In this section, we focus on a specific set of solutions of the Whittaker conditions that are realized by the vertical component of a vertex operators, as defined in Definition~\ref{def:GeneralizedWhittaker}. Alternatively, this set of solutions can be constructed inductively from $W_{\vac,\dots,\vac}(\bsx^\bullet|\bsu,\bsv,\bsw)$ using the relation $(ii)$. The relation $(iii)$ further pinpoints this specific set of solutions.

\begin{proof}
All three properties follow from the intertwining relations of the vertex operator, and the fact that $\rho^{(r,\bsn_{-1})}_\bsu(x_k^+)\ket{\vac}=\rho^{(r,\bsn_{-1})}_\bsu(a_{k+1})\ket{\vac}=0$ for $k\geq0$. These lead to the following algebraic properties,
\be
\label{eq:comm-whitt-r}
\begin{aligned}
 \rho_{\bsu'}^{(r,\bsn_0)}(a_k)\Phi_{\bl}^{(r,\bsn_{-1})}[\bsu,\bsv,\bsw]\ket{\vac}
 &= \dfrac{\g^{-rk/2}}{k}(1-q_3^k)\left(\sum_{\a=1}^r v_\a^kp_k(\me_{\l^{(\a)}}) \right)\Phi_{\bl}^{(r,\bsn_{-1})}[\bsu,\bsv,\bsw]\ket{\vac}\,,
 \quad (k>0)\,,\\
 \rho_{\bsu'}^{(r,\bsn_0)}(x_k^+) \Phi_{\bl}^{(r,\bsn_{-1})}[\bsu,\bsv,\bsw]\ket{\vac}
 &= \sum_{\sAbox\in A(\bl)}(v_{\sAbox}\chi_{\sAbox})^k r_\bl(\Abox|\bsv)
 \Phi_{\bl+\sAbox}^{(r,\bsn_{-1})}[\bsu,\bsv,\bsw]\ket{\vac}\,,\quad (k\geq 0)\,,\\
  \rho_{\bsu'}^{(r,\bsn_0)}(x_k^-) \Phi_{\bl}^{(r,\bsn_{-1})}[\bsu,\bsv,\bsw]\ket{\vac}
 &= \sum_{\sAbox\in R(\bl)}(\g^{-r}v_{\sAbox}\chi_{\sAbox})^k r_\bl^\ast(\Abox|\bsv)
 \Phi_{\bl-\sAbox}^{(r,\bsn_{-1})}[\bsu,\bsv,\bsw]\ket{\vac}\,\\
 &+\d_{k,1}\left(\sum_{\a=1}^r u_\a^{-1}\right) \Phi_{\bl}^{(r,\bsn_{-1})}[\bsu,\bsv,\bsw]\ket{\vac}\,,\quad (k> 0)\,,
\end{aligned}
\ee
which prove the Proposition, after the appropriate change of normalization.
\end{proof}

\begin{corollary}
\label{corollary_I}
Let $\bl^R=(\l^{(r)},\dots,\l^{(1)})$ be the multi-partition with reversed order w.r.t.\ $\bl$ and similarly let $\bsv^R=(v^{(r)},\dots,v^{(1)})$ and $w^R_{\a,\b}=w_{\a,\a+1-\b}$. Under this permutation, generalized Whittaker vectors transform as
\begin{equation}
W_{\bl^R}(\bsx^\bullet|\bsu,\bsv^R,\bsw^R)=\g^{-\sum_\a (2\a-r-1)|\l^{(\a)}|}\dfrac{G_\bl(\bsv)}{G_\bl^\ast(\bsv)}W_{\bl}(\bsx^\bullet|\bsu,\bsv,\bsw)
\end{equation}
Note that the reversal symmetry exchanges the specializations $\bsw=\bsw^{(I)}$ and $\bsw=\bsw^{(II)}$.
\end{corollary}

\begin{proof}
This can be shown inductively from the algebraic relations of Proposition~\ref{prop:id_Whitt_r}, and the expression of $W_{\vac,\dots,\vac}(\bsx^\bullet|\bsu,\bsv,\bsw)$.
\end{proof}

In general, the relation between Whittaker vectors and the reproducing kernel seems lost in this context, and the vectors do not seem to possess a factorized expression. However, in the particular case $\bl=(\l^{(1)},\vac,\dots,\vac)$, and upon specialization of the internal weights to $\bsw=\bsw^{(II)}$, we do find a factorized expression
\begin{align}
\begin{split}
 & W_{\l^{(1)},\vac,\dots,\vac}(\bsx^\bullet|\bsu,\bsv,\bsw^{(II)})\\
 = &(-\g u_1 v_1)^{|\l^{(1)}|}\prod_{\sAbox\in\l^{(1)}}\prod_{\b=2}^r\left(1-q_3v_\b/(v_1\chi_\sAbox)\right) \Pi_{\mathrm{Z}}(v_r\sp_{\vac},v_{r-1}\sp_\vac,\dots,v_2\sp_\vac, v_1\sp_{\l^{(1)}}|\bsx^{(1)},\dots,\bsx^{(r)}).
\end{split}
\end{align}

In the particular case $\l^{(1)}=\vac$, this expression reduces to
\begin{equation}
 W_{\vac,\dots,\vac}(\bsx^\bullet|\bsu,\bsv,\bsw^{(II)})
 = \mathe^{-\sum_{k>0}\frac1{k(1-q_2^k)}\sum_{\a=1}^r p_k(\bsx^{(\a)})\left(v_\a^k+(1-q_3^k)
 \sum_{\b=\a+1}^r v_\b^k\right)}.
\end{equation}

\begin{remark}
Let us define an analogue of the Whittaker module by
\begin{equation}
\CW=\Span\{ W_\bl(\bsx^\bullet|\bsu,\bsv,\bsw),\quad \bl\in\CP^{\otimes r}\}
\end{equation}
where $\CP^{\otimes r}$ is the set of $r$-tuple partitions.
Let $\CB^{\perp,+}$ be the Borel subalgebra associated to $\D^\perp=(\CS^{-1}\otimes\CS^{-1})\circ\D\circ\CS$, and generated by $\{c,\bc,b_k,y_l^-\}$ for $k>0$ and $l\in\mZ$ (recall that $b_k=\CS(a_k)$ and $y_l^-=\CS(x_l^-)$). Let $\CB^{\perp,+}_{-1}$ be the subalgebra generated by $\{c,\bc,a_k, x_0^+,x_1^-\}$ for $k>0$. Note that this subalgebra contains the generators $y_l^-$ for $l\geq -1$ only. The formulas \eqref{id_Whitt_r} define an action $\tilde{\rho}$ of $\CB^{\perp,+}_{-1}$ on $\CW$, with the generators acting on the $r$-tuple partitions $\bl$,
\begin{equation}
\left.\rho_{\bsu'}^{(r,\bsn_0)}(b)\right|_{\bsx^\bullet} W_\bl(\bsx^\bullet|\bsu,\bsv,\bsw) = \left.\tilde{\rho}(b)\right|_\bl W_\bl(\bsx^\bullet|\bsu,\bsv,\bsw) ,\quad b\in \CB^{\perp,+}_{-1}.
\end{equation}
Apart from an extra shift for $x_1^-$, this action coincides with the vertical representation $\rho_\bsv^{(0,r)}(b)$, which seems to be a manifestation of bispectral duality. It would be interesting to understand what is the maximal subalgebra for which this identity holds, and if it contains the Borel subalgebra $\CB^{\perp,+}$.
\end{remark}

\subsubsection{The operator \texorpdfstring{$\mathring{\GHT}$}{V}} The algebraic definition \eqref{def_tXi} of the operator $\mathring{\GHT}$ can be extended to level $r>1$ using the vacuum component of the higher vertex operators defined in \eqref{def_Phi_rn},
\begin{equation}
\mathring{\GHT}[\bsu,\bsv,\bsw]:=\rho_{\bsu}^{(r,\bsn_0)}(F^\perp)\Phi_\vac^{(r,\bsn_0)\ast}[\bsu,\bsv,\bsw]\rho_{\bsu'}^{(r,\bsn_0)}(F^\perp):\CF_{\bsu'}^{(r,\bsn_0)}\to\CF_{\bsu}^{(r,\bsn_0)}
\end{equation}
where the weights $\bsu'$ are fixed through the relations \eqref{rel_weights}.
Similarly to its level one counterpart, this operator obeys the following algebraic relations,
\be
\label{rel_Xi_r}
\begin{aligned}
&\rho_{\bsu}^{(r,\bsn_0)}(a_k){\mathring{\GHT}}[\bsu,\bsv,\bsw]=\g^{-rk/2}{\mathring{\GHT}}[\bsu,\bsv,\bsw]\rho_{\bsu'}^{(r,\bsn_0)}(b_k),\quad (k>0)\\
&\rho_{\bsu}^{(r,\bsn_0)}(x_0^+){\mathring{\GHT}}[\bsu,\bsv,\bsw]=-\g^{r/2}(\g-\g^{-1})^{-1}{\mathring{\GHT}}[\bsu,\bsv,\bsw]\rho_{\bsu'}^{(r,\bsn_0)}(a_{-1}),\\
&\rho_{\bsu}^{(r,\bsn_0)}(x_1^-){\mathring{\GHT}}[\bsu,\bsv,\bsw]=-\sum_{\a=1}^r(\g v_\a){\mathring{\GHT}}[\bsu,\bsv,\bsw] -\g^{-r}{\mathring{\GHT}}[\bsu,\bsv,\bsw]\rho_{\bsu'}^{(r,\bsn_0)}(x_{1}^+).
\end{aligned}
\ee
The proof of the first two relations is a simple extension of the one given for $r=1$, and does not present any difficulty. The third relation is new, it follows from the transformation property of $x_1^-$ under $\CT^\perp$, and the relation \eqref{ak_intw_vac}.

\subsubsection{GHT identity}
Let us introduce the vector
\begin{equation}
 \Whit_\bl(\bsx^\bullet|\bsu,\bsv,\bsw)
 :={\mathring{\GHT}}[\bsu,\bsv,\bsw]P_\bl(\bsx^\bullet|\bsu')\in\CF_{\bsu}^{(r,\bsn_0)},
\end{equation}
with $\bsu'$ determined by $\bsu,\bsv,\bsw$ to be the expression \eqref{rel_weights}.
The goal is to show that, for a specific choice of weights, the vector $\Whit_\bl$ is equal to the generalized Whittaker vector $W_\bl$. First, we observe the following.
\begin{proposition}
We have the algebraic relations
\be
\begin{aligned}
(i)\quad &\rho_{\bsu}^{(r,\bsn_0)}(a_k)\Whit_\bl(\bsx^\bullet|\bsu,\bsv,\bsw)=\frac{\g^{-rk/2}}{k}(\g^k-\g^{-k})\left(\sum_{\a=1}^r p_k(u'_\a\me_{\l^{(\a)}})\right) \Whit_{\bl}(\bsx^\bullet|\bsu,\bsv,\bsw)\,,\quad(k>0)\,,\\
(ii)\quad &\rho_{\bsu}^{(r,\bsn_0)}(x_k^+)\Whit_\bl(\bsx^\bullet|\bsu,\bsv,\bsw)\\
&=-\g(1-q_1)\sum_{\a=1}^r\sum_{\sAbox\in A(\l^{(\a)})}(-\g^{-1}u'_\a\chi_\sAbox)^k \psi_{\l^{(\a)}}(\Abox)\prod_{\b=\a+1}^r\Psi_{\l^{(\b)}}\left(\frac{u'_\a}{u'_\b}\chi_\sAbox\right)\ \Whit_{\bl+\sAbox}(\bsx^\bullet|\bsu,\bsv,\bsw)\,,\quad (k\geq0)\,,\\
(iii)\quad &\rho_{\bsu'}^{(r,\bsn_0)}(x_k^-)\Whit_\bl(\bsx^\bullet|\bsu,\bsv,\bsw)\\
&=-\g(1-q_2)\sum_{\a=1}^r\sum_{\sAbox\in R(\l^{(\a)})}(-\g^{-(r+1)}u'_\a\chi_{\sAbox})^k \psi_{\l^{(\a)}}^\ast(\Abox) \prod_{\b=1}^{\a-1}\Psi_{\l^{(\b)}}\left(\frac{u'_\a}{u'_\b}\chi_\sAbox\right)\Whit_{\bl-\sAbox}(\bsx^\bullet|\bsu,\bsv,\bsw)\,\\
 &-\d_{k,1}\left(\sum_{\a=1}^r (\g v_\a)\right) \Whit_{\bl}(\bsx^\bullet|\bsu,\bsv,\bsw)\,,\quad (k> 0)\,.
\end{aligned}
\ee
\end{proposition}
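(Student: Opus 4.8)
The plan is to establish the three relations by transporting the algebraic data of the generalized Macdonald functions through the exchange relations \eqref{rel_Xi_r} of the operator $\mathring{\GHT}[\bsu,\bsv,\bsw]$, mirroring exactly the level-one argument. Three ingredients will be combined throughout: the exchange relations \eqref{rel_Xi_r}; the action of the generators $x_0^\pm$ and $a_{\pm1}$ on $P_\bl(\bsx^\bullet|\bsu')$, namely the eigenvalue \eqref{eq:gmp-eigenvectors-eq} and the Pieri rules \eqref{apm_GMP}, together with the eigenvalue of the commuting family $b_k$ generalizing \eqref{eigen_bk}; and the relations \eqref{com_ak} expressing the higher modes $x_k^\pm$ as iterated brackets of $a_k$ with $x_0^\pm$ (resp.\ $x_1^-$). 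Relation $(i)$ is then immediate: applying the first line of \eqref{rel_Xi_r} at $k=1$ gives $\rho_\bsu^{(r,\bsn_0)}(a_1)\Whit_\bl=\g^{-r/2}\mathring{\GHT}\,\rho_{\bsu'}^{(r,\bsn_0)}(b_1)P_\bl$, and since $b_1=(\g-\g^{-1})x_0^+$ acts on $P_\bl$ with eigenvalue $(\g-\g^{-1})\sum_\a u'_\a\me_{\l^{(\a)}}$, the stated eigenvalue follows.

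For relation $(ii)$ I would first treat $k=0$ by feeding the Pieri rule \eqref{apm_GMP} for $a_{-1}$ into the second line of \eqref{rel_Xi_r}; the prefactors $-\g^{r/2}(\g-\g^{-1})^{-1}$ and $\g^{1-r/2}(\g-\g^{-1})(1-q_1)$ collapse to $-\g(1-q_1)$, reproducing $(ii)$ at $k=0$. To reach general $k\ge0$ I would extend $(i)$ to all $k>0$, using the first line of \eqref{rel_Xi_r} together with the $b_k$-eigenvalue $\tfrac{(-1)^{k-1}}{k}(\g^k-\g^{-k})\sum_\a (u'_\a)^k p_k(\me_{\l^{(\a)}})$, so that every $a_k$ is diagonal on $\Whit_\bl$, while $x_0^+$ is box-adding by the $k=0$ case. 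Writing $x_k^+=\g^{rk/2}c_k^{-1}[a_k,x_0^+]$ from \eqref{com_ak} (with $c=r$) and evaluating the bracket on $\Whit_\bl$, the diagonal factor produces the difference $p_k(\me_{\l^{(\a)}+\sAbox})-p_k(\me_{\l^{(\a)}})=-(1-q_1^k)(1-q_2^k)\chi_\sAbox^k$; the resulting scalar simplifies via $\tfrac{\g^k-\g^{-k}}{1-q_3^k}=-\g^{-k}$ (recall $q_3=\g^2$) and $c_k=-\tfrac1k(1-q_1^k)(1-q_2^k)(1-q_3^k)$ to exactly the weight $(-\g^{-1}u'_\a\chi_\sAbox)^k$ displayed in $(ii)$.

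Relation $(iii)$ is the new and most delicate point, and is where I expect the main difficulty. The seed is the third line of \eqref{rel_Xi_r}, applied at $k=1$: it expresses $\rho(x_1^-)\Whit_\bl$ as the shift $-\sum_\a(\g v_\a)\Whit_\bl$ plus $-\g^{-r}\mathring{\GHT}\,\rho_{\bsu'}^{(r,\bsn_0)}(x_1^+)P_\bl$. The key observation is that on the Macdonald side $x_1^+$ is a \emph{lowering} operator: writing $x_1^+=\g^{r/2}c_1^{-1}[a_1,x_0^+]$ and using the dual Pieri rule for $a_1$ with $\me_{\l^{(\a)}}-\me_{\l^{(\a)}-\sAbox}=-(1-q_1)(1-q_2)\chi_\sAbox$, the bracket produces precisely the box-removing sum over $R(\l^{(\a)})$ with coefficient $\psi^\ast_{\l^{(\a)}}(\sAbox)\prod_{\b<\a}\Psi_{\l^{(\b)}}(u'_\a\chi_\sAbox/u'_\b)$; after $\mathring{\GHT}$ this becomes the $k=1$ instance of $(iii)$, the $\g$-bookkeeping collapsing via $\g(\g-\g^{-1})=-(1-q_3)$ to the prefactor $-\g^{-(r+1)}u'_\a\chi_\sAbox$. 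For $k\ge2$ I would build the higher modes from the seed through $x_k^-=-\g^{r(k-1)/2}c_{k-1}^{-1}[a_{k-1},x_1^-]$ from \eqref{com_ak}; since every $a_m$ with $m>0$ is diagonal on $\Whit_\bl$, the constant shift drops out of the commutator (hence the $\d_{k,1}$), and the diagonal eigenvalue differences assemble the $k$-th power $(-\g^{-(r+1)}u'_\a\chi_\sAbox)^k$ by the same simplification as in $(ii)$.

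The principal obstacle is organizing this last step cleanly. One must verify that the $a_k$-eigenvalues on $\Whit_\bl$ are genuinely diagonal for all $k>0$, which requires the generalized $b_k$-spectrum; I would justify this from the horizontal/vertical isomorphism of Section~\ref{sec:iso-h-v}, where $\CS(a_k)=b_k$ matches the diagonal vertical action of $a_k$ under the weight identification $u_\a=-\g v_\a$. One must also check that the accumulated powers of $\g$, the coefficients $c_k$, and the factors $\Psi_{\l^{(\b)}}$ recombine exactly as stated. This is an essentially bookkeeping verification, but a delicate one: the three conventions $q_3=\g^2$, the level factors $\g^{-r(\cdot)}$, and the signs from \eqref{com_ak} must all be tracked simultaneously, and the single discrepancy between the $\rho_\bsu$ appearing in the exchange relation and the $\rho_{\bsu'}$ of the statement must be reconciled using that the modes $a_k$ are weight-independent.
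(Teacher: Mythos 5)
Your proposal is correct and follows essentially the same route as the paper: combine the exchange relations \eqref{rel_Xi_r} with the eigenvalue equation and Pieri rules for the generalized Macdonald functions, then generate the higher modes $x_k^\pm$ by commutators with the (diagonal) Cartan modes, the constant shift in $x_1^-$ dropping out of the commutator and producing the $\d_{k,1}$ term. The only cosmetic differences are that the paper iterates $\ad_{a_1}$ via $x_k^+=c_1^{-k}\g^{kc/2}(\ad_{a_1})^k x_0^+$ rather than using a single bracket with $a_k$ (which, as you note, requires the level-$r$ extension of the $b_k$ spectrum justified by Section~\ref{sec:iso-h-v}), and that your seed relation for $x_k^-$ should carry $\g^{-r(k-1)/2}$ rather than $\g^{r(k-1)/2}$, a sign slip in the exponent that does not affect the structure of the argument.
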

\begin{proof}
The proof for $x_0^+$ and $a_k$ simply follows from \eqref{rel_Xi_r}, combined with the properties of the generalized Macdonald symmetric functions. For $k>0$ in ($ii$), we use $x_k^+=c_1^{-k}\g^{k c/2}(\ad_{a_1})^k x_0^+$. For the third relation, we need
\begin{equation}
\rho_\bsu^{(r,\bsn_0)}(x_1^+)P_\bl(\bsx^\bullet|\bsu)=-(1-q_2)\sum_{\a=1}^r\sum_{\sAbox\in R(\l^{(\a)})}(u_\a\chi_{\sAbox}) \psi_{\l^{(\a)}}^\ast(\Abox) \prod_{\b=1}^{\a-1}\Psi_{\l^{(\b)}}\left(\frac{u_\a}{u_\b}\chi_\sAbox\right)P_{\bl-\sAbox}(\bsx^\bullet|\bsu).
\end{equation}
The proof for modes $x_k^-$ goes along the same lines.
\end{proof}
At this stage, we can already compare the algebraic relations satisfied by $\Whit_\bl(\bsx^\bullet|\mathring{\bsu},\mathring{\bsv},\mathring{\bsw})$ and $W_\bl(\bsx^\bullet|\bsu,\bsv,\bsw)$. They coincide if we identify the weights as follows,
\begin{equation}\label{cond_weights}
v_\a=-\g^{-1}\mathring{u}_\a',\quad u_\a=-\g^{-1}\mathring{v}_\a^{-1},\quad u'_\a=A \mathring{u}'_\a,
\end{equation}
for some $A\in\mC$. In general, the relation \eqref{rel_weights} that determines $\mathring{\bsu}'$ from the other weights is relatively complicated, but it reduces to $\mathring{u}'_\a=-\g^{r+2-2\a}\mathring{u}_\a \mathring{v}_\a$ if we specialize the internal weights to $\mathring{\bsw}=\bsw^{(I)}$.\footnote{The other specialization $\mathring{\bsw}=\bsw^{(II)}$ corresponds to the reflection $\mathring{v}_\a\to \mathring{v}_{r+1-\a}$.} With this specialization, we find
\be
\begin{aligned}
 \Whit_{\vac,\dots,\vac}(\bsx^\bullet|\mathring{\bsu},\mathring{\bsv},\bsw^{(I)})
 &= \rho_{\mathring{\bsu}}^{(r,0)}(F^\perp)
 \exp\left(\sum_{k>0}\frac{\g^{rk}}{k(1-q_2^k)}\sum_{\a=1}^r
 q_3^{-(\a-1)k}\mathring{v}_\a^k p_k(\bsx^{(\a)})\right) \\
 &=\exp\left(-\sum_{k>0}\frac{1}{k(1-q_2^k)}\sum_{\a=1}^r p_k(\bsx^{(\a)})
 \Big((\mathring{u}''_\a)^k+(1-q_3^k)\sum_{\b=\a+1}^r(\mathring{u}''_\b)^k\Big)\right)
\end{aligned}
\ee
using the conjectured formula \eqref{nabla_r} for the action of $F^\perp$, and denoting $\mathring{u}''_\a=\g^{r-2\a+1}\mathring{u}_\a\mathring{v}_\a$. The comparison with $W_\vac(\bsx^\bullet|\bsu,\bsv,\bsw^{(I)})$ leads to the identification $v_\a=\mathring{u}''_\a=\g^{r-2\a+1}\mathring{u}_\a\mathring{v}_\a$, which is indeed consistent with the previous one. After specialization of the internal weights, we have $u'_\a=-\g^{2\a-r}u_\a v_\a$ and $\mathring{u}_\a'=-\g^{r+2-2\a}\mathring{u}_\a\mathring{v}_\a$. The conditions on the weights \eqref{cond_weights} fixes the vertical weights $\mathring{v}_\a=-A^{-1}\g^{-r-2+2\a}$, and we will choose $A=1$ so that $\mathring{u}_\a'=\mathring{u}_\a$. We conclude that
\begin{equation}\label{eq:GHT_r}
W_\bl(\bsx^\bullet|\bsu,\bsv,\bsw^{(II)})=\mathring{W}(\bsx^\bullet|\mathring{\bsu},\mathring{\bsv},\bsw^{(I)}),\quad \mathring{v}_\a=-\g^{-r-2+2\a},\quad u_\a=\g^{r+1-2\a},\quad v_\a=-\g^{-1}\mathring{u}_\a.
\end{equation}
We end up with only one set of free parameters corresponding to $\mathring{u}_\a=-\g v_\a$.

\paragraph{Summary.} The equation \eqref{eq:GHT_r} can already be interpreted as a higher level generalization of the GHT identity. It is possible to rewrite it in a closer form to the original formula by introducing the operator $\GHT(\bsv)$ defined as a specialization of the operator ${\mathring{\GHT}}[\mathring{\bsu},\mathring{\bsv},\mathring{\bsw}]t^{-L_0}$ to the weights $\mathring{\bsw}=\bsw^{(I)}$, $\mathring{v}_\a=-\g^{-r-2+2\a}$ and $\mathring{u}_\a=-\g v_\a$. An explicit expression for this operator is obtained by combining the expression \eqref{Phi_vac_st} of the vertex operator, and the representation \eqref{rep_framing} of the framing operator,
\be
\label{eq:GHTV(v)}
 \GHT(\bsv) := \nabla(\bsv)\,
 \exp\left(\sum_{k>0}\frac{(-1)^k}{k(1-q^k)}\sum_{\a=1}^r p_k^{(\a)}\right)
 t^{-L_0}
 \exp\left(\sum_{k>0}\frac{(-1)^k}{1-t^k}\sum_{\a=1}^r q_3^{(\a-1)k}\frac{\p}{\p p_k^{(\a)}}\right)
 \nabla(\bsv)
\ee
Using that $\nabla(\bsv)$ is self-adjoint w.r.t.\ the inner product $\langle\,,\,\rangle_\mathrm{Z}$, together with \eqref{adj_Z}, we immediately obtain that the operator $\GHT(\bsv)$ is also self-adjoint, i.e.\ $\GHT(\bsv)^\dagger=\GHT(\bsv)$.

Similarly, we modify the normalization of the generalized Whittaker vector, and consider its specialization to the weights $u_\a=\g^{r+1-2\a}$, in order to define
\begin{equation}\label{def_Wbl}
W_\bl(\bsx^\bullet|\bsv):=(-1)^{|\bl|}\g^{-|\bl|-\sum_\a(r-\a)|\l^{(\a)}|}G_\bl^\ast(\bsv)\left.\Phi_\bl^{(r,\bsn_{-1})}[\bsu,\bsv,\bsw^{(II)}]\ket{\vac}\right|_{u_\a\to\g^{r+1-2\a}}.
\end{equation}
This new normalization is adapted to the spherical generalized Macdonald basis. Using these two objects, we can formulate the main result of this section.

\begin{proposition}
\label{prop:GHTr}
The level-$r$ generalization of the GHT identity reads
\be
\label{eq:GHTr}
 \GHT(\bsv)\tP_\bl(\bsx^\bullet|\bsv) = W_\bl(\bsx^\bullet|\bsv),
\ee
and the r.h.s. satisfies the relations:\footnote{To simplify this expression, we used the properties $\rho_{a\bsu}^{(r,\bsn_0)}(a_k)=\rho_{\bsu}^{(r,\bsn_0)}(a_k)$ and $\rho_{a\bsu}^{(r,\bsn_0)}(x_k^\pm)=a^{\pm1}\rho_{\bsu}^{(r,\bsn_0)}(x_k^\pm)$ of the horizontal Fock representation.}
\begin{align}
\begin{split}
(i)\quad &\rho_{\bsv}^{(r,\bsn_0)}(a_k)W_\bl(\bsx^\bullet|\bsv)= -\dfrac{\g^{-rk/2}}{k}(\g^k-\g^{-k})\left(\sum_{\a=1}^r (\g v_\a)^kp_k(\me_{\l^{(\a)}}) \right)W_\bl(\bsx^\bullet|\bsv), \quad (k>0),\\
(ii)\quad &\rho_{\bsv}^{(r,\bsn_0)}(x_k^+)W_\bl(\bsx^\bullet|\bsv)\\
&=\sum_{\a=1}^r\sum_{\sAbox\in A(\l^{(\a)})}(v_\a\chi_\sAbox)^k r_{\l^{(\a)}}(\Abox)\prod_{\b=\a+1}^r\Psi_{\l^{(\b)}}\left(\frac{v_\a}{v_\b}\chi_\sAbox\right) W_{\bl+\sAbox}(\bsx^\bullet|\bsv),\quad (k\geq0)\\
(iii)\quad &\rho_{\bsv}^{(r,\bsn_0)}(x_k^-)W_\bl(\bsx^\bullet|\bsv)=\sum_{\sAbox\in R(\bl)}(\g^{-r}v_{\sAbox}\chi_{\sAbox})^k r_{\l^{(\a)}}^\ast(\Abox) \prod_{\b=1}^{\a-1}\Psi_{\l^{(\b)}}\left(\frac{v_\a}{v_\b}\chi_\sAbox\right) W_{\bl-\sAbox}(\bsx^\bullet|\bsu,\bsv,\bsw)\,\\
 &-\d_{k,1}\g\dfrac{\g^r-\g^{-r}}{\g-\g^{-1}}W_\bl(\bsx^\bullet|\bsv)\,,\quad (k> 0)\,.
 \end{split}
\end{align}
\end{proposition}

\paragraph{Example.} In this example, we give an explicit expression for the Whittaker vectors at level $r=2$. The starting point is the construction of the vertex operators presented in Appendix~\ref{sec_higher_VO},
\begin{align}
\begin{split}
&\Phi^{(2,n_{-1})}[\bsu,\bsv,\bsw^{(II)}]\\
=&\sum_{\l,\mu}a_\l a_\mu \dbra{\l}\otimes\dbra{\mu}\otimes\sum_{\nu\subseteq\mu} a_\nu \Phi_\l^{(1,-1)}[\g^{-1}u_1,v_1]\Phi_\nu^{(1,-1)\ast}[\g^{-1}u_1,\g v_2]\Phi_\mu^{(1,-1)}[u_1,v_2]\otimes\Phi_\nu^{(1,-1)}[u_2,\g v_2].
\end{split}
\end{align}
We note that the sum over partitions is restricted to $\nu\subseteq\mu$ due to the specialization $\bsw=\bsw^{(II)}$. Using the relation $a_\l a_\mu\dbra{\l}\otimes\dbra{\mu}=G_{\l,\mu}(\bsv)^{-1}a_{\l,\mu}(\bsv)\dbra{\l,\mu}$, we extract the vertical component,
\begin{equation}
\Phi_{\l,\mu}^{(2,n_{-1})}[\bsu,\bsv,\bsw^{(II)}]=G_{\l,\mu}(\bsv)^{-1}\sum_{\nu\subseteq\mu} a_\nu \Phi_\l^{(1,-1)}[\g^{-1}u_1,v_1]\Phi_\nu^{(1,-1)\ast}[\g^{-1}u_1,\g v_2]\Phi_\mu^{(1,-1)}[u_1,v_2]\otimes\Phi_\nu^{(1,-1)}[u_2,\g v_2]
\end{equation}
After normal ordering, and extracting the normalization factors of the vertex operators, we find\footnote{In this derivation, we ignore the normalization factor of the type $\CG$-functions that will ultimately be canceled.}
\begin{equation}
\Phi_{\l,\mu}^{(2,n_{-1})}[\bsu,\bsv,\bsw^{(II)}]=(-u_1v_1)^{|\l|}(-u_1 v_2)^{|\mu|}\dfrac{\tN_{\mu,\l}(v_2/v_1)}{N_{\mu,\l}(v_2/v_1)}\sum_{\nu\subseteq\mu} \dfrac{a_\nu}{g_\nu}\left(-\g\dfrac{u_2}{u_1}\right)^{|\nu|}N_{\nu,\l}(q_3 v_2/v_1)N_{\mu,\nu}(1)\CO_{\l,\mu,\nu}(\bsv)
\end{equation}
where $\CO_{\l,\mu,\nu}(\bsv)$ is an operator normalized to one. Acting on the vacuum, it produces
\begin{equation}
\CO_{\l,\mu,\nu}(\bsv)\ket{\vac}=\exp\left(\sum_{k>0}\frac1k\frac{1-t^k}{1-q^k}\Big(
 \left(p_k(v_1\sp_\l)+p_k(v_2\sp_\mu)-q_3^kp_k(v_2\sp_\nu)\right)p_k(\bsx)
 +p_k(v_2\sp_\nu)p_k(\bsy)
 \Big)\right)
\end{equation}
Taking into account the rescaling by a factor $(-\g)^{-|\l|-|\mu|}\g^{-|\l|}G_{\l,\mu}^\ast(\bsv)$ in \eqref{def_Wbl}, and the specialization $u_1=\g$, $u_2=\g^{-1}$, we find\footnote{Simplification of the prefactor follows from
\begin{equation}
\dfrac{\tN_{\mu,\l}(v_2/v_1)}{\tN_{\l,\mu}(v_1/v_2)N_{\mu,\l}(v_2/v_1)}=\dfrac{q_3^{|\mu|}}{N_{\mu,\l}(q_3v_2/v_1)}.
\end{equation}}
\begin{multline}
 W_{\l,\mu}(\bsx,\bsy|v_1,v_2)
 = v_1^{|\l|}(q_3v_2)^{|\mu|}
 \sum_{\nu\subseteq\mu} q_3^{-{|\nu|}}
 \frac{N_{\mu,\nu}(1)}{N_{\nu,\nu}(1)}
 \frac{N_{\nu,\l}(q_3v_2/v_1)}{N_{\mu,\l}(q_3v_2/v_1)} \\
 \times\exp\left(\sum_{k>0}\frac1k\frac{1-t^k}{1-q^k}\Big(
 \left(p_k(v_1\sp_\l)+p_k(v_2\sp_\mu)-q_3^kp_k(v_2\sp_\nu)\right)p_k(\bsx)
 +p_k(v_2\sp_\nu)p_k(\bsy)
 \Big)\right).
\end{multline}
The GHT identity takes the form
\be
 W_{\l,\mu}(\bsx,\bsy|v_1,v_2)=\GHT(\bsv)\tilde{P}_{\l,\mu}(\bsx,\bsy|v_1,v_2),
\ee
with
\be
 \GHT(\bsv) = \nabla(\bsv)\, \mathe^{\sum_{k>0}\frac{(-1)^k}{k(1-q^k)}(p_k(\bsx)+p_k(\bsy))}
 t^{-L_0}\mathe^{\sum_{k>0}\frac{(-1)^k}{1-t^k}\left(\frac{\p}{\p p_k(\bsx)}+q_3^k\frac{\p}{\p p_k(\bsy)}\right)}\nabla(\bsv).
\ee

\paragraph{Specialization to $r=1$.} For $r=1$, the spherical Macdonald function no longer depends on the weight $v$, i.e. $\tP_\l(\bsx|v)=\tP_\l(\bsx)$. However, the nabla operator still depends on $v$, as $\nabla(v)=v^{L_0}\nabla$, and so $\GHT(v)=v^{L_0}\GHT v^{L_0}$. Thus, the GHT formula \eqref{eq:GHTr} becomes
\begin{equation}
v^{-|\l|} v^{-L_0} W_\l(\bsx|v)=\GHT \tP_\l(\bsx).
\end{equation}
To relate this expression to the one derived in the previous subsection, we examine the Whittaker vector obtained here from \eqref{def_Wbl} as
\begin{equation}
W_\l(\bsx|v)=(-\g^{-1})^{|\l|}\Phi_\l^{(1,-1)}[1,v]\ket{\vac}= v^{|\l|}\Pi(\bsx|v\e_\l).
\end{equation}
Thus, by defining $W_\l(\bsx)=v^{-|\l|} v^{-L_0} W_\l(\bsx|v)=\Pi(\bsx|\e_\l)$ we recover indeed the original GHT formula. 

\subsubsection{Inductive construction of the generalized Whittaker vector}

The generalized Whittaker vector defined in \eqref{def_Wbl} satisfies certain recursion relations corresponding to the decomposition of the vertex operator $\Phi^{(r,\bsn_{-1})}_\bl$ discussed in Appendix~\ref{app:B1}. In this subsection, we use the other specialization of the internal weights, which allows us to reverse the order of the partitions $\l^{(\a)}$ in $\bl$ (see Corollary \ref{corollary_I}).

The first type of inductive construction of the Whittaker vector is a consequence of the decomposition in \eqref{rec_I}, where the vertex operator is cut along the vertical legs. This allows to reconstruct the Whittaker vector by gluing horizontal strips of the form
\begin{center}
 \begin{tikzpicture}[
  scale=0.75,
  baseline=(current bounding box.center),
  arrow inside/.style = {
    postaction={decorate},
    decoration={markings, mark=at position 0.5 with {\arrow{stealth}}}
  }
 ]
  \draw[arrow inside] (-3,2) node [left] {$\emptyset$} -- (-2,1);
  \draw[arrow inside] (-2,1) -- (-1,1);
  \draw[arrow inside] (-1,1) -- (0,0);
  \draw[arrow inside] (0,0) -- (1,0);
  \draw[arrow inside] (1,0) -- (2,-1);
  \draw[arrow inside] (2,-1) -- (3,-1) node [right] {};
  \draw[arrow inside] (-1,1) -- (-1,2) node [above] {$\mu^{(1)}$};
  \draw[arrow inside] (1,0) -- (1,1) node [above] {$\mu^{(2)}$};
  \draw[arrow inside] (-2,0) node [below] {$\l^{(1)}$} -- (-2,1);
  \draw[arrow inside] (0,-1) node [below] {$\l^{(2)}$} -- (0,0);
  \draw[arrow inside] (2,-2) node [below] {$\l^{(3)}$} -- (2,-1);
 \end{tikzpicture}
\end{center}
where each strip has $r-1$ vertical legs on the top and $r$ vertical legs at the bottom. The left $(1,-1)$-leg is contracted with the vacuum and the right horizontal leg is kept ``open'' corresponding to the tensor factor $\L[\bsx^{(1)}]$. These are indeed the vertical matrix elements of the operators defined in \eqref{eq:opAuvv}, after a specialization of the weights.

Gluing strips of increasing length from the top produces generalized Whittaker vectors of higher level. In particular, we can write the first recursion formula
\begin{multline}
 W_{\l^{(1)},\dots,\l^{(r)}}(\bsx^{(1)},\dots,\bsx^{(r)}|u_1,\dots,u_r) \\
 = \sum_{\mu^{(1)},\dots,\mu^{(r-1)}}
 M_{\bl,\bmu}(\bsu)
 \exp\left(\sum_{k>0}\frac1k\frac{1-t^k}{1-q^k}p_k(\bsx^{(1)})
 \Big(\sum_{\a=1}^r p_k(u_\a\sp_{\l^{(\a)}})
 -q_3^k\sum_{\a=2}^r p_k(u_\a\sp_{\mu^{(\a-1)}}) \Big)\right)\\
 \times W_{\mu^{(1)},\dots,\mu^{(r-1)}}(\bsx^{(2)},\dots,\bsx^{(r)}|u_2,\dots,u_r),
\end{multline}
with coefficients
\be
 M_{\bl,\bmu}(\bsu) := u_1^{|\bl|-|\bmu|}
 \left[\prod_{\a=1}^r\prod_{\b=1}^{\a-1}
 \frac{N_{\mu^{(\b)},\l^{(\a)}}(q_3u_{\b+1}/u_\a)}
 {N_{\l^{(\b)},\l^{(\a)}}(u_{\b}/u_\a)}\right]
 \times
 \left[\prod_{\a=1}^{r-1}\prod_{\b=1}^{\a}
 \frac{N_{\l^{(\b)},\mu^{(\a)}}(u_\b/u_{\a+1})}
 {N_{\mu^{(\b)},\mu^{(\a)}}(q_3u_{\b+1}/u_{\a+1})}\right],
\ee
which vanish identically unless $\mu^{(\a)}\subseteq\l^{(\a+1)}$ for every $\a$.

The second type of inductive construction is a consequence of the decomposition in \eqref{rec_II}, where the vertex operator is cut along the horizontal legs. This allows to reconstruct the Whittaker vector by gluing vertical strips as in Figure~\ref{fig:Whittaker-recursion2}, each strip having $r-1$ horizontal legs on the left and $r$ on the right. The top $(1,-1)$-leg contracts with the vacuum, while the bottom vertical leg is labeled by one of the partitions in $\bl$.
\begin{figure}[ht!]
 \centering
 \begin{tblr}{
  colspec = {Q[c,m]Q[c,m]Q[c,m]Q[c,m]Q[c,m]Q[c,m]Q[c,m]Q[c,m]},
 }
 $W_{\bl}(\bsx^\bullet|\bsu)$ & $=$ &
 \begin{tikzpicture}[
  scale=0.75,
  baseline=(current bounding box.center),
  arrow inside/.style = {
    postaction={decorate},
    decoration={markings, mark=at position 0.5 with {\arrow{stealth}}}
  }
 ]
  \draw[arrow inside] (0,0) node [left] {$\vac$} -- (1,-1);
  \draw[arrow inside] (1,-1) -- (2,-1) node [right] {};
  \draw[arrow inside] (1,-2) node [below] {$\l^{(1)}$} -- (1,-1);
 \end{tikzpicture}
 & $\times$ &
 \begin{tikzpicture}[
  scale=0.75,
  baseline=(current bounding box.center),
  arrow inside/.style = {
    postaction={decorate},
    decoration={markings, mark=at position 0.5 with {\arrow{stealth}}}
  }
 ]
  \draw[arrow inside] (0,0) -- (1,-1);
  \draw[arrow inside] (1,-1) -- (2,-1) node [right] {};
  \draw[arrow inside] (1,-2) node [below] {$\l^{(2)}$} -- (1,-1);
  \draw[arrow inside] (0,0) -- (0,1);
  \draw[arrow inside] (0,1) -- (1,1) node [right] {};
  \draw[arrow inside] (-1,2) node [left] {$\vac$} -- (0,1);
  \draw[arrow inside] (-1,0) node [left] {} -- (0,0);
 \end{tikzpicture}
 & $\times$ &
 \begin{tikzpicture}[
  scale=0.75,
  baseline=(current bounding box.center),
  arrow inside/.style = {
    postaction={decorate},
    decoration={markings, mark=at position 0.5 with {\arrow{stealth}}}
  }
 ]
  \draw[arrow inside] (0,0) -- (1,-1);
  \draw[arrow inside] (1,-1) -- (2,-1) node [right] {};
  \draw[arrow inside] (1,-2) -- (1,-1);
  \draw[arrow inside] (0,0) -- (0,1);
  \draw[arrow inside] (0,1) -- (1,1) node [right] {};
  \draw[arrow inside] (-1,2) node [left] {$\vac$} -- (0,1);
  \draw[arrow inside] (-1,0) node [left] {} -- (0,0);
  \draw[arrow inside] (0,-2) node [left] {} -- (1,-2);
  \draw[arrow inside] (1,-2) -- (2,-3);
  \draw[arrow inside] (2,-3) -- (3,-3) node [right] {};
  \draw[arrow inside] (2,-4) node [below] {$\l^{(3)}$} -- (2,-3);
 \end{tikzpicture}
 & $\cdots$
 \end{tblr}
 \captionof{figure}{Inductive (horizontal) construction of the generalized Whittaker vector.}
 \label{fig:Whittaker-recursion2}
\end{figure}

Each such strip can be thought of as an operator $\mathcal{A}^{(k)}_{\l}(u_k)$ of the form
\be
 \mathcal{A}^{(r)}_{\l}(u_r) : \CF_{\g u_1,\dots,\g u_{r-1}}^{(r-1,\bsn_0)}
 \to \CF_{u_1,\dots,u_{r-1},u_r}^{(r,\bsn_0)}.
\ee
defined as
\be
\label{eq:Aoperators}
\begin{aligned}
 \mathcal{A}^{(r)}_{\nu^{(r)}}(u_r)
 :=&\, u_1^{|\nu^{(r)}|}
 \sum_{\nu^{(1)}\subseteq\nu^{(2)}\subseteq\dots\subseteq\nu^{(r-1)}\subseteq\nu^{(r)}}
 \left[\prod_{\a=1}^{r-1} \left(\frac{u_{r-\a+1}}{u_{r-\a}}\right)^{|\nu^{(\a)}|}
 \frac{N_{\nu^{(\a)},\nu^{(\a+1)}}(q_3)}{N_{\nu^{(\a)},\nu^{(\a)}}(q_3)}\right] \\
 &\times \exp\left(\sum_{k>0}\frac{u_r^k}{k}\frac{1-t^k}{1-q^k}\Big(\sum_{\a=1}^{r-1}
 \left(p_k(\sp_{\nu^{(r-\a+1)}})-q_3^kp_k(\sp_{\nu^{(r-\a)}})\right)p_k^{(\a)}
 +p_k(\sp_{\nu^{(1)}})p_k^{(r)}\Big)\right)\\
 &\times \exp\left(-\sum_{k>0}q_3^{-k}u_r^{-k}\sum_{\a=1}^{r-1}
 \left(p_k(\sp^\vee_{\nu^{(r-\a+1)}})-p_k(\sp^\vee_{\nu^{(r-\a)}})\right)
 \frac{\partial}{\partial p_k^{(\a)}}
 \right).
\end{aligned}
\ee
We can then write the second recursion relation
\be
 W_{\lam^{(1)},\dots,\l^{(r)}}(\bsx^{(1)},\dots,\bsx^{(r)}|u_1,\dots,u_r)
 = \mathcal{A}^{(r)}_{\l^{(r)}}(u_r)\cdot
 W_{\lam^{(1)},\dots,\l^{(r-1)}}(\bsx^{(1)},\dots,\bsx^{(r-1)}|u_1,\dots,u_{r-1}).
\ee
This allows us to decompose the generalized Whittaker vectors completely through the repeated action of the $\mathcal{A}$-operators as
\be
\label{eq:WhittakerToA}
 W_\bl(\bsx^\bullet|\bsu)
 = \mathcal{A}^{(r)}_{\l^{(r)}}(u_r)\cdots \mathcal{A}^{(1)}_{\l^{(1)}}(u_1)\cdot 1.
\ee

Using the GHT identity in Proposition~\ref{prop:GHTr}, the operator $\GHT(\bsu)$ maps generalized Whittaker vectors to generalized (spherical) Macdonald functions. Consequently, after conjugation by $\GHT(\bsu)$, the generalized Macdonald functions admit a recursive construction analogous to \eqref{eq:WhittakerToA}. In fact, if we define the formal creation operators
\be
\label{eq:Bcreation}
 \mathcal{B}^{(r)}_\l(u_r)
 := \GHT(u_1,\dots,u_{r-1},u_r)^{-1}
 \mathcal{A}^{(r)}_\l(u_r)
 \GHT(u_1,\dots,u_{r-1})\,
\ee
we can formally decompose the generalized Macdonald functions as
\be
 \tP_\bl(\bsx^\bullet|\bsu)
 = \mathcal{B}^{(r)}_{\l^{(r)}}(u_r)\cdots\mathcal{B}^{(1)}_{\l^{(1)}}(u_1)\cdot 1.
\ee
The definition in \eqref{eq:Bcreation}, however, appears to be of limited practical use. An open problem is to find an explicit formula for the $\mathcal{B}$-operators, similar in spirit to the construction of the creation operators given, for example, in \cite{Kirillov:1996qdi,Kirillov:1996aff,Lapointe:1998det,Dali:2024mac}.

\begin{remark}
For the $\mathcal{A}$-operator labeled by the empty partition, the sum in \eqref{eq:Aoperators} reduces to one term with all $\nu^{(\a)}=\vac$, and the derivative part of the operator becomes trivial. In this case, we obtain the simple expression
\be
 \mathcal{A}^{(r)}_\vac(u_r)
 = \exp\left(\sum_{k>0}\frac{u_r^k}{k}\frac{1-t^k}{1-q^k}p_k(\sp_\vac)
 \Big((1-q_3^k)\sum_{\a=1}^{r-1}p_k^{(\a)}+p_k^{(r)}\Big)\right).
\ee
\end{remark}

\subsection{Five-term relation}

While a five-term relation in higher level is expected on general grounds based on an analogous universal relation in the algebra \cite{Zenkevich:2021aus,Hu:2024pro}, it is worth giving an explicit proof of such an identity directly in the representation $\rho_\bsu^{(r,\bsn_0)}$ for the sake of completeness and to have explicit expressions in our conventions.
In this section, we follow the strategy of proof in \cite{Garsia:2018fiv,Romero:2025fiv}.

Define, for any $f\in\Lambda$, the diagonal operators $\D^\pm_\bsu[f]$ as\footnote{In the special case $f=p_1$, we have $\D^\pm_\bsu[p_1]=-\rho_\bsu^{(r,\bsn_0)}(x^\pm_0)/(1-t^{\pm1})$.}
\be
\begin{aligned}
 \D^+_\bsu[f] P_\blam(\bsx^\bullet|\bsu)
 &:= f\Big(\sum_{\a=1}^r u_\a\sp_{\lam^{(\a)}}\Big) P_\blam(\bsx^\bullet|\bsu) \\
 \D^-_\bsu[f] P_\blam(\bsx^\bullet|\bsu)
 &:= f\Big(\sum_{\a=1}^r u_\a^{-1}\sp_{\lam^{(\a)}}^{\vee}\Big) P_\blam(\bsx^\bullet|\bsu)
\end{aligned}
\ee
and the operator $\Delta(z|\bsu)$ as
\be
 \Delta(z|\bsu) P_\blam(\bsx^\bullet|\bsu)
 = \exp\left(-\sum_{k>0} \frac{z^k}{k} \sum_{\a=1}^r
 u_\a^k\, p_k(\chi_{\lam^{(\a)}})\right)
 P_\blam(\bsx^\bullet|\bsu)
\ee
where we used the plethysm $p_k(\chi_\lam)=\sum_{\sAbox\in\lam}\chi_\sAbox^k$.
We observe also that, using \eqref{expr_el} to write $\chi_\l=-\frac{t\sp_\l}{1-q}+\frac{1}{(1-q)(1-t^{-1})}$, we can equivalently write
\be
\label{eq:delta2epsilon}
 \Delta(z|\bsu)
 = \exp\left(\sum_{k>0}\frac{(zt)^k \D^+_\bsu[p_k]}{k(1-q^k)} \right)
 \prod_{\a=1}^r(u_\a z;q,t^{-1})_\infty
\ee
which shows that $\Delta(z|\bsu)$ is a special instance of the more general operator $\D^+_\bsu[f]$, for a specific choice of $f$.

\begin{proposition}
\label{prop:GHTcor}
For any $f\in\Lambda$, we have the relations
\be
 f\Big(\sum_\a \bsx^{(\a)}\Big)^\dagger\,\GHT(\bsu)
 = \GHT(\bsu)\,\D^+_\bsu[f]\,,
\ee
and
\be
 \GHT(\bsu)\,f\Big(\sum_\a\bsx^{(\a)}\Big)
 = \D^+_\bsu[f]\,\GHT(\bsu)\,,
\ee
where ${}^\dagger$ denotes the adjoint operator w.r.t.\ the inner product $\langle\,,\,\rangle_\mathrm{Z}$ in \eqref{eq:twisted-scalar-prod}, and $\GHT(\bsu)$ is the operator in \eqref{eq:GHTV(v)}.
\end{proposition}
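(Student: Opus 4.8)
The plan is to reduce both identities to the single generating case $f=p_k$ and then translate everything into toroidal generators, where the exchange relations \eqref{rel_Xi_r} do the work. First I would note that both assignments $f\mapsto f(\sum_\a\bsx^{(\a)})$ (as a multiplication operator) and $f\mapsto\D^+_\bsu[f]$ (as a diagonal operator on the GMP basis) are algebra homomorphisms from $\L$ into commutative algebras of operators; since the power sums $p_k$ generate $\L$ and both target families commute internally, it suffices to establish each relation for $f=p_k$, the general $f$ following by multiplicativity. Because $\GHT(\bsu)$ is self-adjoint with respect to $\langle\,,\,\rangle_\mathrm{Z}$, the second relation is obtained by transposing the first, so the two may be proved in parallel once the weight-reversal built into \eqref{eq:adjoint-r} is accounted for.

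The core of the argument is a dictionary between the multiplication operators and Cartan modes. Using the explicit action of $\rho_\bsu^{(r,\bsn_0)}(a_{-k})$, multiplication by $p_k(\sum_\a\bsx^{(\a)})=\sum_\a p_k^{(\a)}$ equals a fixed scalar times $\rho_\bsu^{(r,\bsn_0)}(a_{-k})$; by \eqref{adj_Z}, or equivalently by $\tilde{\s}_H(a_{-k})=-q_1^k a_k$, its $\langle\,,\,\rangle_\mathrm{Z}$-adjoint is the corresponding multiple of $\rho_\bsu^{(r,\bsn_0)}(a_k)$. I would then feed the positive mode into \eqref{rel_Xi_r}: the generator $\rho_\bsu^{(r,\bsn_0)}(a_k)$ passes through $\GHT(\bsu)$ and becomes $\g^{-rk/2}\GHT(\bsu)\,\rho_\bsu^{(r,\bsn_0)}(b_k)$, the factor $t^{-L_0}$ inside $\GHT(\bsu)$ commuting with $b_k$ since $b_k$ carries vanishing $d$-degree. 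The last ingredient is that $\rho_\bsu^{(r,\bsn_0)}(b_k)$ is diagonal on the GMP basis with eigenvalue proportional to $\sum_\a u_\a^k\,p_k(\me_{\l^{(\a)}})$, the level-$r$ analogue of \eqref{eigen_bk}, which one extracts from the eigenvalue relation (i) for the Whittaker vectors $W_\bl$ via the GHT identity \eqref{eq:GHTr}.

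To recognise this eigenvalue as $\D^+_\bsu[p_k]$ I would invoke the $\l$-ring identity $\sp_\l=\sp_\vac\,\me_\l$, which yields $p_k(\sp_\l)=p_k(\sp_\vac)\,p_k(\me_\l)$ with $p_k(\sp_\vac)=q_1^k/(1-q_1^k)$; thus $\sum_\a u_\a^k p_k(\me_{\l^{(\a)}})$ is a scalar multiple of the $\D^+_\bsu[p_k]$-eigenvalue $\sum_\a u_\a^k p_k(\sp_{\l^{(\a)}})$, so $\rho_\bsu^{(r,\bsn_0)}(b_k)$ and $\D^+_\bsu[p_k]$ coincide up to an explicit constant. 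Assembling the prefactors, all of which are monomials in $\g$, $q_1$ and $q_3$, then gives the first relation, and transposing with $\GHT(\bsu)^\dagger=\GHT(\bsu)$ gives the second. I expect the genuine obstacle to be entirely in the bookkeeping: one must track the weight data through the specialization that defines $\GHT(\bsu)$, so that the source and target weights in \eqref{rel_Xi_r} coincide and $\rho_{\bsu'}(b_k)=\rho_\bsu(b_k)$, and, above all, reconcile the reversal $\bsu\mapsto(u_r,\dots,u_1)$ that $\langle\,,\,\rangle_\mathrm{Z}$ forces on the transpose with the un-reversed $\D^+_\bsu[f]$ appearing in the statement; verifying that all scalar prefactors collapse to unity is the only delicate computation.
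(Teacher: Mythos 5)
Your proposal is correct and follows essentially the same route as the paper, which simply invokes the first exchange relation in \eqref{rel_Xi_r} for the first identity and self-adjointness of $\GHT(\bsu)$ for the second; your reduction to $f=p_k$, the dictionary $p_k(\sum_\a\bsx^{(\a)})\propto\rho_\bsu^{(r,\bsn_0)}(a_{-k})$ with $\tilde{\s}_H$-adjoint proportional to $\rho_\bsu^{(r,\bsn_0)}(a_k)$, and the scalar identification via $p_k(\sp_\vac)p_k(\me_\l)=p_k(\sp_\l)$ are exactly the details the paper leaves implicit, and they do collapse to the stated constants. The one adjustment I would make is to source the diagonality of $\rho_\bsu^{(r,\bsn_0)}(b_k)$ on the GMP basis from the horizontal--vertical isomorphism of Section~\ref{sec:iso-h-v} rather than from the Whittaker eigenvalue relation combined with \eqref{eq:GHTr}, since the latter rests on Conjecture~\ref{conj:BH} and would make the proposition only conditionally proved when it need not be.
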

\begin{proof}
The first relation follows from the first identity in \eqref{rel_Xi_r}, after the appropriate specialization of weights which reduces $\mathring{\GHT}[\mathring{\bsu},\mathring{\bsv},\mathring{\bsw}]t^{-L_0}$ to $\GHT(\bsu)$.
The second relation is obtained from the first by taking the adjoint and using that $\GHT(\bsu)^\dagger=\GHT(\bsu)$ as well as $\D^\pm_\bsu[f]^\dagger=\D^\pm_\bsu[f]$.
\end{proof}

The five-term relation (or pentagon identity), is an identity concerning the operator
\be
 \FTW(a,b|\bsu) := \sum_{i,j} a^i b^j \FTW_{i,j}(\bsu)
 =
 \Delta(b|\bsu)
 \exp\left(-\sum_{k>0}\frac{a^k\sum_\a p_k^{(\a)}}{k(1-q^k)}\right)
 \Delta(b|\bsu)^{-1}
 \exp\left(\sum_{k>0}\frac{a^k\sum_\a p_k^{(\a)}}{k(1-q^k)}\right),
\ee
and the main goal of this section is to show that the expansion of $\FTW(a,b|\bsu)$ in powers of $a,b$ is diagonal, i.e.\ that $\FTW(a,b|\bsu)=\sum_{i} (ab)^i \FTW_{i,i}(\bsu)$. Moreover, we will give an explicit formula for the diagonal elements $\FTW_{i,i}(\bsu)$ which can be resummed in closed form.

Following \cite{Garsia:2018fiv}, the strategy of proof is to separately show that the operators
\be
\label{eq:FTW1}
 \Delta(b|\bsu)
 \exp\left(-\sum_{k>0}\frac{a^k\sum_\a p_k^{(\a)}}{k(1-q^k)}\right)
 \Delta(b|\bsu)^{-1}
\ee
and
\be
\label{eq:FTW2}
 \exp\left(-\sum_{k>0}\frac{a^k\sum_\a p_k^{(\a)}}{k(1-q^k)}\right)
 \Delta(b|\bsu)^{-1}
 \exp\left(\sum_{k>0}\frac{a^k\sum_\a p_k^{(\a)}}{k(1-q^k)}\right)
\ee
are both triangular but in opposite directions. This in turn implies that $\FTW(a,b|\bsu)$ is diagonal. We start by analyzing the operator in \eqref{eq:FTW1} and expanding in powers of $a$ using the Cauchy identity \eqref{eq:PE},
\be
 \exp\left(-\sum_{k>0}\frac{a^k\sum_\a p_k^{(\a)}}{k(1-q^k)}\right)
 = \sum_{n=0}^\infty a^n s_{[n]}\left(-\frac{\sum_\a\bsx^{(\a)}}{1-q}\right),
\ee
where $s_{[n]}$ are the Schur symmetric functions, and in the r.h.s.\ we used the plethystic substitution
\be
 p_k\left(-\frac{\sum_\a\bsx^{(\a)}}{1-q}\right)
 =-\frac{\sum_\a p_k^{(\a)}}{1-q^k}.
\ee
\begin{proposition}
The operator
\be
\label{eq:operatorFT1}
 \Delta(b|\bsu)\,
 s_{[n]}\left(-\frac{\sum_\a \bsx^{(\a)}}{1-q}\right)
 \Delta(b|\bsu)^{-1}
\ee
is a polynomial in $b$ of degree $n$. Furthermore, the top degree in $b$ is given by
\be
 \oint\frac{\mathd b}{2\pi\mathi b} b^{-n}
 \Delta(b|\bsu)\,
 s_{[n]}\left(-\frac{\sum_\a \bsx^{(\a)}}{1-q}\right)
 \Delta(b|\bsu)^{-1}
 = \nabla(\bsu)\,
 (-1)^n s_{[n]}\left(-\frac{\sum_\a \bsx^{(\a)}}{1-q}\right)
 \nabla(\bsu)^{-1},
\ee
where the contour integral in $b$ is used to select the constant term in the Laurent expansion of the integrand. 
In particular, this means that $\FTW_{i,j}(\bsu) = 0$ for $i<j$, and for $i = j$, we have
\be
 \FTW_{i,i}(\bsu) = (-1)^i \nabla(\bsu)\,
 s_{[i]}\left(-\frac{\sum_\a \bsx^{(\a)}}{1-q}\right)
 \nabla(\bsu)^{-1}.
\ee
\end{proposition}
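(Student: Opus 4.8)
The plan is to work throughout in the generalized Macdonald basis, where both $\Delta(b|\bsu)$ and $\nabla(\bsu)$ act diagonally. First I would record their eigenvalues. Collapsing the plethysm $p_k(\chi_\blam)=\sum_{\sAbox\in\blam}\chi_\sAbox^k$ and resumming the geometric series over $k$ turns the definition of $\Delta(b|\bsu)$ into the product form
\be
 \Delta(b|\bsu)\,P_\blam(\bsx^\bullet|\bsu)=\Big(\prod_{\sAbox\in\blam}(1-b\,u_\sAbox\chi_\sAbox)\Big)P_\blam(\bsx^\bullet|\bsu),
\ee
a polynomial in $b$ of degree $|\blam|$ whose top coefficient is precisely $(-1)^{|\blam|}$ times the $\nabla(\bsu)$-eigenvalue $\prod_{\sAbox\in\blam}u_\sAbox\chi_\sAbox$ of \eqref{eq:def-nabla}. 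Writing $\CO_n(b):=\Delta(b|\bsu)\,s_{[n]}\!\left(-\tfrac{\sum_\a\bsx^{(\a)}}{1-q}\right)\Delta(b|\bsu)^{-1}$ for the operator in \eqref{eq:operatorFT1}, and denoting by $[s_{[n]}]_{\bmu,\blam}$ the matrix element of multiplication by $s_{[n]}\!\left(-\tfrac{\sum_\a\bsx^{(\a)}}{1-q}\right)$, which is homogeneous of total degree $n$ and hence raises $|\blam|$ by exactly $n$, the matrix element of $\CO_n(b)$ reads
\be
 [s_{[n]}]_{\bmu,\blam}\,\frac{\prod_{\sAbox\in\bmu}(1-b\,u_\sAbox\chi_\sAbox)}{\prod_{\sAbox\in\blam}(1-b\,u_\sAbox\chi_\sAbox)},\qquad |\bmu|=|\blam|+n.
\ee

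The whole statement now rests on one structural input, which is the step I expect to be the main obstacle: the \emph{containment-triangularity} of this multiplication operator, namely that $[s_{[n]}]_{\bmu,\blam}\neq0$ forces $\lam^{(\a)}\subseteq\mu^{(\a)}$ for every $\a$ (written $\blam\subseteq\bmu$). Granting this, for generic weights the values $u_\a\chi_\sAbox$ over all boxes of $\bmu$ are pairwise distinct, so each denominator factor cancels against the unique matching numerator factor and the ratio collapses to $\prod_{\sAbox\in\bmu\setminus\blam}(1-b\,u_\sAbox\chi_\sAbox)$, a polynomial of degree exactly $|\bmu|-|\blam|=n$; this proves $\CO_n(b)$ is a polynomial in $b$ of degree $n$. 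To establish the lemma I would pass to the adjoint with respect to $\langle\,,\,\rangle_{\mathrm{Z}}$: by \eqref{adj_Z} the adjoint of multiplication by $\sum_\a p_k^{(\a)}$ is a skewing (differential) operator, so the lemma reduces to the vanishing of skew generalized Macdonald functions $P_{\bmu/\bnu}$ whenever $\bnu\not\subseteq\bmu$. For $r=1$ this is the standard Macdonald fact that $p_k^\perp$ preserves containment (equivalently, the border-strip support of $p_k P_\lam$). Lifting it to level $r$ is the delicate point; I would either carry out the skew-function argument together with the triangularity of the change of basis \eqref{def_P_g}, or use $\sum_\a p_k^{(\a)}\propto\rho_\bsu^{(r,\bsn_0)}(a_{-k})$ and bootstrap from the $e_1$-Pieri rule \eqref{apm_GMP}.

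With polynomiality in hand the top coefficient is immediate: the $b^n$-coefficient of $\prod_{\sAbox\in\bmu\setminus\blam}(1-b\,u_\sAbox\chi_\sAbox)$ is $(-1)^n\prod_{\sAbox\in\bmu\setminus\blam}u_\sAbox\chi_\sAbox=(-1)^n\,\nabla(\bsu)|_\bmu/\nabla(\bsu)|_\blam$, and multiplying by $[s_{[n]}]_{\bmu,\blam}$ reproduces exactly the matrix element of $(-1)^n\,\nabla(\bsu)\,s_{[n]}\nabla(\bsu)^{-1}$, giving the contour-integral formula. Finally, for the $\FTW$ consequences I would use $\FTW(a,b)=\Delta(b|\bsu)\,G(a)\,\Delta(b|\bsu)^{-1}\,G(a)^{-1}$ with $G(a)=\exp\!\left(-\sum_{k>0}\tfrac{a^k\sum_\a p_k^{(\a)}}{k(1-q^k)}\right)$, so that $\FTW=\eqref{eq:FTW1}\cdot G(a)^{-1}$ while $G(a)^{-1}$ is independent of $b$. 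Since the operator in \eqref{eq:FTW1} equals $\sum_{n\geq0}a^n\CO_n(b)$ and I will have shown that its $a^n$-coefficient $\CO_n(b)$ has $b$-degree $n$, the product contains no monomial $a^i b^j$ with $j>i$, whence $\FTW_{i,j}=0$ for $i<j$; and the $a^i b^i$ term can only arise from the top $b$-coefficient of $\CO_i(b)$ multiplied by the $b$-independent $s_{[0]}=1$ part of $G(a)^{-1}$, yielding $\FTW_{i,i}=(-1)^i\,\nabla(\bsu)\,s_{[i]}\!\left(-\tfrac{\sum_\a\bsx^{(\a)}}{1-q}\right)\nabla(\bsu)^{-1}$, as claimed.
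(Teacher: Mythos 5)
Your argument is correct and follows essentially the same route as the paper: diagonalize $\Delta(b|\bsu)$ with eigenvalue $\prod_{\sAbox\in\blam}(1-b\,u_\sAbox\chi_\sAbox)$, reduce everything to the containment property of the coefficients in the expansion of $s_{[n]}\left(-\tfrac{\sum_\a\bsx^{(\a)}}{1-q}\right)P_\blam$ over the generalized Macdonald basis, telescope the ratio of eigenvalues to $\prod_{\sAbox\in\bnu/\blam}(1-b\,u_\sAbox\chi_\sAbox)$, and identify the top $b$-coefficient with the ratio of $\nabla(\bsu)$-eigenvalues, finishing with the same $a$-versus-$b$ degree count for $\FTW_{i,j}$. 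The one step you rightly flag as delicate --- that nonvanishing coefficients force $\blam\subseteq\bnu$ --- is settled in the paper along the lines of your second suggested route, in the slightly sharper form that the multiplication operator is realized as a non-commutative polynomial in the degree-one generators $(\CT^\perp)^k(a_{-1})$, each of which adds a single box via a twisted generalized Pieri rule, so that $\bnu$ is obtained from $\blam$ by adding $n$ boxes one at a time.
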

\begin{proof}
Let us introduce coefficients $d_{\bnu/\blam}$ defined as
\be
 s_{[n]}\left(-\frac{\sum_\a \bsx^{(\a)}}{1-q}\right)
 P_\blam(\bsx^\bullet|\bsu)
 = \sum_{\{\bnu:\blam\subset_n\bnu\}}
 d_{\bnu/\blam} P_\bnu(\bsx^\bullet|\bsu)
\ee
where $\blam\subset_n\bnu$ indicates that $\blam$ is contained in $\bnu$ and they differ by $n$ boxes. This is indeed the case since the operator in the l.h.s.\ can be realized as a non-commutative polynomial in the algebra generators $(\CT^\perp)^k(a_{-1})$. These elements act on the horizontal Fock module by raising the degree by one, and on the generalized Macdonald basis they all act via the generalized Pieri rules\footnote{The twist by $\CT^\perp$ changes the precise coefficient in the Pieri rule but it does not change the overall structure.}. This implies that the r.h.s.\ can be expanded as a linear combination of generalized Macdonald functions whose multi-partitions $\bnu$ can be obtained by adding $n$ boxes to $\bl$, one by one in all possible ways.

Then, we see that
\be
\begin{aligned}
 \Delta(b|\bsu)\,
 & s_{[n]}\left(-\frac{\sum_\a \bsx^{(\a)}}{1-q}\right)
 \Delta(b|\bsu)^{-1}
 P_\blam(\bsx^\bullet|\bsu) \\
 &= \sum_{\{\bnu:\blam\subset_n\bnu\}}
 d_{\bnu/\blam}
 \prod_{\sAbox\in\bnu/\blam}(1-b\, u_\sAbox \chi_\sAbox)
 P_\bnu(\bsx^\bullet|\bsu) \\
 &= \sum_{\{\bnu:\blam\subset_n\bnu\}}
 d_{\bnu/\blam}
 \left(1+\dots+(-b)^n \frac{\prod_{\sAbox\in\bnu}u_\sAbox\chi_\sAbox}{\prod_{\sAbox\in\blam}u_\sAbox\chi_\sAbox}\right)
 P_\bnu(\bsx^\bullet|\bsu) \\
 &=
 \left(s_{[n]}\left(-\frac{\sum_\a \bsx^{(\a)}}{1-q}\right)
 +\dots+
 (-b)^n \nabla(\bsu)\, s_{[n]}\left(-\frac{\sum_\a \bsx^{(\a)}}{1-q}\right) \nabla(\bsu)^{-1} \right)
 P_\blam(\bsx^\bullet|\bsu)
\end{aligned}
\ee
where we used that $\nabla(\bsu)$ is diagonal on the basis $P_\blam(\bsx^\bullet|\bsu)$ with eigenvalues as in \eqref{eq:def-nabla}.
This proves that the operator in \eqref{eq:operatorFT1} is polynomial of degree $n$ in $b$ and the last line gives an expression for the coefficient of $b^n$. Since the action from the right of the operator $\exp\left(\sum_{k>0}\frac{a^k\sum_\a p_k^{(\a)}}{k(1-q^k)}\right)$ can only increase the degree in $a$, we obtain that $\FTW_{i,j}(\bsu)=0$ if $i<j$. 
\end{proof}

We repeat the analysis for the operator in \eqref{eq:FTW2}.
\begin{proposition}
The operator
\be
\label{eq:operatorFT2}
 \oint\frac{\mathd b}{2\pi\mathi b} b^{-n}
 \exp\left(-\sum_{k>0}\frac{a^k\sum_\a p_k^{(\a)}}{k(1-q^k)}\right)
 \Delta(b|\bsu)^{-1}
 \exp\left(\sum_{k>0}\frac{a^k\sum_\a p_k^{(\a)}}{k(1-q^k)}\right)
\ee
is a polynomial in $a$ of degree $n$. In particular, this means that $\FTW_{i,j}(\bsu) = 0$ for $i>j$.
\end{proposition}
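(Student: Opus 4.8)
The plan is to prove the "opposite direction" triangularity, exactly mirroring the analysis of \eqref{eq:operatorFT1} but with the roles of the two spectral variables interchanged. Writing $G(a):=\exp\!\left(\sum_{k>0}\frac{a^k}{k(1-q^k)}\sum_\a p_k^{(\a)}\right)$, the operator in \eqref{eq:FTW2} is the group-commutator-type expression $G(a)^{-1}\Delta(b|\bsu)^{-1}G(a)$. First I would expand $\Delta(b|\bsu)^{-1}$ in powers of $b$: it is diagonal on the GMP basis, acting on $P_\blam$ by $\exp\!\left(\sum_{k>0}\frac{b^k}{k}\sum_\a u_\a^k p_k(\chi_{\lam^{(\a)}})\right)$, so its coefficient of $b^n$ is the diagonal operator $\hat h_n$ multiplying $P_\blam$ by $h_n$ of the finite content alphabet $C_\blam:=\bigsqcup_\a\{u_\a\chi_\sAbox:\sAbox\in\lam^{(\a)}\}$. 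Hence the contour integral \eqref{eq:operatorFT2} equals $G(a)^{-1}\hat h_n\,G(a)$, and the statement reduces to showing that this conjugation of a diagonal operator has $a$-degree at most $n$ (with top term $\nabla(\bsu)\hat h_n\nabla(\bsu)^{-1}$, parallel to the previous proposition).

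Next I would pass to matrix elements in the GMP basis. Both $G(a)$ and $G(a)^{-1}$ are multiplications by symmetric functions of $\sum_\a\bsx^{(\a)}$, so their $a^i$-components raise the box number by exactly $i$ and act through the generalized Pieri rules, while $G(a)^{-1}G(a)=\Id$. Therefore, for $\blam\subseteq\bnu$ (componentwise inclusion),
\be
 \bra{P_\bnu}G(a)^{-1}\hat h_n\,G(a)\ket{P_\blam}
 = a^{|\bnu/\blam|}\sum_{\blam\subseteq\boldsymbol{\kappa}\subseteq\bnu}
 g_{\bnu\boldsymbol{\kappa}}\,h_n(C_{\boldsymbol{\kappa}})\,f_{\boldsymbol{\kappa}\blam},
\ee
with $a$-independent coefficients $g,f$ whose unweighted sum (the case $h_n\mapsto 1$) equals $\delta_{\bnu\blam}$. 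The claim is thus equivalent to the vanishing of the weighted sum $\sum_{\boldsymbol{\kappa}}g_{\bnu\boldsymbol{\kappa}}\,h_n(C_{\boldsymbol{\kappa}})\,f_{\boldsymbol{\kappa}\blam}$ whenever $|\bnu/\blam|>n$, the case $n=0$ reproducing $G(a)^{-1}G(a)=\Id$.

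The hard part is precisely this degree bound: it encodes a nontrivial cancellation over the intermediate multipartitions $\boldsymbol{\kappa}$ that is invisible from the raising action alone, and it is where the honest work lies. I would establish it by transporting the computation to the already-proven triangularity of \eqref{eq:FTW1} via the operator $\GHT(\bsu)$ (invertible, being a product of diagonal and graded-unipotent factors). By Proposition~\ref{prop:GHTcor} one has $\GHT(\bsu)G(a)=\hat G(a)\GHT(\bsu)$ with $\hat G(a)=\D^+_\bsu[\,\cdot\,]$ diagonal, while \eqref{eq:delta2epsilon} exhibits $\Delta(b|\bsu)$ as a diagonal $\D^+_\bsu$-operator up to central scalar factors, and $\GHT(\bsu)\D^+_\bsu[f]\GHT(\bsu)^{-1}=f(\sum_\a\bsx^{(\a)})^\dagger$. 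Conjugating the full pentagon operator $\FTW(a,b)=\Delta(b|\bsu)G(a)^{-1}\Delta(b|\bsu)^{-1}G(a)$ by $\GHT(\bsu)$ then turns it into the group commutator of the diagonal operator $\hat G(a)$ and a lowering operator $\delta_b^\dagger$, which is the exact mirror image of the configuration analysed for \eqref{eq:operatorFT1}; the opposite triangularity follows by the identical eigenvalue-ratio computation. An equivalent route is to take the adjoint with respect to $\langle\,,\,\rangle_{\mathrm{Z}}$, using $\Delta(b|\bsu)^\dagger=\Delta(b|\bsu)$, $\GHT(\bsu)^\dagger=\GHT(\bsu)$ and the adjunction formulas \eqref{adj_Z}, since passing to $\dagger$ interchanges multiplication with differentiation and hence reverses the triangularity while preserving the degree in the spectral parameter. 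Finally, combining this with the factorization $\FTW(a,b)=\Delta(b|\bsu)\cdot\eqref{eq:FTW2}$ and the fact that $\Delta(b|\bsu)$ has $a$-degree zero, the degree bound forces $\FTW_{i,j}=0$ for $i>j$, which together with the previous proposition yields the desired diagonality of $\FTW(a,b)$.
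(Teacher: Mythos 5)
Your proposal is correct and follows essentially the same route as the paper: the key step in both is to conjugate the operator by $\GHT(\bsu)$, using Proposition~\ref{prop:GHTcor} together with \eqref{eq:delta2epsilon} to convert the multiplication operator $G(a)$ into a diagonal $\Delta(at^{-1}|\bsu)$-type operator and $\Delta(b|\bsu)^{-1}$ into the adjoint of a multiplication operator, after which the same eigenvalue-ratio computation as for \eqref{eq:operatorFT1} (now with lowering coefficients $c_{\bnu/\blam}$ and the product $\prod_{\sAbox\in\bnu/\blam}(1-at^{-1}u_\sAbox\chi_\sAbox)$) gives the degree bound in $a$. The alternative adjoint route you mention via $\langle\,,\,\rangle_{\mathrm{Z}}$ is equivalent and not needed.
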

\begin{proof}
From Proposition~\ref{prop:GHTcor}, we obtain
\be
\begin{aligned}
 \GHT(\bsu)^{-1} & \GHT(\bsu)\,
 \exp\left(-\sum_{k>0}\frac{a^k\sum_\a p_k^{(\a)}}{k(1-q^k)}\right)
 \Delta(b|\bsu)^{-1}
 \exp\left(\sum_{k>0}\frac{a^k\sum_\a p_k^{(\a)}}{k(1-q^k)}\right) \\
 &= \GHT(\bsu)^{-1}
 \exp\left(-\sum_{k>0}\frac{a^k \D^+_\bsu[p_k]}{k(1-q^k)}\right)
 \exp\left(-\sum_{k>0}\frac{(bt)^k\sum_\a p_k^{(\a)}}{k(1-q^k)}\right)^\dagger
 \exp\left(\sum_{k>0}\frac{a^k \D^+_\bsu[p_k]}{k(1-q^k)}\right)
 \GHT(\bsu) \\
 &= \GHT(\bsu)^{-1}
 \Delta(at^{-1}|\bsu)^{-1}
 \exp\left(-\sum_{k>0}\frac{(bt)^k\sum_\a p_k^{(\a)}}{k(1-q^k)}\right)^\dagger
 \Delta(at^{-1}|\bsu)\,
 \GHT(\bsu)
\end{aligned}
\ee
where we used \eqref{eq:delta2epsilon}.
Let us introduce coefficients $c_{\bnu/\blam}$ as
\be
 s_{[n]}\left(-\frac{t\sum_\a \bsx^{(\a)}}{1-q}\right)^\dagger
 P_\bnu(\bsx^\bullet|\bsu)
 = \sum_{\{\blam:\blam\subset_n\bnu\}}
 c_{\bnu/\blam} P_\blam(\bsx^\bullet|\bsu)
\ee
from which, we similarly obtain that
\be
\begin{aligned}
 &\Delta(at^{-1}|\bsu)^{-1}
 s_{[n]}\left(-\frac{t\sum_\a \bsx^{(\a)}}{1-q}\right)^\dagger
 \Delta(at^{-1}|\bsu)
 P_\bnu(\bsx^\bullet|\bsu) \\
 &= \sum_{\{\blam:\blam\subset_n\bnu\}}
 c_{\bnu/\blam}
 \prod_{\{\sAbox\in\bnu/\blam\}}(1-at^{-1} u_\sAbox \chi_\sAbox)
 P_\blam(\bsx^\bullet|\bsu) \\
 &=
 \left(s_{[n]}\left(-\frac{t\sum_\a \bsx^{(\a)}}{1-q}\right)^\dagger
 +\dots+
 (-at^{-1})^n \nabla(\bsu)^{-1} s_{[n]}\left(-\frac{t\sum_\a \bsx^{(\a)}}{1-q}\right)^\dagger \nabla(\bsu) \right)
 P_\bnu(\bsx^\bullet|\bsu)
\end{aligned}
\ee
is a polynomial in $a$ of degree $n$. Since polynomiality in $a$ is preserved under conjugation by $\GHT(\bsu)$, we find that the operator in \eqref{eq:operatorFT2} is also polynomial in $a$ of degree $n$. Finally, this shows that $\FTW_{i,j}(\bsu)=0$ for $j<i$, since the left action of $\Delta(b|\bsu)$ only increases the $b$ degree.
\end{proof}
We are now ready to state the following.
\begin{theorem}
The higher level generalization of the five-term relation of \cite{Garsia:2018fiv} is
\begin{multline}\label{eq:FiveTerms}
 \nabla(\bsu)
 \exp\left(-\sum_{k>0}\frac{(-ab)^{k}}{k(1-q^k)}\sum_{\a=1}^r p_k^{(\a)}\right)
 \nabla(\bsu)^{-1} \\
 =
 \Delta(b|\bsu)
 \exp\left(-\sum_{k>0}\frac{a^k}{k(1-q^k)}\sum_{\a=1}^r p_k^{(\a)}\right)
 \Delta(b|\bsu)^{-1}
 \exp\left(\sum_{k>0}\frac{a^k}{k(1-q^k)}\sum_{\a=1}^r p_k^{(\a)}\right)
\end{multline}
as an identity of operators acting on $\CF^{(r,0)}_\bsu$.
\end{theorem}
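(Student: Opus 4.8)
The plan is to obtain the theorem by assembling the two preceding propositions and performing a single resummation. First I would combine them: the proposition on the operator \eqref{eq:operatorFT1} gives $\FTW_{i,j}=0$ for $i<j$, while the proposition on \eqref{eq:operatorFT2} gives $\FTW_{i,j}=0$ for $i>j$. Together these force all off-diagonal coefficients to vanish, so that the operator defined as the right-hand side of \eqref{eq:FiveTerms} is diagonal in the bigrading,
\be
 \FTW(a,b)=\sum_{i\geq0}(ab)^i\,\FTW_{i,i}.
\ee

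Next I would substitute the explicit value of the diagonal coefficients already computed in the first proposition, namely $\FTW_{i,i}=(-1)^i\,\nabla(\bsu)\,s_{[i]}\!\left(-\tfrac{\sum_\a\bsx^{(\a)}}{1-q}\right)\nabla(\bsu)^{-1}$. Pulling the conjugation by $\nabla(\bsu)$ outside the summation, the whole problem reduces to resumming the generating series $\sum_{i\geq0}(-ab)^i\,s_{[i]}\!\left(-\tfrac{\sum_\a\bsx^{(\a)}}{1-q}\right)$, where the sign $(-1)^i$ has been absorbed into $(-ab)^i$.

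The final step is the standard generating identity for the one-row Schur functions $s_{[i]}$ (equivalently the complete homogeneous symmetric functions), $\sum_{i\geq0}z^i s_{[i]}(E)=\exp\!\left(\sum_{k>0}\tfrac{z^k p_k(E)}{k}\right)$, which is the specialization of the plethystic exponential \eqref{eq:PE} obtained by collapsing the second alphabet to a single variable $z$. Applying it with $z=-ab$ and $E=-\tfrac{\sum_\a\bsx^{(\a)}}{1-q}$, and using the plethysm rule $p_k(E)=-\tfrac{\sum_\a p_k^{(\a)}}{1-q^k}$, turns the series into $\exp\!\left(-\sum_{k>0}\tfrac{(-ab)^k}{k(1-q^k)}\sum_\a p_k^{(\a)}\right)$. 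Reinstating the conjugation by $\nabla(\bsu)$ then reproduces exactly the left-hand side of \eqref{eq:FiveTerms}.

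Since the two propositions already carry all of the representation-theoretic content, I do not expect a genuine obstacle in this final assembly; the only points demanding care are the bookkeeping of signs, where the factor $(-1)^i$ must combine with $(ab)^i$ to yield $(-ab)^i$, and the correct handling of the plethystic substitution inside the Schur generating function, so that the factors of $1-q^k$ land in the right place.
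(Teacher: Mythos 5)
Your proposal is correct and follows essentially the same route as the paper: combine the two triangularity propositions to conclude $\FTW(a,b)=\sum_i(ab)^i\FTW_{i,i}$, insert the diagonal coefficients $\FTW_{i,i}=(-1)^i\nabla(\bsu)\,s_{[i]}\bigl(-\tfrac{\sum_\a\bsx^{(\a)}}{1-q}\bigr)\nabla(\bsu)^{-1}$, and resum via the one-row Schur generating function. The paper compresses the resummation into a single line, but your sign bookkeeping and plethystic substitution are exactly what that line implicitly carries out.
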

\begin{proof}
Combining the previous two propositions we find
\be
 \FTW(a,b|\bsu) = \sum_i (ab)^i \FTW_{i,i}(\bsu)
 = \nabla(\bsu)
 \exp\left(-\sum_{k>0}\frac{(-ab)^{k}}{k(1-q^k)}\sum_{\a=1}^r p_k^{(\a)}\right)
 \nabla(\bsu)^{-1}
\ee
which proves the statement of the theorem.
\end{proof}
\begin{remark}
Taking the inverse of both sides of the five-term relation and acting on the vacuum, we obtain the identity
\be
\label{eq:FTinvVac}
 \Delta(z|\bsu)
 \exp\left(\sum_{k>0}\frac{\sum_{\a=1}^r p_k^{(\a)}}{k(1-q^k)}\right)
 =
 \exp\left(\sum_{k>0}\frac{\sum_{\a=1}^r p_k^{(\a)}}{k(1-q^k)}
 \left(1-z^k \Big(u_\a^k+(1-t^k q^{-k}) \sum_{\b=\a+1}^r u_\b^k\Big)\right)\right)
\ee
where in the r.h.s.\ we have used the conjectural identity \eqref{nabla_r} to rewrite explicitly the action of $\nabla(\bsu)$ on the plethystic exponential.
Using the Cauchy identity \eqref{eq:Cauchy-r} to expand both sides of \eqref{eq:FTinvVac}, we obtain the evaluation identity
\be
\label{eq:spec-id3}
 \frac{P_\bl(-(1-zu_1)\sp_\vac,-(q_3-zu_2)\sp_\vac,\dots,-(q_3^{r-1}-zu_r)\sp_\vac|\bsu)}
 {P_\bl(-\sp_\vac,-q_3\sp_\vac,\dots,-q_3^{r-1}\sp_\vac|\bsu)}
 = \prod_{\sAbox\in\blam} (1-zu_\sAbox\chi_\sAbox),
\ee
where we recall that $-\sp_\vac=\frac1{1-t}$. In the limit $z\to\infty$, we recover the evaluation \eqref{eq:spec-id2}.
\end{remark}

\subsection{Fourier/Hopf pairing}

As observed before, the operator $\GHT(\bsu)$ is self-adjoint w.r.t.\ the inner product $\langle\,,\,\rangle_\mathrm{Z}$. This fact implies that we can use $\GHT(\bsu)$ to ``\emph{twist}'' the inner product to give another, different inner product on the module $\CF^{(r,0)}_\bsu\cong\Lambda^{\otimes r}$.

\begin{definition}
For any $f,g\in\Lambda^{\otimes r}$, we define the bilinear pairing
\be
\label{eq:Fourier-r}
 (f,g)_\mathrm{F} := \langle f,\GHT(\bsu)\,g\rangle_\mathrm{Z}.
\ee
\end{definition}
The pairing in \eqref{eq:Fourier-r} is symmetric due to self-adjointedness of $\GHT(\bsu)$.
Moreover, due to Proposition~\ref{prop:GHTcor}, we have
\be
\begin{aligned}
 \left(f\Big(\sum_\a \bsx^{(\a)}\Big)\,g(\bsx^\bullet),h(\bsx^\bullet)\right)_\mathrm{F}
 &= \left\langle f\Big(\sum_\a \bsx^{(\a)}\Big)\,g(\bsx^\bullet),\GHT(\bsu)\,h(\bsx^\bullet)\right\rangle_\mathrm{Z} \\
 &= \left\langle g(\bsx^\bullet),f\Big(\sum_\a \bsx^{(\a)}\Big)^\dagger\,\GHT(\bsu)\,h(\bsx^\bullet)\right\rangle_\mathrm{Z} \\
 &= \left\langle g(\bsx^\bullet),\GHT(\bsu)\,\D^+_\bsu[f]\,h(\bsx^\bullet)\right\rangle_\mathrm{Z} \\
 &= \left( g(\bsx^\bullet),\D^+_\bsu[f]\,h(\bsx^\bullet)\right)_\mathrm{F},
\end{aligned}
\ee
for any $f\in\Lambda$ and $g,h\in\Lambda^{\otimes r}$.
Hence, we conclude that $\D^+_\bsu[f]$ is the adjoint operator to the function $f\Big(\sum_\a \bsx^{(\a)}\Big)$ w.r.t.\ the pairing $(\,,\,)_\mathrm{F}$.
This makes $(\,,\,)_\mathrm{F}$ into a type of Fourier/Hopf pairing as defined in \cite{Cherednik:1995mac,Okounkov:2001are,Beliakova:2021cyc}.

We notice however, that the previous identity only holds if $f$ is in the image of the ``diagonal'' embedding of $\Lambda$ into $\Lambda^{\otimes r}$ (i.e.\ $f\mapsto f(\sum_\a\bsx^{(\a)})\in\Lambda^{\otimes r}$), while it fails for more general elements of the module. This is related to the fact that, for $r>1$, the Whittaker vectors are no longer reproducing kernels for $\langle\,,\,\rangle_\mathrm{Z}$, so that the inner product with $W_\bl$ does not lead to an evaluation homomorphism, as in the case $r=1$.

We now construct an orthogonal basis for the Fourier/Hopf pairing.
Define the operator
\be
 \mathsf{G} := \nabla(\bsu)^{-1}\exp\left(-\sum_{k>0}\frac{(-1)^k}{k(1-t^k)}
 \sum_{\a=1}^r q_3^{(\a-1)k}\frac{\p}{\p p_k^{(\a)}} \right)\nabla(\bsu),
\ee
which can be regarded as an infinite series $\mathsf{G}=1+\dots$ of operators of increasingly negative degree.
Then, we find
\be
 \GHT(\bsu) = (\mathsf{G}^\dagger)^{-1}\nabla(\bsu)t^{-L_0}\nabla(\bsu)\mathsf{G}^{-1},
\ee
where $\nabla(\bsu)t^{-L_0}\nabla(\bsu)$ is diagonal on the GMP basis.
We define a new basis of inhomogeneous functions $P^\ast_\l(\bsx^\bullet|\bsu)$ as
\be
 P^\ast_\l(\bsx^\bullet|\bsu) := \mathsf{G}\,P_\l(\bsx^\bullet|\bsu).
\ee
From the definition \eqref{eq:Fourier-r} it then follows that
\be
\begin{aligned}
 \Big(P^\ast_{\l^{(r)},\dots,\l^{(1)}}(\bsx^\bullet|&u_r,\dots,u_1),
 P^\ast_{\mu^{(1)},\dots,\mu^{(r)}}(\bsx^\bullet|u_1,\dots,u_r)\Big)_\mathrm{F}\\
 &= \left\langle P_{\l^{(r)},\dots,\l^{(1)}}(\bsx^\bullet|u_r,\dots,u_1),
 \mathsf{G}^\dagger\,\GHT(\bsu)\, \mathsf{G}\,P_{\mu^{(1)},\dots,\mu^{(r)}}(\bsx^\bullet|u_1,\dots,u_r)\right\rangle_\mathrm{Z}\\
 &= \left\langle P_{\l^{(r)},\dots,\l^{(1)}}(\bsx^\bullet|u_r,\dots,u_1),
 \nabla(\bsu)t^{-L_0}\nabla(\bsu)\,P_{\mu^{(1)},\dots,\mu^{(r)}}(\bsx^\bullet|u_1,\dots,u_r)\right\rangle_\mathrm{Z}\\
 &= \prod_{\a=1}^r t^{-|\l^{(\a)}|} u_\a^{2|\l^{(\a)}|} g_{\l^{(\a)}}^2
 b_{\l^{(\a)}} \delta_{\l^{(\a)},\mu^{(\a)}},
\end{aligned}
\ee
which shows that $P^\ast_\blam$ are an orthogonal basis for the Fourier/Hopf pairing.

For $r=1$, the $P^\ast_\l$ coincide with interpolation Macdonald functions \cite{Knop:1996sym,Sahi:1996int,Sahi:1996dif,Okounkov:1997shi}, as it was shown in \cite{Garsia2001} using the GHT identity or in \cite[Proposition~2]{Carlsson:2013jka} using the Macdonald--Mehta--Cherednik identities.
For general $r$, we have the following vanishing property. First, let us define generalized skew-Macdonald functions $P_{\blam/\bnu}$ as
\be
 P_\blam(\bsx^\bullet+zq_3^{\bullet-1}|\bsu)
 = \sum_{\bnu\subseteq\blam} P_\bnu(\bsx^\bullet|\bsu)
 P_{\blam/\bnu}(zq_3^{\bullet-1}|\bsu),
\ee
where $\bsx^\bullet+zq_3^{\bullet-1}$ denotes the shift $\bsx^{(\a)}\mapsto\bsx^{(\a)}+zq_3^{\a-1}$ which can be obtained as the action of a plethystic exponential in the operators $\rho_\bsu^{(r,\bsn_0)}(a_k)$, $k>0$.
Then we can compute the pairing
\be
\begin{aligned}
 &\Big(P_{\l^{(1)},\dots,\l^{(r)}}(\bsx^\bullet|u_1,\dots,u_r),
 P^\ast_{\mu^{(r)},\dots,\mu^{(1)}}(\bsx^\bullet|u_r,\dots,u_1)\Big)_\mathrm{F}\\
 &=\Big\langle P_{\l^{(1)},\dots,\l^{(r)}}(\bsx^\bullet|u_1,\dots,u_r),\GHT(\bsu)\,
 P^\ast_{\mu^{(r)},\dots,\mu^{(1)}}(\bsx^\bullet|u_r,\dots,u_1)\Big\rangle_\mathrm{Z}\\
 &= \Big(\prod_{\a=1}^r u_\a^{|\mu^{(\a)}|}g_{\mu^{(\a)}}\Big)^2 t^{-|\bmu|}
 \Big\langle \mathsf{G}^{-1}P_{\l^{(1)},\dots,\l^{(r)}}(\bsx^\bullet|u_1,\dots,u_r),
 P_{\mu^{(r)},\dots,\mu^{(1)}}(\bsx^\bullet|u_r,\dots,u_1)\Big\rangle_\mathrm{Z}\\
 &= \Big(\prod_{\a=1}^r u_\a^{|\mu^{(\a)}|}g_{\mu^{(\a)}}\Big)^2 t^{-|\bmu|}
 \sum_{\bnu\subseteq\blam}\Big(\prod_{\a=1}^r\frac{(-u_\a)^{|\l^{(\a)}|}g_{\l^{(\a)}}}
 {(-u_\a)^{|\nu^{(\a)}|}g_{\nu^{(\a)}}}\Big)
 P_{\blam/\bnu}(-\sp_\vac q_3^{\bullet-1}|\bsu)\\
 &\hspace{150pt}\times\Big\langle P_{\nu^{(1)},\dots,\nu^{(r)}}(\bsx^\bullet|u_1,\dots,u_r),
 P^\ast_{\mu^{(r)},\dots,\mu^{(1)}}(\bsx^\bullet|u_r,\dots,u_1)\Big\rangle_\mathrm{Z}\\
 &= \Big(\prod_{\a=1}^r (-t^{-1}u_\a)^{|\mu^{(\a)}|}g_{\mu^{(\a)}}b_{\mu^{(\a)}}^{-1}
 (-u_\a)^{|\l^{(\a)}|}g_{\l^{(\a)}}\Big)
 P_{\blam/\bmu}(-\sp_\vac q_3^{\bullet-1}|\bsu),
\end{aligned}
\ee
where $P_{\blam/\bmu}$ is identically zero if $\bmu$ is not contained in $\blam$.
In the case $r=1$, the Fourier/Hopf pairing with $\tilde{P}_\l$ defines an evaluation at the locus $\sp_\l$, hence the previous formula says that $P^\ast_\mu(\sp_\l)=0$ if $\mu\not\subseteq\l$, which is the vanishing property of the interpolation Macdonald functions.
For $r>1$, however, the map $f\mapsto(\tilde{P}_\blam,f)_\mathrm{F}$ is not an evaluation in general\footnote{For generic $\bl$, it is a sum of different evaluations.}, hence the previous vanishing does not define an interpolation locus.

\section{Discussion}
In this paper, we proposed an extension to generalized Macdonald functions for several identities known in the case of usual Macdonald symmetric functions. It includes the GHT formula \ref{eq:GHTr}, the five-term relation \ref{eq:FiveTerms},  and the Fourier/Hopf pairing. This extension relies on new algebraic objects, like framing operators and higher vertex operators, that can be defined for representation of any levels $(r,0)$ with $r\in\mZ^{>0}$. The proof of these identities rely on the algebraic property of these objects, but also on the Conjecture ~\ref{conj:BH} in the case of the GHT identity. We have not been able to find an algebraic interpretation of the latter that could lead to a complete proof. Yet, it may be possible to prove this conjecture by studying the action of the operators $y_{-k}^-=\CS(x_{-k}^-)$ for $k>1$ on both sides of the equality. Combined with the already known action of $a_k$ for $k>1$, the corresponding algebraic transformations would uniquely characterize this quantity. We hope to be able to come back to this problem in a future work.

It is instructive to draw a parallel between our work and recent studies of wreath Macdonald polynomials \cite{Romero:2025tes,Romero:2025fiv} (see also the excellent survey \cite{Orr:2023wre}). These wreath Macdonald polynomials are associated to the higher rank deformation of the algebra, namely quantum toroidal $\sl(n)$, rather than higher level representations of quantum toroidal $\gl(1)$. Yet, similar results were found in this case, e.g.\ Pieri rules \cite{Romero:2025fiv}, a factorization of the kernel, and a GHT identity \cite{Romero:2025tes}. In fact, in this case the logic seems reversed since Pieri rules are obtained as a consequence of the GHT identity, while we used the Pieri rule to obtain this identity here. This suggests that it might be possible to derive the GHT identity from first principle in the case of GMP as well.
    
It would also be interesting to develop further the vertex operator approach to wreath Macdonald polynomials using the vertex operators associated to the quantum toroidal $\sl(n)$ algebra \cite{Awata:2017lqa}. This approach should also extend to the generalized framework developed in \cite{Bourgine:2019phm} based on $(\nu_1,\nu_2)$ cyclic deformations of the quantum toroidal $\sl(n)$ algebra. The case of the surface defect $(\nu_1,\nu_2)=(1,0)$ should be of particular interest thanks to its deep connection to the Bethe/gauge correspondence \cite{Jeong:2021rll}. Finally, the case of supersymmetric Macdonald polynomials \cite{Sergeev2009}, with their connection to corner VOA \cite{Cheewaphutthisakun:2025zoc}, would also be interesting to study.

\paragraph{Acknowledgments.} J.E.B.\ thanks Misha Bershtein, Fabrizio Nieri, Elli Pomoni, Junichi Shiraishi and Yegor Zenkevich for various discussions and many useful comments.
L.C.\ thanks Victor Mishnyakov and Yegor Zenkevich for illuminating discussions.
The work of L.C.\ was supported by the ARC Discovery Grant DP210103081.
J.E.B.\ and L.C.\ gratefully acknowledge support during the MATRIX Program``New Deformations of Quantum Field and Gravity Theories'', (Creswick AU, 22 Jan – 2 Feb 2024), where part of this work was conducted.
A.S.\ gratefully acknowledges the support provided by the Australian Government Research Training Program Scholarship and the Rowden White Scholarship.

\appendix
\section{Useful formulas and technical results}\label{app_A}
\subsection{Nekrasov factor}\label{app_A1}
The Nekrasov factor can be defined by the formula \ref{def_nek}, or, equivalently, assuming $z$ generic,
\be
\label{def_nek}
\begin{aligned}
 N_{\l,\mu}(z) &= \prod_{\superp{\sAbox\in\l}{\sAboxB\in\mu}}S(z\chi_\sAbox/\chi_\sAboxB)
 \times\prod_{\sAbox\in\l}(1-q_3^{-1}z\chi_\sAbox)\times
 \prod_{\sAboxB\in\mu}(1-z\chi_\sAboxB^{-1}).
\end{aligned}
\ee
It satisfies the ``reflection'' property
\be
\label{Nek_refl}
 N_{\l,\mu}(z) = N_{\mu,\l}(q_3z^{-1})\times
 \prod_{\sAbox\in\l}(-q_3^{-1}z\chi_\sAbox)\times
 \prod_{\sAboxB\in\mu}(-z\chi_\sAboxB^{-1}).
\ee
Alternatively, it can also be expressed using the functions $\CY_\l(z)$ defined in \eqref{def_CYY},
\begin{equation}\label{Nek_CYY}
N_{\l,\mu}(z)=\prod_{\sAbox\in\l}(1-zq_3^{-1}\chi_\sAbox)\times\prod_{\sAboxB\in\mu}\CY_\l(z^{-1}\chi_\sAboxB)=\prod_{\sAboxB\in\mu}(1-z\chi_\sAboxB^{-1})\times\prod_{\sAbox\in\l}(-zq_3^{-1}\chi_\sAbox)\CY_\mu(q_3^{-1}z\chi_\sAbox).
\end{equation} 
We deduce from this expression the variation formulas
\begin{align}\label{rec_nek}
\begin{split}
&\dfrac{N_{\l\pm \sAbox,\mu}(z)}{N_{\l,\mu}(z)}=(-q_3^{-1}z\chi_\sAbox)^{\pm1}\CY_\mu(q_3^{-1}z\chi_\sAbox)^{\pm1},
\hspace{30pt}
 \dfrac{N_{\l,\mu\pm \sAboxB}(z)}{N_{\l,\mu}(z)}=\CY_\l(z^{-1}\chi_\sAboxB)^{\pm1}.
\end{split}
\end{align}
The following specialization to $\l=\vac$ or $\mu=\vac$ are also useful,
\begin{equation}\label{N_vac}
 N_{\l,\vac}(z)=\prod_{\sAbox\in\l}(1-zq_3^{-1}\chi_\sAbox),
 \hspace{30pt}
 N_{\vac,\mu}(z)=\prod_{\sAboxB\in\mu}(1-z\chi_\sAboxB^{-1}).
\end{equation} 

It is convenient to absorb a monomial factor in the definition of the Nekrasov factor in order to make it more symmetric under the exchange of the two partitions,
\begin{equation}\label{def_tN}
\tilde{N}_{\l,\mu}(z)=N_{\l,\mu}(z)\prod_{\sAbox\in\l}(-q_3^{-1}z\chi_\sAbox)^{-1}\implies \tN_{\l,\mu}(z)=\tN_{\mu,\l}(q_3z^{-1}).
\end{equation} 
As a result, the variation formulas under addition of boxes simplify as well,
\begin{equation}\label{prop_tN}
 \dfrac{\tilde{N}_{\l\pm \sAbox,\mu}(z)}{\tilde{N}_{\l,\mu}(z)}
 =\CY_\mu(q_3^{-1}z\chi_\sAbox)^{\pm1},
 \hspace{30pt}
 \dfrac{\tilde{N}_{\l,\mu\pm \sAboxB}(z)}{\tilde{N}_{\l,\mu}(z)}
 =\CY_\l(z^{-1}\chi_\sAboxB)^{\pm1}.
\end{equation}

In this paper, we will need the following properties that apply to both $N_{\l,\mu}(z)$ and $\tN_{\l,\mu}(z)$.
\begin{lemma}
We have
\begin{align}
\begin{split}\label{prop_N_sp}
 N_{\l,\mu}(1)&=0,\quad\text{if}\quad\mu\not\subseteq\l,\\
 N_{\l,\mu}(q_3)&=0,\quad\text{if}\quad\l\not\subseteq\mu.
\end{split}
\end{align}
\end{lemma}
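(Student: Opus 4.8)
The plan is to establish the first vanishing, $N_{\l,\mu}(1)=0$ for $\mu\not\subseteq\l$, by exhibiting a single factor that vanishes in the explicit product representation of $N_{\l,\mu}(z)$, and then to deduce the second vanishing, $N_{\l,\mu}(q_3)=0$ for $\l\not\subseteq\mu$, directly from the reflection property \eqref{Nek_refl}. Since $\tN_{\l,\mu}$ differs from $N_{\l,\mu}$ only by the monomial prefactor $\prod_{\sAbox\in\l}(-q_3^{-1}z\chi_\sAbox)^{-1}$, which is finite and nonzero at $z=1$ and $z=q_3$ (its factors are $q_1^iq_2^j$ with $i,j\ge1$, hence never equal to $1$ for generic $q_1,q_2$), the statements for $\tN$ follow from those for $N$.

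For the first claim I would work with the defining product
\[
 N_{\l,\mu}(z)=\prod_{(i,j)\in\l}\bigl(1-z\,q_1^{\l^T_j-i+1}q_2^{j-\mu_i}\bigr)
 \prod_{(i,j)\in\mu}\bigl(1-z\,q_1^{i-\mu^T_j}q_2^{\l_i-j+1}\bigr),
\]
using genericity of $q_1,q_2$ in the form $1-q_1^aq_2^b=0\iff(a,b)=(0,0)$. Setting $z=1$, the factors of the first product can never vanish: for $(i,j)\in\l$ one has $\l^T_j\ge i$, so the $q_1$-exponent $\l^T_j-i+1\ge1$ is nonzero. Hence any zero at $z=1$ must come from the second product, and the factor indexed by $(i,j)\in\mu$ vanishes precisely when $i=\mu^T_j$ and $\l_i=j-1$, i.e.\ when the box $(i,j)$ sits at the bottom of its column in $\mu$ and immediately to the right of the end of row $i$ of $\l$.

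The heart of the argument is the combinatorial observation that such a box exists whenever $\mu\not\subseteq\l$. I would take $i_\star:=\max\{i:\mu_i>\l_i\}$, which is well defined since $\mu\not\subseteq\l$, and set $j_\star:=\l_{i_\star}+1$. Then $\mu_{i_\star}\ge j_\star$ gives $(i_\star,j_\star)\in\mu$ together with $\l_{i_\star}=j_\star-1$; and maximality of $i_\star$ gives $\mu_{i_\star+1}\le\l_{i_\star+1}\le\l_{i_\star}<j_\star$, so $(i_\star+1,j_\star)\notin\mu$, i.e.\ $\mu^T_{j_\star}=i_\star$. For this box the corresponding factor in the second product is $1-q_1^{\,i_\star-\mu^T_{j_\star}}q_2^{\,\l_{i_\star}-j_\star+1}=1-q_1^0q_2^0=0$, so $N_{\l,\mu}(1)=0$.

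Finally, the reflection property \eqref{Nek_refl} gives $N_{\l,\mu}(q_3)=N_{\mu,\l}(1)\cdot(\text{nonzero monomial})$ upon evaluating at $z=q_3$ (so that $q_3z^{-1}=1$), whence $N_{\l,\mu}(q_3)=0$ iff $N_{\mu,\l}(1)=0$, which by the first claim with the roles of $\l,\mu$ exchanged happens exactly when $\l\not\subseteq\mu$. I expect the only subtle point to be the combinatorial choice of the vanishing box --- in particular, verifying that the maximal-row box is simultaneously a bottom-of-column box of $\mu$ and a row-addable box of $\l$; everything else is bookkeeping together with the genericity reduction $1-q_1^aq_2^b=0\Leftrightarrow a=b=0$.
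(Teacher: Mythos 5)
Your proof is correct, but it takes a different route from the one in the paper. The paper's argument starts from the representation \eqref{Nek_CYY} of the Nekrasov factor in terms of the functions $\CY_\l$, so that the zeros at $z=1$ are read off from the factors $1-\chi_\sAbox/\chi_\sAboxB$ with $\Abox\in A(\l)$, $\AboxB\in\mu$ (i.e.\ from the addable/removable-box combinatorics already encoded in $\CY_\l$), together with a check that these zeros are not cancelled by the poles coming from $R(\l)$; the case $z=q_3$ is then handled by the same argument applied to the second expression in \eqref{Nek_CYY}. You instead work directly with the arm/leg product \eqref{def_nek}, reduce everything to the genericity statement $1-q_1^aq_2^b=0\Leftrightarrow a=b=0$, and exhibit an explicit vanishing factor at the box $(i_\star,j_\star)$ with $i_\star=\max\{i:\mu_i>\l_i\}$ and $j_\star=\l_{i_\star}+1$; your verification that this box is simultaneously at the bottom of its column in $\mu$ and row-addable for $\l$ is correct. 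Deducing the second vanishing from the first via the reflection property \eqref{Nek_refl} is also a clean economy that the paper does not use. Your approach is more elementary and fully explicit; the paper's is shorter given that the $\CY$-function machinery is already set up and is reused throughout. Two cosmetic points: the phrase ``exactly when'' in your last step claims an equivalence, whereas your argument (and the lemma) only needs and only establishes the implication from $\l\not\subseteq\mu$ to vanishing; and for the passage to $\tN_{\l,\mu}$ what matters is simply that the monomial prefactor $\prod_{\sAbox\in\l}(-q_3^{-1}z\chi_\sAbox)^{-1}$ is finite and nonzero at $z=1,q_3$, not that its factors differ from $1$.
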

\begin{proof}
These properties can be obtained using the formula \eqref{Nek_CYY} expressing the Nekrasov factor in terms of $\CY$-functions, and the expression \eqref{expr_CYY} for these functions,
\be
\label{eq:neklemma}
\begin{aligned}
&N_{\l,\mu}(z)=\prod_{\sAbox\in\l}(1-zq_3^{-1}\chi_\sAbox)\times\prod_{\sAboxB\in\mu}\dfrac{\prod_{\sAbox\in A(\l)}(1-z\chi_\sAbox/\chi_\sAboxB)}{\prod_{\sAbox\in R(\l)}(1-zq_3^{-1}\chi_{\sAbox}/\chi_{\sAboxB})},\\
&N_{\l,\mu}(z)=\prod_{\sAboxB\in\mu}(1-z\chi_\sAboxB^{-1})\times\prod_{\sAbox\in\l}(-zq_3^{-1}\chi_\sAbox)\dfrac{\prod_{\sAboxB\in A(\mu)}(1-z^{-1}q_3\chi_\sAboxB/\chi_\sAbox)}{\prod_{\sAboxB\in R(\mu)}(1-z^{-1}\chi_\sAboxB/\chi_\sAbox)}.
\end{aligned}
\ee
Examining the first expression at $z=1$, and noticing that for any box of coordinate $q_3^{-1}\chi_\sAbox$ for $\Abox\in R(\l)$, there exists a box in $A(\l)$ in the same column with a lower row index, we deduce that the Nekrasov factor vanishes if the Young diagram $\mu$ is not included in $\l$. A similar argument applies to the second expression specialized to $z=q_3$.
\end{proof}

\paragraph{Normalization factors.}
From the expression of $b_\l$ in \eqref{eq:def-Macdonald-inner}, we derive the variation formula for a box of coordinates $(i,j)$,
\begin{equation}
\dfrac{b_{\l+\sAbox}}{b_\l}=\dfrac{1-t}{1-q}\prod_{j'=1}^{j-1}\dfrac{(1-q^{j-j'}t^{\l^T_{j'}-i+1})(1-q^{j-j'}t^{\l^T_{j'}-i})}{(1-q^{j-j'-1}t^{\l^T_{j'}-i+1})(1-q^{j-j'+1}t^{\l^T_{j'}-i})}\times\prod_{i'=1}^{i-1}\dfrac{(1-q^{\l_{i'}-j}t^{i-i'+1})(1-q^{\l_{i'}-j+1}t^{i-i'-1})}{(1-q^{\l_{i'}-j}t^{i-i'})(1-q^{\l_{i'}-j+1}t^{i-i'})}
\end{equation}
Note that $\psi_{\l+\sAbox}(\Abox)=\psi_\l(\Abox)$, and $\psi^\ast_{\l-\sAbox}(\Abox)=\psi_\l^\ast(\Abox)$, and
\begin{equation}\label{var_b_l}
\dfrac{b_{\l+\sAbox}}{b_\l}=\dfrac{1-t}{1-q}\dfrac{\psi_{\l}(\Abox)}{\psi_\l^\ast(\Abox)},
\hspace{30pt}
\dfrac{b_{\l-\sAbox}}{b_\l}=\dfrac{1-q}{1-t}\dfrac{\psi_{\l}^\ast(\Abox)}{\psi_\l(\Abox)}.
\end{equation}
Using the expression of the kernel \eqref{Mac_kernel} and the Pieri rule \eqref{PieriPsi}, we show that
\begin{align}
\begin{split}
\dfrac{\p}{\p p_1(\bsx)}\Pi(\bsx|\bsy)&=\dfrac{1-t}{1-q}p_1(\bsy)\Pi(\bsx|\bsy)=\dfrac{1-t}{1-q}\sum_\l b_\l P_\l(\bsx)p_1(\bsy)P_\l(\bsy)\\
=&\dfrac{1-t}{1-q}\sum_\l b_\l P_\l(\bsx)\sum_{\sAbox\in A(\l)}\psi_\l(\Abox)P_{\l+\sAbox}(\bsy)\\
=&\dfrac{1-t}{1-q}\sum_\l\sum_{\sAbox\in R(\l)} b_{\l-\sAbox}\psi_{\l}(\Abox) P_{\l-\sAbox}(\bsx)P_{\l}(\bsy)\\
=&\sum_\l\sum_{\sAbox\in R(\l)} b_{\l}\psi_{\l}^\ast(\Abox) P_{\l-\sAbox}(\bsx)P_{\l}(\bsy).
\end{split}
\end{align}
Projecting onto $P_\l(\bsy)$, we deduce the rule \eqref{dual_Pieri}.

\paragraph{Normalization of the vertical basis.} To compare vertical basis vectors in different normalizations, we introduce the following quantities,
\begin{equation}\label{def_gl}
G_{\bl}(\bsv):=\g^{\sum_{\a=1}^m (\a-1)|\l^{(\a)}|}\prod_{\superp{\a,\b=1}{\a>\b}}^m\tilde{N}_{\l^{(\a)},\l^{(\b)}}(v_\a/v_\b)^{-1},\quad G_{\bl}^\ast(\bsv):=\g^{\sum_{\a=1}^m(m-\a)|\l^{(\a)}|}\prod_{\superp{\a,\b=1}{\a<\b}}^m\tilde{N}_{\l^{(\a)},\l^{(\b)}}(v_\a/v_\b)^{-1}.
\end{equation}
They satisfy the variation formulas for $\Abox\in\l^{(\a)}$,
\begin{align}\label{var_gl}
\begin{split}
&\dfrac{G_{\bl\pm\sAbox}(\bsv)}{G_\bl(\bsv)}=\g^{\pm(\a-1)}\prod_{\b<\a}\CY_{\l^{(\b)}}(q_3^{-1}v_\a\chi_\sAbox/v_\b)^{\mp1}\times\prod_{\b>\a}\CY_{\l^{(\b)}}(v_\a\chi_\sAbox/v_\b)^{\mp1},\\
&\dfrac{G^\ast_{\bl\pm\sAbox}(\bsv)}{G^\ast_\bl(\bsv)}=\g^{\pm(m-\a)}\prod_{\b>\a}\CY_{\l^{(\b)}}(q_3^{-1}v_\a\chi_\sAbox/v_\b)^{\mp1}\times\prod_{\b<\a}\CY_{\l^{(\b)}}(v_\a\chi_\sAbox/v_\b)^{\mp1}.
\end{split}
\end{align}
In particular, we have
\begin{equation}
\left(\dfrac{G_{\bl+\sAbox}(\bsv)}{G_\bl(\bsv)}\right)\left(\dfrac{G^\ast_{\bl+\sAbox}(\bsv)}{G^\ast_\bl(\bsv)}\right)^{-1}=\g^{2\a-m-1}\prod_{\b<\a}\Psi_{\l^{(\b)}}(v_\a\chi_\sAbox/v_\b)^{-1}\times\prod_{\b>\a}\Psi_{\l^{(\b)}}(v_\a\chi_\sAbox/v_\b).
\end{equation}

Moreover, we note that the matrix elements \eqref{def_rl} of the vertical action on the basis $\dket{\bl}$ take the following form for a box $\Abox\in A(\l^{(\a)})$ (resp. $\Abox\in R(\l^{(\a)})$),
\begin{equation}
r_\bl(\Abox|\bsv)=r_{\l^{(\a)}}(\Abox)\prod_{\superp{\b=1}{\b\neq\a}}^m \CY_{\l^{(\b)}}(v_\a\chi_{\sAbox}/v_\b)^{-1},\quad r_\bl^\ast(\Abox|\bsv)=r^\ast_{\l^{(\a)}}(\Abox)\prod_{\superp{\b=1}{\b\neq\a}}^m \CY_{\l^{(\b)}}(q_3^{-1}v_\a\chi_{\sAbox}/v_\b).
\end{equation}
As a result, we find
\begin{align}
\begin{split}\label{prop_G_r}
\left(\dfrac{G^\ast_{\bl+\sAbox}(\bsv)}{G^\ast_\bl(\bsv)}\right)^{-1}r_\bl(\Abox|\bsv)&=\g^{\a-m}r_{\l^{(\a)}}(\Abox)\prod_{\b=\a+1}^m \Psi_{\l^{(\b)}}(v_\a\chi_{\sAbox}/v_\b),\\
\left(\dfrac{G^\ast_{\bl-\sAbox}(\bsv)}{G^\ast_\bl(\bsv)}\right)^{-1}r_\bl^\ast(\Abox|\bsv)&=\g^{m-\a}r_{\l^{(\a)}}^\ast(\Abox)\prod_{\b=1}^{\a-1} \Psi_{\l^{(\b)}}(v_\a\chi_{\sAbox}/v_\b).
\end{split}
\end{align}

\subsection{Action of \texorpdfstring{$x_k^\pm$}{xk+-} on Macdonald functions}\label{AppA2}
In this appendix, we derive explicit expressions for the action of the generators $x_k^\pm$ on Macdonald functions $P_\l(\bsx)$ in the representation $\rho_1^{(1,0)}$. This action involves the addition/removal of sets of $|k|$ boxes, and we first need to introduce some notation. We denote the corresponding nested summations,
\begin{align}
\begin{split}
&\sum_{\{x_1,\dots,x_k\}\in A(\l)}=\sum_{x_1\in A(\l)}\sum_{x_2\in A(\l+x_1)}\cdots \sum_{x_k\in A(\l+x_1+\dots+x_{k-1})},\\
&\sum_{\{x_1,\dots,x_k\}\in R(\l)}=\sum_{x_1\in R(\l)}\sum_{x_2\in R(\l-x_1)}\cdots \sum_{x_k\in R(\l-x_1-\dots-x_{k-1})},
\end{split}
\end{align}
and the product of matrix elements involved in the action of $a_{\pm1}$ as
\begin{align}
\begin{split}
&\psi_\l^\ast(x_1,\dots,x_k)=\psi_\l(x_1)\psi_{\l+x_1}(x_2)\cdots \psi_{\l+x_1+\dots+x_{k-1}}(x_k),\quad \{x_1,\dots,x_k\}\in A(\l),\\
&\psi_\l(x_1,\dots,x_k)=\psi_\l^\ast(x_1)\psi_{\l-x_1}^\ast(x_2)\cdots \psi_{\l-x_1-\dots-x_{k-1}}^\ast(x_k),\quad \{x_1,\dots,x_k\}\in R(\l)
\end{split}
\end{align}

The matrix elements can be determined inductively using the following relations for $k>0$,
\begin{equation}
x_k^\pm=(\pm1)^k \g^{\pm kc/2}c_1^{-k}\left(\ad_{a_1}\right)^k x_0^\pm,\quad x_{-k}^\pm=(\pm1)^k \g^{\pm kc/2}c_1^{-k}\left(\ad_{a_{-1}}\right)^k x_0^\pm.
\end{equation}
It is easy to show that the following expressions solve the induction,
\begin{align}
\begin{split}
x_k^\pm P_\l(\bsx)&=(\mp1)^k\dfrac{\g^{(-1\pm1/2)k+1\mp1}}{(1-q_1)^{k-1}(1-q_2)^{-1}}\sum_{\{x_1,\dots,x_k\}\in R(\l)}  E_k(\chi_{x_1}^{\pm1},\dots,\chi_{x_k}^{\pm1})\psi_\l^\ast(x_1,\dots,x_k)\ P_{\l-x_1-\dots-x_{k}}(\bsx),\\
x_{-k}^\pm P_\l(\bsx)&=(\mp1)^k\dfrac{\g^{(-1\pm1)\frac{k}{2}+1\mp1}}{(1-q_1)^{-1}(1-q_2)^{k-1}}\sum_{\{x_1,\dots,x_k\}\in A(\l)}E_k(\chi_{x_1}^{\pm1},\dots,\chi_{x_k}^{\pm1})\psi_\l(x_1,\dots,x_k)P_{\l+x_1+\dots+x_k}(\bsx),
\end{split}
\end{align}
provided that the family of polynomial $E_k(\chi_{x_1},\dots,\chi_{x_k})$ satisfy the recursion relations
\begin{equation}\label{rec_Ek}
E_{k+1}(\chi_1,\dots,\chi_{k+1})=E_{k}(\chi_1,\dots,\chi_k)-E_{k}(\chi_2,\dots,\chi_{k+1}),
\end{equation}
with $E_1(\chi_1)=\chi_1$. This is solved explicitly by
\begin{equation}
E_k(\chi_1,\dots,\chi_k)=\sum_{l=1}^k (-1)^{l-1}\left(\superp{k-1}{l-1}\right)\chi_l.
\end{equation} 

\paragraph{Matrix elements of $X$.} The previous results can be used to deduce the action of $\psi_{-k}^-=-\k^{-1}\g^{-k/2}[x_{-k}^+,x_0^-]$,
\begin{equation}
\psi_{-k}^-P_\l(\bsx)=(-)^{k-1}\dfrac{(1-q_3)\g^{-k/2}}{(1-q_1)^{-1}(1-q_2)^{k-1}}\equskip\sum_{\{x_1,\dots,x_k\}\in A(\l)}\equskip \psi_\l(x_1,\dots,x_k)E_k(\chi_{x_1},\dots,\chi_{x_k})\left(\sum_{i=1}^k\chi_{x_i}^{-1}\right)P_{\l+x_1+\dots+x_k}(\bsx).
\end{equation} 
Combining these results, we find that the matrix element $X_{\bl,\bmu}$ is nonzero only if there exists two sequences of $k$ boxes $\{x_1,\dots,x_k\}\in A(\l^{(1)})$ and $\{y_1,\dots,y_k\}\in R(\l^{(2)})$ such that $\mu^{(1)}=\l^{(1)}+x_1+\dots+x_k$ and $\mu^{(2)}=\l^{(2)}-y_1+\dots-y_k$. The corresponding matrix element is\footnote{An extra factor $\g^k$ needs to be included here since we twisted the isomorphism between the free boson Fock space and the ring of symmetric functions as in \eqref{eq:iso-J-to-p}.}
\be
\label{expr_X}
\begin{aligned}
X_{\bl,\bmu}&=-\dfrac{(1-q_3)}{(1-q_1)^{k-2}(1-q_2)^{k-2}}\psi_{\l^{(1)}}(x_1,\dots,x_k)\psi_{\l^{(2)}}^\ast(y_1,\dots,y_k)\\
&\times E_k(\chi_{x_1},\dots,\chi_{x_k})E_k(\chi_{y_1},\dots,\chi_{y_k})\left(\sum_{i=1}^k\chi_{x_i}^{-k}\right).
\end{aligned}
\ee

\subsection{Twisted coproduct \texorpdfstring{$\D_H$}{DeltaH}}\label{app_D_H}
The twisted coproduct acts on the Drinfeld currents $x^\pm(z)$ as
\begin{align}
\begin{split}
&\D_H(x^+(z))=x^+(z)\otimes 1+\psi^+(\hg_{(1)}^{-1/2}z)\otimes x^+(\hg_{(1)}^{-1}z),\\
&\D_H(x^-(z))=x^-(\hg_{(2)}^{-1} z)\otimes \psi^-(\hg_{(2)}^{-1/2}z)+1\otimes x^-(z),
\end{split}
\end{align}
while the coproduct of central, grading, and Cartan elements coincide with the action of $\D$. We recall the expression of the Cartan factor $\CK$ of the universal R-matrix,
\begin{equation}
\CK=\g^{\d}\bCK\g^{\d},\quad \bCK=\g^{2\bar\d}\exp\left(-\sum_{k>0}\dfrac1{c_k} a_k\otimes a_{-k}\right),
\end{equation} 
with
\begin{equation}
\d=\hf(c\otimes d+d\otimes c)\quad \bar\d=\hf(\bc\otimes\bd+\bd\otimes\bc).
\end{equation} 

First, we focus on the adjoint action of $\d$. For any two Drinfeld currents $a(z)$ and $b(z)$, we have
\begin{equation}\label{act_d}
\g^\d a(z)\otimes b(z)\g^{-\d}=a(\hg_{(2)}^{1/2}z)\otimes b(\hg_{(1)}^{1/2}z).
\end{equation} 
Next, we turn to $\bar\d$. The role of this operator is to generate the zero modes of the Cartan currents $\psi^\pm(z)$ when commuted with the operators $x^\pm(z)$, for instance:
\begin{equation}
\g^{\mp2\bar\d}(x^+(z)\otimes1)\g^{\pm2\bar\d}=x^+(z)\otimes(\psi_0^-)^{\mp1},\quad \g^{\mp2\bar\d}(1\otimes x^+(z))\g^{\pm2\bar\d}=(\psi_0^+)^{\pm1}\otimes x^+(z).
\end{equation} 

It remains to treat the operator $\bCK$, after some tedious but straightforward computation, we find
\begin{align}
\begin{split}
&\bCK^{\mp1}(\psi^+(z)\otimes1)\bCK^{\pm1}=\psi^+(z)\otimes 1,\quad \bCK^{\mp1}(1\otimes \psi^+(z))\bCK^{\pm1}=\psi^+(\hg_{(2)}z)^{\pm1}\psi^+(\hg_{(2)}^{-1}z)^{\mp1}\otimes \psi^+(z)\\
&\bCK^{\mp1}(\psi^-(z)\otimes1)\bCK^{\pm1}=\psi^-(z)\otimes\psi^-(\hg_{(1)}z)^{\pm1}\psi^-(\hg_{(1)}^{-1}z)^{\mp1},\quad \bCK^{\mp1}(1\otimes \psi^-(z))\bCK^{\pm1}=1\otimes\psi^-(z)\\
&\bCK^{\mp1}(x^+(z)\otimes1)\bCK^{\pm1}=x^+(z)\otimes\psi^-(\hg_{(1)}^{-1/2}z)^{\mp1},\quad \bCK^{\mp1}(1\otimes x^+(z))\bCK^{\pm1}=\psi^+(\hg_{(2)}^{1/2}z)^{\pm1}\otimes x^+(z)\\
&\bCK^{\mp1}(x^-(z)\otimes1)\bCK^{\pm1}=x^-(z)\otimes\psi^-(\hg_{(1)}^{1/2}z)^{\pm1},\quad \bCK^{\mp1}(1\otimes x^-(z))\bCK^{\pm1}=\psi^+(\hg_{(2)}^{-1/2}z)^{\mp1}\otimes x^-(z)\\
\end{split}
\end{align}
Combining these expressions with the action \eqref{act_d} of $\g^\d$, we have
\begin{align}\label{act_K_inv}
\begin{split}
&\CK(\psi^+(z)\otimes1)\CK^{-1}=\psi^+(\g_{(2)}z)\otimes 1,\quad \CK(1\otimes \psi^+(z))\CK^{-1}=\psi^+(\g_{(1)}^{1/2}\hg_{(2)}^{3/2}z)^{-1}\psi^+(\g_{(1)}^{1/2}\hg_{(2)}^{-1/2}z)\otimes \psi^+(\hg_{(1)}z)\\
&\CK(\psi^-(z)\otimes1)\CK^{-1}=\psi^-(\hg_{(2)}z)\otimes\psi^-(\hg_{(1)}^{-1/2}\g_{(2)}^{1/2}z)\psi^-(\hg_{(1)}^{3/2}\g_{(2)}^{1/2}z)^{-1},\quad \CK(1\otimes \psi^-(z))\CK^{-1}=1\otimes\psi^-(\g_{(1)}z)\\
&\CK(x^+(z)\otimes1)\CK^{-1}=x^+(\g_{(2)}z)\otimes\psi^-(\hg_{(2)}^{1/2}z),\quad \CK(1\otimes x^+(z))\CK^{-1}=\psi^+(\hg_{(1)}^{1/2}\hg_{(2)}z)\otimes x^+(\g_{(1)}z)\\
&\CK(x^-(z)\otimes1)\CK^{-1}=x^-(\g_{(2)}z)\otimes\psi^-(\hg_{(1)}\hg_{(2)}^{1/2}z)^{-1},\quad \CK(1\otimes x^-(z))\CK^{-1}=\psi^+(\hg_{(1)}^{1/2}z)\otimes x^-(\hg_{(1)}z)\\
\end{split}
\end{align}
In the same way, it is possible to obtain
\begin{align}\label{act_K}
\begin{split}
&\CK^{-1}(\psi^+(z)\otimes1)\CK=\psi^+(\g_{(2)}^{-1}z)\otimes 1,\quad \CK^{-1}(1\otimes \psi^+(z))\CK=\psi^+(\g_{(1)}^{-1/2}\hg_{(2)}^{1/2}z)\psi^+(\g_{(1)}^{-1/2}\hg_{(2)}^{-3/2}z)^{-1}\otimes \psi^+(\hg_{(1)}^{-1}z)\\
&\CK^{-1}(\psi^-(z)\otimes1)\CK=\psi^-(\hg_{(2)}^{-1}z)\otimes\psi^-(\hg_{(1)}^{1/2}\g_{(2)}^{-1/2}z)\psi^-(\hg_{(1)}^{-3/2}\g_{(2)}^{-1/2}z)^{-1},\quad \CK^{-1}(1\otimes \psi^-(z))\CK=1\otimes\psi^-(\g_{(1)}^{-1}z)\\
&\CK^{-1}(x^+(z)\otimes1)\CK=x^+(\g_{(2)}^{-1}z)\otimes\psi^-(\hg_{(1)}^{-1}\hg_{(2)}^{-1/2}z)^{-1},\quad \CK^{-1}(1\otimes x^+(z))\CK=\psi^+(\hg_{(1)}^{-1/2}z)\otimes x^+(\g_{(1)}^{-1}z)\\
&\CK^{-1}(x^-(z)\otimes1)\CK=x^-(\g_{(2)}^{-1}z)\otimes\psi^-(\hg_{(2)}^{-1/2}z),\quad \CK^{-1}(1\otimes x^-(z))\CK=\psi^+(\hg_{(1)}^{-1/2}\hg_{(2)}^{-1}z)^{-1}\otimes x^-(\hg_{(1)}^{-1}z)\\
\end{split}
\end{align}
The property $\CK\D_H\CK^{-1}=\D'$ follows from these expressions. Note that a similar property has been obtained for $\Uqsl$ in \cite{Khoroshkin1994}.

\subsection{Pieri rules for ordinary Macdonald functions}\label{app:pieri}

In this section, we review the proof of Pieri rules for ordinary Macdonald functions. We reformulate the well-known proof by Macdonald in \cite[Ch.VI, \textsection6]{Macdonald} in the language of symmetric functions in infinitely many variables, making use of the Fourier/Hopf pairing \cite{Cherednik:1995mac,Okounkov:2001are,Beliakova:2021cyc}.

With the normalization in \eqref{eq:spherical-macdonalds}, the spherical Macdonald functions $\tP_\l$ satisfy the \emph{bispectral duality} \cite[Ch.VI, \textsection6, (6.6)]{Macdonald}
\be
\label{eq:bispectral-duality}
 \tP_\lam(\sp_\mu) = \tP_\mu(\sp_\lam)
\ee
also known as Macdonald--Koornwinder reciprocity. We can now define the Fourier/Hopf pairing $(\,,\,)_\mathrm{F}$ as the inner product
\be
 (\tP_\lam,\tP_\mu)_\mathrm{F} = \tP_\lam(\sp_\mu)
\ee
which extends by linearity to the entire ring of symmetric functions $\Lambda$, and it is symmetric by virtue of \eqref{eq:bispectral-duality}. Notice that this is the level $r=1$ version of the pairing defined in \eqref{eq:Fourier-r}, after setting the weight $u$ to 1.

With this definition, we have for any $f,g\in\Lambda$ the identity
\be
 \left(\eta_0^+\cdot f,g\right)_\mathrm{F} = \left(f,\tilde{e}_1\cdot g\right)_\mathrm{F}
\ee
where $\tilde{e}_1 := \tP_{[1]}$.
This allows us to compute explicitly the matrix element of the operator of multiplication by $\tilde{e}_1$ in terms of the action of the operator $\eta^+_0$ in \eqref{eq:rho_x}. The computation goes as follows.

Let $\lam,\mu$ be two arbitrary partitions. Then we consider the pairing
\be
\begin{aligned}
 (\tilde{e}_1\cdot \tP_\lam,\tP_\mu)_\mathrm{F}
 &= (\tP_\lam,\eta_0^+\cdot\tP_\mu)_\mathrm{F} \\
 &= \oint\frac{\mathd z}{2\pi\mathi z}
 \left(\tP_\lam,\mathe^{\sum_{k>0}\frac{z^k}{k}(1-q_1^k)p_k(\bsx)}
 \mathe^{-\sum_{k>0}z^{-k}(1-q_2^k)\frac{\p}{\p p_k(\bsx)}}\tP_\mu\right)_\mathrm{F} \\
 &= \oint\frac{\mathd z}{2\pi\mathi z}
 \left(\tP_\lam(\bsx),\mathe^{\sum_{k>0}\frac{z^k}{k}(1-q_1^k)p_k(\bsx)}\tP_\mu(\bsx-(1-q_2)z^{-1})\right)_\mathrm{F} \\
 &= \oint\frac{\mathd z}{2\pi\mathi z}
 \mathe^{\sum_{k>0}\frac{z^k}{k}(1-q_1^k)p_k(\sp_\lam)}
 \tP_\mu(\sp_\lam-(1-q_2)z^{-1}) \\
 &= \oint\frac{\mathd z}{2\pi\mathi z}
 \mathe^{\sum_{k>0}\frac{z^k}{k} q_1^k p_k(\me_\lam)}
 \tP_\mu(\sp_\lam-(1-q_2)z^{-1}) \\
 &= \oint\frac{\mathd z}{2\pi\mathi z}
 \frac{1}{\CY_\lam(q_1^{-1}z^{-1})}
 \tP_\mu(\sp_\lam-(1-q_2)z^{-1}) \\
 &= -\oint\frac{\mathd w}{2\pi\mathi w}
 \frac{1}{\CY_\lam(w)}
 \tP_\mu(\sp_\lam-(1-q_2)q_1w) \\
\end{aligned}
\ee
where we used the relation $\me_\lam=-(1-q_1^{-1})\sp_\lam$.
From the definition of $\me_\lam$ in \eqref{expr_el} it follows that the contour integral can now be evaluated by taking (minus) the residues at the zeroes of the function $\CY_\lam(w)$, i.e.\ $w=\chi_\sAbox$ for all the boxes that can be added to the partition $\lam$.
We then obtain
\be
 (\tilde{e}_1\cdot \tP_\lam,\tP_\mu)_\mathrm{F}
 = \sum_{\sAbox\in A(\lam)} \res_{w=\chi_\sAbox}
 \frac{1}{w\CY_\lam(w)}
 \tP_\mu(\sp_\lam-(1-q_2)q_1w)
 = \sum_{\sAbox\in A(\lam)}
 r_\lam(\Abox)\,
 \tP_\mu(\sp_\lam-(1-q_2)q_1\chi_\sAbox)
\ee
Using the definition of $\sp_\lam$ in \eqref{expr_sp}, we can write 
\be
 \tP_\mu(\sp_\lam-(1-q_2)q_1\chi_\sAbox)
 = \tP_\mu(\sp_{\lam+\sAbox})
 = (\tP_{\lam+\sAbox},\tP_\mu)_\mathrm{F}
\ee
and, by the non-degeneracy of the pairing, we finally obtain the identity
\be
\label{eq:f_tilde_PE}
 \tilde{e}_1\cdot \tP_\lam
 = \sum_{\sAbox\in A(\lam)}
 r_\lam(\Abox)\,
 \tP_{\lam+\sAbox}
\ee
which is the Pieri rule for the spherical Macdonald functions.

\subsection{Framing operator}\label{app:framing-op}

We show now that $\nabla(\bsu)$ in \eqref{eq:def-nabla} is the operator that represents the action of $F^\perp$ in the horizontal representation as in \eqref{eq:framing-horizontal}.
In order to do so, it is enough to check the commutation relations with the generators of the algebra $\CE$.
\begin{proposition}
Let $\tilde{a}_{-1} := -\frac{\g^{c/2}q_1^{-1}}{(1-q_3)} a_{-1}$.
Then we have
\be
 \rho^{(r,\bsn_0)}_{\bsu}(x^\pm_0)
 = \nabla(\bsu)^{-1} \rho^{(r,\bsn_0)}_{\bsu}(x^\pm_0) \nabla(\bsu)\,,
\ee
\be
\label{eq:framing-id-2}
 \rho^{(r,\bsn_0)}_{\bsu}(\tilde{a}_{-1})
 = \nabla(\bsu)^{-1} \rho^{(r,\bsn_0)}_{\bsu}(q_1^{-1}x^+_{-1}) \nabla(\bsu)\,,
\ee
\be
\label{eq:framing-id-3}
 \rho^{(r,\bsn_0)}_{\bsu}(\tilde{a}_{-1})
 = \nabla(\bsu) \rho^{(r,\bsn_0)}_{\bsu}(-\g^cq_2x^-_{-1}) \nabla(\bsu)^{-1}\,.
\ee
\end{proposition}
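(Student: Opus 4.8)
The plan is to verify all three identities by evaluating both sides on the generalized Macdonald basis $P_\bl(\bsx^\bullet|\bsu)$, which is complete in $\CF_\bsu^{(r,\bsn_0)}$. The two inputs I would use are that $\nabla(\bsu)$ is diagonal with $\nabla(\bsu)P_\bl = N_\bl\,P_\bl$, where $N_\bl := \prod_{\sAbox\in\bl}u_\sAbox\chi_\sAbox$, and that $\rho^{(r,\bsn_0)}_\bsu(x_0^\pm)$ is diagonal with eigenvalue $\sum_\a u_\a\me_{\l^{(\a)}}$ (resp.\ $\sum_\a u_\a^{-1}\me^\vee_{\l^{(\a)}}$) by \eqref{eq:gmp-eigenvectors-eq}. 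The first identity is then immediate: $\nabla(\bsu)$ and $\rho^{(r,\bsn_0)}_\bsu(x_0^\pm)$ are both diagonal on the same basis, hence commute, which is exactly $\rho^{(r,\bsn_0)}_\bsu(x_0^\pm)=\nabla(\bsu)^{-1}\rho^{(r,\bsn_0)}_\bsu(x_0^\pm)\nabla(\bsu)$.

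For the remaining two I would first reduce $x_{-1}^\pm$ to $a_{-1}$ using the algebraic identity $x_{-1}^\pm = (\pm1)\g^{\pm c/2}c_1^{-1}[a_{-1},x_0^\pm]$ from Appendix~\ref{AppA2}, with $c=r$ in this representation. Since $x_0^\pm$ is diagonal and $a_{-1}$ is a box-adding operator whose matrix elements $c^{(-1)}_{\bl,\bl+\sAbox}$ are read off from \eqref{apm_GMP}, the commutator collapses to
\be
 [a_{-1},x_0^\pm]\,P_\bl = \sum_{\a=1}^r\sum_{\sAbox\in A(\l^{(\a)})} c^{(-1)}_{\bl,\bl+\sAbox}\,\big(E^\pm_\bl-E^\pm_{\bl+\sAbox}\big)\,P_{\bl+\sAbox},
\ee
where $E^\pm_\bl$ is the $x_0^\pm$-eigenvalue. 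Using $\me_\l = 1-(1-q_1)(1-q_2)\chi_\l$ and its dual gives the eigenvalue jumps $E^+_\bl-E^+_{\bl+\sAbox}=u_\a(1-q_1)(1-q_2)\chi_\sAbox$ and $E^-_\bl-E^-_{\bl+\sAbox}=u_\a^{-1}(1-q_1^{-1})(1-q_2^{-1})\chi_\sAbox^{-1}$, so that $x_{-1}^\pm$ acts on $P_\bl$ again by adding a box, with coefficients differing from those of $a_{-1}$ only by the monomial $(u_\a\chi_\sAbox)^{\pm1}$ and a scalar prefactor.

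The final step is to conjugate by $\nabla(\bsu)$. Because $N_{\bl+\sAbox}/N_\bl = u_\a\chi_\sAbox$, the conjugation in \eqref{eq:framing-id-2} multiplies the $\bl\to\bl+\sAbox$ matrix element by $(u_\a\chi_\sAbox)^{-1}$, while the one in \eqref{eq:framing-id-3} multiplies it by $u_\a\chi_\sAbox$. In each case this factor exactly cancels the monomial carried by the $x_{-1}^\pm$ matrix element (a $u_\a\chi_\sAbox$ for $x_{-1}^+$, a $u_\a^{-1}\chi_\sAbox^{-1}$ for $x_{-1}^-$), so the box-dependence collapses back to that of $a_{-1}$; this cancellation is the whole content of the framing twist. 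It then remains only to match the scalar prefactors: collecting the powers of $\g$ and $q_i$ with the help of $\g^2=q_3$, $c_1=-(1-q_1)(1-q_2)(1-q_3)$, the identity $(1-q_3)/(\g-\g^{-1})=-\g$, and, for the third relation, $q_2q_3=q_1^{-1}$, one checks that both right-hand sides reduce to $-\tfrac{\g^{c/2}q_1^{-1}}{1-q_3}\rho^{(r,\bsn_0)}_\bsu(a_{-1})=\rho^{(r,\bsn_0)}_\bsu(\tilde a_{-1})$.

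The computation presents no conceptual obstacle; the only delicate point is the bookkeeping of the $\g$- and $q_i$-prefactors, where it is easy to misplace a power while combining $c^{(-1)}_{\bl,\bl+\sAbox}$, the $c_1^{-1}\g^{\pm c/2}$ normalization of the commutator, the eigenvalue jump, and the conjugation factor. I would tame this by recording a single summand coefficient per relation and simplifying the constants only at the very end. Finally, I would remark that these identities realize the defining exchange relations of $F^\perp$ in \eqref{def_F}: the first is $[F^\perp,x_0^\pm]=0$, and the second reproduces $x_{-1}^+F^\perp=-(\g-\g^{-1})^{-1}\g^{c/2}F^\perp a_{-1}$ once one accounts for $\rho^{(r,\bsn_0)}_\bsu(F^\perp)=(-\g^{-1})^{L_0}\nabla(\bsu)$, the degree shift of $x_{-1}^+$ supplying the extra factor $-\g$ upon conjugating by $(-\g^{-1})^{L_0}$ — so that \eqref{eq:framing-horizontal} does define a representation of $\CE_\text{ext.}^\perp$.
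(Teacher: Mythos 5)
Your proof is correct and follows essentially the same route as the paper's: the first identity from simultaneous diagonality, and the other two by writing $x_{-1}^\pm$ as a commutator of $a_{-1}$ (equivalently $\tilde a_{-1}$, which is how the paper normalizes it) with the diagonal $x_0^\pm$, so that the eigenvalue jump $u_\sAbox\chi_\sAbox$ (resp.\ $(u_\sAbox\chi_\sAbox)^{-1}$) cancels against the ratio $N_{\bl+\sAbox}/N_\bl$ produced by conjugation with $\nabla(\bsu)$. The prefactor bookkeeping you outline checks out against the paper's version.
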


\begin{proof}
Commutation between $\nabla(\bsu)$ and $\rho^{(r,\bsn_0)}_{\bsu}(x^\pm_0)$ is obvious as they are both defined as operators that are diagonal on the generalized Macdonald basis (for the same values of weights $\bsu$).
The second identity can be proven by using the commutation relation
\be
 q_1^{-k-1} x_{-k-1}^+
 = \frac1{(1-q_1)(1-q_2)} \left[\tilde{a}_{-1},q_1^{-k} x_{-k}^+\right]\,.
\ee
Then, for any vector of partitions $\blam$, we can write
\be
\begin{aligned}
 & \nabla(\bsu)^{-1} \rho^{(r,\bsn_0)}_{\bsu}(q_1^{-1}x^+_{-1}) \nabla(\bsu)
 P_\blam(\bsx^\bullet|\bsu) \\
 &= \frac1{(1-q_1)(1-q_2)}
 \prod_{\sAbox\in\blam} u_\sAbox\chi_\sAbox \nabla(\bsu)^{-1}
 \left(
 \rho^{(r,\bsn_0)}_{\bsu}(\tilde{a}_{-1}) \rho^{(r,\bsn_0)}_{\bsu}(x^+_0)
 - \rho^{(r,\bsn_0)}_{\bsu}(x^+_0) \rho^{(r,\bsn_0)}_{\bsu}(\tilde{a}_{-1})
 \right) P_\blam(\bsx^\bullet|\bsu) \\
 &= \frac1{(1-q_1)(1-q_2)}
 \sum_{\{\bnu:\bnu=\blam+\sAbox\}}
 \frac{\prod_{\sAbox\in\blam} u_\sAbox\chi_\sAbox}
 {\prod_{\sAbox\in\bnu} u_\sAbox\chi_\sAbox}
 \sum_{\a=1}^r u_\a(\me_{\lam^{(\a)}}-\me_{\nu^{(\a)}})
 C_{\bnu/\blam}
 P_\bnu(\bsx^\bullet|\bsu) \\
 &= \sum_{\{\bnu:\bnu=\blam+\sAbox\}}
 \frac{\prod_{\sAbox\in\blam} u_\sAbox\chi_\sAbox}
 {\prod_{\sAbox\in\bnu} u_\sAbox\chi_\sAbox}
 \left(\sum_{\sAbox\in\bnu} u_\sAbox\chi_\sAbox
 -\sum_{\sAbox\in\blam} u_\sAbox\chi_\sAbox\right)
 C_{\bnu/\blam}
 P_\bnu(\bsx^\bullet|\bsu) \\
 &= \sum_{\{\bnu:\bnu=\blam+\sAbox\}}
 C_{\bnu/\blam}
 P_\bnu(\bsx^\bullet|\bsu) \\
 &= \rho_\bsu^{(r,\bsn_0)}(\tilde{a}_{-1}) P_\blam(\bsx^\bullet|\bsu)
\end{aligned}
\ee
where $C_{\bnu/\blam}$ is the coefficient that appears in the generalized Pieri rule, i.e.\
\be
 \rho_\bsu^{(r,\bsn_0)}(\tilde{a}_{-1}) P_\bnu(\bsx^\bullet|\bsu)
 =\sum_{\{\bnu:\bnu=\blam+\sAbox\}} C_{\bnu/\blam}
 P_\bnu(\bsx^\bullet|\bsu)\,,
\ee
and, for $\bnu=\blam+\Abox$, we have used the identity
\be
 \frac{\prod_{\sAbox\in\blam} u_\sAbox\chi_\sAbox}
 {\prod_{\sAbox\in\bnu} u_\sAbox\chi_\sAbox}
 \left(\sum_{\sAbox\in\bnu} u_\sAbox\chi_\sAbox
 -\sum_{\sAbox\in\blam} u_\sAbox\chi_\sAbox\right) = 1\,.
\ee
therefore we obtain
\be
 \nabla(\bsu)^{-1} \rho^{(r,\bsn_0)}_{\bsu}(q_1^{-1}x^+_{-1}) \nabla(\bsu)
 = \rho^{(r,\bsn_0)}_{\bsu}(\tilde{a}_{-1})
\ee
which completes the proof of \eqref{eq:framing-id-2}.

The third identity can be proven using the commutation relation
\be
 (-\g^cq_2)^{k+1}x_{-k-1}^-
 = \frac1{(1-q_1^{-1})(1-q_2^{-1})} \left[\tilde{a}_{-1},(-\g^cq_2)^{k}x_{-k}^-\right]
\ee
Then, similarly to the previous case, we find
\be
\begin{aligned}
 &
 \nabla(\bsu) \rho^{(r,\bsn_0)}_{\bsu}(-\g^cq_2x^-_{-1}) \nabla(\bsu)^{-1}
 P_\blam(\bsx^\bullet|\bsu) \\
 &= \frac1{(1-q_1^{-1})(1-q_2^{-1})}
 \frac{1}{\prod_{\sAbox\in\blam} u_\sAbox\chi_\sAbox}
 \nabla(\bsu) \left(\rho^{(r,\bsn_0)}_{\bsu}(\tilde{a}_{-1}) \rho^{(r,\bsn_0)}_{\bsu}(x^-_0)
 -\rho^{(r,\bsn_0)}_{\bsu}(x^-_0) \rho^{(r,\bsn_0)}_{\bsu}(\tilde{a}_{-1}) \right)
 P_\blam(\bsx^\bullet|\bsu) \\
 &= \frac1{(1-q_1^{-1})(1-q_2^{-1})}
 \sum_{\{\bnu:\bnu=\blam+\sAbox\}}
 \frac{\prod_{\sAbox\in\bnu} u_\sAbox\chi_\sAbox}
 {\prod_{\sAbox\in\blam} u_\sAbox\chi_\sAbox}
 \sum_{\a=1}^r u_\a^{-1}(\me^\vee_{\lam^{(\a)}}-\me^\vee_{\nu^{(\a)}})
 C_{\bnu/\blam}
 P_\bnu(\bsx^\bullet|\bsu) \\
 &= \sum_{\{\bnu:\bnu=\blam+\sAbox\}}
 \frac{\prod_{\sAbox\in\bnu} u_\sAbox\chi_\sAbox}
 {\prod_{\sAbox\in\blam} u_\sAbox\chi_\sAbox}
 \left(\sum_{\sAbox\in\bnu} (u_\sAbox\chi_\sAbox)^{-1}
 -\sum_{\sAbox\in\blam} (u_\sAbox\chi_\sAbox)^{-1}\right)
 C_{\bnu/\blam}
 P_\bnu(\bsx^\bullet|\bsu) \\
 &= \sum_{\{\bnu:\bnu=\blam+\sAbox\}}
 C_{\bnu/\blam}
 P_\bnu(\bsx^\bullet|\bsu) \\
 &= \rho^{(r,\bsn_0)}_{\bsu}(\tilde{a}_{-1}) P_\blam(\bsx^\bullet|\bsu) \\
\end{aligned}
\ee
where we used that, for $\bnu=\blam+\Abox$, we have
\be
 \frac{\prod_{\sAbox\in\bnu} u_\sAbox\chi_\sAbox}
 {\prod_{\sAbox\in\blam} u_\sAbox\chi_\sAbox}
 \left(\sum_{\sAbox\in\bnu} (u_\sAbox\chi_\sAbox)^{-1}
 -\sum_{\sAbox\in\blam} (u_\sAbox\chi_\sAbox)^{-1}\right) = 1\,.
\ee
It then follows that
\be
 \nabla(\bsu) \rho^{(r,\bsn_0)}_{\bsu}(-\g^cq_2x^-_{-1}) \nabla(\bsu)^{-1}
 = \rho^{(r,\bsn_0)}_{\bsu}(\tilde{a}_{-1})
\ee
which completes the proof of \eqref{eq:framing-id-3}.
\end{proof}
Observe that by taking the adjoint of equations \eqref{eq:framing-id-2}, \eqref{eq:framing-id-3} w.r.t.\ the inner product $\langle-,-\rangle_{\mathrm{Z}}$, one obtains analogous commutation relations between the operator $\nabla(\bsu)$ and the $\rho_\bsu^{(r,\bsn_0)}(x^\pm_1)$.

\subsection{Proof of the Conjecture \ref{conj:BH} at level one}\label{App:conjecture}
In order to prove the identity, we need to show that the inner products
\be
\label{eq:BH-induction}
 \Big\langle f(\bsx),\rho^{(1,0)}_u(F^\perp)
 \mathe^{\sum_{k>0}\frac{(-1)^k p_k(\bsx)}{k(1-q_2^k)}}\Big\rangle_{q,t}
 = \Big\langle f(\bsx),
 \mathe^{-\sum_{k>0}\frac{(-\g^{-1}u)^k p_k(\bsx)}{k(1-q_2^k)}}\Big\rangle_{q,t}
\ee
are equal for any symmetric function $f\in\L$.
We observe that any function $f$ of degree $n$ can be written as a degree-$n$ non-commutative polynomial in the algebra generators $x^-_{-1}$ and $a_{-1}$, acting on the vacuum of the Fock module. This allows us to prove the equality by induction on the degree of $f$, provided the identity holds for degree zero. The degree zero case is obviously true given that
\be
 \Big\langle 1,\rho^{(1,0)}_u(F^\perp)
 \mathe^{\sum_{k>0}\frac{(-1)^k p_k}{k(1-q_2^k)}}\Big\rangle_{q,t}
 = \Big\langle \rho^{(1,0)}_u(F^\perp)1,
 \mathe^{\sum_{k>0}\frac{(-1)^k p_k}{k(1-q_2^k)}}\Big\rangle_{q,t}
 = 1 = \Big\langle 1,
 \mathe^{-\sum_{k>0}\frac{u^k p_k}{k(1-q_2^k)}}\Big\rangle_{q,t}
\ee
where we used that $\rho^{(1,0)}_u(F^\perp)$ is self-adjoint and it has eigenvalue 1 on the vacuum.
The induction argument goes as follows. Suppose \eqref{eq:BH-induction} holds
for some $f$ of degree $n$. Then we consider the scalar product of the l.h.s.\ with the function $\rho^{(1,0)}_u(a_{-1})f$ of degree $n+1$,
\be
\begin{aligned}
 \Big\langle \rho^{(1,0)}_u(a_{-1})\,f,&\rho^{(1,0)}_u(F^\perp)
 \mathe^{\sum_{k>0}\frac{(-1)^k p_k}{k(1-q_2^k)}}\Big\rangle_{q,t}
 = \Big\langle f,\rho^{(1,0)}_u(\tilde{\s}_H(a_{-1}))\rho^{(1,0)}_u(F^\perp)
 \mathe^{\sum_{k>0}\frac{(-1)^k p_k}{k(1-q_2^k)}}\Big\rangle_{q,t}\\
 &= \Big\langle f,\rho^{(1,0)}_u(-q_1a_1)\rho^{(1,0)}_u(F^\perp)
 \mathe^{\sum_{k>0}\frac{(-1)^k p_k}{k(1-q_2^k)}}\Big\rangle_{q,t}\\
 &= \Big\langle f,-q_1(\g-\g^{-1})\g^{-1/2}\rho^{(1,0)}_u(F^\perp)\rho^{(1,0)}_u(x^+_1)
 \mathe^{\sum_{k>0}\frac{(-1)^k p_k}{k(1-q_2^k)}}\Big\rangle_{q,t}.
\end{aligned}
\ee
Using the representation of $x^+(z)$ in the level-one horizontal module, we find
\be
\label{eq:BH-ind-id1}
\begin{aligned}
 \rho^{(1,0)}_u(x^+_1)
 \mathe^{\sum_{k>0}\frac{(-1)^k p_k}{k(1-q_2^k)}}
 &= u\oint\frac{\mathd z}{2\pi\mathi z} z
 \mathe^{\sum_{k>0}\frac{z^k}{k}(1-q_1^k) p_k}
 \mathe^{-\sum_{k>0}z^{-k}(1-q_2^k) \frac{\p}{\p p_k}}
 \mathe^{\sum_{k>0}\frac{(-1)^k p_k}{k(1-q_2^k)}} \\
 &= u\,\mathe^{\sum_{k>0}\frac{(-1)^k p_k}{k(1-q_2^k)}}
 \oint\frac{\mathd z}{2\pi\mathi z} z
 \mathe^{-\sum_{k>0}\frac{(-z)^{-k}}{k}}
 \mathe^{\sum_{k>0}\frac{z^k}{k}(1-q_1^k) p_k} \\
 &= u\,\mathe^{\sum_{k>0}\frac{(-1)^k p_k}{k(1-q_2^k)}}
 \oint\frac{\mathd z}{2\pi\mathi z} z
 (1+z^{-1}) \mathe^{\sum_{k>0}\frac{z^k}{k}(1-q_1^k) p_k} \\
 &= u\,\mathe^{\sum_{k>0}\frac{(-1)^k p_k}{k(1-q_2^k)}}
\end{aligned}
\ee
Hence,
\be
\begin{aligned}
 \Big\langle \rho^{(1,0)}_u(a_{-1})\,f,\rho^{(1,0)}_u(F^\perp)
 \mathe^{\sum_{k>0}\frac{(-1)^k p_k}{k(1-q_2^k)}}\Big\rangle_{q,t}
 &= -uq_1(\g-\g^{-1})\g^{-1/2}\Big\langle f,\rho^{(1,0)}_u(F^\perp)\,
 \mathe^{\sum_{k>0}\frac{(-1)^k p_k}{k(1-q_2^k)}}\Big\rangle_{q,t}\\
 &= -uq_1(\g-\g^{-1})\g^{-1/2}\Big\langle f,
 \mathe^{-\sum_{k>0}\frac{(-\g^{-1}u)^k p_k}{k(1-q_2^k)}}\Big\rangle_{q,t}\\
 &= q_1\g^{-1/2}(1-q_2)(1-q_3)\Big\langle f, \frac{\p}{\p p_1}
 \mathe^{-\sum_{k>0}\frac{(-\g^{-1}u)^k p_k}{k(1-q_2^k)}}\Big\rangle_{q,t}\\
 &= \Big\langle f, -q_1\rho^{(1,0)}_u(a_1)
 \mathe^{-\sum_{k>0}\frac{(-\g^{-1}u)^k p_k}{k(1-q_2^k)}}\Big\rangle_{q,t}\\
 &= \Big\langle \rho^{(1,0)}_u(a_{-1}) f,
 \mathe^{-\sum_{k>0}\frac{(-\g^{-1}u)^k p_k}{k(1-q_2^k)}}\Big\rangle_{q,t},
\end{aligned}
\ee
where in the second line we used the induction hypothesis \eqref{eq:BH-induction}.
We now need to repeat the argument for the action of $x^-_{-1}$. Starting with the l.h.s., we use
\be
 \tilde{\s}_H(x^-_{-1})F^\perp
 = -q_1x^-_1 F^\perp
 = q_1(\g-\g^{-1})^{-1}\g^{-c/2}F^\perp a_1,
\ee
together with
\be
 \rho^{(1,0)}_u(a_1)\mathe^{\sum_{k>0}\frac{(-1)^k p_k}{k(1-q_2^k)}}
 = -\g^{-1/2}(1-q_2)(1-q_3)\frac{\p}{\p p_1} \mathe^{\sum_{k>0}\frac{(-1)^k p_k}{k(1-q_2^k)}}
 = \g^{-1/2}(1-q_3) \mathe^{\sum_{k>0}\frac{(-1)^k p_k}{k(1-q_2^k)}},
\ee
to obtain
\be
\begin{aligned}
 \Big\langle \rho^{(1,0)}_u(x^-_{-1})\,f,\rho^{(1,0)}_u(F^\perp)
 \mathe^{\sum_{k>0}\frac{(-1)^k p_k}{k(1-q_2^k)}}\Big\rangle_{q,t}
 &= -q_1\Big\langle f,\rho^{(1,0)}_u(F^\perp)\,
 \mathe^{\sum_{k>0}\frac{(-1)^k p_k}{k(1-q_2^k)}}\Big\rangle_{q,t}\\
 &= -q_1\Big\langle f,
 \mathe^{-\sum_{k>0}\frac{(-\g^{-1}u)^k p_k}{k(1-q_2^k)}}\Big\rangle_{q,t},
\end{aligned}
\ee
where the second line is obtained using the induction assumption.
Moving to the r.h.s., we need to compute
\be
\label{eq:BH-ind-id2}
\begin{aligned}
 \rho^{(1,0)}_u(x^-_1)\mathe^{-\sum_{k>0}\frac{(-\g^{-1}u)^k p_k}{k(1-q_2^k)}}
 &= u^{-1}\oint\frac{\mathd z}{2\pi\mathi z} z
 \mathe^{-\sum_{k>0}\frac{z^k}{k}\g^k(1-q_1^k) p_k}
 \mathe^{\sum_{k>0}z^{-k}\g^k(1-q_2^k) \frac{\p}{\p p_k}}
 \mathe^{-\sum_{k>0}\frac{(-\g^{-1}u)^k p_k}{k(1-q_2^k)}}\\
 &= u^{-1}\mathe^{-\sum_{k>0}\frac{(-\g^{-1}u)^k p_k}{k(1-q_2^k)}}
 \oint\frac{\mathd z}{2\pi\mathi z} z
 \mathe^{-\sum_{k>0}\frac{(-z^{-1}u)^k}{k}}
 \mathe^{-\sum_{k>0}\frac{z^k}{k}\g^k(1-q_1^k) p_k} \\
 &= u^{-1}\mathe^{-\sum_{k>0}\frac{(-\g^{-1}u)^k p_k}{k(1-q_2^k)}}
 \oint\frac{\mathd z}{2\pi\mathi z} z(1+z^{-1}u)
 \mathe^{-\sum_{k>0}\frac{z^k}{k}\g^k(1-q_1^k) p_k} \\
 &= \mathe^{-\sum_{k>0}\frac{(-\g^{-1}u)^k p_k}{k(1-q_2^k)}},
\end{aligned}
\ee
which shows that,
\be
\begin{aligned}
 \Big\langle \rho^{(1,0)}_u(x^-_{-1})\,f,\rho^{(1,0)}_u(F^\perp)
 \mathe^{\sum_{k>0}\frac{(-1)^k p_k}{k(1-q_2^k)}}\Big\rangle_{q,t}
 &= \Big\langle f,\rho^{(1,0)}_u(-q_1x^-_1)
 \mathe^{-\sum_{k>0}\frac{(-\g^{-1}u)^k p_k}{k(1-q_2^k)}}\Big\rangle_{q,t}\\
 &= \Big\langle \rho^{(1,0)}_u(x^-_{-1})f,
 \mathe^{-\sum_{k>0}\frac{(-\g^{-1}u)^k p_k}{k(1-q_2^k)}}\Big\rangle_{q,t},
\end{aligned}
\ee
and it concludes the proof at level 1.

\begin{remark}
At level higher than 1, the same identities can be shown to hold for $x^-_{-1}$ and $a_{-1}$, since they come from universal relations of the algebra $\CE$.
In particular, one needs to show that
\be
 \rho_\bsu^{(r,\bsn_0)}(x^+_1)
 \exp\left(\sum_{k>0}\frac{(-1)^k}{k(1-q_2^k)}\sum_{\a=1}^r v_\a^k p_k^{(\a)}\right)
 = \Big(\sum_{\a=1}^r u_\a v_\a\Big)
 \exp\left(\sum_{k>0}\frac{(-1)^k}{k(1-q_2^k)}\sum_{\a=1}^r v_\a^k p_k^{(\a)}\right),
\ee
and
\begin{multline}
 \rho_\bsu^{(r,\bsn_0)}(x^-_1)
 \exp\left(-\sum_{k>0}\frac{(-\g^{-1})^k}{k(1-q_2^k)}\sum_{\a=1}^r p_k^{(\a)}
 \Big((u_\a v_\a)^k+(1-q_3^k)\sum_{\b=\a+1}^r(u_\b v_\b)^k\Big)\right)\\
 = \Big(\sum_{\a=1}^r \g^{-r+2\a-1} v_\a\Big)
 \exp\left(-\sum_{k>0}\frac{(-\g^{-1})^k}{k(1-q_2^k)}\sum_{\a=1}^r p_k^{(\a)}
 \Big((u_\a v_\a)^k+(1-q_3^k)\sum_{\b=\a+1}^r(u_\b v_\b)^k\Big)\right),
\end{multline}
whose proofs are straightforward generalizations of \eqref{eq:BH-ind-id1} and \eqref{eq:BH-ind-id2}, respectively.
The non-trivial part of the proof, however, is the fact that it is no longer true that any multi-symmetric function $f(\bsx^\bullet)$ can be expressed as a non-commutative polynomial in those two generators. Instead, one needs to use all generators of the form $(\CT^\perp)^k (a_{-1})$ (for $k\geq0$ at least), where $\CT^\perp(a_{-1}) \propto x^-_{-1}$.
\end{remark}

\section{Covariance of higher vertex operators}\label{sec_higher_VO}
In this appendix, we study the higher vertex operators represented in Figures~\ref{fig_intw_m2} and \ref{fig_intw_m3} and their generalizations to higher level $r$. In particular, we will show by induction that these objects solve the intertwining relations \eqref{prop_intw_r}. Note that in this section, we consider the decomposition of the vertex operator on the vertical states $\dket{\bl}^\otimes$, i.e.
\be
\begin{aligned}\label{def_VO}
 \Phi^{(r,\bsn)}[\bsu,\bsv,\bsw]
 &= \sum_{\bl} \prod_\a a_{\l^{(\a)}}\ \Phi_{\bl}^{(r,\bsn)}[\bsu,\bsv,\bsw]\  \dbra{\bl}^\otimes\\
 \Phi^{(r,\bsn)\ast}[\bsu,\bsv,\bsw]
 &= \sum_{\bl} \prod_\a a_{\l^{(\a)}}\ \Phi_{\bl}^{(r,\bsn)\ast}[\bsu,\bsv,\bsw]\  \dket{\bl}^\otimes
\end{aligned}
\ee
This is so that the vertical representation is given by the coproduct construction. In order to recover the vertical component used in the main text, it should be multiplied by an extra factor $G_\bl(\bsv)$ or $G^\ast_\bl(\bsv)$ according to the rules \eqref{eq:rule_G}. This factor does not depend on the internal vertical weight, and so the arguments about the critical values developed below remain unchanged. Note also that the formulas \eqref{def_VO} do not reflect the choice of normalization \eqref{norm_VO} that will be imposed at the end of the section.

\subsection{Vertex operator \texorpdfstring{$\Phi^{(r,\bsn)}$}{Phi(r,n)}}
\label{app:B1}
\begin{figure}
\begin{center}
\raisebox{-0.5\height}{\begin{tikzpicture}[scale=1]
\draw[postaction={on each segment={mid arrow=black}}] (-1,0) -- (0,0) -- (0.7,0.7) -- (1.7,0.7) -- (2.4,1.4) -- (3.4,1.4) -- (4.1,2.1);
\draw[postaction={on each segment={mid arrow=black}}] (-1,1.7) -- (0.7,1.7) -- (1.4,2.4) -- (2.4,2.4) -- (3.8,3.8);
\draw[postaction={on each segment={mid arrow=black}}] (-1,3.4) -- (1.4,3.4) -- (3.5,5.5);
\draw[postaction={on each segment={mid arrow=black}}] (0,-1) -- (0,0) -- (0.7,0.7) -- (0.7,1.7) -- (1.4,2.4) -- (1.4,3.4);
\draw[postaction={on each segment={mid arrow=black}}] (1.7,-1) -- (1.7,0.7) -- (2.4,1.4) -- (2.4,2.4);
\draw[postaction={on each segment={mid arrow=black}}] (3.4,-1) -- (3.4,1.4);
\node[above,scale=0.7] at (0,0) {$\Phi_{33}$};
\node[below,scale=0.7] at (0.7,0.7) {$\Phi^\ast_{32}$};
\node[above,scale=0.7] at (1.7,0.7) {$\Phi_{32}$};
\node[below,scale=0.7] at (2.4,1.4) {$\Phi^\ast_{31}$};
\node[above,scale=0.7] at (3.4,1.4) {$\Phi_{31}$};
\node[above,scale=0.7] at (0.7,1.7) {$\Phi_{22}$};
\node[below,scale=0.7] at (1.4,2.4) {$\Phi^\ast_{21}$};
\node[above,scale=0.7] at (2.4,2.4) {$\Phi_{21}$};
\node[above,scale=0.7] at (1.4,3.4) {$\Phi_{11}$};
\node[left,scale=0.7] at (.7,1.2) {$w_{2,2}$};
\node[left,scale=0.7] at (2.4,1.9) {$w_{2,1}$};
\node[left,scale=0.7] at (1.4,2.9) {$w_{1,1}$};
\node[below,scale=0.7] at (0,-1) {$v_3=w_{3,3}$};
\node[below,scale=0.7] at (1.7,-1) {$v_2=w_{3,2}$};
\node[below,scale=0.7] at (3.4,-1) {$v_1=w_{3,1}$};
\node[left,scale=0.7] at (-1,0) {$u_1=u_{3,3}$};
\node[left,scale=0.7] at (-1,1.7) {$u_2=u_{2,2}$};
\node[left,scale=0.7] at (-1,3.4) {$u_3=u_{1,1}$};
\node[right,scale=0.7] at (4.1,2.1) {$u_1'=u_{3,1}'$};
\node[right,scale=0.7] at (3.8,3.8) {$u_2'=u_{2,1}'$};
\node[right,scale=0.7] at (3.5,5.5) {$u_3'=u_{1,1}'$};
\end{tikzpicture}}
\hspace{30pt}
\raisebox{-0.5\height}{\begin{tikzpicture}[scale=1]
\draw[postaction={on each segment={mid arrow=black}}] (0,0) -- (1,0) -- (1.7,0.7);
\draw[postaction={on each segment={mid arrow=black}}] (1,-1) -- (1,0);
\node[above,scale=0.7] at (0.9,0) {$\Phi_{\a,\b}$};
\node[above,scale=0.7] at (0,0) {$\CF^{(1,n_{r-\a+1})}_{u_{\a,\b}}$};
\node[right,scale=0.7] at (1,-0.5) {$\CF^{(0,1)}_{w_{\a,\b}}$};
\node[above,scale=0.7] at (1.7,0.7) {$\CF^{(1,1+n_{r-\a+1})}_{u_{\a,\b}'}$};
\end{tikzpicture}}
\hspace{5mm}
\raisebox{-0.5\height}{\begin{tikzpicture}[scale=1]
\draw[postaction={on each segment={mid arrow=black}}] (-0.7,-0.7) -- (0,0) -- (0,1);
\draw[postaction={on each segment={mid arrow=black}}] (0,0) -- (1,0);
\node[left,scale=0.7] at (0,0) {$\Phi^{\ast}_{\a,\b}$};
\node[below,scale=0.7] at (1,0) {$\CF^{(1,n_{r-\a+1})}_{u_{\a,\b}}$};
\node[right,scale=0.7] at (0,0.5) {$\CF^{(0,1)}_{w_{\b-1,\b}}$};
\node[below,scale=0.7] at (-0.7,-0.7) {$\CF^{(1,1+n_{r-\a+1})}_{u_{\a,\b+1}'}$};
\end{tikzpicture}}
\end{center}
\caption{Diagram associated to $\Phi^{(r,\bsn)}$ showing the choice of labels for weights and intermediate vertex operators (right).}
\label{fig_labelling}
\end{figure}

The first step is to introduce a proper labeling of the intermediate weights and vertex operators, this is done in Figure~\ref{fig_labelling} for $r=3$, and can easily be generalized to higher levels. Hence, we introduced the notation AFS vertex operators $\Phi_{\a,\b}=\Phi^{(1,n_{r-\a+1})}[u_{\a,\b},w_{\a,\b}]$ and $\Phi_{\a,\b}^\ast=\Phi_{\a,\b}^{\ast(1,n_{r-\a+1})}[u_{\a,\b},w_{\a-1,\b}]$  with $1\leq \b\leq \a\leq r$ (resp. $1\leq \b<\a\leq r$). With this notation, we have
\be
 u_\a=u_{r-\a+1,r-\a+1},
 \hspace{30pt}
 v_\b=w_{r,\b},
 \hspace{30pt}
 u'_\a=u_{r-\a+1,1},
\ee
and the weights constraint of AFS vertex operators, namely $u'_{\a,\b}=-\g u_{\a,\b}w_{\a,\b}=-\g u_{\a,\b-1}w_{\a-1,\b-1}$ fixes all intermediate $u_{\a,\b}$ and $u'_{\a,\b}$, and leads to the relations \eqref{rel_weights}.

The vertical component of the vertex operator $\Phi^{(r,\bsn)}[\bsu,\bsv,\bsw]$ can be expressed as a sum over the realizations of $r(r-1)/2$ partitions $\l^{(\a,\b)}$ with $1\leq \b\leq \a\leq r-1$ of vertex operators acting on the tensor product of $r$ bosonic Fock spaces,
\be
\label{exp_Phi_rbsn}
\Phi_\bl^{(r,\bsn)}[\bsu,\bsv,\bsw]=\sum_{\{\l^{(\a,\b)}\}}\prod_{\a,\b}a_{\l^{(\a,\b)}}\ \Phi_{r,1}\Phi_{r,1}^\ast\Phi_{r,2}\cdots\Phi_{r,r-1}^\ast\Phi_{r,r}\otimes\Phi_{r-1,1}\Phi^\ast_{r-1,1}\cdots\Phi_{r-1,r-2}^\ast\Phi_{r-1,r-1}\otimes\cdots\otimes \Phi_{1,1},
\ee
where we used the following shortcut notations for the AFS vertex operators,
\be
 \Phi_{\a,\b}=\Phi_{\l^{(\a,\b)}}^{(1,n_{r-\a+1})}[u_{\a,\b},w_{\a,\b}],
 \hspace{30pt}
 \Phi_{\a,\b}^\ast=\Phi_{\l^{(\a-1,\b)}}^{(1,n_{r-\a+1})\ast}[u_{\a,\b},w_{\a-1,\b}],
\ee
and we denoted $\l^{(r,\a)}=\l^{(\a)}$ the fixed partitions of the external vertical modules, $\bl=(\l^{(1)},\dots,\l^{(r)})$.

\paragraph{Recursive construction.} There are three ways to build recursively the vertex operators, corresponding to the three sets of outer legs. First, considering the lower vertical legs, i.e.
\begin{center}
\begin{tikzpicture}[scale=1]
\draw[dotted,thick] (-1,1) -- (0,1) -- (3.4,2.4) -- (4.1,3.1);
\draw[postaction={on each segment={mid arrow=black}}] (-1,0) -- (0,0) -- (0.7,0.7) -- (1.7,0.7) -- (2.4,1.4) -- (3.4,1.4) -- (4.1,2.1);
\draw[postaction={on each segment={mid arrow=black}}] (-1,1.7) -- (0.7,1.7) -- (1.4,2.4) -- (2.4,2.4) -- (3.8,3.8);
\draw[postaction={on each segment={mid arrow=black}}] (-1,3.4) -- (1.4,3.4) -- (3.5,5.5);
\draw[postaction={on each segment={mid arrow=black}}] (0,-1) -- (0,0) -- (0.7,0.7) -- (0.7,1.7) -- (1.4,2.4) -- (1.4,3.4);
\draw[postaction={on each segment={mid arrow=black}}] (1.7,-1) -- (1.7,0.7) -- (2.4,1.4) -- (2.4,2.4);
\draw[postaction={on each segment={mid arrow=black}}] (3.4,-1) -- (3.4,1.4);
\node[above,scale=0.7] at (0,0) {$\Phi_{33}$};
\node[below,scale=0.7] at (0.7,0.7) {$\Phi^\ast_{32}$};
\node[above,scale=0.7] at (1.7,0.7) {$\Phi_{32}$};
\node[below,scale=0.7] at (2.4,1.4) {$\Phi^\ast_{31}$};
\node[above,scale=0.7] at (3.4,1.4) {$\Phi_{31}$};
\node[above,scale=0.7] at (0.7,1.7) {$\Phi_{22}$};
\node[below,scale=0.7] at (1.4,2.4) {$\Phi^\ast_{21}$};
\node[above,scale=0.7] at (2.4,2.4) {$\Phi_{21}$};
\node[above,scale=0.7] at (1.4,3.4) {$\Phi_{11}$};
\node[left,scale=0.7] at (.7,1.2) {$w_{2,2}$};
\node[left,scale=0.7] at (2.4,1.9) {$w_{2,1}$};
\node[left,scale=0.7] at (1.4,2.9) {$w_{1,1}$};
\node[below,scale=0.7] at (0,-1) {$v_3=w_{3,3}$};
\node[below,scale=0.7] at (1.7,-1) {$v_2=w_{3,2}$};
\node[below,scale=0.7] at (3.4,-1) {$v_1=w_{3,1}$};
\node[left,scale=0.7] at (-1,0) {$u_1=u_{3,3}$};
\node[left,scale=0.7] at (-1,1.7) {$u_2=u_{2,2}$};
\node[left,scale=0.7] at (-1,3.4) {$u_3=u_{1,1}$};
\node[right,scale=0.7] at (4.1,2.1) {$u_1'=u_{3,1}'$};
\node[right,scale=0.7] at (3.8,3.8) {$u_2'=u_{2,1}'$};
\node[right,scale=0.7] at (3.5,5.5) {$u_3'=u_{1,1}'$};
\end{tikzpicture}
\end{center}
Cutting along the dashed line, gives the following recursion,
\be
\label{rec_I}
 \Phi^{(r+1,\bsn)}_{\l^{(1)},\dots,\l^{(r+1)}}
 =\sum_{\mu^{(1)},\dots,\mu^{(r)}}\prod_{\a=1}^r a_{\mu^{(\a)}}\
 \Phi_{\l^{(1)}}\Phi_{\mu^{(1)}}^\ast\cdots\Phi_{\l^{(r)}}\Phi_{\mu^{(r)}}^\ast
 \Phi_{\l^{(r+1)}}\otimes \Phi^{(r,\bsn)}_{\mu^{(1)},\dots,\mu^{(r)}},
\ee
where we omitted the weights and used the shortcut notations $\Phi_{\l^{(\a)}}=\Phi_{\l^{(\a)}}^{(1,n_1)}[u_{r,\a},w_{r,\a}]$ and $\Phi_{\mu^{(\a)}}^\ast=\Phi_{\mu^{(\a)}}^{(1,n_1)\ast}[u_{r,\a},w_{r-1,\a}]$. Alternatively, we can add an additional `column' either on the left or on the right,
\begin{center}
\begin{tikzpicture}[scale=1]
\draw[dotted,thick] (.85,-1) -- (.85,0) -- (1.7,1.7) -- (1.7,2.7) -- (3.4,4.4);
\draw[postaction={on each segment={mid arrow=black}}] (-1,0) -- (0,0) -- (0.7,0.7) -- (1.7,0.7) -- (2.4,1.4) -- (3.4,1.4) -- (4.1,2.1);
\draw[postaction={on each segment={mid arrow=black}}] (-1,1.7) -- (0.7,1.7) -- (1.4,2.4) -- (2.4,2.4) -- (3.8,3.8);
\draw[postaction={on each segment={mid arrow=black}}] (-1,3.4) -- (1.4,3.4) -- (3.5,5.5);
\draw[postaction={on each segment={mid arrow=black}}] (0,-1) -- (0,0) -- (0.7,0.7) -- (0.7,1.7) -- (1.4,2.4) -- (1.4,3.4);
\draw[postaction={on each segment={mid arrow=black}}] (1.7,-1) -- (1.7,0.7) -- (2.4,1.4) -- (2.4,2.4);
\draw[postaction={on each segment={mid arrow=black}}] (3.4,-1) -- (3.4,1.4);
\node[above,scale=0.7] at (0,0) {$\Phi_{33}$};
\node[below,scale=0.7] at (0.7,0.7) {$\Phi^\ast_{32}$};
\node[above,scale=0.7] at (1.7,0.7) {$\Phi_{32}$};
\node[below,scale=0.7] at (2.4,1.4) {$\Phi^\ast_{31}$};
\node[above,scale=0.7] at (3.4,1.4) {$\Phi_{31}$};
\node[above,scale=0.7] at (0.7,1.7) {$\Phi_{22}$};
\node[below,scale=0.7] at (1.4,2.4) {$\Phi^\ast_{21}$};
\node[above,scale=0.7] at (2.4,2.4) {$\Phi_{21}$};
\node[above,scale=0.7] at (1.4,3.4) {$\Phi_{11}$};
\node[left,scale=0.7] at (.7,1.2) {$w_{2,2}$};
\node[left,scale=0.7] at (2.4,1.9) {$w_{2,1}$};
\node[left,scale=0.7] at (1.4,2.9) {$w_{1,1}$};
\node[below,scale=0.7] at (0,-1) {$v_3=w_{3,3}$};
\node[below,scale=0.7] at (1.7,-1) {$v_2=w_{3,2}$};
\node[below,scale=0.7] at (3.4,-1) {$v_1=w_{3,1}$};
\node[left,scale=0.7] at (-1,0) {$u_1=u_{3,3}$};
\node[left,scale=0.7] at (-1,1.7) {$u_2=u_{2,2}$};
\node[left,scale=0.7] at (-1,3.4) {$u_3=u_{1,1}$};
\node[right,scale=0.7] at (4.1,2.1) {$u_1'=u_{3,1}'$};
\node[right,scale=0.7] at (3.8,3.8) {$u_2'=u_{2,1}'$};
\node[right,scale=0.7] at (3.5,5.5) {$u_3'=u_{1,1}'$};
\end{tikzpicture}
\hspace{10mm}
\begin{tikzpicture}[scale=1]
\draw[dotted,thick] (1.7+.85,-1) -- (1.7+.85,.9) -- (.2,2.8) -- (-1,2.8);
\draw[postaction={on each segment={mid arrow=black}}] (-1,0) -- (0,0) -- (0.7,0.7) -- (1.7,0.7) -- (2.4,1.4) -- (3.4,1.4) -- (4.1,2.1);
\draw[postaction={on each segment={mid arrow=black}}] (-1,1.7) -- (0.7,1.7) -- (1.4,2.4) -- (2.4,2.4) -- (3.8,3.8);
\draw[postaction={on each segment={mid arrow=black}}] (-1,3.4) -- (1.4,3.4) -- (3.5,5.5);
\draw[postaction={on each segment={mid arrow=black}}] (0,-1) -- (0,0) -- (0.7,0.7) -- (0.7,1.7) -- (1.4,2.4) -- (1.4,3.4);
\draw[postaction={on each segment={mid arrow=black}}] (1.7,-1) -- (1.7,0.7) -- (2.4,1.4) -- (2.4,2.4);
\draw[postaction={on each segment={mid arrow=black}}] (3.4,-1) -- (3.4,1.4);
\node[above,scale=0.7] at (0,0) {$\Phi_{33}$};
\node[below,scale=0.7] at (0.7,0.7) {$\Phi^\ast_{32}$};
\node[above,scale=0.7] at (1.7,0.7) {$\Phi_{32}$};
\node[below,scale=0.7] at (2.4,1.4) {$\Phi^\ast_{31}$};
\node[above,scale=0.7] at (3.4,1.4) {$\Phi_{31}$};
\node[above,scale=0.7] at (0.7,1.7) {$\Phi_{22}$};
\node[below,scale=0.7] at (1.4,2.4) {$\Phi^\ast_{21}$};
\node[above,scale=0.7] at (2.4,2.4) {$\Phi_{21}$};
\node[above,scale=0.7] at (1.4,3.4) {$\Phi_{11}$};
\node[left,scale=0.7] at (.7,1.2) {$w_{2,2}$};
\node[left,scale=0.7] at (2.4,1.9) {$w_{2,1}$};
\node[left,scale=0.7] at (1.4,2.9) {$w_{1,1}$};
\node[below,scale=0.7] at (0,-1) {$v_3=w_{3,3}$};
\node[below,scale=0.7] at (1.7,-1) {$v_2=w_{3,2}$};
\node[below,scale=0.7] at (3.4,-1) {$v_1=w_{3,1}$};
\node[left,scale=0.7] at (-1,0) {$u_1=u_{3,3}$};
\node[left,scale=0.7] at (-1,1.7) {$u_2=u_{2,2}$};
\node[left,scale=0.7] at (-1,3.4) {$u_3=u_{1,1}$};
\node[right,scale=0.7] at (4.1,2.1) {$u_1'=u_{3,1}'$};
\node[right,scale=0.7] at (3.8,3.8) {$u_2'=u_{2,1}'$};
\node[right,scale=0.7] at (3.5,5.5) {$u_3'=u_{1,1}'$};
\end{tikzpicture}
\end{center}
Two more recursion formulas follow, corresponding to the left and right cuts, respectively.
The first way of splitting the diagram gives 
\begin{equation}\label{rec_II}
\Phi^{(r+1,\bsn)}_{\l^{(1)},\dots,\l^{(r+1)}}=\left(\Phi^{(r,\bsn)}_{\l^{(1)},\dots,\l^{(r)}}\otimes1\right)\left(\sum_{\mu^{(1)},\dots,\mu^{(r)}}\prod_{\a=1}^r a_{\mu^{(\a)}}\ \Phi_{\mu^{(1)}}^\ast\Phi_{\l^{(r+1)}}\otimes\Phi_{\mu^{(2)}}^\ast\Phi_{\mu^{(1)}}\otimes\cdots\otimes\Phi_{\mu^{(r)}}^\ast\Phi_{\mu^{(r-1)}}\otimes\Phi_{\mu^{(r)}}\right),
\end{equation} 
with $\Phi_{\l^{(r+1)}}=\Phi_{\l^{(r+1)}}^{(1,n_1)}[u_{1},v_{r+1}]$, $\Phi_{\mu^{(\a)}}=\Phi_{\mu^{(\a)}}^{(1,n_{\a+1})}[u_{r-\a,r-\a},w_{r-\a,r-\a}]$, $\Phi_{\mu^{(\a)}}^\ast=\Phi_{\mu^{(\a)}}^{(1,n_\a)\ast}[u_{r-\a+1,r-\a+1},w_{r-\a,r-\a+1}]$, while the second gives
\begin{equation}\label{rec_III}
\Phi^{(r+1,\bsn)}_{\l^{(1)},\dots,\l^{(r+1)}}=\left(\sum_{\mu^{(1)},\dots,\mu^{(r)}}\prod_{\a=1}^r a_{\mu^{(\a)}}\ \Phi_{\l^{(1)}}\Phi_{\mu^{(1)}}^\ast\otimes\Phi_{\mu^{(1)}}\Phi_{\mu^{(2)}}^\ast\otimes\cdots\otimes\Phi_{\mu^{(r-1)}}\Phi_{\mu^{(r)}}^\ast\otimes\Phi_{\mu^{(r)}}\right)\left(\Phi^{(r,\bsn)}_{\l^{(2)},\dots,\l^{(r+1)}}\otimes1\right),
\end{equation} 
with $\Phi_{\l^{(1)}}=\Phi_{\l^{(1)}}^{(1,n_1)}[u_{r,1},v_1]$, $\Phi_{\mu^{(\a)}}=\Phi_{\mu^{(\a)}}^{(1,n_{\a+1})}[u_{r-\a,1},w_{r-\a,1}]$ and $\Phi_{\mu^{(\a)}}^\ast=\Phi_{\mu^{(\a)}}^{(1,n_\a)\ast}[u_{r-\a+1,1},w_{r-\a,1}]$.

\paragraph{Intertwining relation.} To prove the intertwining relation, we will use the recursion relation \eqref{rec_I} obtained by cutting the bottom part of the vertex operator. It will be convenient to introduce the operator
\be
\label{eq:opAuvv}
 A^{(r,n)}[u,\bsv,\bsv']:\CF_{\bsv}^{(0,r)}\otimes \CF_u^{(1,n)}\to \CF_{\bsv'}^{(0,r-1)}\otimes \CF^{(1,n+1)}_{u'},
 \hspace{30pt}
 u'=-\g v_r\prod_{\a=1}^{r-1}\frac{v_\a}{v'_\a},
\ee
defined by
\begin{multline}
A^{(r,n)}[u,\bsv,\bsv']=\sum_{\l^{(1)},\dots,\l^{(r)}}\prod_{\a=1}^r a_{\l^{(\a)}}\equskip\sum_{\mu^{(1)},\dots,\mu^{(r-1)}}\prod_{\b=1}^{r-1} a_{\mu^{(\b)}}\  \left(\dket{\l^{(1)}}\otimes\cdots\otimes\dket{\l^{(r)}}\right)\left(\dbra{\mu^{(1)}}\otimes\cdots\otimes\dbra{\mu^{(r-1)}}\right)\\
\otimes \Phi_{\l^{(1)}}[u_1,v_1]\Phi_{\mu^{(1)}}^\ast[u_{1},v'_{1}]\cdots\Phi_{\l^{(r-1)}}^{\ast}[u_{r-1},v_{r-1}]\Phi_{\mu^{(r-1)}}^\ast[u_{r-1},v'_{r-1}]\Phi_{\l^{(r)}}[u_r,v_r]
\end{multline}
with $u_r=u$ and $u_{\a-1}=u_\a v_\a/v'_{\a-1}$. We will introduce a symbol for this operator $A^{(r,n)}$,
\begin{center}
\begin{tikzpicture}[scale=1]
\node[ellipse,draw,scale=.7] (M) at (0,0) {$A^{(r,n)}$};
\draw[postaction={on each segment={mid arrow=black}}] (-1,0) -- (M);
\draw[postaction={on each segment={mid arrow=black}}] (0,-1) -- (M);
\draw[postaction={on each segment={mid arrow=black}}] (M) -- (0,1);
\draw[postaction={on each segment={mid arrow=black}}] (M) -- (.7,.7);
\node[left,scale=.7] at (-1,0) {$\CF_u^{(1,n)}$};
\node[below,scale=.7] at (0,-1) {$\CF_\bsv^{(0,r)}$};
\node[above,scale=.7] at (0,1) {$\CF_{\bsv'}^{(0,r-1)}$};
\node[above right,scale=.7] at (.7,.7) {$\CF_{u'}^{(1,n+1)}$};
\end{tikzpicture}
\end{center}
so that the recursion relation \eqref{rec_I} can be described diagrammatically as the decomposition
\begin{center}
\begin{tikzpicture}[scale=1]
\draw[postaction={on each segment={mid arrow=black}}] (0,0) -- (1,0) -- (1.7,0.7);
\draw[postaction={on each segment={mid arrow=black}}] (1,-1) -- (1,0);
\node[below right,scale=0.7] at (1,0) {$\Phi^{(r+1,\bsn)}$};
\node[scale=1] at (2.5,0) {$=$};
\draw[postaction={on each segment={mid arrow=black}}] (3+0,1) -- (3.5+1,1) -- (3.5+1.7,1.7);
\node[below right,scale=0.7] at (3.5+1,1) {$\Phi^{(r,\tilde{\bsn})}$};
\node[ellipse,draw,scale=.6] (M) at (4.5,-1) {$A^{(r+1,n_1)}$};
\draw[postaction={on each segment={mid arrow=black}}] (M) -- (4.5,1);
\draw[postaction={on each segment={mid arrow=black}}] (3,-1) -- (M);
\draw[postaction={on each segment={mid arrow=black}}] (4.5,-2) -- (M);
\draw[postaction={on each segment={mid arrow=black}}] (M) -- (5.2,.7-1);
\end{tikzpicture}
\end{center}
which corresponds to $\Phi^{(r+1,\bsn)}[\bsu,\bsv,\bsw]=A^{(r+1,n_1)}[u_1,\bsv,\tilde{\bsv}]\cdot \Phi^{(r,\tilde{\bsn})}[\tilde{\bsu},\tilde{\bsv},\bsw]$
where the product is taken along the vertical modules and $\tilde{\bsn}=(n_2,\dots,n_{r+1})$, $\tilde{\bsu}=(u_2,\dots,u_{r+1})$, $\tilde{\bsv}=(w_{r-1,1},\dots,w_{r-1,r-1})$.

The proof relies on the decomposition of the horizontal representations $\rho_{\bsu}^{(r+1,\bsn)}$ into $(\rho_{u_1}^{(1,n_1)}\otimes \rho_{\tilde{\bsu}}^{(r,\tilde{\bsn})})\circ\D$. We will use Sweedler's notation for the coproduct, i.e. we omit the summation and denote
\begin{equation}
\D(e)=e_{(1)}\otimes e_{(2)},\quad (\D\otimes\Id)\D(e)=e_{(1)(1)}\otimes e_{(1)(2)}\otimes e_{(2)},\quad e\in\CE.
\end{equation}
With this notation, the co-associativity property simply reads $e_{(1)(1)}\otimes e_{(1)(2)}\otimes e_{(2)}=e_{(1)}\otimes e_{(2)(1)}\otimes e_{(2)(2)}$, and the previous decomposition is
\begin{equation}
\rho_{\bsu}^{(r+1,\bsn)}(e)=\rho_{u_1}^{(1,n_1)}(e_{(1)})\otimes \rho_{\tilde{\bsu}}^{(r,\tilde{\bsn})}(e_{(2)}).
\end{equation}
The induction hypothesis at order $r$ is
\begin{center}
\begin{tikzpicture}[scale=1]
\draw[postaction={on each segment={mid arrow=black}}] (0,0) -- (1,0) -- (1.7,0.7);
\draw[postaction={on each segment={mid arrow=black}}] (1,-1) -- (1,0);
\node[below right,scale=0.7] at (1,0) {$\Phi^{(r,\bsn)}$};
\node[left,scale=0.7] at (0,0) {$\rho_{\bsu}^{(r,\bsn)}(e_{(2)})$};
\node[below,scale=0.7] at (1,-1) {$\rho_{\bsv}^{(0,r)}(e_{(1)})$};
\node[,scale=1] at (2.5,0) {$=$};
\draw[postaction={on each segment={mid arrow=black}}] (3+0,0) -- (3+1,0) -- (3+1.7,0.7);
\draw[postaction={on each segment={mid arrow=black}}] (3+1,-1) -- (3+1,0);
\node[below right,scale=0.7] at (3+1,0) {$\Phi^{(r,\bsn)}$};
\node[above right,scale=0.7] at (3+1.7,.7) {$\rho_{\bsu'}^{(r,\bsn')}(e)$};
\end{tikzpicture}
\end{center}

We first need a lemma regarding the intertwining property of the operator $A^{(r,\bsn)}$, namely
\begin{center}
\begin{tikzpicture}[scale=1]
\node[ellipse,draw,scale=.7] (M1) at (0,0) {$A^{(r,n)}$};
\draw[postaction={on each segment={mid arrow=black}}] (-1,0) -- (M1);
\draw[postaction={on each segment={mid arrow=black}}] (0,-1) -- (M1);
\draw[postaction={on each segment={mid arrow=black}}] (M1) -- (0,1);
\draw[postaction={on each segment={mid arrow=black}}] (M1) -- (.7,.7);
\node[left,scale=.7] at (-1,0) {$\rho_u^{(1,n)}(e_{(2)})$};
\node[below,scale=.7] at (0,-1) {$\rho_\bsv^{(0,r)}(e_{(1)})$};
\node[scale=1] at (1.5,0) {$=$}; 
\def\offset{3}
\node[ellipse,draw,scale=.7] (M2) at (\offset+0,0) {$A^{(r,n)}$};
\draw[postaction={on each segment={mid arrow=black}}] (\offset-1,0) -- (M2);
\draw[postaction={on each segment={mid arrow=black}}] (\offset+0,-1) -- (M2);
\draw[postaction={on each segment={mid arrow=black}}] (M2) -- (\offset+0,1);
\draw[postaction={on each segment={mid arrow=black}}] (M2) -- (\offset+.7,.7);
\node[above,scale=.7] at (\offset,1) {$\rho_{\bsv'}^{(0,r-1)}(e_{(2)})$};
\node[above right,scale=.7] at (\offset+.7,.7) {$\rho_{u'}^{(1,n+1)}(e_{(1)})$};
\end{tikzpicture}
\end{center}
To show this, we can decompose the operator $A^{(r,n)}$ into a product of an operator $M^{(r-1,n)}$ and an AFS intertwiner in the horizontal Fock module, i.e. $A^{(r,n)}[u,\bsv,\bsv']=M^{(r-1,n)}[u,\tilde{\bsv},\bsv']\Phi^{(1,n)}[u_r,v_1]$, with $\tilde{\bsv}=(v_2,\dots,v_r)$ and
\be
 M^{(r,n)}[u,\bsv,\bsv']:\CF_{\bsv}^{(0,r)}\otimes \CF_u^{(1,n)}\to \CF_{\bsv'}^{(0,r)}\otimes \CF^{(1,n)}_{u'},
 \hspace{30pt} u'=\prod_{\a=1}^r\dfrac{v_\a}{v'_\a},
\ee
\begin{multline}
M^{(r,n)}[u,\bsv,\bsv']=\sum_{\l^{(1)},\dots,\l^{(r)}}\prod_{\a=1}^r a_{\l^{(\a)}}\equskip\sum_{\mu^{(1)},\dots,\mu^{(r)}}\prod_{\b=1}^r a_{\mu^{(\b)}}\  \left(\dket{\l^{(1)}}\otimes\cdots\otimes\dket{\l^{(r)}}\right)\left(\dbra{\mu^{(1)}}\otimes\cdots\otimes\dbra{\mu^{(r)}}\right)\\
\otimes \Phi_{\l^{(1)}}[u_1,v_1]\Phi_{\mu^{(1)}}^\ast[u_{1},v'_{1}]\cdots\Phi_{\l^{(r)}}[u_r,v_r]\Phi_{\mu^{(r)}}^\ast[u_r,v'_r]
\end{multline}
The operator $M^{(r,n)}$ has been studied extensively (see e.g. \cite{Bourgine2017}), and it is known to satisfy the following intertwining relation,
\begin{center}
\begin{tikzpicture}[scale=1]
\node[ellipse,draw,scale=.7] (M1) at (0,0) {$M^{(r,n)}$};
\draw[postaction={on each segment={mid arrow=black}}] (-1,0) -- (M1);
\draw[postaction={on each segment={mid arrow=black}}] (0,-1) -- (M1);
\draw[postaction={on each segment={mid arrow=black}}] (M1) -- (0,1);
\draw[postaction={on each segment={mid arrow=black}}] (M1) -- (1,0);
\node[left,scale=.7] at (-1,0) {$\rho_u^{(1,n)}(e_{(2)})$};
\node[below,scale=.7] at (0,-1) {$\rho_\bsv^{(0,r)}(e_{(1)})$};
\node[scale=1] at (1.5,0) {$=$}; 
\def\offset{3}
\node[ellipse,draw,scale=.7] (M2) at (\offset+0,0) {$M^{(r,n)}$};
\draw[postaction={on each segment={mid arrow=black}}] (\offset-1,0) -- (M2);
\draw[postaction={on each segment={mid arrow=black}}] (\offset+0,-1) -- (M2);
\draw[postaction={on each segment={mid arrow=black}}] (M2) -- (\offset+0,1);
\draw[postaction={on each segment={mid arrow=black}}] (M2) -- (\offset+1,0);
\node[above,scale=.7] at (\offset,1) {$\rho_{\bsv'}^{(0,r)}(e_{(1)})$};
\node[above right,scale=.7] at (\offset+1,0) {$\rho_{u'}^{(1,n)}(e_{(2)})$};
\end{tikzpicture}
\end{center}
To deduce the intertwining property of the operator $A^{(r,n)}$, we decompose the vertical representation as $\rho_{\bsv}^{(0,r)}(e)=\rho_{v_1}^{(0,1)}(e_{(1)})\otimes\rho_{\tilde{\bsv}}^{(0,r-1)}(e_{(2)})$ and use the co-associativity to show that

\begin{center}
\includegraphics[width=15cm]{./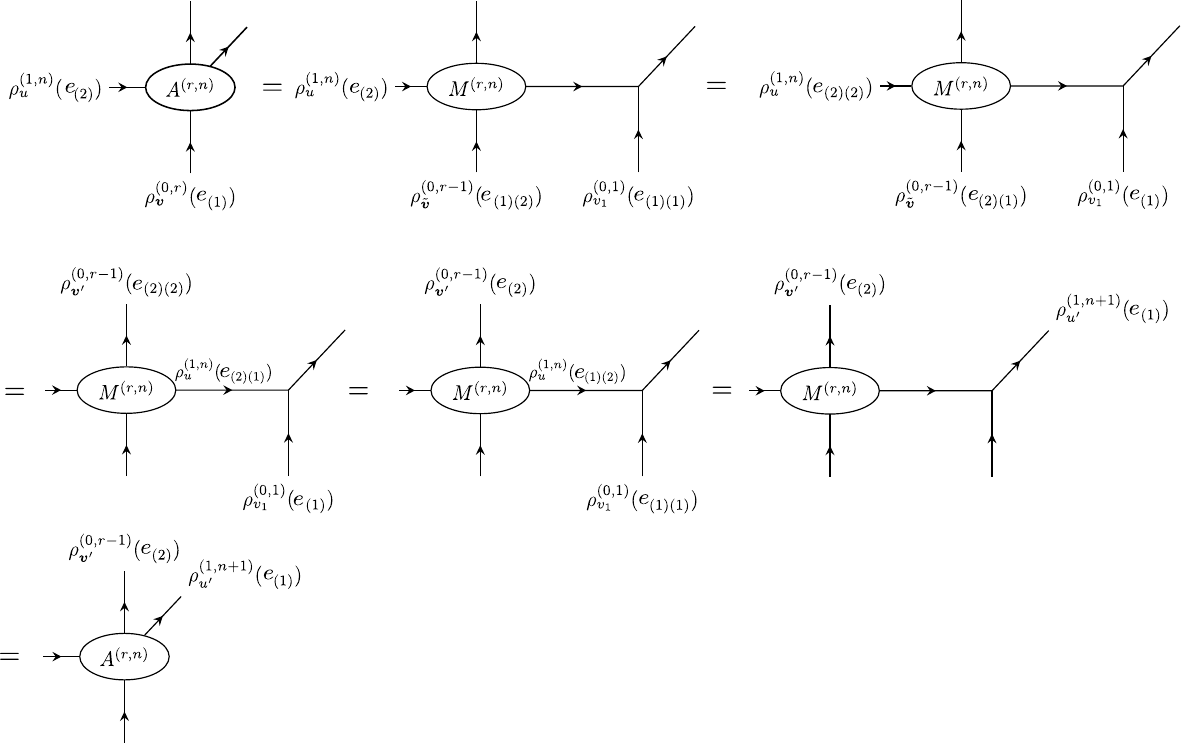}
\end{center}

The intertwiner relation can be proved recursively using a similar procedure,
\begin{center}
\includegraphics[width=17cm]{./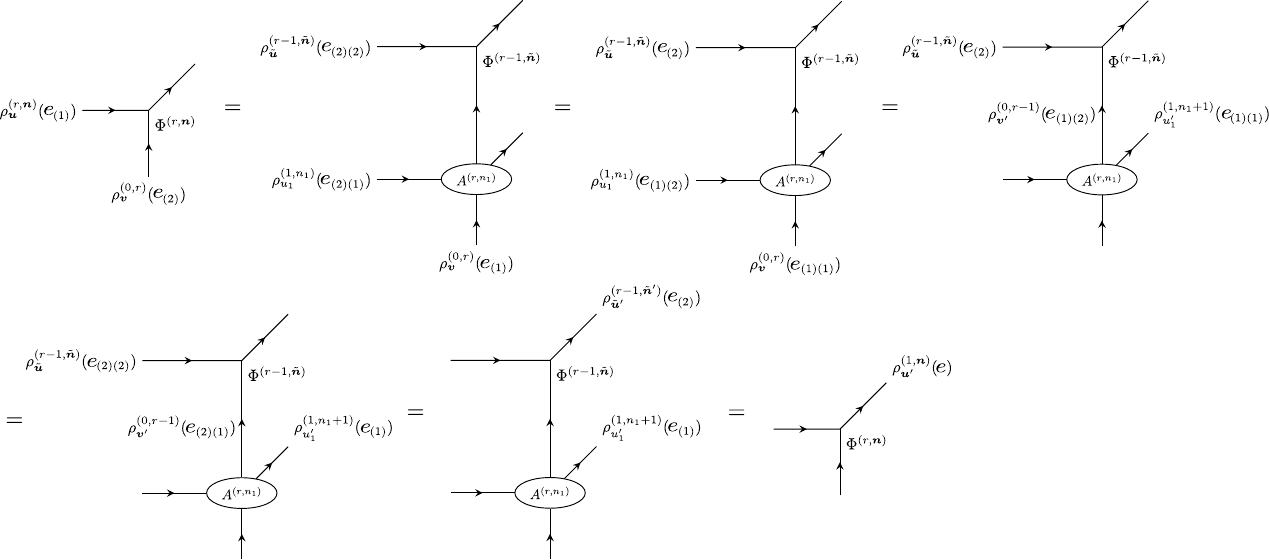}
\end{center}

\paragraph{Critical values.} Under certain specializations of the internal vertical weights $\bsw$, the sum over partitions $\l^{(\a,\b)}$ in \eqref{exp_Phi_rbsn} is restricted to finite sums over tuples of sub-partitions of $\bl=(\l^{(1)},\dots,\l^{(r)})$. We distinguish two cases corresponding respectively to (I) the restriction by the left $\l_{\a,\b}\subseteq\l_{\a+1,\b}$, and (II) the restriction by the right $\l_{\a,\b}\subseteq\l_{\a+1,\b+1}$. Let us start with the restriction (I), which follows from the presence of the diagrams
\begin{center}
\begin{tikzpicture}[scale=1.2]
\draw[postaction={on each segment={mid arrow=black}}] (-.7,-.7) -- (0,0) -- (2,0) -- (2.7,.7);
\draw[postaction={on each segment={mid arrow=black}}] (0,0) -- (0,1);
\draw[postaction={on each segment={mid arrow=black}}] (2,-1) -- (2,0);
\node[above left,scale=0.7] at (2,0) {$\Phi_{\a,\b}$};
\node[below right,scale=0.7] at (0,0) {$\Phi^{\ast}_{\a,\b}$};
\node[below left,scale=0.7] at (-.7,-.7) {$\CF^{(1,n_\a+1)}_{u'_{\a,\b+1}}$};
\node[above,scale=0.7] at (0,1) {$\CF^{(0,1)}_{w_{\a-1,\b}}$};
\node[below,scale=0.7] at (2,-1) {$\CF^{(0,1)}_{w_{\a,\b}}$};
\node[above right,scale=0.7] at (2.7,.7) {$\CF^{(1,n_\a+1)}_{u'_{\a,\b}}$};
\end{tikzpicture}
\end{center}
The vertical matrix elements of this operator is given by the following product of AFS vertex operators,
\begin{equation}
\Phi_{\l^{(\a,\b)}}^{(1,n_\a)}[u_{\a,\b},w_{\a,\b}]\Phi_{\l^{(\a-1,\b)}}^{(1,n_\a)\ast}[u_{\a,\b},w_{\a-1,\b}]=\dfrac{N_{\l^{(\a-1,\b)},\l^{(\a,\b)}}(\g w_{\a-1,\b}/w_{\a,\b})}{\CG(w_{\a-1,\b}/(\g w_{\a,\b}))}:\Phi_{\l^{(\a,\b)}}^{(1,n_\a)}[u_{\a,\b},w_{\a,\b}]\Phi_{\l^{(\a-1,\b)}}^{(1,n_\a)\ast}[u_{\a,\b},w_{\a-1,\b}]:
\end{equation}
Thus, if internal vertical weights take the value $w_{\a,\b}^{(I)}=\g^{r-\a}v_\b$, the argument of the Nekrasov factors becomes $\g w_{\a-1,\b}/w_{\a,\b}=q_3$, and the summation over internal partitions is restricted to $\l^{(\a-1,\b)}\subseteq\l^{(\a,\b)}$. As a result, if $\l^{(\a)}=\vac$ for $\a=1,\dots,r-1$, all internal partitions are frozen to $\l^{(\a,\b)}=\vac$, and the expression of the vertex operator factorizes,
\begin{multline}
\Phi_{\vac,\dots,\vac,\l^{(r)}}^{(r,\bsn)}[\bsu,\bsv,\bsw^{(I)}]\\
=t_{\l^{(r)}}^{(1,n_1)}[u_1,v_r] :\Phi_{\vac}^{(r,\bsn)}[\bsu,\bsv,\bsw^{(I)}]\,\mathe^{\sum_{k>0}\sum_{\sAbox\in\l^{(r)}}\frac{(v_r\chi_\sAbox)^k}{k}(1-q_1^k)p_k^{(1)}}\mathe^{-\sum_{k>0}\sum_{\sAbox\in\l^{(r)}}(v_r\chi_\sAbox)^{-k}(1-q_2^k)\frac{\p}{\p p_k^{(1)}}}:
\end{multline}
with
\begin{equation}
\Phi_{\vac,\dots,\vac}^{(r,\bsn)}[\bsu,\bsv,\bsw^{(I)}]=\CG(1)^{-r(r-1)/2}\mathe^{-\sum_{k>0}\frac1{k(1-q_2^k)}\sum_{\a=1}^r\left(v_{r-\a+1}^k+(1-q_3^k)\sum_{\b=\a+1}^{r}v_{r-\b+1}^k\right)p_k^{(\a)}}\mathe^{\sum_{k>0}\frac{q_3^{-k}}{1-q_1^k}\sum_{\a=1}^r v_{r-\a+1}^{-k}\frac{\p}{\p p_k^{(\a)}}}
\end{equation}

Another possibility is to consider the following diagrams,
\begin{center}
\begin{tikzpicture}[scale=1.2]
\draw[postaction={on each segment={mid arrow=black}}] (-1,0) -- (0,0) -- (.7,.7) -- (1.7,.7);
\draw[postaction={on each segment={mid arrow=black}}] (0,-1) -- (0,0);
\draw[postaction={on each segment={mid arrow=black}}] (.7,.7) -- (.7,1.7);
\node[below right,scale=0.7] at (0,0) {$\Phi_{\a,\b}$};
\node[left,scale=0.7] at (.7,.7) {$\Phi^{\ast}_{\a-1,\b}$};
\node[right,scale=0.7] at (1.7,.7) {$\CF^{(1,n_\a)}_{u_{\a,\b-1}}$};
\node[below,scale=0.7] at (0,-1) {$\CF^{(0,1)}_{w_{\a,\b}}$};
\node[above,scale=0.7] at (.7,1.7) {$\CF^{(0,1)}_{w_{\a-1,\b-1}}$};
\node[left,scale=0.7] at (-1,0) {$\CF^{(1,n_\a)}_{u_{\a,\b}}$};
\end{tikzpicture}
\end{center}
The vertical matrix elements of this operator are given by the following product of AFS vertex operators,
\begin{multline}
\Phi_{\l^{(\a,\b-1)}}^{(1,n_\a)\ast}[u_{\a,\b-1},w_{\a-1,\b-1}]\Phi_{\l^{(\a,\b)}}^{(1,n_\a)}[u_{\a,\b},w_{\a,\b}]\\=\dfrac{N_{\l^{(\a,\b)},\l^{(\a,\b-1)}}(\g w_{\a,\b}/w_{\a-1,\b-1})}{\CG(w_{\a,\b}/(\g w_{\a-1,\b-1}))}:\Phi_{\l^{(\a,\b-1)}}^{(1,n_\a)\ast}[u_{\a,\b-1},w_{\a-1,\b-1}]\Phi_{\l^{(\a,\b)}}^{(1,n_\a)}[u_{\a,\b},w_{\a,\b}]:.
\end{multline}
Just like before, if internal vertical weights take the values $w_{\a,\b}^{(II)}=\g^{r-\a}v_{r-\a+\b}$, the argument of the Nekrasov factors becomes $\g w_{\a,\b}/w_{\a-1,\b-1}=1$, and the summation over internal partitions is restricted to $\l^{(\a-1,\b-1)}\subseteq\l^{(\a,\b)}$. As a result, if $\l^{(\a)}=\vac$ for $\a=2,\dots,r$, all internal partitions are frozen to $\l^{(\a,\b)}=\vac$, and the expression of the vertex operator factorizes,
\begin{multline}
\Phi_{\l^{(1)},\vac,\dots,\vac}^{(r,\bsn)}[\bsu,\bsv,\bsw^{(II)}]=t_{\l^{(1)}}^{(1,n_1)}[u_1,v_1] \prod_{\b=2}^r\prod_{\sAbox\in\l^{(1)}}\dfrac{1-q_3v_\b/(v_1\chi_\sAbox)}{1-v_\b/(v_1\chi_\sAbox)}\\
\times:\Phi_{\vac}^{(r,\bsn)}[\bsu,\bsv,\bsw^{(II)}]\,\mathe^{\sum_{k>0}\sum_{\sAbox\in\l^{(1)}}\frac{(v_1\chi_\sAbox)^k}{k}(1-q_1^k)p_k^{(1)}}\mathe^{-\sum_{k>0}\sum_{\sAbox\in\l^{(r)}}(v_1\chi_\sAbox)^{-k}(1-q_2^k)\frac{\p}{\p p_k^{(1)}}}:
\end{multline}
with\footnote{The expression simplifies using the properties 
\be
\label{prop_Phi_vac}
:\Phi_\vac(v)\Phi_\vac^\ast(\g v):\,=\mathe^{-\sum_{k>0}\frac{v^k}{k}\frac{1-q_3^k}{1-q_2^k}J_{-k}},\hspace{30pt} :\Phi_\vac(v)\Phi_\vac^\ast(\g^{-1} v):\,=\mathe^{-\sum_{k>0}\frac{v^{-k}}{k}\frac{1-q_3^{-k}}{1-q_1^k}J_{k}}.
\ee}
\be
\begin{aligned}
\Phi_{\vac,\dots,\vac}^{(r,\bsn)}[\bsu,\bsv,\bsw^{(II)}]&=\G_{II}(\bsv)\,\mathe^{-\sum_{k>0}\frac1{k(1-q_2^k)}\sum_{\a=1}^r\left[v_\a^k+(1-q_3^k)\sum_{\b=\a+1}^{r}v_\b^k\right]p_k^{(\a)}}\mathe^{\sum_{k>0}\frac{q_3^{-k}}{1-q_1^k}\sum_{\a=1}^r v_\a^{-k}\frac{\p}{\p p_k^{(\a)}}}.\\
\G_{II}(\bsv)&=\CG(1)^{-r(r-1)/2}\prod_{\superp{\a,\b}{\a<\b}}\dfrac{\CG(v_\b/(q_3 v_\a))}{\CG(v_\b/v_\a)}.
\end{aligned}
\ee
Finally, the factors $\CG(1)^{-r(r-1)/2}$ in the case of $\bsw^{(I)}$, and $\G_{II}(\bsv)$ in the case of $\bsw^{(II)}$ are eliminated by choosing the normalization \eqref{norm_VO} of the vertex operators.

\subsection{Vertex operator \texorpdfstring{$\Phi^{(r,\bsn)\ast}$}{Phi(r,n)*}}
\begin{figure}
\begin{center}
\raisebox{-0.5\height}{\begin{tikzpicture}[scale=1]
\draw[postaction={on each segment={mid arrow=black}}] (-2.7,.3) -- (-2,1) -- (-1,1) -- (-.3,1.7) -- (.7,1.7) -- (1.4,2.4) -- (2.4,2.4);
\draw[postaction={on each segment={mid arrow=black}}] (-2.05,-1.05) -- (-1,0) -- (0,0) -- (0.7,.7) -- (2.4,.7);
\draw[postaction={on each segment={mid arrow=black}}] (-1.4,-2.4) -- (0,-1) -- (2.4,-1);
\draw[postaction={on each segment={mid arrow=black}}] (-1,0) -- (-1,1);
\draw[postaction={on each segment={mid arrow=black}}] (.7,.7) -- (.7,1.7);
\draw[postaction={on each segment={mid arrow=black}}] (0,-1) -- (0,0);
\draw[postaction={on each segment={mid arrow=black}}] (-2,1) -- (-2,3.4);
\draw[postaction={on each segment={mid arrow=black}}] (-.3,1.7) -- (-.3,3.4);
\draw[postaction={on each segment={mid arrow=black}}] (1.4,2.4) -- (1.4,3.4);
\node[above,scale=0.7] at (-2,3.4) {$\CF^{(0,1)}_{v_3}$};
\node[above,scale=0.7] at (-.3,3.4) {$\CF^{(0,1)}_{v_2}$};
\node[above,scale=0.7] at (1.4,3.4) {$\CF^{(0,1)}_{v_1}$};
\node[right,scale=0.7] at (2.4,-1) {$\CF^{(1,n_1)}_{u_1}$};
\node[right,scale=0.7] at (2.4,.7) {$\CF^{(1,n_2)}_{u_2}$};
\node[right,scale=0.7] at (2.4,2.4) {$\CF^{(1,n_3)}_{u_3}$};
\node[below left,scale=0.7] at (-1.4,-2.4) {$\CF^{(1,n_1+1)}_{u_1'}$};
\node[below left,scale=0.7] at (-2.05,-1.05) {$\CF^{(1,n_2+1)}_{u_2'}$};
\node[below left,scale=0.7] at (-2.7,.3) {$\CF^{(1,n_3+1)}_{u_3'}$};
\node[below right,scale=0.7] at (0,-1) {$\Phi_{1,1}^\ast$};
\node[above,scale=0.7] at (0,0) {$\Phi_{2,1}$};
\node[below right,scale=0.7] at (-1,0) {$\Phi_{2,2}^\ast$};
\node[below right,scale=0.7] at (.7,.7) {$\Phi_{2,1}^\ast$};
\node[above,scale=0.7] at (-1,1) {$\Phi_{3,2}$};
\node[above,scale=0.7] at (.7,1.7) {$\Phi_{3,1}$};
\node[below right,scale=0.7] at (-2,1) {$\Phi_{3,3}^\ast$};
\node[below right,scale=0.7] at (.7-1,1.7) {$\Phi_{3,2}^\ast$};
\node[below right,scale=0.7] at (1.4,2.4) {$\Phi_{3,1}^\ast$};
\end{tikzpicture}}
\hspace{30pt}
\raisebox{-0.5\height}{\begin{tikzpicture}[scale=1]
\draw[postaction={on each segment={mid arrow=black}}] (0,0) -- (1,0) -- (1.7,0.7);
\draw[postaction={on each segment={mid arrow=black}}] (1,-1) -- (1,0);
\node[above,scale=0.7] at (0.9,0) {$\Phi_{\a,\b}$};
\node[above,scale=0.7] at (0,0) {$\CF^{(1,n_\a)}_{u_{\a,\b}}$};
\node[right,scale=0.7] at (1,-0.5) {$\CF^{(0,1)}_{w_{\a-1,\b}}$};
\node[above,scale=0.7] at (1.7,0.7) {$\CF^{(1,n_\a+1)}_{u_{\a,\b}'}$};
\end{tikzpicture}}
\hspace{5mm}
\raisebox{-0.5\height}{\begin{tikzpicture}[scale=1]
\draw[postaction={on each segment={mid arrow=black}}] (-0.7,-0.7) -- (0,0) -- (0,1);
\draw[postaction={on each segment={mid arrow=black}}] (0,0) -- (1,0);
\node[left,scale=0.7] at (0,0) {$\Phi^{\ast}_{\a,\b}$};
\node[below,scale=0.7] at (1,0) {$\CF^{(1,n_\a)}_{u_{\a,\b-1}}$};
\node[right,scale=0.7] at (0,0.5) {$\CF^{(0,1)}_{w_{\a,\b}}$};
\node[below,scale=0.7] at (-0.7,-0.7) {$\CF^{(1,n_\a+1)}_{u_{\a,\b}'}$};
\end{tikzpicture}}
\end{center}
\caption{Diagram associated to $\Phi^{(r,\bsn)\ast}$ showing the choice of labels for weights and intermediate vertex operators (right)}
\label{Fig_def_Phi_ast}
\end{figure}

We take the conventions of Figure~\ref{Fig_def_Phi_ast} for the labeling of AFS vertex operators and internal weights. With these conventions, that slightly differ from those used for $\Phi_\bl^{(r,\bsn)}$, we have the internal weights $w_{\a,\b}$ with $\a=1,\dots,r$ and $\b=1,\dots,\a$, and such that $v_\a=w_{r,\a}$. For the horizontal weights, we have $u_\a=u_{\a,0}$ and $u'_\a=u'_{\a,\a}$, with
\begin{equation}
 u'_{\a,\b}
 =-\g u_{\a,\b}w_{\a-1,\b}
 =-\g u_{\a,\b-1}w_{\a,\b}\implies
 u_{\a,\b}
 =u_\a\prod_{k=1}^\b\dfrac{w_{\a,k}}{w_{\a-1,k}},
\end{equation}
which fixes all the remaining weights and leads to the relations \eqref{rel_weights}.

The vertical component of the vertex operator is expressed as a sum over the realizations of $r(r-1)/2$ partitions $\l^{(\a,\b)}$ with $1\leq \a\leq r-1$, $1\leq \b\leq \a$ of vertex operators acting on the tensor product of $r$ bosonic Fock spaces,
\be
 \Phi_\bl^{(r,\bsn)\ast}[\bsu,\bsv,\bsw]=\sum_{\{\l^{(\a,\b)}\}}\prod_{\a,\b}a_{\l^{(\a,\b)}}\
 \Phi_{1,1}^\ast\otimes\Phi_{2,1}^\ast\Phi_{2,1}\Phi_{2,2}^\ast\otimes\cdots\otimes
 \Phi_{r,1}^\ast\Phi_{r,1}\cdots\Phi_{r,r-1}^\ast\Phi_{r,r-1}\Phi_{r,r}^\ast,
\ee
where we used the following shortcut notations for the AFS vertex operators,
\be
 \Phi_{\a+1,\b} = \Phi_{\l^{(\a,\b)}}^{(1,n_{\a+1})}[u_{\a+1,\b},w_{\a,\b}],
 \hspace{30pt}
 \Phi_{\a,\b}^\ast = \Phi_{\l^{(\a,\b)}}^{(1,n_\a)\ast}[u_{\a,\b-1},w_{\a,\b}],
\ee
and the understanding that $\l^{(r,\a)}=\l^{(\a)}$ are the fixed partitions of the external vertical modules, $\bl=(\l^{(1)},\dots,\l^{(r)})$. 

\paragraph{Critical values.} Taking the normal-ordering of AFS vertex operators, we observe that there are two sets of critical values for the internal weights that will restrict the summation over the partitions $\l^{(\a,\b)}$, namely $w_{\a,\b}=\g w_{\a+1,\b}$, in which case $\l_{\a,\b}\subseteq\l_{\a+1,\b}$, and $w_{\a,\b}=\g w_{\a+1,\b+1}$, in which case $\l_{\a,\b}\subseteq \l_{\a+1,\b+1}$. We will denote these two sets of critical values by $\bsw^{(I)}$ and $\bsw^{(II)}$, respectively, where we have $w^{(I)}_{\a,\b}=\g^{r-\a}v_\b$ and $w^{(II)}_{\a,\b}=\g^{r-\a}v_{r+\b-\a}$.

Let us consider
\begin{multline}
\Phi_{\vac,\dots,\vac,\l^{(r)}}^{(r,\bsn)\ast}[\bsu,\bsv,\bsw^{(I)}]=\Phi_\vac^{\ast}(\g^{r-1}v_1)\otimes \Phi_\vac^\ast(\g^{r-2}v_1)\Phi_\vac(\g^{r-1}v_1)\Phi_\vac^\ast(\g^{r-2}v_2)\otimes\cdots\\
\cdots\otimes\Phi_\vac^\ast(v_1)\Phi_\vac(\g v_1)\cdots\Phi_\vac^\ast(v_{r-1})\Phi_\vac(\g v_{r-1})\Phi_{\l^{(r)}}^{(1,n_r)\ast}[u_{r,r-1},v_r].
\end{multline}
After normal-ordering, we find
\begin{multline}
 \Phi_{\vac,\dots,\vac,\l^{(r)}}^{(r,\bsn)\ast}[\bsu,\bsv,\bsw^{(I)}]
 = t_{\l^{(r)}}^{(1,n_r)}[u_{r,r-1},v_r]
 \prod_{\a=1}^{r-1}\prod_{\sAbox\in\l^{(r)}}\dfrac{1-q_3^{-1}v_r\chi_\sAbox/v_\a}
 {1-v_r\chi_\sAbox/v_\a}:\Phi_{\vac}^{(r,\bsn)\ast}[\bsv,\bsw^{(I)}]\\
 \times \mathe^{-\sum_{k>0}\g^{-rk}(1-q_1^k)\sum_{\sAbox\in\l^{(r)}}
 \frac{(q_3 v_r\chi_\sAbox)^k}{k}p_k(\bsx^{(r)})}
 \mathe^{\sum_{k>0}\g^{rk}(1-q_2^k)\sum_{\sAbox\in\l^{(r)}}(v_r\chi_\sAbox)^{-k}
 \frac{\p}{\p p_k(\bsx^{(r)})}}:
\end{multline}
with
\be
 \Phi_{\vac,\dots,\vac}^{(r,\bsn)\ast}[\bsv,\bsw^{(I)}]
 =\G_I(\bsv)\, \mathe^{\sum_{k>0}\frac{\g^{rk}}{k(1-q_2^k)}
 \sum_{\a=1}^r q_3^{-(\a-1)k} v_\a^k p_k^{(\a)}}
 \mathe^{-\sum_{k>0}\frac{\g^{-rk}}{1-q_1^k}\sum_{\a=1}^r q_3^{(\a-1)k}
 \left(v_\a^{-k}+(1-q_3^{-k})\sum_{\b=1}^{\a-1}v_\b^{-k}\right)\frac{\p}{\p p_k^{(\a)}}}.
\ee
The coefficient $\G_I(\bsv)$ arises from the normal-ordering of the vacuum components of the vertex operators, it is given by
\be
 \G_I(\bsv) = \CG(1)^{-r(r-1)/2}
 \prod_{\superp{\a,\b=1}{\a<\b}}^r\dfrac{\CG(v_\b/v_\a)}{\CG(v_\b/(q_3v_\a))}
\ee
In the same way, we have
\begin{multline}
\Phi_{\l^{(1)},\vac,\dots,\vac}^{(r,\bsn)\ast}[\bsu,\bsv,\bsw^{(II)}]=\Phi_\vac^{\ast}(\g^{r-1}v_r)\otimes \Phi_\vac^\ast(\g^{r-2}v_{r-1})\Phi_\vac(\g^{r-1}v_r)\Phi_\vac^\ast(\g^{r-2}v_r)\otimes\cdots\\
\cdots\otimes\Phi_{\l^{(1)}}^{(1,n_r)\ast}[u_1,v_1]\Phi_\vac(\g v_{2})\Phi_\vac^\ast(v_2)\cdots\Phi_\vac(\g v_r)\Phi_\vac^\ast(v_r),
\end{multline}
and so, after normal-ordering,
\begin{multline}
\Phi_{\l^{(1)},\vac,\dots,\vac}^{(r,\bsn)\ast}[\bsu,\bsv,\bsw^{(II)}]
 = t_{\l^{(1)}}^{(1,n_1)\ast}[u_1,v_1]:\Phi_{\vac}^{(r,\bsn)\ast}[\bsv,\bsw^{(II)}]\\
\times \mathe^{-\sum_{k>0}\g^{-rk}(1-q_1^k)\sum_{\sAbox\in\l^{(1)}}\frac{(q_3 v_1\chi_\sAbox)^k}{k}p_k^{(r)}}\mathe^{\sum_{k>0}\g^{rk}(1-q_2^k)\sum_{\sAbox\in\l^{(1)}}(v_1\chi_\sAbox)^{-k}\frac{\p}{\p p_k^{(r)}}}:
\end{multline}
with
\begin{multline}
 \Phi_{\vac,\dots,\vac}^{(r,\bsn)\ast}[\bsv,\bsw^{(II)}] = \G_{II}(\bsv)\,
 \mathe^{\sum_{k>0}\frac{\g^{rk}}{k(1-q_2^k)}
 \sum_{\a=1}^rq_3^{-(\a-1)k}v_{r-\a+1}^k p_k^{(\a)}} \\
 \times\mathe^{-\sum_{k>0}\frac{\g^{-rk}}{1-q_1^k}\sum_{\a=1}^r q_3^{(\a-1)k}
 \left(v_{r-\a+1}^{-k}+(1-q_3^{-k})\sum_{\b=1}^{\a-1}v_{r-\b+1}^{-k}\right)
 \frac{\p}{\p p_k^{(\a)}}}.
\end{multline}
We note that the vacuum component $\Phi_{\vac}^{(r,\bsn)\ast}[\bsv,\bsw^{(II)}]$ corresponds to $\Phi_{\vac}^{(r,\bsn)\ast}[\bsv,\bsw^{(I)}]$ with the replacement $v_\a\to v_{r-\a+1}$. We find $\G_{II}(\bsv)=\CG(q_3^{-1})^{-r(r-1)/2}$. Like before, the normalization factors $\G_I(\bsv)$ and $\G_{II}(\bsv)$ are eliminated by choosing the normalization \eqref{norm_VO} for the vertex operators.

\section{Generalized Macdonald polynomials with two variables}\label{app_finite}
In this section, we study the GMP at level $r=2$, with only two variables. To do so, we restrict ourselves to partitions of length one, and so $\l$, $\mu$ will be positive integer throughout this section.

\paragraph{Macdonald operator.} The derivation of the Macdonald operator from the zero mode $x_0^+$ follows from the deformation of the integration contour,
\footnote{Once again we omit to indicate the representation $\rho_1^{(1,0)}$ here.}
\be
 x^+_0 = \oint_0\dfrac{\mathd z}{2\pi\mathi z} \eta^+(z).
\ee
Let $\pi_N:\Lambda\to\Lambda_N$ be the restriction map from infinitely many variables $\bsx$ to finitely many variables $\{x_i\}_{i=1}^N$. Then, for $f\in\Lambda$, we have
\be
\begin{aligned}
 \pi_N(x_0^+f(\bsx))
 &= \pi_N\left(\oint_0\dfrac{\mathd z}{2\pi\mathi z}\mathe^{\sum_{k>0}\frac{z^k}{k}(1-t^{-k})p_k(\bsx)}\mathe^{-\sum_{k>0}z^{-k}(1-q^k)\frac{\p}{\p p_k(\bsx)}} f(\bsx) \right)\\
 &= \oint_0\dfrac{\mathd z}{2\pi\mathi z}\pi_N\left(\mathe^{\sum_{k>0}\frac{z^k}{k}(1-t^{-k})p_k(\bsx)}f(\bsx-(1-q)z^{-1}) \right)\\
 &= \oint_0\dfrac{\mathd z}{2\pi\mathi z}
 \prod_{i=1}^N\frac{1-t^{-1}zx_i}{1-zx_i}
 \pi_N(f(\bsx-(1-q)z^{-1}))\\
 &=-\oint_\infty\dfrac{\mathd z}{2\pi\mathi z} \prod_{i=1}^N \dfrac{1-t^{-1}zx_i}{1-zx_i}
 \pi_N(f(\bsx-(1-q)z^{-1}))\\
 &\quad+(1-t^{-1})\sum_{i=1}^N \prod_{j\neq i}\dfrac{x_i-t^{-1}x_j}{x_i-x_j}
 \pi_N(f(\bsx-(1-q)x_i))\\
 &=t^{-N}\pi_N(f(\bsx))
 -t^{-N}(1-t)\sum_{i=1}^N \prod_{j\neq i}\dfrac{tx_i-x_j}{x_i-x_j}
 q^{x_i\p_{x_i}}\pi_N(f(\bsx)) \\
 &=t^{-N}\left(1-(1-t)D^1_N\right)\pi_N(f(\bsx))
\end{aligned}
\ee
where we deformed the contour from being around $z=0$ to the sum of contours around $z=\infty$ and $z=x_i^{-1}$, $i=1,\dots,N$. Moreover, we used the identity $\pi_N(f(\bsx-(1-q)x_i))=q^{x_i\p_{x_i}}\pi_N(f(\bsx))$. Here $D^1_N$ is the standard Macdonald operator on $\Lambda_N$,
\be
 D^1_N=\sum_{i=1}^N \prod_{j\neq i}\frac{tx_i-x_j}{x_i-x_j}q^{x_i\p_{x_i}},
\ee
diagonal on Macdonald polynomials $P_\l(x_1,\dots,x_N)$ with eigenvalues $\frac{1-t^N\me_\l}{1-t}=\sum_{i=1}^N q^{\l_i}t^{N-i}$.

Repeating this operation for the operator at level two, we find the following expression for the operator $X_0^+$ corresponding to the finite variable version of $u_2^{-1}\rho_{u_1,u_2}^{(2,\bsn_0)}(x_0^+)$, 
\begin{align}
\begin{split}
X_0^+&=(Qt^{-N_1}+t^{-N_2})+(1-t^{-1})Q\sum_{i=1}^{N_1}\prod_{j\neq i}\dfrac{x_i-t^{-1}x_j}{x_i-x_j}\ q^{x_i\p_{x_i}}\\
&+(1-t^{-1})\sum_{i=1}^{N_2}\prod_{j\neq i}\dfrac{y_i-t^{-1}y_j}{y_i-y_j}\times\prod_{j=1}^{N_1}\dfrac{(y_i-t^{-1}x_j)(y_i-tq^{-1}x_j)}{(y_i-x_j)(y_i-q^{-1}x_j)}\ q^{y_i\p_{y_i}}\\
&+\k_{13}\sum_{i=1}^{N_1}\prod_{j\neq i}\dfrac{(x_i-t^{-1}x_j)(x_i-tq^{-1}x_j)}{(x_i-x_j)(x_i-q^{-1}x_j)}\times\prod_{j=1}^{N_2}\dfrac{x_i-t^{-1}y_j}{x_i-y_j}\ V_\bsy(x_i)\\
&-\k_{13}\sum_{i=1}^{N_1}\prod_{j\neq i}\dfrac{(x_i-tx_j)(x_i-qt^{-1}x_j)}{(x_i-x_j)(x_i-qx_j)}\times\prod_{j=1}^{N_2}\dfrac{x_i-qt^{-1}y_j}{x_i-qy_j} V_\bsy(q^{-1}x_i).
\end{split}
\end{align}
with $\k_{13}=-(1-t)(1-q/t)/(1-q)$. In this expression, we introduced the following operator, at the moment only well defined in the limit $N_2\to\infty$,
\begin{equation}
V_\bsy(z)=\mathe^{-\sum_{k>0} z^k(1-q^k)\frac{\p}{\p p_k(\bsy)}}.
\end{equation} 
Finding the action of this operator on symmetric polynomials with finitely many variables requires some careful analysis. Restricting ourselves to the case $N_1=N_2=1$, the expression of the operator $x_0^+$ simplifies to
\begin{align}
\begin{split}
X_0^+&=(1+Q)t^{-1}+(1-t^{-1})\left(Qq^{x\p_{x}}+\dfrac{(y-t^{-1}x)(y-tq^{-1}x)}{(y-x)(y-q^{-1}x)}\ q^{y\p_{y}}\right)\\
&+\k_{13}\left(\dfrac{x-t^{-1}y}{x-y}V_y(x)-\dfrac{x-qt^{-1}y}{x-qy}V_y(q^{-1}x)\right),
\end{split}
\end{align}
where the action of the operator $V_y(x)$ remains to be specified.

In order to determine the action of $V_y(x)$ on the basis elements, we first restrict ourselves to partitions of length one, but keep an infinite number of variables. For a partition with a single column $\l=[n]$, the Macdonald polynomial is
\begin{equation}
P_{[n]}(\bsx)=\dfrac{(q;q)_n}{(t,q)_n}g_n(\bsx),
\end{equation} 
where the symmetric functions $g_n(\bsx)$, depending on $(q,t)$, are the coefficient $z^n$ in the expansion \cite[Ch.VI, \textsection2, (2.8)]{Macdonald},
\begin{equation}
\Pi(\bsx|z)=\mathe^{\sum_{k>0}\frac{z^k}{k}\frac{1-t^k}{1-q^k}p_k(\bsx)}=\sum_{n\geq0}z^n g_n(\bsx).
\end{equation}
Acting on this generating function with the operator $V_{\bsy}(x)$, we find
\begin{equation}
V_{\bsy}(x)\Pi(\bsy|z)=\dfrac{1-xz}{1-txz}\Pi(\bsy|z).
\end{equation} 
Expanding both sides in powers of $z$, we deduce the action of $V_\bsy(x)$ on the symmetric functions $g_n(\bsy)$,
\begin{equation}
V_{\bsy}(x)g_n(\bsy)=g_n(\bsy)-(1-t)\sum_{k=0}^{n-1} t^{n-1-k}x^{n-k}g_{k}(\bsy).
\end{equation} 
We note that we have a triangular action. Coming back to Macdonald symmetric functions, we find
\begin{equation}
V_{\bsy}(x)P_{[n]}(\bsy)=P_{[n]}(\bsy)-(1-t)\sum_{k=0}^{n-1} t^{n-1-k}x^{n-k}\dfrac{(q^{k+1};q)_{n-k}}{(tq^k;q)_{n-k}}P_{[k]}(\bsy).
\end{equation}
At this stage, we can finally restrict ourselves to a single variable,\footnote{The expression for $g_n(y)$ follows from the expansion of the kernel, using the $q$-binomial identity
\begin{equation}
\Pi(y|z)=\mathe^{\sum_{k>0}\frac{(yz)^k}{k}\frac{1-t^k}{1-q^k}}=\dfrac{(tyz;q)_\infty}{(yz;q)_\infty}=\sum_{n=0}^\infty(yz)^n\dfrac{(t;q)_n}{(q;q)_n}.
\end{equation}}
\begin{equation}\label{expr_gn}
g_n(y)=y^n\dfrac{(t;q)_n}{(q;q)_n}\implies P_{[n]}(y)=y^n,
\end{equation} 
and so, the operator $V_{y}(x)$ acts as follows on the basis $y^n$ of $\mC[y]$
\begin{equation}
V_y(x)y^n=y^n-(1-t)\sum_{k=0}^{n-1} t^{n-1-k}x^{n-k}\dfrac{(q^{k+1};q)_{n-k}}{(tq^k;q)_{n-k}}y^k.
\end{equation}

To derive the action of the generalized Macdonald operator $X_0^+$ on the basis of $\mC[x]\otimes\mC[y]$, it is easier to first consider the action on $x^n\Pi(y|z)$.
Using
\begin{equation}
 V_{y}(x)\Pi(y|z)=\dfrac{1-xz}{1-txz}\Pi(y|z),\hspace{30pt}
 \Pi(qy|z)=\dfrac{1-yz}{1-tyz}\Pi(y|z),
\end{equation}
we find after simplification
\begin{equation}
X_0^+ x^n\Pi(y|z)=\left[Q\me_{[n]}+t^{-1}+(1-t^{-1})\dfrac{(1-xz)(1-yz)(1-t^2q^{-1}xz)}{(1-txz)(1-tyz)(1-tq^{-1}xz)}\right] x^n\Pi(y|z),
\end{equation}
with $\me_{[n]}=t^{-1}+(1-t^{-1})q^n$. Observe that all poles of the form $x-\a y$ disappear, and so the r.h.s. is indeed a series in $z$ of polynomials in the variables $x,y$. Expanding both sides in $z$, and taking into account the $q$-Pochhammer factor in \eqref{expr_gn}, we end up with
\begin{align}
\begin{split}
&X_0^+x^m y^n=(Q\me_{[n]}+\me_{[m]})x^m y^n+\sum_{k=1}^n\a_{n,k} x^{m+k}y^{n-k},\\
\text{with}\quad &\a_{n,k}=\k_{13} (1-t^{-1})(1-q^{-k})q^{n-k}t^{k}\dfrac{(q^{n-k+1};q)_k}{(tq^{n-k};q)_k}.
\end{split}
\end{align}
We notice that the coefficients $\a_{n,k}$ are independent of $m$. The operator $X_0^+$ leaves invariant the $n+1$ dimensional subspace $V_{m,n}$ of $\mC[x]\otimes\mC[y]$ generated by the monomials $(x^{n+m},x^{n+m-1}y^1,\dots,x^{m}y^n)$, and acts as an upper triangular matrix on this basis
\begin{equation}
\left.X_0^+\right|_{V_{m,n}}=
\begin{pmatrix}
Q\me_{[n+m]}+\me_\vac & \a_{1,1} & \a_{2,2}  &\cdots & \a_{n,n}\\
0 & Q\me_{[n+m-1]}+\me_{[1]} & \a_{2,1}  &\cdots & \a_{n,n-1}\\
0 & 0 & Q\me_{[n+m-2]}+\me_{[2]}  & & \a_{n,n-2}\\
\vdots & \vdots   &  & \ddots  & \vdots\\
0 & 0 & \cdots  & Q\me_{[m+1]}+\me_{[n-1]} & \a_{n,1}\\
0 & 0 & \cdots  & 0 & Q\me_{[m]}+\me_{[n]}
\end{pmatrix}
\end{equation}
We observe that only the matrix elements on the diagonal depend on $Q$.

\paragraph{Examples.} The subspaces $V_{m,0}$ are one dimensional, which corresponds to the eigenvectors $P_{[m],\vac}(x,y|Q)=x^m$ of $X_0^+$. The subspaces $V_{m,1}$ are two dimensional, and we find the pair of eigenvectors
\begin{equation}
P_{[m+1],\vac}(x,y|Q)=x^{m+1},\quad P_{[m],[1]}(x,y|Q)=x^m y+\dfrac{1-tq^{-1}}{1-Qq^{m}}x^{m+1}.
\end{equation} 
The first eigenvector also belongs to the subspace $V_{m+1,0}$ and so it was already known, but the second one is new. Next, subspaces $V_{m,2}$ are three dimensional, the first two eigenvectors belong respectively to $V_{m+2,0}$ and $V_{m+1,1}$, but the third one is new,
\begin{align}
\begin{split}
 P_{[m],[2]}(x,y|Q) &= x^m y^2
 +(1-t)(1-tq^{-1})\dfrac{(1+q)}{(1-qt)(1-q^{m-1}Q)}x^{m+1}y \\
 &+(1-tq^{-1})\dfrac{1-t-tq+t^2q^{-1}+tq^m(1-q^{-2})Q}{(1-qt)(1-q^m Q)(1-q^{m-1}Q)}x^{m+2}.
\end{split}
\end{align}
It is instructive to compare these expressions to the generalized Macdonald polynomials with infinite number of variables $P_{\l,\mu}(\bsx,\bsy|Q)$. In the case of partitions with single columns $\l=[m]$, $\mu=[n]$, the GMP reduce indeed to the expressions of $P_{[m],[n]}(x,y|Q)$ found above when specialized to $\bsx=(x,0,\dots)$ and $\bsy=(y,0,\dots)$. Note, however, that the property of Macdonald functions $P_\l(\bsx)=0$ for $\ell(\l)>1$ and $\bsx=(x,0,\dots)$ is no longer true for generalized Macdonald functions. For instance, the specialization of $P_{\vac,[1^2]}(\bsx,\bsy|Q)$ produces a linear combination of $P_{[2],\vac}(x,y|Q)$ and $P_{[1],[1]}(x,y|Q)$.

\section{Pieri rules from Mukad\'e operator}\label{app:Mukade}

The Mukad\'e operator has been introduced in \cite{Fukuda:2019ywe} in the proof of the invariance under the choice of preferred direction for refined topological strings amplitudes. The starting point of the construction is the operator
\begin{equation}
\mukT[\bsu,\bsv,w]:\CF_{w}^{(0,1)}\otimes \CF^{(r,0)}_\bsu\to \CF_{w'}^{(0,1)}\otimes \CF_{\bsv}^{(r,0)},\hspace{30pt} w'=w\prod_{\a=1}^r\dfrac{u_\a}{v_\a}.
\end{equation} 
defined as a product of AFS vertex operators,
\begin{equation}
\mukT[\bsu,\bsv,w]=\sum_{\{\nu^{(\a)}\}_{\a=0}^r}\prod_{\a=0}^{r} a_{\nu^{(\a)}}\ \dket{\nu^{(r)}}\dbra{\nu^{(0)}}\otimes \Phi_{\nu^{(1)}}^{(1,0)\ast}[v_1,w'_1]\Phi_{\nu^{(0)}}^{(1,0)}[u_1,w_1]\otimes\cdots\otimes  \Phi_{\nu^{(r)}}^{(1,0)\ast}[v_r,w'_r]\Phi_{\nu^{(r-1)}}^{(1,0)}[u_r,w_r],
\end{equation} 
where we used the notation $w'_\a=w_{\a+1}=w_\a(u_\a/v_\a)$ for the intermediate weights obtained inductively from $w_1=w$. The summation is done over the realizations of $(r+1)$ partitions $\nu^{(\a)}$, $\a=0,\dots,r+1$ corresponding to the two external vertical states and the trace over $r-1$ internal vertical modules. This operator is associated to the diagram in Figure~\ref{fig:mukade},
\begin{figure}[!ht]
\begin{center}
\begin{tikzpicture}[scale=1, font=\small]
\def\xN{2.4}
\def\yN{6}
\node[scale=1] at (2,4.2) {$\vdots$};
\draw[postaction={on each segment={mid arrow=black}}] (-1,0) -- (0,0) -- (0.7,0.7) -- (1.7,0.7);
\draw[postaction={on each segment={mid arrow=black}}] (-.3,1.7) -- (.7,1.7) -- (1.4,2.4) -- (2.4,2.4);
\draw[postaction={on each segment={mid arrow=black}}] (0,-1) -- (0,0);
\draw[postaction={on each segment={mid arrow=black}}] (.7,.7) -- (.7,1.7);
\draw[postaction={on each segment={mid arrow=black}}] (1.4,2.4) -- (1.4,3.4);
\draw[postaction={on each segment={mid arrow=black}}] (\xN-1,\yN) -- (\xN,\yN) -- (\xN+.7,\yN+.7) -- (\xN+1.7,\yN+.7);
\draw[postaction={on each segment={mid arrow=black}}] (\xN,\yN-1) -- (\xN,\yN);
\draw[postaction={on each segment={mid arrow=black}}] (\xN+.7,\yN+.7) -- (\xN+.7,\yN+1.7);
\node[left] at (-1,0) {$u_1$};
\node[left] at (-.3,1.7) {$u_2$};
\node[left] at (\xN-1,\yN) {$u_r$};
\node[right] at (1.7,.7) {$v_1$};
\node[right] at (2.4,2.4) {$v_2$};
\node[right] at (\xN+1.7,\yN+.7) {$v_r$};
\node[below] at (0,-1) {$w_1$};
\node[above] at (\xN+.7,\yN+1.7) {$w'_r$};
\node[right] at (.7,.7+.5) {$w_2=w'_1$};
\node[right] at (1.4,2.9) {$w_3=w'_2$};
\node[right] at (\xN,\yN-.5) {$w_r$};
\end{tikzpicture}
\end{center}
\caption{Ladder diagram corresponding to the intertwiner $\mukT[\bsu,\bsv,w]$.}
\label{fig:mukade}
\end{figure}
which, by construction, is an intertwiner satisfying the relation
\be\label{intw_CT}
 \mukT[\bsu,\bsv,w]\left(\rho_w^{(0,1)}\otimes\rho_\bsu^{(r,\bsn_0)}\right)\D(e)
 = \left(\rho_{w'}^{(0,1)}\otimes\rho_\bsv^{(r,0)}\right)\D'(e)
 \mukT[\bsu,\bsv,w],\quad \forall e\in\CE.
\ee 
The proof follows from the same arguments as those developed in the previous section, they will not be reproduced here.

The Mukad\'e operator $\mukT^V[\bsu,\bsv,w]:\CF_{\bsu}^{(r,0)}\to\CF_{\bsv}^{(r,0)}$ is obtained as the (normalized) vacuum matrix element of the operator $\mukT[\bsu,\bsv,w]$ in the vertical module, i.e.
\be
 \mukT^V[\bsu,\bsv,w] :=
 \dfrac{\left(\dbra{\vac}\otimes\mathrm{Id}\right)
 \mukT[\bsu,\bsv,w]\left(\dket{\vac}\otimes\mathrm{Id}\right)}
 {\left(\dbra{\vac}\otimes\bra{\vec{\vac}}\right)
 \mukT[\bsu,\bsv,w]\left(\dket{\vac}\otimes\ket{\vec{\vac}}\right)}
\ee
where $\dket{\vac}$ is the vacuum state in $\CF_{w}^{(0,1)}$ and $\ket{\vec{\vac}}$ is the vacuum in $\CF_{\bsu}^{(r,0)}$, and similarly for their duals. Since we are only interested in the projection of the external vertical legs to the vacuum, the result is independent of the choice of normalization of vertical states.

One of the key results in \cite{Fukuda:2019ywe} is the computation of the matrix elements of this operator in the generalized Macdonald basis.
Let us denote as $\ket{P_\blam(\bsu)}\in\CF^{(r,0)}_\bsu$ the vector in the Fock module corresponding to the multi-symmetric function $P_\blam(\bsx^\bullet|\bsu)\in\Lambda^{\otimes r}$, and $\bra{P_\blam(\bsu)}$ the dual vector corresponding to the linear map $\left\langle P_{\l^{(r)},\dots,\l^{(1)}}(\bsx^\bullet|u_r,\dots,u_1),-\right\rangle_\mathrm{Z}$. Then we have the following.
\begin{theorem}[Fukuda--Ohkubo--Shiraishi, \cite{Fukuda:2019ywe}]
The matrix elements of the Mukad\'e operator are\footnote{We inserted in the r.h.s.\ the extra factor $\g^{-r|\bl|}\g^{\sum_\a(\a-1)(|\mu^{(\a)}|-|\l^{(\a)}|)}$ that appeared to be missing in the original formula. It might be due to a different choice of conventions.}
\label{thm:FOS}
\begin{multline}
\label{eq:matrix-element-mukade}
 \frac{\bra{P_{\bl}(\bsv)} \mukT^V[\bsu,\bsv,w] \ket{P_{\bmu}(\bsu)}}
 {\langle P_{\bl}(\bsv)| P_{\bl}(\bsv)\rangle}
 = \g^{\sum_\a(\a-1)(|\mu^{(\a)}|-|\l^{(\a)}|)}
 \dfrac{w^{|\bl|-|\bmu|}q_3^{-|\bmu|}}
 {\CC_{\bl}^{(-)}(\bsv)\,\CC_{\bmu}^{(+)}(\bsu)} \\
 \times \Big[\prod_{\a=1}^r b_{\l^{(\a)}} v_\a^{|\l^{(\a)}|}u_\a^{|\mu^{(\a)}|}g_{\l^{(\a)}}g_{\mu^{(\a)}}\Big]
 \prod_{\a,\b=1}^r\tN_{\l^{(\a)},\mu^{(\b)}}(\g v_\a/u_\b).
\end{multline}
with $g_\l$ defined as in \eqref{expr_gl}, and
\be
\begin{aligned}
 \CC_{\bmu}^{(+)}(\bsu) &= t^{-|\bmu|}
 \left[ \prod_{\a=1}^r \g^{(\a-1)|\mu^{(\a)}|}
 \frac{u_\a^{|\mu^{(\a)}|}g_{\mu^{(\a)}}}{P_{\mu^{(\a)}}(\sp_\vac)}\right]
 \times
 \prod_{\superp{\a,\b=1}{\a<\b}}^r \tN_{\mu^{(\a)},\mu^{(\b)}}(u_\a/u_\b),\\
 \CC_{\bl}^{(-)}(\bsv) &=
 \left[\prod_{\a=1}^r \g^{-(\a-1)|\l^{(\a)}|}
 \frac{v_\a^{|\l^{(\a)}|}g_{\l^{(\a)}}}{P_{\l^{(\a)}}(\sp_\vac)}\right] \times
 \prod_{\superp{\a,\b=1}{\a>\b}}^r \tN_{\l^{(\a)},\l^{(\b)}}(v_\a/v_\b).
\end{aligned}
\ee
\end{theorem}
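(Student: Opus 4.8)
The plan is to treat \eqref{eq:matrix-element-mukade} as being determined by the intertwining property \eqref{intw_CT} together with a single base case, rather than by brute-force normal ordering. First I would project the intertwining relation onto the external vertical vacua $\dbra{\vac}$ and $\dket{\vac}$, turning $\mukT[\bsu,\bsv,w]$ into the operator $\mukT^V[\bsu,\bsv,w]:\CF_\bsu^{(r,0)}\to\CF_\bsv^{(r,0)}$. Because the vertical Fock module $(0,1)$ has $c=0$ and $\dket{\vac}$ is the highest-weight vector, the coproducts $\D$ and $\D'$ collapse when sandwiched between vacua: the Cartan currents act diagonally (with $p_k(\me_\vac)=1$), and only scalar factors depending on $w$, on $w'=w\prod_\a u_\a/v_\a$, and on powers of $\g$ survive from the vertical leg. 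This yields, for each generator $e\in\{x_0^\pm,a_{\pm1}\}$, an exchange relation of the schematic form $\rho_\bsv(e')\,\mukT^V = \mukT^V\,\rho_\bsu(e)$ up to explicit scalars, where $e'$ is the image of $e$ under the vacuum projection.

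Next I would extract recursions. Taking matrix elements $\bra{P_\bl(\bsv)}\,\cdot\,\ket{P_\bmu(\bsu)}$ of the exchange relation for $x_0^\pm$, which act diagonally on the generalized Macdonald basis by \eqref{eq:gmp-eigenvectors-eq}, fixes the overall power $w^{|\bl|-|\bmu|}$ and the consistency of the weight dependence. The genuine partition dependence would then come from the relations for $a_{-1}$ and $a_1$: using the $e_1$-Pieri rules \eqref{apm_GMP}, the action of $a_{-1}$ (resp.\ $a_1$) on $\ket{P_\bmu(\bsu)}$ (resp.\ on the dual $\bra{P_\bl(\bsv)}$) shifts a single box, producing a recursion that relates the matrix element for $(\bl,\bmu)$ to those for $(\bl,\bmu\pm\Abox)$ and $(\bl\mp\Abox,\bmu)$, with coefficients built from $\psi_\l(\Abox)$, $\psi^\ast_\l(\Abox)$ and the functions $\Psi_\l$. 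I would then check that the explicit right-hand side of \eqref{eq:matrix-element-mukade} solves these recursions, feeding in the variation formulas for the modified Nekrasov factor \eqref{prop_tN}, for $b_\l$ \eqref{var_b_l}, and for $g_\l$ and $P_\l(\sp_\vac)$, so that adding a box $\Abox\in A(\mu^{(\b)})$ multiplies the product $\prod_{\a,\b}\tN_{\l^{(\a)},\mu^{(\b)}}(\g v_\a/u_\b)$ by exactly the factor $\prod_\a \CY_{\l^{(\a)}}((u_\b/(\g v_\a))\chi_\sAbox)$ dictated by the Pieri coefficient.

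Finally I would establish the base case by direct computation at $\bl=\bmu=\vac$, and more generally along the "staircase" configurations where all but one diagram are empty. Here the ladder of AFS vertex operators defining $\mukT$ degenerates: at the relevant critical values the internal summations freeze to the empty partition by the vanishing \eqref{N_vac} of the Nekrasov factors, and the vacuum components factorize, exactly as in the critical-value computations of Appendix~\ref{sec_higher_VO}. This pins down the normalization constants $\CC_\bmu^{(+)}(\bsu)$ and $\CC_\bl^{(-)}(\bsv)$ and completes the induction.

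The main obstacle I anticipate is the reassembly step: the Pieri recursions shift only one box inside one diagram $\mu^{(\b)}$ (or $\l^{(\a)}$) at a time and carry a single $\Psi$-function indexed by $\a$ or $\b$, whereas the target formula is the fully coupled double product over all pairs $(\a,\b)$. Verifying that the one-box variation formulas \eqref{prop_tN} telescope into this coupled product — with the correct, delicate bookkeeping of the $\g$-power $\g^{\sum_\a(\a-1)(|\mu^{(\a)}|-|\l^{(\a)}|)}$ (the very factor flagged as missing in the original statement) and of the $q_3^{-|\bmu|}$ and $t^{-|\bmu|}$ prefactors hidden in $\CC^{(\pm)}$ — is where the real work lies. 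An alternative, more computational route bypasses the recursion by expanding $\mukT^V$ directly using the AFS matrix elements \eqref{AFS_matrix} and the normal-ordering factors \eqref{NO_Phi}; there the obstacle shifts to resumming the nested sums over skew Macdonald functions into closed Nekrasov products via \eqref{Nek_CYY}, which I expect to be harder to control than the algebraic recursion.
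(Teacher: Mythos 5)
First, note that the paper itself does not prove this statement: Theorem~\ref{thm:FOS} is imported from \cite{Fukuda:2019ywe} (with a corrected prefactor, as the footnote says) and used as an input, so there is no in-paper proof to compare against. Your proposal is therefore an attempt at an independent derivation, and it has a genuine gap in the form of a circularity relative to the logical architecture of this paper. Your recursion step feeds in the generalized Pieri rules \eqref{apm_GMP} to compute how $a_{\pm1}$ moves boxes on $\ket{P_{\bmu}(\bsu)}$ and $\bra{P_{\bl}(\bsv)}$. But those Pieri rules are precisely what Appendix~\ref{app:Mukade} derives \emph{from} Theorem~\ref{thm:FOS}, via Lemma~\ref{lem:mukade} and Proposition~\ref{prop:pieriApp}: the entire point of the theorem here is that the $w^{\pm1}$ coefficients of the specialized Mukad\'e matrix elements \emph{are} the Pieri coefficients. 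Without an independent proof of \eqref{apm_GMP}, your recursion has unknown coefficients and does not close; at best you get a coupled system determining the Pieri coefficients and the Mukad\'e matrix elements simultaneously, which is a different and substantially harder argument that you would need to spell out and whose solvability is not obvious.

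There is also a technical flaw in the first step. You claim that projecting \eqref{intw_CT} onto the vertical vacua yields exchange relations for $\mukT^V$ ``up to explicit scalars'' for all four generators. This is true for the Cartan modes $a_{\pm k}$, because $\dket{\vac}$ is an eigenvector of $\psi^\pm(z)$ in $\rho_w^{(0,1)}$. It fails for $x_0^\pm$: by \eqref{def_vert_rep} the current $x^+(z)$ maps $\dket{\vac}$ to a multiple of $\dket{[1]}$, so the vacuum projection of \eqref{intw_CT} for $e=x_0^+$ produces additional terms involving the off-vacuum matrix elements $\dbra{\vac}\mukT[\bsu,\bsv,w]\left(\dket{[1]}\otimes\,\cdot\,\right)$, which are not expressible through $\mukT^V$. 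So even the step that is supposed to fix the power $w^{|\bl|-|\bmu|}$ does not go through as written. The alternative computational route you mention at the end --- expanding the ladder of AFS intertwiners using \eqref{AFS_matrix} and \eqref{NO_Phi} and resumming the nested skew-Macdonald sums into Nekrasov products --- is essentially what the cited proof in \cite{Fukuda:2019ywe} amounts to, and is the route that actually works.
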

It is also convenient to rewrite the matrix element in the $\tP_\bl$ basis, so that
\be
\label{eq:matrix-element-mukade-tilde}
 \frac{\bra{\tP_{\bl}(\bsv)} \mukT^V[\bsu,\bsv,w] \ket{\tP_{\bmu}(\bsu)}}
 {\langle \tP_{\bl}(\bsv)| \tP_{\bl}(\bsv)\rangle}
 = q^{|\bmu|}w^{|\bl|-|\bmu|}
 \frac{\prod_{\a=1}^r \tilde{b}_{\l^{(\a)}}
 \prod_{\a,\b=1}^r\tN_{\l^{(\a)},\mu^{(\b)}}(\g v_\a/u_\b)}
 { \prod_{\a>\b} \tN_{\l^{(\a)},\l^{(\b)}}(v_\a/v_\b)
 \prod_{\a<\b} \tN_{\mu^{(\a)},\mu^{(\b)}}(u_\a/u_\b) }.
\ee

In order to derive Pieri rules for generalized Macdonald functions, we first need to prove the following specialization formulas for the Mukad\'e operator.
\begin{lemma}
\label{lem:mukade}
Under the specialization of weights $\bsv=\gamma^{\mp1}\bsu$, we have
\be
\label{sp_mukV}
 \mukT^V[\bsu,\g^{\mp1}\bsu,w]
 = \rho_\bsu^{(r,\bsn_0)}\left(\mathe^{\pm\sum_{k>0}\frac{\g^{rk/2}w^{\pm k}}
 {(1-q_1^{\pm k})(1-q_2^{\pm k})}a_{\mp k}}\right).
\ee
\end{lemma}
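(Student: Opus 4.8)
The plan is to prove the identity at the level of operators on the single module $\CF_\bsu^{(r,0)}\cong\Lambda^{\otimes r}$. Note first that this makes sense: the module does not depend on the weights, and the generalized Macdonald functions $P_\bl(\bsx^\bullet|\bsu)$ depend on $\bsu$ only through the ratios $u_\a/u_{\a+1}$, so the overall rescaling $\bsv=\g^{\mp1}\bsu$ leaves the $\mathrm{GMP}$ basis unchanged. The key idea is that the specialization $\bsv=\g^{\mp1}\bsu$ is a critical value which collapses the internal summations in $\mukT^V$ to the single all-empty term, so that the operator reduces to a product of vacuum components of the AFS vertex operators — a clean normal-ordered exponential which I then match directly to the Cartan exponential on the right-hand side.

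\textbf{Collapse of the internal sums.} First I would normal-order each adjacent pair $\Phi^{(1,0)\ast}_{\nu^{(\a)}}[v_\a,w'_\a]\,\Phi^{(1,0)}_{\nu^{(\a-1)}}[u_\a,w_\a]$ appearing in the definition of $\mukT[\bsu,\bsv,w]$, using \eqref{NO_Phi}. This produces a factor $N_{\nu^{(\a-1)},\nu^{(\a)}}(\g w_\a/w'_\a)$, and since $w'_\a=w_\a u_\a/v_\a$ the argument equals $\g v_\a/u_\a=\g^{1\mp1}$, i.e.\ it is $1$ for the upper sign ($\bsv=\g^{-1}\bsu$) and $q_3$ for the lower sign ($\bsv=\g\bsu$). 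By the vanishing lemma \eqref{prop_N_sp}, each such factor forces respectively $\nu^{(\a)}\subseteq\nu^{(\a-1)}$ (upper) or $\nu^{(\a-1)}\subseteq\nu^{(\a)}$ (lower), so the internal partitions form a monotone chain $\nu^{(r)}\subseteq\cdots\subseteq\nu^{(0)}$ (resp.\ the reverse inclusion). The defining vacuum projection of $\mukT^V$ pins the two endpoints $\dbra{\vac}\dket{\nu^{(r)}}\,\dbra{\nu^{(0)}}\dket{\vac}$ to $\nu^{(0)}=\nu^{(r)}=\vac$, and the monotone chain then squeezes every $\nu^{(\a)}$ to $\vac$.

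\textbf{Identification with the Cartan exponential.} With all partitions empty, $\mukT^V$ reduces to $\prod_\a\,:\!\Phi^\ast_\vac(w'_\a)\Phi_\vac(w_\a)\!:$, where the residual $\CG$-function normal-ordering constants cancel against the denominator of $\mukT^V$, which is precisely the all-vacuum matrix element. Since $w'_\a=\g^{\pm1}w_\a$, the identities \eqref{prop_Phi_vac} show that each factor is purely creation (upper sign) or purely annihilation (lower sign). The intermediate weights solve $w_{\a+1}=w_\a(u_\a/v_\a)=\g^{\pm1}w_\a$ with $w_1=w$, hence $w_\a=\g^{\pm(\a-1)}w$; combined with the isomorphism \eqref{eq:iso-J-to-p}, the $\a$-dependent powers of $\g$ cancel, so that $\sum_\a w_\a^{\pm k}J^{(\a)}_{\mp k}$ becomes $w^{\pm k}$ times $\sum_\a p_k^{(\a)}$ (resp.\ $\sum_\a q_3^{(\a-1)k}\p/\p p_k^{(\a)}$). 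Comparing this with the representation formulas for $\rho^{(r,\bsn_0)}_\bsu(a_{\mp k})$ reproduces the stated exponential, including the prefactor $\g^{rk/2}/[(1-q_1^{\pm k})(1-q_2^{\pm k})]$.

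\textbf{Main obstacle.} The conceptual skeleton above is clean; the real work is bookkeeping. The delicate point is verifying that the intermediate-weight cancellation $w_\a=\g^{\pm(\a-1)}w$ genuinely yields a weight-independent exponent, and tracking the precise powers of $\g$, $q_2$ and $q_3$ so that the coefficient on the left matches $\g^{rk/2}(1-q_1^{\pm k})^{-1}(1-q_2^{\pm k})^{-1}$ exactly after applying the representation of $a_{\mp k}$. One must also confirm that the $\CG$-function factors cancel with no residual constant. As a consistency check (and an alternative route), the same result can be extracted from the explicit matrix elements in Theorem~\ref{thm:FOS} by specializing $\g v_\a/u_\b=u_\a/u_\b$ and invoking \eqref{prop_N_sp} on the diagonal factors, but the vertex-operator collapse avoids handling the full product of off-diagonal Nekrasov factors and is therefore the cleaner argument.
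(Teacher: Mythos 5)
Your proposal is correct and follows essentially the same route as the paper's proof: normal-order the Mukad\'e operator, observe that at $\bsv=\g^{\mp1}\bsu$ the Nekrasov factors $N_{\nu^{(\a-1)},\nu^{(\a)}}(\g w_\a/w'_\a)$ are evaluated at $1$ (resp.\ $q_3$) and hence kill all terms except the all-vacuum one, then identify the surviving product $\prod_\a :\!\Phi^\ast_\vac(\g^{\pm1}w_\a)\Phi_\vac(w_\a)\!:$ with the Cartan exponential via \eqref{prop_Phi_vac}, the recursion $w_\a=\g^{\pm(\a-1)}w$ and \eqref{eq:iso-J-to-p}. The only cosmetic difference is that you invoke the inclusion lemma \eqref{prop_N_sp} to get a monotone chain squeezed by the two vacuum endpoints, whereas the paper collapses the partitions one at a time starting from one endpoint; both arguments are sound.
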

\begin{proof}
The specialization formulas \eqref{sp_mukV} follow from the normal-ordering of the Mukad\'e operator, using \eqref{NO_Phi}
\begin{multline}
 \left(\dbra{\vac}\otimes\mathrm{Id}\right)
 \mukT[\bsu,\bsv,w]\left(\dket{\vac}\otimes\mathrm{Id}\right)
 =\sum_{\{\nu^{(\a)}\}_{\a=1}^{r}}\prod_{\a=1}^{r-1} a_{\nu^{(\a)}}\times\prod_{\a=1}^r\dfrac{N_{\nu^{(\a-1)},\nu^{(\a)}}(\g w_\a/w_\a')}{\CG(w_\a/(\g w'_\a))}\\
\times :\Phi_{\nu^{(1)}}^{(1,0)\ast}[v_1,w'_1]\Phi_{\nu^{(0)}}^{(1,0)}[u_1,w_1]:\otimes\cdots\otimes  :\Phi_{\nu^{(r)}}^{(1,0)\ast}[v_r,w'_r]\Phi_{\nu^{(r-1)}}^{(1,0)}[u_r,w_r]:,
\end{multline}
with $\nu^{(0)}=\nu^{(r)}=\vac$. The product of functions $\CG(w_\a/(\g w'_\a))$ will be eliminated by our choice of normalization, they will not play role in the analysis.

We start by considering the case $\bsv=\g^{-1}\bsu$. Examining the first factor in the product, namely (see \eqref{N_vac})
\begin{equation}
N_{\vac,\nu^{(1)}}(\g w_1/w_1')=\prod_{\sAbox\in\nu^{(1)}}\left(1-\g\dfrac{w_1}{w'_1}\chi_\sAbox^{-1}\right),
\end{equation} 
we observe that when $w'_1=\g w_1$, this factor vanishes unles $\nu^{(1)}=\vac$. Since this zero cannot be compensated for by another singularity, all terms in the sum with $\nu^{(1)}\neq\vac$ are vanishing. Then, the second factor reads 
\begin{equation}
N_{\nu^{(1)},\nu^{(2)}}(\g w_2/w_2')=N_{\vac,\nu^{(2)}}(\g w_2/w_2')=\prod_{\sAbox\in\nu^{(2)}}\left(1-\g\dfrac{w_2}{w'_2}\chi_\sAbox^{-1}\right),
\end{equation}
The previous argument can be repeated, showing inductively that under the specialization $w'_\a=\g w_\a$ (i.e. $v_\a=\g^{-1}u_\a$), the only non-vanishing term corresponds to $\nu^{(1)}=\dots=\nu^{(r-1)}=\vac$, and the Mukad\'e operator takes the factorized form
\begin{equation}
\mukT^V[\bsu,\g^{-1}\bsu,w]=\,:\Phi_{\vac}^\ast(\g w_1)\Phi_{\vac}(w_1):\otimes\cdots\otimes  :\Phi_{\vac}^\ast(\g w_r)\Phi_{\vac}(w_r):
\end{equation} 
Using the property \eqref{prop_Phi_vac}, and inductively $w_\a=\g^{\a-1} w$, we find
\begin{equation}
 \mukT^V[\bsu,\g^{-1}\bsu,w]
 = \rho_\bsu^{(r,\bsn_0)}\left(\mathe^{\sum_{k>0}\frac{\g^{rk/2}w^k}{(1-q_1^k)(1-q_2^k)}a_{-k}}\right)
 = \mathe^{-\sum_{k>0}\frac{(1-q_3^k)w^k}{k(1-q_2^k)}\sum_{\a=1}^r p_k^{(\a)}}.
\end{equation} 

The argument for the second specialization formula is similar. Examining the last Nekrasov factor, namely
\begin{equation}
N_{\nu^{(r-1)},\vac}(\g w_r/w'_r)=\prod_{\sAbox\in \nu^{(r-1)}}\left(1-\g^{-1}\dfrac{w_r}{w'_r}\chi_{\sAbox}\right).
\end{equation} 
we observe that this factor vanishes when $w'_r=\g^{-1}w_r$, unless $\nu^{(r-1)}=\vac$. Inductively, we show that under the specialization $w'_\a=\g^{-1} w_\a$ (i.e. $v_\a=\g u_\a$), the only non-vanishing term corresponds to $\nu^{(r-1)}=\nu^{(r-2)}=\dots=\nu^{(1)}=\vac$, and so the Mukad\'e operator takes again a factorized form,
\begin{equation}
\mukT^V[\bsu,\g\bsu,w]=\,:\Phi_{\vac}^\ast(\g^{-1} w_1)\Phi_{\vac}(w_1):\otimes\cdots\otimes :\Phi_{\vac}^\ast(\g^{-1} w_r)\Phi_{\vac}(w_r):
\end{equation} 
Using the property \eqref{prop_Phi_vac}, and inductively $w_\a=\g^{1-\a} w$, we find
\be
 \mukT^V[\bsu,\g\bsu,w]
 = \rho_\bsu^{(r,\bsn_0)}\left(\mathe^{-\sum_{k>0}\frac{\g^{rk/2}w^{-k}}{(1-q_1^{-k})(1-q_2^{-k})}a_{k}}\right)
 = \mathe^{\sum_{k>0}\frac{(1-q_3^k)(q_3w)^{-k}}{(1-q_1^k)}
 \sum_{\a=1}^r q_3^{(\a-1)k} \frac{\partial}{\partial p_k^{(\a)}}}.
\ee
\end{proof}

\paragraph{Generalized Pieri rules.} Using the result of Lemma~\ref{lem:mukade}, we can now expand the specializations of the Mukad\'e operator in powers of $w^{\pm1}$.
We deduce that, at first order in $w^{\pm1}$, the matrix elements of the Mukad\'e operators are proportional to those of the operators $\rho_\bsu^{(r,\bsn_0)}(a_{\pm1})$,
\be
 \rho_\bsu^{(r,\bsn_0)}(a_{\pm 1}) = \mp\g^{-r/2}(1-q_1^{\mp1})(1-q_2^{\mp1})
 \oint\frac{\mathd w}{2\pi\mathi w} w^{\pm1} \mukT^V[\bsu,\g^{\pm1}\bsu,w].
\ee
Combining with Theorem~\ref{thm:FOS}, this gives a way to derive Pieri and dual Pieri rules for the generalized Macdonald basis.
\begin{proposition}
\label{prop:pieriApp}
The matrix elements of the operators $\rho_\bsu^{(r,\bsn_0)}(a_{\pm1})$ in the generalized Macdonald basis are given by
\be
\label{mat_apm1}
\begin{aligned}
 \dfrac{\bra{\tP_{\bmu-\sAbox}(\bsu)}\rho_\bsu^{(r,\bsn_0)}(a_1)\ket{\tP_{\bmu}(\bsu)}}
 {\langle \tP_\bmu(\bsu)|\tP_\bmu(\bsu)\rangle}
 &= -\g^{-r/2}(1-q_2)(1-q_3)\tilde{\psi}^\ast_{\mu^{(\a)}}(\Abox)
 \prod_{\b=1}^{\a-1}\Psi_{\mu^{(\b)}}(u_\a\chi_\sAbox/u_\b),\\
 \dfrac{\bra{\tP_{\bmu+\sAbox}(\bsu)}\rho_\bsu^{(r,\bsn_0)}(a_{-1})\ket{\tP_{\bmu}(\bsu)}}
 {\langle \tP_\bmu(\bsu)|\tP_\bmu(\bsu)\rangle}
 &= -\g^{-r/2}(1-q_1)(1-q_3)\tilde{\psi}_{\mu^{(\a)}}(\Abox)
 \prod_{\b=\a+1}^r\Psi_{\mu^{(\b)}}(u_\a\chi_\sAbox/u_\b),
\end{aligned}
\ee
where $\Abox\in R(\mu^{(\a)})$ (resp. $\Abox\in A(\mu^{(\a)})$, and the coefficients $\tilde{\psi}_{\mu^{(\a)}}(\Abox)$, $\tilde{\psi}^\ast_{\mu^{(\a)}}(\Abox)$ are independent of the level $r$ of the representation. Thus, these coefficients have the same expression for any level, and can be determined from the representation of level one. They will depend on the explicit choice of normalization for the Macdonald symmetric functions. These two formulas allow us to write down the action of $\rho^{(r,\bsn_0)}(a_{\pm1})$ for arbitrary $r$, knowing the action for $r=1$.
\end{proposition}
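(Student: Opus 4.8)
The plan is to read the matrix elements of $\rho_\bsu^{(r,\bsn_0)}(a_{\pm1})$ directly off the Fukuda--Ohkubo--Shiraishi formula \eqref{eq:matrix-element-mukade-tilde}, using the weight specialization of Lemma~\ref{lem:mukade}. First I would record the residue identity obtained by expanding \eqref{sp_mukV} to first order in $w^{\pm1}$,
\be
 \rho_\bsu^{(r,\bsn_0)}(a_{\pm 1}) = \mp\g^{-r/2}(1-q_1^{\mp1})(1-q_2^{\mp1})
 \oint\frac{\mathd w}{2\pi\mathi w} w^{\pm1}\, \mukT^V[\bsu,\g^{\pm1}\bsu,w]\,,
\ee
and insert \eqref{eq:matrix-element-mukade-tilde} with $\bsv=\g^{\pm1}\bsu$. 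A preliminary remark makes the statement consistent: since the generalized Macdonald functions depend on the weights only through their ratios, the overall rescaling $\bsv=\g^{\pm1}\bsu$ leaves the basis invariant, so $\bra{\tP_\bl(\g^{\pm1}\bsu)}=\bra{\tP_\bl(\bsu)}$ and both bra and ket are naturally in the $\bsu$-basis, as required. The only $w$-dependence of \eqref{eq:matrix-element-mukade-tilde} is the single monomial $w^{|\bl|-|\bmu|}$; extracting the coefficient of $w^{\mp1}$ with the contour therefore forces $|\bl|=|\bmu|\mp1$, i.e.\ $\bl=\bmu\mp\Abox$ for one box $\Abox$ removed from (resp.\ added to) a single component $\mu^{(\a)}$. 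One must also fold in the norm ratio $\langle\tP_\bl(\bsv)|\tP_\bl(\bsv)\rangle/\langle\tP_\bmu(\bsu)|\tP_\bmu(\bsu)\rangle$ to pass from the FOS normalization to the one used in the Proposition.

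The core step is to simplify the Nekrasov products in \eqref{eq:matrix-element-mukade-tilde} for such a single-box $\bl$, using only the variation formulas \eqref{prop_tN} and the reflection $\tN_{\l,\mu}(z)=\tN_{\mu,\l}(q_3z^{-1})$. Take $a_1$ as the model case, with $\l^{(\a)}=\mu^{(\a)}-\Abox$ and all other components frozen, so that only the factors carrying the index $\a$ vary. Using $v_\a=\g u_\a$ and $\g^2=q_3$, the numerator product $\prod_\b\tN_{\l^{(\a)},\mu^{(\b)}}(\g v_\a/u_\b)$ contributes $\prod_\b\CY_{\mu^{(\b)}}(u_\a\chi_\sAbox/u_\b)^{-1}$, while the denominator factors $\tN_{\l^{(\a)},\l^{(\b)}}(v_\a/v_\b)$ for $\b<\a$ and $\tN_{\l^{(\b)},\l^{(\a)}}(v_\b/v_\a)$ for $\b>\a$ contribute $\prod_{\b<\a}\CY_{\mu^{(\b)}}(q_3^{-1}u_\a\chi_\sAbox/u_\b)^{+1}$ and $\prod_{\b>\a}\CY_{\mu^{(\b)}}(u_\a\chi_\sAbox/u_\b)^{+1}$ to the matrix element. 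The $\b>\a$ contributions cancel the corresponding numerator factors identically, the $\b<\a$ ones combine pairwise into $\CY_{\mu^{(\b)}}(q_3^{-1}\,\cdot)/\CY_{\mu^{(\b)}}(\cdot)=\Psi_{\mu^{(\b)}}(u_\a\chi_\sAbox/u_\b)$ by the definition \eqref{def_CYY}, and the $\b=\a$ term $\CY_{\mu^{(\a)}}(\chi_\sAbox)^{-1}$ is purely local to the $\a$-th component. This reproduces exactly $\prod_{\b=1}^{\a-1}\Psi_{\mu^{(\b)}}(u_\a\chi_\sAbox/u_\b)$. The case $a_{-1}$ (with $\l^{(\a)}=\mu^{(\a)}+\Abox$ and $v_\a=\g^{-1}u_\a$) runs identically, except that the extra $q_3^{-1}$ shift in the arguments moves the cancellation to the $\b<\a$ side and leaves the surviving $\Psi$-factors on the $\b>\a$ side, yielding $\prod_{\b=\a+1}^r\Psi_{\mu^{(\b)}}(u_\a\chi_\sAbox/u_\b)$.

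It then remains to assemble the purely local data of the $\a$-th component into the single-box coefficients $\tilde{\psi}^\ast_{\mu^{(\a)}}(\Abox)$ and $\tilde{\psi}_{\mu^{(\a)}}(\Abox)$. This means combining $\CY_{\mu^{(\a)}}(\chi_\sAbox)^{\mp1}$ with the variation of $\tilde{b}_{\mu^{(\a)}}$, the norm ratio, and the scalar prefactor from the residue identity, all of which involve the component $\a$ only; the entire $r$-dependence is carried by the explicit $\g^{-r/2}$. Because this local coefficient has no remaining cross-component dependence, its value can be fixed once and for all by specializing to $r=1$, where every $\Psi$-product is empty and the identity must reduce to the spherical level-one Pieri rule \eqref{eq:f_tilde_PE} of Appendix~\ref{app:pieri}; this both identifies $\tilde{\psi}^\ast$, $\tilde{\psi}$ with the usual tilde-normalized Pieri coefficients and proves their claimed level-independence.

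The main obstacle is precisely the bookkeeping of the second and third paragraphs: one must verify that the various monomial prefactors (powers of $q_1,q_2,q_3,\g$ and $w$) together with the norm ratio $\langle\tP_\bl|\tP_\bl\rangle/\langle\tP_\bmu|\tP_\bmu\rangle$ conspire so that, after the $\CY$-factors collapse, no residual dependence on the components $\b\neq\a$ survives in the local coefficient. Once this is checked, the factorized form of \eqref{mat_apm1} follows, and the level-one comparison completes the proof.
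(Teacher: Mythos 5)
Your proposal follows essentially the same route as the paper's proof: expand the specializations $\mukT^V[\bsu,\g^{\pm1}\bsu,w]$ of Lemma~\ref{lem:mukade} to first order in $w^{\pm1}$, insert the Fukuda--Ohkubo--Shiraishi matrix element \eqref{eq:matrix-element-mukade-tilde}, use the invariance $\bra{\tP_\bl(\g^{\pm1}\bsu)}=\bra{\tP_\bl(\bsu)}$ under overall rescaling of the weights, and collapse the Nekrasov ratios via the variation formulas \eqref{prop_tN} into the $\Psi$-factors plus a purely local single-box coefficient fixed by the $r=1$ case. The cancellation pattern you describe (the $\b>\a$ factors cancelling and the $\b<\a$ ones pairing into $\Psi_{\mu^{(\b)}}(u_\a\chi_\sAbox/u_\b)$ for $a_1$, and the mirror pattern for $a_{-1}$) is exactly what the paper's proof carries out, so the argument is correct.
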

\begin{proof}
Expanding the operators in \eqref{sp_mukV} in powers of $w^{\pm1}$,
we can use Theorem~\ref{thm:FOS} to write the matrix elements of the operators $\rho_\bsu^{(r,\bsn_0)}(a_{\pm1})$ as
\be
\begin{aligned}
 \dfrac{\bra{\tP_{\bl}(\g\bsu)}\rho_\bsu^{(r,\bsn_0)}(a_1)\ket{\tP_{\bmu}(\bsu)}}
 {\langle \tP_\bmu(\bsu)|\tP_\bmu(\bsu)\rangle}
 =& -\d_{|\bl|,|\bmu|-1}\g^{-r/2}(1-t)(1-q^{-1})
 \prod_{\a=1}^r\frac{\tilde{b}_{\l^{(\a)}}}
 {\tilde{b}_{\mu^{(\a)}}} \\
 & \times\prod_{\a>\b}\frac{\tN_{\mu^{(\a)},\mu^{(\b)}}(u_\a/u_\b)}
 {\tN_{\l^{(\a)},\l^{(\b)}}(u_\a/u_\b)}
 \times\prod_{\a,\b=1}^r\frac{\tN_{\l^{(\a)},\mu^{(\b)}}(q_3u_\a/u_\b)}
 {\tN_{\mu^{(\a)},\mu^{(\b)}}(u_\a/u_\b)},\\
 \dfrac{\bra{\tP_{\bl}(\g^{-1}\bsu)}\rho_\bsu^{(r,\bsn_0)}(a_{-1})\ket{\tP_{\bmu}(\bsu)}}
 {\langle \tP_\bmu(\bsu)|\tP_\bmu(\bsu)\rangle}
 =&\, \d_{|\bl|,|\bmu|+1}\g^{-r/2}(1-t^{-1})(1-q) q^{|\bmu|-|\blam|}\\
 & \times\prod_{\a<\b}\frac{\tN_{\l^{(\a)},\l^{(\b)}}(u_\a/u_\b)}
 {\tN_{\mu^{(\a)},\mu^{(\b)}}(u_\a/u_\b)}
 \times\prod_{\a,\b=1}^r\frac{\tN_{\l^{(\a)},\mu^{(\b)}}(u_\a/u_\b)}
 {\tN_{\l^{(\a)},\l^{(\b)}}(u_\a/u_\b)},
\end{aligned}
\ee
where the Kronecker $\d$ appears from the identification of the power of $w$.
We deduce that $a_{-1}$ adds boxes to $\bmu$ and $a_1$ removes them. Moreover, we will use that $\bra{\tP_{\bl}(\g^{\pm1}\bsu)}=\bra{\tP_{\bl}(\bsu)}$ since the basis vectors are invariant under a simultaneous rescaling of all the weights.

Let us assume that $\a$ is the label such that $\l^{(\a)}=\mu^{(\a)}\mp\Abox$. Then we can use the variation formulas \eqref{prop_tN} to simplify the ratio of Nekrasov factors and we obtain \eqref{mat_apm1}.
\end{proof}

\bibliographystyle{utphys}
\bibliography{Refined_IH}

\end{document}